\documentclass[11pt]{article}

\usepackage[margin=1in]{geometry}
\usepackage{amsmath,amssymb,amsthm,mathtools}
\usepackage{microtype}
\usepackage{booktabs,tabularx,array}
\usepackage{enumitem}
\usepackage[dvipsnames]{xcolor}
\usepackage{tikz}
\usepackage{quantikz}
\usepackage[ruled,vlined,linesnumbered]{algorithm2e}
\usepackage{hyperref}
\usepackage[capitalize,nameinlink]{cleveref}
\usepackage{aliascnt}
\usepackage{dsfont}
\usepackage[normalem]{ulem}
\usepackage{authblk}

\definecolor{papergray}{RGB}{245,247,249}

\theoremstyle{plain}
\newtheorem{theorem}{Theorem}
\newtheorem{lemma}{Lemma}

\newtheorem{corollary}{Corollary}
\theoremstyle{definition}
\newtheorem{definition}{Definition}

\theoremstyle{remark}

\newcommand{\F}{\mathrm F}
\newcommand{\op}{\mathrm{op}}

\newcommand{\Tr}{\operatorname{Tr}}
\newcommand{\E}{\mathbb E}
\newcommand{\Prb}{\mathbb P}
\newcommand{\eps}{\varepsilon}

\newcommand{\supp}{\operatorname{supp}}
\newcommand{\wt}{\operatorname{wt}}

\providecommand{\ket}[1]{|#1\rangle}
\providecommand{\bra}[1]{\langle #1|}
\newcommand{\ketbra}[2]{\ket{#1}\!\bra{#2}}
\newcommand{\abs}[1]{\left|#1\right|}
\newcommand{\norm}[1]{\left\|#1\right\|}
\newcommand{\cP}{\mathcal P}

\newcommand{\Pnk}{\mathcal P_{n,k}}

\SetKwInput{KwInput}{Input}
\SetKwInput{KwOutput}{Output}
\SetKwInput{KwAccess}{Quantum access}

\title{Hamiltonian Learning and Certification via Eigenphase Engineering}

\author[1,2]{Myeongjin Shin}
\author[3,4,5]{Yu Tong}

\affil[1]{School of Computing, Korea Advanced Institute of Science and Technology (KAIST), Daejeon 34141, Republic of Korea}
\affil[2]{Institute for Quantum Information and Matter, California Institute of Technology, Pasadena, CA 91125, USA}
\affil[3]{Department of Electrical and Computer Engineering, Duke University, Durham, NC 27708, USA}
\affil[4]{Department of Mathematics, Duke University, Durham, NC 27708, USA}
\affil[5]{Duke Quantum Center, Duke University, Durham, NC 27701, USA}

\date{\today}

\begin{document}
\maketitle

\begin{abstract}
We study Hamiltonian learning for \(k\)-local Hamiltonians at large step size, using only forward time evolution at integer multiples of a fixed time interval. For this task, our new technique, which we call eigenphase engineering, achieves Heisenberg-limited learning with a step size that is optimal up to logarithmic factors. For an unknown \(s\)-sparse Hamiltonian with fixed locality \(k\) and \(\|H\|_{\mathrm{op}}\leq\Lambda\), our algorithm recovers its Pauli coefficients to \(\ell_2\) error \(\varepsilon\) in total evolution time \(\widetilde{\mathcal O}(s/\varepsilon)\), matching the best known scaling while allowing a near-maximal step size \(\widetilde{\Theta}(1/\Lambda)\). As an independent contribution, we strengthen the evolution-time lower bound to \(\Omega_k(s^{1-1/(2k)}/\varepsilon)\), placing our algorithm within a factor \(\widetilde{\mathcal O}(s^{1/(2k)})\) of being optimal. Eigenphase engineering departs from approaches based on dynamical approximation or iteratively canceling the unknown Hamiltonian. It encodes energy expectation differences in the perturbative response of eigenphases of the actual controlled evolution, then extracts them through phase estimation and high-order extrapolation. This avoids the need for polynomially shorter evolution steps as the target precision improves. A version using only single-qubit operations retains the same step size with a \(\sqrt n\) overhead in \(\ell_2\) learning time for \(n\) qubits. The method also enables tolerant Hamiltonian certification and estimation of a specified Pauli coefficient, the latter without locality or sparsity assumptions. These results establish eigenphase engineering as a versatile approach to extracting Hamiltonian information with infrequent control.
\end{abstract}

\tableofcontents

\section{Introduction}\label{sec:introduction}

The Hamiltonian of a quantum system encodes its interactions and determines its dynamics. Recovering this Hamiltonian from experimental observations is therefore a fundamental problem in quantum science, with applications ranging from sensing magnetic fields to calibrating quantum processors and validating analog quantum simulators~\cite{giovannetti2011advances,degen2017quantum,GranadeFerrieWiebeCory2012robust}. Over more than a decade, Hamiltonian learning has developed into a broad subject, with protocols that infer interactions from dynamical measurements, stationary states, and thermal states~\cite{GranadeFerrieWiebeCory2012robust,BaireyAradEtAl2019learning,HaahKothariTang2022optimal}. Here we study learning from real-time evolution. A central objective in this setting is Heisenberg-limited learning: estimating the unknown parameters to precision \(\eps\) using total evolution time proportional to \(1/\eps\), up to logarithmic factors. Recent algorithms attain this optimal precision dependence for increasingly general classes of many-body Hamiltonians~\cite{huang2023learning,dutkiewicz2024advantage,bakshi2024structure,MaEtAl2024,HuEtAl2025}.

Evolution time, however, is only one of the resources that determine whether a learning protocol is useful. Quantum control plays an essential role: for broad classes of Hamiltonians, including thermalizing systems under suitable assumptions, the absence of control obstructs Heisenberg-limited learning~\cite{dutkiewicz2024advantage}. Single-qubit operations are especially attractive because they can often be implemented with very high fidelity. The frequency of control is equally important. Even a protocol using only single-qubit gates can become difficult to implement if its pulse intervals must decrease rapidly as the target precision improves. Finite gate durations and imperfect pulses motivate balancing the amount and complexity of control against the statistical efficiency of learning.

We formalize the timing constraint through a \emph{step size} \(\tau\): every interval of unknown Hamiltonian evolution must be a nonnegative integer multiple of \(\tau\). A larger step size permits less frequent control but also restricts the information accessible to the learner. Zhou and Gong~\cite{ZhouGong2026} established a fundamental obstruction in this fixed-grid model: for Hamiltonians satisfying \(\norm{H}_{\op}\leq\Lambda\), uniform learning to arbitrarily small error requires \(\tau<\pi/\Lambda\). Beyond this scale, distinct Hamiltonians can generate identical unitaries at every allowed time, making them indistinguishable even with arbitrary intervening controls. Thus \(1/\Lambda\) is the largest possible order of the step size, or equivalently \(\Omega(\Lambda)\) is a lower bound on the inverse step size. This motivates the question addressed in this work:
\begin{quote}
\emph{How efficiently can we learn a local Hamiltonian when the step size is as large as this fundamental limit allows, up to logarithmic factors?}
\end{quote}
To the best of our knowledge, the algorithms in this work, through a technique we call \emph{eigenphase engineering}, are the first for learning and certifying local Hamiltonians that achieve both Heisenberg-limited total evolution time and near-maximal step size.
Remarkably, even with this near-maximal step size, we match the best known total evolution time for learning sparse local Hamiltonians in the \(\ell_2\)-norm, up to logarithmic factors. 
Below we will discuss how this is achieved in detail.

\subsection{Problem setup}

We consider
\begin{equation}\label{eq:intro-H}
H=\sum_{P_a\in\cP_{n,k}}\lambda_aP_a,\quad \cP_{n,k}:=\{P_a\in\{I,X,Y,Z\}^{\otimes n}:1\le\wt(P_a)\le k\},
\end{equation}
with at most $s$ nonzero coefficients and $\norm{H}_{\op}\le\Lambda$.
We consider the physically relevant constant locality regime $k=O(1)$, although the formal statements retain their explicit dependence on $k$. 
We do not impose geometric locality or bounded interaction degree. Therefore, in principle, there can exist a qubit that all $s$ Hamiltonian terms act on.
Importantly, we \emph{do not} assume knowledge of which $P_a$ has a nonzero contribution, similar to the setting considered in \cite{MaEtAl2024}. Our goal is to recover the coefficients $\lambda_a$. 
We will write all these coefficients as a vector $\lambda=(\lambda_a)_{a}$, and our estimate will be a vector $\hat{\lambda}=(\hat{\lambda}_a)_a$.
To quantify the accuracy of the coefficient recovery, we consider both the $\ell^2$-error and the $\ell^\infty$-error.
\begin{equation}
    \|\hat{\lambda}-\lambda\|_2=\left(\sum_a |\hat{\lambda}_a-\lambda_a|^2\right)^{1/2},\quad \|\hat{\lambda}-\lambda\|_\infty=\max_{a}|\hat{\lambda}_a-\lambda_a|.
\end{equation}
Our focus will be on the former, which is an operationally meaningful error metric that bounds how well the learned Hamiltonian can be used to predict actual Hamiltonian dynamics for random initial states \cite[Sec.~7.2, Eq.~(108)]{MaEtAl2024}.

\medskip
\noindent\textbf{Access model.}
We are given oracle access to the forward time-evolution unitary $U(t) = e^{-iHt}$. We can make queries to $\{U(t_i)\}_{i=1}^{\mathcal N}$ for nonnegative times $t_i$. We do not assume access to backward time-evolution, nor do we assume access to controlled-$U(t)$.

\medskip
\noindent\textbf{Metrics.}
We evaluate the performance of a learning algorithm using the following metrics:
\begin{itemize}
    \item the \emph{total evolution time} $T_{\rm tot} = \sum_{i=1}^\mathcal{N} t_i$,
    
    \item the \emph{step size} $\tau$: every $t_i$ is a nonnegative integer multiple of $\tau$.
    
    \item the \emph{number of queries} $\mathcal{N}$.

    \item the \emph{classical postprocessing time}.
\end{itemize}
Here $t_1,\dots,t_{\mathcal N}$ denote the evolution times used by the algorithm.
We also consider two related tasks under the same access model. In \emph{tolerant Hamiltonian certification}, we are given a classical description of a traceless \(k\)-local reference Hamiltonian \(H_0\) and seek to distinguish \(\|H-H_0\|_{\mathrm F}\leq\eps/12^k\) from \(\|H-H_0\|_{\mathrm F}\geq\eps\), promised that one of these conditions holds. Here \(\|\cdot\|_{\mathrm F}\) denotes the normalized Frobenius norm, which equals the \(\ell_2\)-norm of the Pauli coefficient vector. In \emph{single-coefficient learning}, we are given a nonidentity Pauli string \(P\) and seek an estimate \(\widehat\lambda_P\) of \(\lambda_P=2^{-n}\operatorname{Tr}(PH)\) satisfying \(|\widehat\lambda_P-\lambda_P|\leq\eps\). For this task, we require only that \(H\) be traceless and satisfy \(\|H\|_{\mathrm{op}}\leq\Lambda\), without imposing locality or sparsity assumptions. Both tasks must succeed with high probability.

\subsection{Main results}

Throughout this subsection, \(k=O(1)\), and the notation \(\widetilde O\), \(\widetilde\Omega\), and \(\widetilde\Theta\) suppresses polylogarithmic factors, including the dependence on the inverse failure probability. All algorithmic guarantees hold with high probability, use only forward evolution under the unknown Hamiltonian, and achieve a common step size \(\tau=\widetilde\Theta(1/\Lambda)\). This step size matches the optimal limit in \cite{ZhouGong2026} up to logarithmic factors. Here, optimality is with respect to algorithms that query \(e^{-iHt}\) only at times \(t\) that are nonnegative integer multiples of \(\tau\).

\subsubsection{Hamiltonian learning}

\begin{table}[!t]
    \centering
    \footnotesize
    \setlength{\tabcolsep}{2.2pt}
    \renewcommand{\arraystretch}{1.35}
    \begin{tabular}{
        @{}
        >{\raggedright\arraybackslash}p{0.168\textwidth}
        >{\centering\arraybackslash}p{0.2\textwidth}
        >{\centering\arraybackslash}p{0.17\textwidth}
        >{\centering\arraybackslash}p{0.22\textwidth}
        >{\centering\arraybackslash}p{0.12\textwidth}
        @{}
    }
        \toprule
        Reference
        &
        \(T_2\)
        &
        \(T_\infty\)
        &
        \shortstack{step size / \\ time resolution}
        &
        \shortstack{Only \\ single-qubit}
        \\
        \midrule

        Bakshi et al.~\cite{bakshi2024structure}
        &
        \(\widetilde{\mathcal O}(s^{3/2}/\eps)^{\dagger}\)
        &
        \textcolor{blue}{\(\widetilde{\mathcal O}(s/\eps)\)}
        &
        \shortstack{\(\tau_{\rm step}^{-1}=\mathcal{O}(s\Lambda)\)}
        &
        No
        \\
        \addlinespace[2pt]

        Ma et al.~\cite{MaEtAl2024}
        &
        \textcolor{blue}{\(\widetilde{\mathcal O}(s/\eps)\)}
        &
        \textcolor{blue}{\(\widetilde{\mathcal O}(s/\eps)^{\dagger}\)}
        &
        \shortstack{\(\tau_{\rm step}^{-1}=\widetilde{\mathcal O}(\Lambda^2/\eps)\)}
        &
        No
        \\
        \addlinespace[2pt]

        Abbas et al.~\cite{AbbasEtAl2025}
        &
        \(\widetilde{\mathcal O}(s^{3/2}/\eps)^{\dagger}\)
        &
        \textcolor{blue}{\(\widetilde{\mathcal O}(s/\eps)\)}
        &
        \shortstack{\(t_{\min}^{-1}={\mathcal O}(s\Lambda^{3/2}/\eps^{1/2})\)}
        &
        No
        \\
        \addlinespace[3pt]

        Shin et al.~\cite{shin2026heisenberg}
        &
        \(\widetilde{\mathcal O}(s^{\mathcal{O}(K)}/\eps)^{\dagger}\)
        &
        \(\widetilde{\mathcal O}(s^{\mathcal{O}(K)}/\eps)\)
        &
        \shortstack{\(t_{\min}^{-1}=\Theta(s^{1/K}\Lambda)\)}
        &
        No
        \\
        \addlinespace[3pt]

        Zhou and Gong~\cite{ZhouGong2026}
        &
        \(\widetilde{\mathcal O}(s\Lambda/\eps^2)^{\dagger}\)
        &
        \(\widetilde{\mathcal O}(\Lambda/\eps^2)\)
        &
        \shortstack{\textcolor{blue}{\(\tau_{\rm step}^{-1}=\Theta(\Lambda)\)}}
        &
        \textcolor{blue}{Yes}
        \\

        \midrule

        \textbf{\Cref{thm:near-upper}}
        &
        \textcolor{blue}{\(\widetilde{\mathcal O}(s/\eps)\)}
        &
        \textcolor{blue}{\(\widetilde{\mathcal O}(s/\eps)^{\dagger}\)}
        &
        \shortstack{\textcolor{blue}{\(\tau_{\rm step}^{-1}=\widetilde{\Theta}(\Lambda)\)}}
        &
        No
        \\
        \addlinespace[2pt]

        \textbf{\Cref{thm:single-qubit-learning}}
        &
        \(\widetilde{\mathcal O}(s\sqrt n/\eps)\)
        &
        \textcolor{blue}{\(\widetilde{\mathcal O}(s/\eps)\)}
        &
        \shortstack{\textcolor{blue}{\(\tau_{\rm step}^{-1}=\widetilde{\Theta}(\Lambda)\)}}
        &
        \textcolor{blue}{Yes}
        \\

        \bottomrule
    \end{tabular}

    \caption{
    Comparison of Hamiltonian-learning algorithms. We assume that the Hamiltonian is $s$-sparse, $k$-local and $k=\mathcal{O}(1)$, with \(\norm{H}_{\op}\leq\Lambda\). Entries marked \(\dagger\) follow by norm conversion; converting $\ell_\infty$ error to $\ell_2$ error uses accuracy $\Theta(\eps/\sqrt{s})$ and retains at most $s$ estimated coefficients. \(T_2\) and \(T_\infty\) denote the total evolution times required to learn the Pauli coefficient vector of the Hamiltonian to additive error \(\eps\) in the \(\ell_2\)- and \(\ell_\infty\)-norms, respectively. $\tau_{\rm step}$ denotes step size, which is also known as common time-grid spacing, whereas $t_{\rm min}$ denotes the minimum time resolution for protocols that may not use a uniform time-grid. Bounds are stated for constant failure probability. For Abbas et al., the displayed resolution is for \(T_\infty\); for \(T_2\), multiply \(t_{\min}^{-1}\) by \(s^{1/4}\). In Shin et al., \(K\geq1\) is a fixed tradeoff parameter, distinct from locality \(k\). The Ma et al. step size follows by retaining \(\norm{H}_{\op}\) in \cite[Appendix~A, Eq.~(113)]{MaEtAl2024}, rather than directly rescaling their stated theorem. The final column is marked ``Yes'' only when all state preparations, known controls, and measurements are tensor products of single-qubit operations, without ancillary qubits or initially entangled states. The state-of-the-art results up to polylogarithmic factors are colored blue.
    }
    \label{tab:learning-comparison}
\end{table}

The common building block of our algorithms is \emph{eigenphase engineering}, which uses discrete controls to encode energy expectation differences into measurable eigenphases. Its operational guarantee is the following.

\begin{theorem}[Eigenphase engineering, informal]\label{thm:intro-eigenphase}
Let \(H\) satisfy \(\norm{H}_{\op}\leq\Lambda\). Given a known unitary \(W\) and its inverse, define \(\ket{x}=W\ket{0^n}\) and \(\ket{y}=W\ket{10^{n-1}}\). The energy expectation difference
\begin{equation}
    \Delta_H(x,y)=\bra{x}H\ket{x}-\bra{y}H\ket{y}
\end{equation}
can be estimated to additive error \(\eps\) using
\begin{equation}
T_{\rm tot}=\widetilde{\mathcal O}(1/\eps),
\qquad
\tau=\widetilde\Theta(1/\Lambda).
\end{equation}
If \(H\) is \(k\)-local and \(W\) is a product of single-qubit unitaries, the same asymptotic guarantees hold using only single-qubit preparations, controls, and measurements, without ancillary qubits. See \Cref{thm:energy-gap-subroutine,thm:energy-single-qubit} for the formal statements.
\end{theorem}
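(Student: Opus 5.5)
I plan to construct a controlled evolution whose eigenphases respond perturbatively to \(\Delta_H(x,y)\), and then read that response out with phase estimation and Richardson-type extrapolation. Work in the rotated frame \(H' = W^\dagger H W\), so that \(\Delta_H(x,y)=\bra{0^n}H'\ket{0^n}-\bra{10^{n-1}}H'\ket{10^{n-1}}\). Conjugating each query by \(W,W^\dagger\) costs no evolution time. I then interleave the evolution \(e^{-iH'\tau}\) with a known control \(V_\theta\) that imposes a large engineered spectrum. A natural choice is \(V_\theta = e^{-i\theta D}\), where \(D\) is a projector-weighted phase isolating the two-dimensional subspace \(\mathcal S=\mathrm{span}\{\ket{0^n},\ket{10^{n-1}}\}\). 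For example, \(D\) assigns phase \(0\) to \(\ket{0^n}\), phase \(\phi\) to \(\ket{10^{n-1}}\), and a phase gapped away from both to the orthogonal complement. The step operator is \(G=V_\theta e^{-iH'\tau}\). With \(\tau\Lambda\le c\) for a small constant \(c\) (up to log factors), \(e^{-iH'\tau}\) is a small perturbation of the identity, so \(G\) is a perturbation of \(V_\theta\). Because the engineered eigenphases of \(V_\theta\) are separated by a gap \(g=\Theta(1)\) that dominates \(\tau\Lambda\), unitary (Floquet) perturbation theory applies. The eigenphase of \(G\) continuing from \(\ket{0^n}\) is \(-\tau\bra{0}H'\ket{0}+\sum_{m\ge2}\tau^m c_m\), and similarly for \(\ket{10^{n-1}}\). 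The coefficients satisfy \(|c_m|\le (C\Lambda/g)^m g\).

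The estimator then proceeds in three steps. (i) Run phase estimation on powers \(G^L\), which uses only forward evolution at multiples of \(\tau\), from the input \(\ket{0^n}\) and from \(\ket{10^{n-1}}\). The input overlaps the relevant perturbed eigenvector with fidelity \(1-O((\tau\Lambda)^2)\), so robust phase estimation (for instance, Heisenberg-limited phase estimation with majority voting) returns each eigenphase to precision \(\delta\) in time \(\widetilde O(\tau/\delta)\), with total time counted as \(L\tau\). (ii) Take the difference of the two eigenphases. Dividing by \(\tau\) gives \(\Delta_H+\sum_{m\ge2}\tau^{m-1}(c_m-c'_m)\). (iii) Remove the bias by high-order extrapolation. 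The step size cannot shrink, so I instead vary the effective coupling: replace \(e^{-iH'\tau}\) by \(e^{-iH' j\tau}\) for integers \(j=1,\dots,J\), or interleave with a control scaling. The eigenphase difference is then a power series \(f(j)=j\tau\Delta+\sum_m (j\tau)^m a_m\) in \(j\tau\). I would restrict to \(j\tau\Lambda\le c\) by picking \(\tau=\Theta(1/(J\Lambda))\), and apply Richardson/Chebyshev extrapolation at \(J=O(\log(1/\eps))\) nodes to extract the linear coefficient with bias \(\eps\). Because \(J\) is logarithmic, \(\tau=\widetilde\Theta(1/\Lambda)\). The extrapolation weights have \(\ell_1\) norm \(e^{O(J)}=\mathrm{poly}\log(1/\eps)\) when the nodes are chosen well (Chebyshev-like rescaled), so each phase needs precision \(\eps\tau/\mathrm{polylog}\). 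The total time is therefore \(\widetilde O(1/\eps)\).

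For the single-qubit version, with \(H\) \(k\)-local and \(W\) a product state preparation, I would drop the ancilla-free global projector \(D\). Instead I apply a single-qubit \(Z\)-rotation on qubit 1 as the control, plus dephasing twirls by random single-qubit Paulis on the other qubits. The twirls suppress transitions out of the product-state sector on average, and \(k\)-locality keeps the second-order terms bounded. The phase is read out by interference on qubit 1 itself, using the \(\ket{0}\pm\ket{1}\) superposition on the first qubit of \(W\)'s input, measured in the \(X/Y\) basis.

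The main obstacle I expect is the perturbation-theory step. It must establish analytic control of the eigenphase as a function of the coupling with a gap that is independent of \(n\) and of the \(2^n\)-dimensional complement, since \(\|H\|\le\Lambda\) gives only operator-norm control. The argument needs a resolvent/Kato-series bound with a radius of convergence of order \(g/\Lambda\), and it must handle phase wraparound so that the extrapolation nodes stay inside that radius. I would try the contour-integral representation of the spectral projector of \(G\) around \(e^{i\cdot0}\) first.
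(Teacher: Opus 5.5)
Your general-control plan is essentially the paper's: a control with constant gap isolating $\mathrm{span}\{\ket{0^n},\ket{10^{n-1}}\}$, Riesz-projector perturbation theory, robust phase estimation, and extrapolation on a common grid. Phase estimation works because the perturbed vectors are \emph{exact} eigenvectors of the step unitary, so the $\mathcal O(\Lambda\tau)$ preparation error does not grow with the repetition number.

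The single-qubit part, however, has a genuine gap. A $Z$-rotation on qubit~1 gives $\ket{0^n}$ the same eigenvalue as all $2^{n-1}$ basis states with first bit $0$. There is therefore no isolated eigenvalue to continue, and $W^\dagger HW$ mixes $\ket{0^n}$ within this degenerate block at first order. Random Pauli or dephasing twirls do not fix this for Heisenberg-limited phase estimation. They replace the fixed step unitary by a random sequence whose average is a channel. Its deviation from the twirled dynamics is $\mathcal O((\Lambda\tau)^2)$ per step, which is all that ``$k$-locality keeps the second-order terms bounded'' gives you. These errors add up incoherently over the $m\sim1/(\tau\eps)$ repetitions, to roughly $\Lambda^2\tau/\eps$. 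Keeping this below a constant forces $\tau\lesssim\eps/\Lambda^2$, which is exactly the precision-dependent step size the theorem avoids.

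You also cannot fall back on a product control with an $n$-independent global gap, the object you name as your main obstacle. By pigeonhole on partial sums of the single-qubit phase angles, every product control has global gap $\mathcal O(1/n)$. The missing idea is a \emph{local} gap combined with locality of the perturbation series:
\begin{itemize}
\item The paper uses $V=\exp[-i\gamma(N_1+2\sum_{\ell\geq2}N_\ell)]$ with $\gamma=2\pi(\sqrt2-1)$. A Diophantine bound separates $\nu_{e_a}$ by at least $1/(2kL)$ from the eigenvalues of basis states within Hamming distance $kL$ of $e_a$.
\item An auxiliary unitary equal to $V$ on that ball and to $-\nu_{e_a}$ outside has a genuinely isolated eigenpair.
\item Since $K$ is $k$-local, the order-$r$ Taylor coefficient of the auxiliary eigenvector lies in the radius-$kr$ ball. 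It is therefore a quasi-eigenvector of the \emph{physical} step with defect $\mathcal O((CkL\Lambda t)^{L+1})$.
\item Choosing $L=\Theta(1+\log(k\Lambda/\eps))$ keeps $m$ times this defect small for every $m\lesssim1/(t\eta)$. This costs only $\tau=\Theta(1/(k\Lambda(1+\log(k\Lambda/\eps))^2))$.
\end{itemize}

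Two further points in your general-control argument need repair.

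First, you cannot run phase estimation separately from $\ket{0^n}$ and from $\ket{10^{n-1}}$ and subtract. Without controlled $e^{-iHt}$, the eigenphase on a single eigenvector is a global phase. Instead, prepare $(\ket{0^n}+\ket{10^{n-1}})/\sqrt2$ and read the relative phase from $X$/$Y$ measurements of qubit~1, correcting for the known engineered phase $L\phi$. This is what your own single-qubit sketch already does.

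Second, your extrapolation accounting is wrong: with $J=\Theta(\log(1/\eps))$ nodes, $e^{\mathcal O(J)}$ is polynomial in $1/\eps$, not polylogarithmic. Your unsymmetrized phase difference contains all powers of $t$. The equispaced nodes $j\tau$ then give weights $w_j=(-1)^{j-1}\binom{J}{j}$, so the time overhead $(\sum_j|w_j|^{1/2})^2\geq2^J-1$ is polynomial in $\Lambda/\eps$. Shrinking $J$ to avoid this instead forces $\tau\Lambda$ to be superpolylogarithmically small in order to control the bias.

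The paper avoids this by pairing $F_+(t)=Ve^{-itK}$ with $F_-(t)=V^\dagger e^{-itK}$. Since $\theta_{a,-}(t)=-\theta_{a,+}(-t)$, the accumulated phase divided by $2t$ is even in $t$. Extrapolating from the nodes $q\tau$ in the variable $t^2$ then uses the nodes $q^2$, whose weights satisfy $|c_q|\leq2e^{-q^2/(2p)}$ and $\sum_q|c_q|=\mathcal O(\sqrt p)$.

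Alternatively, you could keep your unsymmetrized function and use the quadratically spaced grid nodes $q^2\tau$. These give the same bounded weights at the cost of $\tau=\Theta(1/(\Lambda(1+\log(\Lambda/\eps))^2))$. Your ``Chebyshev-like'' remark points this way, but the bound you stated does not justify it.
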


These energy differences provide linear information about the Hamiltonian coefficients. Combining their estimation with sparse recovery through compressed sensing \cite{donoho2006compressed,candes2006robust} gives a learning algorithm that does not require prior knowledge of which Pauli terms have nonzero coefficients.

\begin{theorem}[Near-optimal Hamiltonian learning, informal]\label{thm:intro-learning}
Let \(H=\sum_{P\in\Pnk}\lambda_P P\) be traceless, \(s\)-sparse, and \(k\)-local, with \(\norm{H}_{\op}\leq\Lambda\) and $k=\mathcal{O}(1)$. There is an algorithm that outputs an estimate \(\widehat\lambda\) satisfying
\begin{equation}
    \norm{\widehat\lambda-\lambda}_2\leq\eps
\end{equation}
using total evolution time \(\widetilde{\mathcal O}(s/\eps)\) and step size \(\widetilde\Theta(1/\Lambda)\). See \Cref{thm:near-upper} for the formal statement.
\end{theorem}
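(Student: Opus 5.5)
The plan is to reduce learning to estimating a modest number of energy expectation differences with \Cref{thm:intro-eigenphase}, and then to recover the sparse coefficient vector by compressed sensing.

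\textbf{Measurement design.} We choose random product-state pairs. Let \(W=\bigotimes_j W_j\), where each \(W_j\) is drawn from a single-qubit Clifford (or Haar) ensemble. Then \(\ket{x}=W\ket{0^n}\) and \(\ket{y}=W\ket{10^{n-1}}\) differ only in the state of qubit \(1\). Since the two states agree on every qubit except the first, expanding \(H\) gives
\[
\Delta_H(x,y)=\sum_a \lambda_a\, \big(\bra{x}P_a\ket{x}-\bra{y}P_a\ket{y}\big).
\]
Only terms acting nontrivially on qubit 1 survive, and for each such term the two expectations differ by a sign. Restricting to qubit 1 is too limited, so we randomize which qubit is flipped: apply a random permutation, or equivalently flip a random qubit \(j\). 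Each measurement is then a row \(a\mapsto c_a\) of a random matrix \(A\) with \(\Delta=A\lambda\). Its entries are products of single-qubit Pauli expectations on random single-qubit stabilizer states. For each entry \(\E[c_a c_b]\) is proportional to \(\delta_{ab}\cdot 3^{-\wt(P_a)}\cdot \wt(P_a)/n\), with a factor \(\wt(P_a)/n\) coming from whether the flipped qubit lies in the support of \(P_a\). This near-isotropy is what compressed sensing needs. To remove the \(1/n\) factor, we flip a random subset of qubits instead, each included with probability \(1/2\). This only changes the pair \((\ket{x},\ket{y})\) by a known product unitary, which is still allowed because \Cref{thm:intro-eigenphase} accepts general \(W\) with \(\ket{y}=W\ket{10^{n-1}}\) (absorb the flips into \(W\)). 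The difference then has coefficients \(2\prod\) of signed expectations on terms touching an odd number of flipped qubits, and these are bounded, mean-zero and uncorrelated across distinct \(a\), with variance at least \(3^{-k}/2\).

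\textbf{RIP and recovery.} The rows are i.i.d., bounded (\(|c_a|\le 2\)) and near-isotropic over the \(|\Pnk|=n^{O(k)}\) columns. We therefore invoke the bounded-orthonormal-system RIP theorem (Rudelson--Vershynin / Rauhut). With \(m=\widetilde{O}(s\,3^{k}\log|\Pnk|)=\widetilde O(s)\) rows, the normalized matrix satisfies the RIP of order \(2s\) with high probability. We then estimate each \(\Delta_i\) to additive error \(\eta\) using \Cref{thm:intro-eigenphase}, with failure probability \(\delta/m\) per call, and solve the basis-pursuit-denoising program. Robust recovery gives \(\|\widehat\lambda-\lambda\|_2\lesssim \|e\|_2/\sqrt m\le \eta\) (times \(3^{k/2}\)), so \(\eta=\Theta(\eps)\) suffices. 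The total time is \(m\cdot\widetilde O(1/\eta)=\widetilde O(s/\eps)\), and the step size is inherited as \(\widetilde\Theta(1/\Lambda)\). Classical postprocessing is a convex program over \(n^{O(k)}\) variables, which is polynomial.

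\textbf{Main obstacle.} The delicate step is the isotropy/RIP argument. It requires that the row distribution over all \(k\)-local Paulis, with heterogeneous weights \(3^{-\wt}\), gives a well-conditioned Gram matrix, so we rescale columns by \(3^{\wt/2}\) and lose only a \(3^{k}\) constant. A second difficulty is that the eigenphase errors must be bounded in \(\ell_\infty\) uniformly over all \(m\) calls, which we handle by a union bound with logarithmic overhead.
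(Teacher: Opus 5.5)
Your proposal is correct and follows essentially the same route as the paper. Like the paper, you compare random Pauli-eigenstate product states $\ket{x}_\sigma$ with $\ket{x\oplus r}_\sigma$ for a uniformly random flip set $r$, which after rescaling by $\sqrt2\,3^{w_P/2}$ gives an exactly orthonormal bounded system; you then apply bounded-orthonormal-system sparse recovery to $\widetilde O(s)$ gap estimates, each from the general eigenphase-engineering subroutine at accuracy $\Theta(\eps/3^{k/2})$ with a shared step size.

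A few small points need tightening but do not affect the argument. For $|r|\geq2$ the required $W$ is not a product unitary; the paper uses $W=B_{\sigma,x}C_r$, where $C_r$ is a CNOT/SWAP circuit mapping $e_1\mapsto r$. You should also handle separately the trivial cases $r=0$ (the gap is known to be zero) and $\eps\geq\Lambda$ (output zero).
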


This matches the best known total evolution time for coefficient \(\ell_2\) recovery, achieved by Ma et al.~\cite{MaEtAl2024}, at a near-maximal step size.  By comparison, Ref.~\cite{MaEtAl2024} requires a step size of $\tau=\widetilde{\Omega}(\eps/\Lambda^2)$ for their algorithm.\footnote{This step size can be derived from Appendix~A of \cite{MaEtAl2024}.} Bakshi et al.~\cite{bakshi2024structure} already obtain Heisenberg scaling with a precision-independent step size, scaling as $\tau=\Omega(1/(s\Lambda))$ under the norm bound used here.  Compared to it, our step size removes the $1/s$ factor.  The total evolution time not only almost matches the best known result, it is in fact very close to the optimal total evolution time one can achieve without any restriction on the step size, as proved in the lower bound result below:

\begin{theorem}[Learning lower bound, informal]\label{thm:intro-lower-bound}
Suppose \(3^k\binom{n}{k}\geq s\) and \(0<\eps\leq c_k\Lambda s^{-1/(2k)}\), for a sufficiently small constant \(c_k>0\). Any algorithm that learns every traceless, \(s\)-sparse, \(k\)-local Hamiltonian with \(\norm{H}_{\op}\leq\Lambda\) to coefficient \(\ell_2\) error \(\eps\), with success probability at least \(2/3\), requires
\begin{equation}
    T_{\rm tot}=\Omega_k\!\left(\frac{s^{1-1/(2k)}}{\eps}\right).
\end{equation}
This bound holds even with arbitrary controls, ancillary systems, adaptive measurements, and unrestricted evolution times. See \Cref{thm:lower-bound} for the formal statement.
\end{theorem}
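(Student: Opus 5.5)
The bound combines a packing (Fano-type) argument with a Heisenberg-limited information bound for learning from time evolution. The hard family is built on a fixed set of $m$ qubits, with $m$ chosen so that the $k$-local Pauli strings supported there number about $s$, i.e.\ $3^k\binom{m}{k}\approx s$, which gives $m\approx (s k!/3^k)^{1/k}$. Each Hamiltonian in the family is
\begin{equation}
H_\sigma=\delta\sum_{a\in S}\sigma_a P_a,
\end{equation}
where $S$ is a set of $s$ weight-exactly-$k$ Paulis on these qubits and $\sigma\in\{\pm1\}^S$ ranges over a Gilbert--Varshamov code with pairwise Hamming distance $\Omega(s)$ and $2^{\Omega(s)}$ codewords. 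Any two members then have coefficient $\ell_2$ distance $\Omega(\delta\sqrt s)$. Setting $\delta\sqrt s\asymp\eps$ ensures that an $\eps$-accurate learner identifies $\sigma$, so Fano gives $\Omega(s)$ bits of mutual information between $\sigma$ and the learner's output.

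The first obstacle is the norm constraint $\|H_\sigma\|_{\op}\le\Lambda$. The trivial bound $\|H_\sigma\|_{\op}\le\delta s=\eps\sqrt s$ is too weak. Instead, I would choose the code from random signs and use matrix concentration: a Rademacher sum of $s$ Paulis on $m$ qubits has operator norm $O(\delta\sqrt{s\,m})$ with high probability (matrix Khintchine, with the log-dimension factor $m$). One then intersects this good event with the code. The result is $\|H_\sigma\|_{\op}\lesssim\eps\sqrt m$, which is at most $\Lambda$ under an assumption like $\eps\lesssim\Lambda/\sqrt m$. Since $m\sim s^{1/k}$, this is exactly the hypothesis $\eps\le c_k\Lambda s^{-1/(2k)}$.

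The second ingredient is an upper bound on the mutual information per unit evolution time, valid for arbitrary controls, ancillas and adaptivity. I would use a quantum Fisher information or Bures-distance hybrid argument. Under any protocol, the output states for $H_\sigma$ and $H_{\sigma'}$ separate by at most $T_{\rm tot}\|H_\sigma-H_{\sigma'}\|_{\op}$ in trace distance, since each query of duration $t$ moves the state by at most $t\|\Delta H\|_{\op}$. This pairwise bound alone gives only $T\gtrsim 1/(\|\Delta H\|)$. To get the $\Omega(s)$-bit multi-hypothesis version, I would instead bound the Fisher information with respect to the $s$ parameters $\lambda_a$: the sum over $a$ of the QFI is at most $4T^2$ times an average of $\|P_a\|^2$-type quantities. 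More carefully, I would use the known Heisenberg bound that the trace of the QFI matrix over the $s$ directions is at most $4T_{\rm tot}^2 \max_{\text{unit }u}\|\sum_a u_aP_a\|_{\op}^2$, combined with a van Trees/Bayesian Cramér--Rao bound on the prior over the hypercube. The alternative is an Assouad-style reduction, with per-coordinate Bures bounds $\propto T\delta$ summed through adaptive chains.

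The last step is the calculation. The minimax $\ell_2^2$ error is $\gtrsim s^2/\operatorname{Tr}(\mathrm{QFI}) \ge s^2/(T^2\max_u\|\sum_a u_aP_a\|^2)$. Bounding the worst-case directional norm by $\sqrt s$ via anticommutation structure (at most $\sqrt{s}$-ish for generic unit vectors, $\sqrt{m}$-ish in the random case) yields $T\gtrsim s^{1-1/(2k)}/\eps$ after balancing. Matching exactly this exponent is what I expect to be the main obstacle: the factor $s^{-1/(2k)}=m^{-1/2}$ must come from the matrix-Khintchine dimension penalty, which appears once in the norm constraint and once in the information bound. I would first try Assouad with per-coordinate information $T^2\delta^2$, handled via a chain-rule bound for adaptive quantum protocols, and then rescale the family so that the norm constraint is tight. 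This gives $\delta\asymp\Lambda/\sqrt{sm}$ in the regime where the constraint binds. Setting $\eps\asymp\delta\sqrt s$ with the required separation then gives the stated rate.
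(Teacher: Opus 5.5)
Your hard family is the paper's: random-sign sums of $s$ weight-$k$ Paulis on $m=\Theta_k(s^{1/k})$ qubits, matrix Khintchine (with its $\log$-dimension factor $m$) plus Markov for the norm, a Gilbert--Varshamov packing inside the good set, and $\delta\sqrt s\asymp\eps$, so that $\norm{H_\sigma}_{\op}\lesssim\eps\sqrt m\le\Lambda$.

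\textbf{Gap.} The gap is the information bound. Neither route you propose gets past $T\gtrsim\sqrt s/\eps$, which is the Ma et al.\ bound this theorem improves.
\begin{itemize}
\item Your final chain reads $\eps^2\gtrsim s^2/(T^2\max_{\norm{u}_2=1}\norm{\sum_au_aP_a}_{\op}^2)$, and you then use $\max_u\norm{\sum_au_aP_a}_{\op}^2\lesssim m$. That step is false. Khintchine controls only random sign vectors. Take $u_a\propto\bra{\psi}P_a\ket{\psi}$ and average $\ket\psi$ over random product Pauli eigenstates, where $\E\bra{\psi}P\ket{\psi}^2=3^{-k}$ for each weight-$k$ Pauli. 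This shows the maximum is at least $s/3^k$, so the chain gives only $T\gtrsim_k\sqrt s/\eps$.
\item The inequality $\Tr F\le4T^2\max_u\norm{\sum_au_aP_a}_{\op}^2$ confuses the trace of the QFI matrix with its top eigenvalue. The valid bound is $\Tr F\le4T^2\sum_a\norm{P_a}_{\op}^2=4sT^2$. It is attained: a maximally entangled probe evolved for time $T$ at $\theta=0$ has $F=4T^2I_s$. So van Trees with Heisenberg-type QFI bounds cannot beat $\sqrt s/\eps$.
\item Assouad with per-coordinate distinguishability $\lesssim T\delta$ fails the same way at every scale: it gives $T\gtrsim1/\delta\asymp\sqrt s/\eps$. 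At your choice $\delta\asymp\Lambda/\sqrt{sm}$ this is $\sqrt{sm}/\Lambda$. That falls short of the required $s/\Lambda$ (at $\eps\asymp\Lambda/\sqrt m$) by a factor $\sqrt{s/m}$.
\end{itemize}

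\textbf{Missing idea.} You need a global bound in terms of the operator-norm scale $\lambda\asymp\eps\sqrt m$ of the whole Hamiltonian, not the per-coefficient scale $\delta$. The factor $\sqrt m$ enters only once, through the norm constraint; the information bound itself is $\lambda T\gtrsim s$. The paper proves this by dimension counting.
\begin{itemize}
\item Purify the adaptive protocol into a tree of isometries with coherent measurement records. Dyson-expand the final state in the number of Hamiltonian insertions; induction from the leaves bounds the order-$r$ part by $(\lambda T)^r/r!$.
\item Because $H_z$ is linear in $z$ and $z_j^2=1$, the part of order at most $R$ is a multilinear polynomial of degree at most $R$ in $z$. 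So for every $z$ it lies in one fixed subspace of dimension $L_R=\sum_{j\le R}\binom sj$.
\item Take $R=\lfloor\alpha s\rfloor$ with $L_R\le e^{c_1s/2}\ll\abs{\mathcal C}$. If $\lambda T<(R+1)/(4e)$, the discarded tail has norm at most $2\cdot4^{-(R+1)}$. The average decoding success is then at most $L_R/\abs{\mathcal C}+O(4^{-R})<2/3$, a contradiction.
\end{itemize}
Hence $T=\Omega(s/\lambda)=\Omega_k(s^{1-1/(2k)}/\eps)$.
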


In this parameter regime, \Cref{thm:intro-learning} is within a factor \(\widetilde{\mathcal O}(s^{1/(2k)})\) of the evolution-time lower bound.  
For comparison, the $\ell^2$ learning lower bound in Ref.~\cite{MaEtAl2024} is $\Omega(\sqrt{s}/\eps)$.\footnote{The $\ell^1$ learning lower bound in \cite[Theorem~4.5]{AbbasEtAl2025} can be converted to a similar $\Omega(\sqrt{s}/\eps)$ $\ell^2$ learning lower bound through norm conversion.} Our lower bound therefore represents a factor of $s^{1/2-1/(2k)}$ improvement over the state of the art.

The single-qubit version of eigenphase engineering further allows us to restrict all trusted quantum operations to single qubits.
A key challenge in eigenphase engineering with single-qubit controls is keeping the relevant eigenphases nondegenerate and sufficiently separated, properties ensured by multiqubit controls in the construction we just introduced.
To address this challenge, for the single-qubit version of eigenphase engineering, we use Diophantine approximation and a low-degree truncation of the perturbation expansion to handle small eigenphase gaps, allowing us to restrict all trusted quantum operations to single qubits.

\begin{theorem}[Learning with single-qubit operations, informal]\label{thm:intro-single-qubit}
Under the assumptions of \Cref{thm:intro-learning}, there are algorithms using only product-state preparation, single-qubit controls and measurements, and no ancillary qubits that achieve
\begin{equation}
\begin{aligned}
\norm{\widehat\lambda-\lambda}_\infty\leq\eps
&\quad\text{with}\quad
 T_{\rm tot}=\widetilde{\mathcal O}(s/\eps),\\
\norm{\widehat\lambda-\lambda}_2\leq\eps
&\quad\text{with}\quad
 T_{\rm tot}=\widetilde{\mathcal O}(s\sqrt n/\eps).
\end{aligned}
\end{equation}
Both algorithms use step size \(\widetilde\Theta(1/\Lambda)\). See \Cref{thm:single-qubit-learning} for the formal statement.
\end{theorem}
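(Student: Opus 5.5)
We plan to prove the statement by reducing Hamiltonian learning to many independent estimates of energy differences, each produced by the single-qubit eigenphase-engineering subroutine of \Cref{thm:intro-eigenphase}. We specialise that subroutine to product states. For a Pauli string \(P_a\) with support \(S\), \(|S|\le k\), choose single-qubit bases on the qubits of \(S\) so that \(P_a\) is diagonal, and fix the remaining qubits. Then flip one qubit \(j\in S\) to obtain two product states \(\ket{x},\ket{y}\) differing only on \(j\). The difference \(\Delta_H(x,y)\) equals twice the sum of coefficients \(\lambda_b\) over Paulis \(P_b\) that contain a \(Z\)-type factor (in the chosen local frame) on \(j\), weighted by the signs of the other qubits. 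Randomising the product states on the other qubits, Fourier analysis on \(\{\pm1\}^{S}\) isolates \(\lambda_a\) as a signed average of \(2^{O(k)}\) such differences. Each difference is a bounded-degree linear functional of \(\lambda\).

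\textbf{\(\ell_\infty\) guarantee.} We first find the support, then estimate each coefficient in parallel. Because \(\tau\) is fixed at \(\widetilde\Theta(1/\Lambda)\) by the subroutine, only the total time has to be controlled. A single run of \Cref{thm:energy-single-qubit} on product states \(\ket{x},\ket{y}\) gives all Pauli coefficients that are diagonal in the chosen frame up to sign. To cover every \(P_a\) with \(O(s)\) settings, we use random local bases: a random assignment of \(X/Y/Z\) frames makes a fixed \(k\)-local \(P_a\) diagonal with probability \(3^{-k}\). However, each energy difference mixes all terms touching qubit \(j\), up to \(s\) of them. To beat this we would use the sparsity directly. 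Collect \(\widetilde O(s)\) random product-state energy differences, each to accuracy \(\eps\), at cost \(\widetilde O(1/\eps)\) each. These form linear measurements \(A\lambda\) with a random \(\pm1\) design of bounded coherence, and an \(\ell_1\)/compressed-sensing decoder with the \(\ell_\infty\)-stability guarantee (coherence-based bound) recovers \(\lambda\) to \(\ell_\infty\) error \(O(\eps)\). This gives total time \(\widetilde O(s/\eps)\).

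\textbf{\(\ell_2\) guarantee.} Here we cannot use the joint entangled measurement of \Cref{thm:near-upper}, which averages noise across coordinates. We first run the \(\ell_\infty\) algorithm to accuracy \(\eps\) to locate the support. Then we re-estimate the \(\le s\) surviving coefficients. Product-state energy differences each involve one flipped qubit, so the restricted isometry constant of the product-state measurement ensemble degrades. The Gram matrix of the design has operator norm up to \(O(n)\) relative to its RIP-normalised version, because the row count per qubit is only \(s/n\) on average. Consequently, achieving \(\ell_2\) error \(\eps\) requires per-measurement accuracy \(\eps/\sqrt{n}\) when the noise is summed over the support, giving \(\widetilde O(s\sqrt n/\eps)\). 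Concretely, we would bound \(\|\hat\lambda-\lambda\|_2\le \sqrt{s}\,\|\cdot\|_\infty\) only on heavy coordinates and use a \(\sqrt n\)-loss least-squares bound elsewhere.

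\textbf{Main obstacle.} The hardest step is establishing the RIP/coherence of the product-state energy-difference design with only \(\widetilde O(s)\) rows, since rows are highly structured (one flipped qubit) rather than i.i.d. subgaussian. I would first try a bounded orthonormal system argument: viewing random local frames and signs as sampling characters, the normalised rows have entries bounded by \(2^{O(k)}\), and the Rudelson–Vershynin bound then gives RIP with \(\widetilde O(s)\) rows, up to the \(\sqrt n\) loss in the \(\ell_2\) case. Failure probabilities are combined by median amplification, adding only logarithmic factors.
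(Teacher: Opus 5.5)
There is a genuine gap in the step that carries both bounds. You claim that $\widetilde O(s)$ random single-flip product-state energy differences form a bounded orthonormal system with entries $2^{O(k)}$, and hence satisfy RIP. This is false.

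The single-qubit subroutine only compares states differing on one qubit $j$. Such a difference equals $-2\bra{x}H^{(j)}\ket{x}$, where $H^{(j)}$ is the sum of the Pauli terms of $H$ whose support contains $j$. Since the support is unknown, $j$ must range over all of $[n]$, say uniformly. A fixed term $P$ then contributes to a row only with probability $w_P/(n3^{w_P})$. After normalisation the sup-norm constant satisfies $K^2\gtrsim n$, so \Cref{lem:RIP} asks for $\widetilde O(ns)$ rows, not $\widetilde O(s)$.

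More bluntly, every nonzero term acts inside the set $\mathcal U$ of at most $ks$ qubits. A row whose flipped qubit lies outside $\mathcal U$ reads exactly zero. With $\widetilde O(s)$ rows and $n$ much larger than $ks^2$ (up to logarithms), essentially no row hits $\mathcal U$, so no decoder can succeed. Coherence-based $\ell_\infty$ guarantees would in any case need on the order of $s^2$ rows, by the Welch bound.

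Two further steps are unsupported. Your ``first find the support'' step is never specified, and it is exactly where a naive approach pays an additive $n/\eps$ by testing every qubit. Your $\ell_2$ paragraph inherits the RIP problem and is only heuristic. Moreover, an $\ell_\infty$ estimate at accuracy $\eps$ does not locate the support, because the coefficients it misses can carry $\ell_2$ mass up to $\sqrt{s}\,\eps$.

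What is missing is the per-qubit decomposition used in the paper.

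\textbf{Local recovery.} Fix the probe qubit $j$ and run compressed sensing only for the local vector $h^{(j)}$ of coefficients of terms containing $j$. For fixed $j$, the functions $\phi_{j,P}=3^{w_P/2}\mathbf 1\{P\preceq\sigma\}(-1)^{\sum_{i\in S_P}x_i}$ with $j\in S_P$ are orthonormal with sup-norm $3^{k/2}$. Hence $\widetilde O(\max\{1,d_j\})$ samples suffice, where $d_j=\norm{h^{(j)}}_0$ and $\sum_j d_j\le ks$.

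\textbf{Unknown local sparsity.} Because $d_j$ is unknown, the sparsity guess is doubled. Each candidate is checked on fresh data by a fixed-coordinate version of the single-qubit certifier (\Cref{col:cert}).

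\textbf{Parallel coordinate discovery.} To avoid paying for all $n$ qubits, random subsets $S$ with inclusion probability $1/(2ks)$ are probed using the same evolutions. Every qubit of $S$ is prepared in $\ket{+}$ and given control coefficient $1$. Whenever $S\cap\mathcal U=\{j\}$, this coincides on $\mathcal U$ with the single-coordinate verifier for $j$, since qubits outside $\mathcal U$ are inert. This yields a set $J$ with $\{j:\norm{h^{(j)}}_2>r\}\subseteq J\subseteq\mathcal U$ in time $\widetilde O(s/r)$.

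\textbf{Assembly.} Local recovery to error $r$ for each $j\in J$ gives $\ell_\infty$ error at most $r$ and $\norm{\widehat h-h}_2^2\le nr^2$. Taking $r=\eps$ or $r=\eps/\sqrt n$ gives the two stated bounds.
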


Thus, restricting the available operations to single qubits incurs only a factor of \(\sqrt n\) in our \(\ell_2\) learning cost, while retaining Heisenberg scaling and near-optimal step size. \Cref{tab:learning-comparison} summarizes the learning guarantees in this work and compares them with a selection of existing results in the literature.

\subsubsection{Certification and learning one coefficient}

\begin{table}[!t]
    \centering
    \footnotesize
    \setlength{\tabcolsep}{2.2pt}
    \renewcommand{\arraystretch}{1.35}
    \begin{tabular}{
        @{}
        >{\raggedright\arraybackslash}p{0.168\textwidth}
        >{\centering\arraybackslash}p{0.2\textwidth}
        >{\centering\arraybackslash}p{0.17\textwidth}
        >{\centering\arraybackslash}p{0.22\textwidth}
        >{\centering\arraybackslash}p{0.12\textwidth}
        @{}
    }
        \toprule
        Reference
        &
        \(T_{\rm tot}\)
        &
        \shortstack{step size / \\ time resolution}
        &
        \shortstack{HL limit}
        &
        \shortstack{only \\ single-qubit}
        \\
        \midrule

        Using learning~\cite{MaEtAl2024}
        &
        \(\widetilde{\mathcal O}(n^k/\eps)\)
        &
        \shortstack{\(\tau_{\rm step}^{-1}=\widetilde{\mathcal O}(\Lambda^2/\eps)\)}
        &
        \textcolor{blue}{Yes}
        &
        No
        \\
        \addlinespace[2pt]

        Gao et al.~\cite[Theorem~4.4]{GaoEtAl2025}
        &
        \textcolor{blue}{\({\Theta}(1/\eps)\)}
        &
        \(t_{\min}^{-1}=\mathcal O(n^{3k/2}\Lambda)\)
        &
        \textcolor{blue}{Yes}
        &
        No
        \\
        \addlinespace[2pt]

        Gao et al.~\cite[Theorem~5.5]{GaoEtAl2025}
        &
        \(\widetilde{\mathcal O}(n^{3k/2}/\eps)\)
        &
        \(t_{\min}^{-1}=\mathcal O(n^{9k/4}\Lambda)\)
        &
        \textcolor{blue}{Yes}
        &
        No
        \\
        \addlinespace[2pt]

        Bluhm et al.~\cite{BluhmEtAl2026}
        &
        \textcolor{blue}{\({\Theta}(1/\eps)\)}
        &
        \shortstack{\(t_{\min}^{-1}=\mathcal{O}(\Lambda^{3/2}/\sqrt{\eps})^{\ddagger}\)}
        &
        \textcolor{blue}{Yes}
        &
        No
        \\
        \addlinespace[2pt]

        Flammia et al.~\cite{flammia2026autonomous}
        &
        \(\mathcal O(n\Lambda/\eps^2)\)
        &
        \textcolor{blue}{\shortstack{\(\tau_{\rm step}^{-1}=\Theta(\Lambda)\)}}
        &
        No
        &
        \textcolor{blue}{Yes}
        \\

        \midrule

        \textbf{\Cref{thm:cert}}
        &
        \textcolor{blue}{\(\widetilde{\Theta}(1/\eps)\)}
        &
        \shortstack{\textcolor{blue}{\(\tau_{\rm step}^{-1}=\widetilde{\Theta}(\Lambda)\)}}
        &
        \textcolor{blue}{Yes}
        &
        No
        \\
        \addlinespace[2pt]

        \textbf{\Cref{thm:cert-single}}
        &
        \(\widetilde{\mathcal O}(n^{3/2}/\eps)\)
        &
        \shortstack{\textcolor{blue}{\(\tau_{\rm step}^{-1}=\widetilde{\Theta}(\Lambda)\)}}
        &
        \textcolor{blue}{Yes}
        &
        \textcolor{blue}{Yes}
        \\
        \bottomrule
    \end{tabular}

    \caption{
    Comparison of Hamiltonian-certification algorithms. Both \(H\) and the known reference \(H_0\) are traceless and \(k\)-local, with \(k=\mathcal O(1)\) and \(\norm{H}_{\op},\norm{H_0}_{\op}\leq\Lambda\). Distances are in normalized Frobenius norm, \(\eps\) is the far-case threshold, and bounds are stated for constant failure probability in the nontrivial regime $0<\eps\leq\Lambda$. $\tau_{\rm step}$ denotes step size, which is also known as common time-grid spacing, whereas \(t_{\min}\) is the minimum absolute duration of an unknown-evolution query. 
    The learning baseline applies Ma et al. followed by comparison with \(H_0\), using the step-size refinement described in \Cref{tab:learning-comparison}. The $\ddagger$ entry uses a modification of Bluhm et al. that aborts if a sampled evolution time is below $a/\eps$, for a sufficiently small constant $a>0$; this preserves constant success probability and the stated evolution-time scaling. Without this modification Bluhm et al. may require arbitrarily small time resolution. Gao et al.'s bounds specialize their sparsity parameter to \(O(n^k)\) and coefficient bound to \(\Lambda\); the timing entries follow from the query durations in Algorithms~1 and~4, respectively. Their coherent algorithm (Algorithm~1) requires controlled and inverse unknown evolution; their ancilla-free algorithm (Algorithm~4) uses potentially entangling known controls. The HL limit column is marked ``Yes'' when the total evolution time achieves the Heisenberg limit up to logarithm factors. The final column is marked ``Yes'' only when all state preparations, known controls, and measurements are tensor products of single-qubit operations, without ancillary qubits or initially entangled states. The state-of-the-art bounds up to polylogarithmic factors are colored blue.
    }
    \label{tab:cert-comparison}
\end{table}

The same energy-gap primitive also supports tasks that require less information than full reconstruction. First, we can certify whether an unknown local Hamiltonian is close to a known reference, with a tolerance gap between the close and far cases.

\begin{theorem}[Tolerant Hamiltonian certification, informal]\label{thm:intro-certification}
Let \(H\) and \(H_0\) be traceless, \(k\)-local Hamiltonians, where \(H_0\) is known and \(\norm{H}_{\op}\leq\Lambda\). Promised that one of the following conditions holds, we can distinguish
\begin{equation}
\norm{H-H_0}_{\F}\leq\frac{\eps}{12^k}
\qquad\text{from}\qquad
\norm{H-H_0}_{\F}\geq\eps,
\end{equation}
where \(\norm{\cdot}_{\F}\) is the normalized Frobenius norm, using total evolution time \(\widetilde{\mathcal O}(1/\eps)\) and step size \(\widetilde\Theta(1/\Lambda)\). A version using only product-state preparation, single-qubit controls and measurements, and no ancillary qubits has total evolution time \(\widetilde{\mathcal O}(n^{3/2}/\eps)\) with the same asymptotic step size. See \Cref{thm:cert,thm:cert-single} for the formal statements.
\end{theorem}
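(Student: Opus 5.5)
The plan is to reduce certification to estimating a single scalar related to \(\norm{H-H_0}_{\F}^2\) through energy expectation differences, using \Cref{thm:intro-eigenphase} as the only quantum primitive. Write \(D=H-H_0=\sum_{P}\mu_P P\), a traceless \(k\)-local operator with \(\norm{D}_{\F}^2=\sum_P\mu_P^2\). Since \(H_0\) is known, any energy difference \(\Delta_D(x,y)=\Delta_H(x,y)-\Delta_{H_0}(x,y)\) is obtained by estimating \(\Delta_H(x,y)\) with \Cref{thm:intro-eigenphase} and subtracting the classically computable \(\Delta_{H_0}(x,y)\). So we have Heisenberg-limited access, at step size \(\widetilde\Theta(1/\Lambda)\), to the linear functionals \(\Delta_D(x,y)\) for arbitrary pairs prepared by a common \(W\).

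Next, choose a random ensemble of pairs for which \(\E[\Delta_D(x,y)^2]\) is comparable to \(\norm{D}_{\F}^2\). Take \(W\) to be a random product of single-qubit Cliffords (or Haar single-qubit unitaries), possibly composed with a fixed two-qubit gate so that \(\ket{x},\ket{y}\) differ in a controlled way. For random product states \(\ket{x}=\bigotimes_j\ket{\psi_j}\), we have \(\bra{x}D\ket{x}=\sum_P\mu_P\prod_{j\in\supp P}r_j^{P_j}\) with independent Bloch components. Then \(\E\bra{x}D\ket{x}^2=\sum_P\mu_P^2 3^{-\wt(P)}\in[3^{-k}\norm{D}_{\F}^2,\norm{D}_{\F}^2]\). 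The difference with \(\ket{y}\), in which qubit \(1\) is flipped, kills terms not touching qubit \(1\). To fix this, randomize which qubit plays the role of the flipped one by conjugating \(W\) with a random permutation. Alternatively, take \(\ket{y}\) from an independent fresh random product state by choosing \(W\) so that both states are images of basis states; then \(\E\Delta_D^2=2\sum_P\mu_P^2 3^{-\wt(P)}\) up to a correlation term. I expect a bound of the form
\begin{equation}
c_1 4^{-k}\norm{D}_{\F}^2\le \E[\Delta_D^2]\le c_2\norm{D}_{\F}^2 .
\end{equation}
The constants are what produce the \(12^{k}\) tolerance after square roots and estimation error: \(12=3\cdot4\) arises from weight-dependent attenuation combined with the flipping loss.

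The test itself proceeds as follows. Sample \(m=O(\log(1/\delta)\cdot c^k)\) random pairs, and estimate each \(\Delta_D\) to additive error \(\eps/(C\cdot 12^k)\) at cost \(\widetilde O(1/\eps)\) each, by \Cref{thm:intro-eigenphase}. Then compare the empirical mean of \(\widehat\Delta^2\) to the threshold \(\eps^2 3^{-k}/2\). In the close case, every \(|\Delta_D|\le 2\norm{D}_{\op}\) is small in expectation, and Markov bounds the mean. In the far case, we need anticoncentration: \(\Delta_D\) is a degree-\(\le k\) polynomial in the random Bloch vectors, so hypercontractivity (Paley–Zygmund with \(\E\Delta^4\le 9^k(\E\Delta^2)^2\)) shows that a constant-\(9^{-k}\) fraction of samples have \(\Delta_D^2\gtrsim\E\Delta_D^2\). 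This is why a median-of-means or counting test over \(O_k(\log 1/\delta)\) samples suffices, keeping \(T_{\rm tot}=\widetilde O(1/\eps)\) for fixed \(k\).

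The main obstacle is the second step. We need a pair ensemble realizable by a single \(W\) mapping \(\ket{0^n},\ket{10^{n-1}}\), while \(\Delta_D\) still sees every term of \(D\) and the lower bound holds with only \(\exp(O(k))\) loss. My first attempt is \(W=V\cdot\mathrm{SWAP}\)-type randomization, with a random qubit flip location and random local Cliffords, followed by a direct Pauli-moment computation. For the single-qubit version, the primitive of \Cref{thm:intro-eigenphase} requires \(W\) to be a product, so \(\ket{x},\ket{y}\) differ on one qubit. Summing over all \(n\) flip positions (or sampling them) introduces the polynomial \(n\) overhead, which gives the stated \(\widetilde O(n^{3/2}/\eps)\) bound.
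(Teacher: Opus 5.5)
Your skeleton matches the paper's. You estimate gaps of $H$ with the eigenphase subroutine and subtract the classically known $H_0$ gaps. You sample random Pauli product states so that $\E\Delta_D^2\asymp\sum_P 3^{-\wt(P)}\mu_P^2$. You use a hypercontractive fourth-moment bound with Chebyshev in the far case, Markov in the close case, and a threshold on the empirical mean of squared gaps.

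The genuine gap is the step you yourself call the main obstacle, and it decides whether the general-control test costs $\widetilde{\mathcal O}(1/\eps)$. Your concrete first attempt, flipping one randomly placed qubit, cannot achieve this. For a uniform flip position $j$ one gets $\E_{x,j}\Delta_D^2=\frac4n\sum_P\wt(P)3^{-\wt(P)}\mu_P^2$. Moreover, the bound $\E\Delta_D^4\le 9^k(\E\Delta_D^2)^2$ fails once $j$ is random: for $D=\eps Z_1$ only a $1/n$ fraction of samples carry any signal. You would then need per-gap accuracy $\Theta(\eps/(\sqrt n\,3^{k/2}))$ and $\Theta(n)$ samples. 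That is the paper's single-qubit bound $\widetilde{\mathcal O}(n^{3/2}/\eps)$, not $\widetilde{\mathcal O}(1/\eps)$.

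Your alternative, an independent fresh $\ket{y}$, has the right statistics. By independence and tracelessness the correlation term vanishes, so $\E\Delta_D^2=2\sum_P3^{-\wt(P)}\mu_P^2\geq 2\cdot 3^{-k}\norm{D}_{\F}^2$, and $\E\Delta_D^4\le 8\norm{D}_{\F}^4$. However, it is not realizable as you state it. The subroutine needs $\ket{x}=W\ket{e_0}$ and $\ket{y}=W\ket{e_1}$, hence $\langle x|y\rangle=0$. Two independent random product Pauli eigenstates are non-orthogonal with probability $(5/6)^n>0$.

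The missing idea is small. The paper adjoins one ancilla qubit and runs the subroutine on $\ket0\ket{x}$ and $\ket1\ket{y}$ under $I\otimes H$. These states are always orthogonal, $\norm{I\otimes H}_{\op}=\norm{H}_{\op}$, and the gap is still $\bra xH\ket x-\bra yH\ket y$. Then $N=\lceil 64k9^k\rceil$ pairs at accuracy $\eps/(8\cdot 3^{k/2})$ give $\widetilde{\mathcal O}(1/\eps)$. An equally good ancilla-free fix uses the learning section's ensemble: $\ket{x}_\sigma$ versus $\ket{x\oplus r}_\sigma$ with uniform $r\in\{0,1\}^n$, realized by $W_{\sigma,x,r}=B_{\sigma,x}C_r$. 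Then $\Delta_D=2\sum_{P:\,r\cdot S_P=1}\mu_P\bra{x}_\sigma P\ket{x}_\sigma$, so again $\E\Delta_D^2=2\sum_P3^{-\wt(P)}\mu_P^2$ and $\E\Delta_D^4\le 16\norm{D}_{\F}^4$. The case $r=0$ gives a zero gap without queries.

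Two smaller corrections. First, the tolerance $12^k$ is not a product of a ``$3$'' and a ``$4$'' loss. It just makes the close-case Markov bound $\frac83(3/144)^k\le\frac18$, and any $\eps/(C3^{k/2})$ with a suitable constant $C$ would do. Second, for the single-qubit version you should derive the $n^{3/2}$ rather than assert it. It comes from needing $\Theta(n)$ single-qubit gaps, each to accuracy $\Theta(\eps/(\sqrt n\,3^{k/2}))$. This holds either with $\Theta(kn9^k)$ sampled positions, as in the paper, or with all $n$ positions for each of $O_k(1)$ sampled $x$. In particular, your $O(c^k\log(1/\delta))$ sample count does not carry over when $j$ is sampled.
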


Our certification algorithms are tolerant in the sense that they are guaranteed to accept Hamiltonians that may not be the target Hamiltonian exactly but are within \(\varepsilon/12^k\) distance, with high probability.
We note that no sparsity assumption is needed for the actual Hamiltonian in certification. With multiqubit controls, the evolution time has no polynomial dependence on system size, so certification can be substantially less costly than learning the entire Hamiltonian. We can also estimate a specified coefficient without imposing locality or sparsity on the unknown Hamiltonian. In Table~\ref{tab:cert-comparison} we compare our certification result with a selection of existing results in the literature.

The same energy-gap estimation subroutine also allows us to learn a specified Pauli coefficient without reconstructing the full Hamiltonian, and this task requires neither locality nor sparsity assumptions.
\begin{theorem}[Learning one Hamiltonian coefficient, informal]\label{thm:intro-single-coefficient}
Let \(H\) be an arbitrary traceless \(n\)-qubit Hamiltonian satisfying \(\norm{H}_{\op}\leq\Lambda\), and let \(P\) be a specified nonidentity Pauli string. There is an algorithm that estimates
\begin{equation}
    \lambda_P=2^{-n}\Tr(PH)
\end{equation}
to additive error \(\eps\), using total evolution time \(\widetilde{\mathcal O}(1/\eps)\) and step size \(\widetilde\Theta(1/\Lambda)\). The algorithm allows ancillary qubits and multiqubit operations. See \Cref{thm:single-coeff} for the formal statement.
\end{theorem}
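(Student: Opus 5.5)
We reduce single-coefficient estimation to the eigenphase-engineering primitive of \Cref{thm:intro-eigenphase}, applied to an enlarged Hamiltonian on system plus ancilla. The key identity is that for a Pauli string \(P\) and a traceless \(H\),
\begin{equation}
\lambda_P=2^{-n}\Tr(PH)=\frac{1}{2}\bigl(\bra{\Phi_+}(H\otimes I)\ket{\Phi_+}'-\bra{\Phi_-}(H\otimes I)\ket{\Phi_-}'\bigr),
\end{equation}
where the two states are suitable Pauli-twisted maximally entangled states. Concretely, let \(\ket{\Omega}\) be the maximally entangled state on \(n\) system qubits and \(n\) ancillas. Define
\begin{equation}
\ket{x}=\tfrac{1}{\sqrt2}\bigl(\ket{\Omega}+(Q\otimes I)\ket{\Omega}\bigr),\qquad \ket{y}=\tfrac{1}{\sqrt2}\bigl(\ket{\Omega}-(Q\otimes I)\ket{\Omega}\bigr),
\end{equation}
where \(Q=P\) if \(P\) is Hermitian-real-compatible. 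Then \(\bra{x}(H\otimes I)\ket{x}-\bra{y}(H\otimes I)\ket{y}=2\,\mathrm{Re}\,\bra{\Omega}(H\otimes I)(Q\otimes I)\ket{\Omega}=2^{1-n}\mathrm{Re}\Tr(HQ)\).

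Since \(\Tr(HP)\) is real for Hermitian \(H\) and Pauli \(P\), taking \(Q=P\) gives \(\Delta=2\lambda_P\). Because \(\bra{\Omega}Q\ket{\Omega}=2^{-n}\Tr Q=0\), the states \(\ket{\Omega}\) and \((P\otimes I)\ket{\Omega}\) are orthogonal, so \(\ket{x}\) and \(\ket{y}\) are orthonormal.

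First, I would build an explicit Clifford-type circuit \(W\) on \(2n\) qubits (plus possibly one extra qubit) with \(W\ket{0^{2n}}=\ket{x}\) and \(W\ket{10^{2n-1}}=\ket{y}\). One construction applies a Hadamard to the first qubit, uses it to control \(P\otimes I\) after Bell-pair preparation, and then applies a final Hadamard-type recombination. This requires checking that the first-qubit flip maps to the \(\pm\) combination, possibly after relabeling qubits so that the flipped basis state is \(\ket{10\cdots0}\). The circuit and its inverse are known and efficient.

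Second, I apply \Cref{thm:energy-gap-subroutine} to the Hamiltonian \(H'=H\otimes I\) acting on system, ancillas, and control qubit. Forward evolution of \(H'\) is exactly the forward evolution \(e^{-iHt}\) on the system, with identity on the ancillas, so the access model is respected. Moreover \(\norm{H'}_{\op}=\norm{H}_{\op}\leq\Lambda\), so the step size remains \(\widetilde\Theta(1/\Lambda)\).

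Third, estimating \(\Delta\) to accuracy \(2\eps\) yields \(\lambda_P\) to error \(\eps\) in total time \(\widetilde{\mathcal O}(1/\eps)\). The failure probability is handled by the subroutine's own logarithmic dependence, or by median amplification.

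The main obstacle I expect is whether the formal eigenphase-engineering theorem requires structural assumptions on \(H\), such as locality, or nondegeneracy conditions that depend on the relation between \(W\) and \(H\). If its guarantee is stated only for \(\norm{H}_{\op}\leq\Lambda\) and arbitrary known \(W\), as the informal version suggests, then nothing further is needed. Otherwise I would verify that the multiqubit version uses only the norm bound, since its engineered phase separation comes from the known controls. This explains why no locality or sparsity assumption is required.
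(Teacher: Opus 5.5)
Your proposal is correct and follows the paper's proof essentially step for step: the paper also applies the general energy-gap subroutine (\Cref{thm:energy-gap-subroutine}) to $H\otimes I$ on $2n$ qubits with the states $(\ket{\Omega}\pm(P\otimes I)\ket{\Omega})/\sqrt2$, whose gap is $2\lambda_P$, and uses $\norm{H\otimes I}_{\op}=\norm{H}_{\op}\leq\Lambda$ to keep the step size. The one place you are vague is the circuit $W$, which the paper makes explicit as $W=BC_PH_1$: here $B$ prepares $\ket{\Omega}$, and because $B^\dagger(P\otimes I)B$ is a Pauli, $B^\dagger(P\otimes I)\ket{\Omega}=\omega\ket{b}$ is a phased computational-basis state that $C_P$ reaches from $\ket{10^{2n-1}}$ using only CNOTs, a SWAP and a phase gate; by contrast, your literal ``controlled-$P$ then Hadamard on an extra control qubit'' would leave that qubit entangled with the register unless you uncompute it, although any known $W$ with $W\ket{0^{2n}}=\ket{x}$ and $W\ket{10^{2n-1}}=\ket{y}$ suffices for the stated time and step-size bounds.
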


Applying \Cref{thm:intro-single-coefficient} to all Pauli strings of weight at most \(k\) also yields agnostic learning of the Frobenius-nearest \(k\)-local approximation to an arbitrary Hamiltonian, with coefficient \(\ell_\infty\) error \(\eps\) and total evolution time \(\widetilde{\mathcal O}(n^k/\eps)\). These applications are developed in \Cref{sec:applications}.

\subsection{Comparison with previous works.}

Our main contribution is to combine a near-optimal common step size with the best previous total-evolution-time scalings for sparse Hamiltonian learning and local Hamiltonian certification. As summarized in \Cref{tab:learning-comparison,tab:cert-comparison}, our algorithms use \(\tau_{\rm step}=\widetilde\Theta(1/\Lambda)\), within polylogarithmic factors of the largest uniformly identifiable fixed-grid scale \(O(1/\Lambda)\)~\cite[Appendix~F]{ZhouGong2026}. This is stronger than a lower bound on the minimum query duration: every unknown-evolution query uses an integer multiple of the same step size, whereas a guarantee \(t_i\geq t_{\min}\) permits incommensurate query times. Throughout this comparison, we fix \(k=O(1)\) and suppress polylogarithmic factors.

For learning, \Cref{thm:near-upper} achieves coefficient \(\ell_2\) error \(\eps\) in total evolution time \(\widetilde O(s/\eps)\), matching the bound of Ma et al.~\cite{MaEtAl2024}. For coefficient \(\ell_\infty\) error, \Cref{thm:single-qubit-learning} matches the \(\widetilde O(s/\eps)\) bounds of Bakshi et al.~\cite{bakshi2024structure} and Abbas et al.~\cite{AbbasEtAl2025}, while using only product-state preparation, single-qubit controls and measurements, and no ancillary qubits. These efficiencies are attained without the polynomial dependence on precision in the pulse intervals of Refs.~\cite{MaEtAl2024} and \cite{AbbasEtAl2025}. While Ref.~\cite{bakshi2024structure} already obtains precision-independent timing, our result further removes a $1/s$ factor from the required step size while retaining their total-evolution-time scaling.

Prior algorithms that allow similarly coarse timing exhibit a different efficiency tradeoff. Shin et al.~\cite{shin2026heisenberg} obtain $t_{\min}=\Theta(s^{-1/K}/\Lambda)$ with total evolution time $\widetilde{\mathcal O}(s^{\mathcal O(K)}/\eps)$ for a chosen integer $K\geq1$. Taking $K=\Theta(\log s)$ gives constant $t_{\min}=\Theta(1/\Lambda)$ at quasipolynomial cost in $s$. Our results retain linear dependence on \(s\) at a near-optimal common step size. Zhou and Gong~\cite{ZhouGong2026} already allow a common grid of spacing \(\Theta(1/\Lambda)\), with \(T_\infty=\widetilde O(\Lambda/\eps^2)\) and, by norm conversion, \(T_2=\widetilde O(s\Lambda/\eps^2)\). Our general-control \(\ell_2\) learner improves the latter bound by a factor \(\Lambda/\eps\), up to logarithms. Our single-qubit \(\ell_\infty\) learner also improves their bound in the high-accuracy regime \(\eps\ll\Lambda/s\). 
Their protocol requires no interleaved controls, whereas ours uses discrete controls to obtain Heisenberg scaling. The improvement is therefore in evolution time, not a reduction in every experimental resource.

For tolerant certification, \Cref{thm:cert} attains \(\widetilde O(1/\eps)\) total evolution time, matching the optimal precision dependence of Gao et al.~\cite{GaoEtAl2025} and Bluhm et al.~\cite{BluhmEtAl2026} for a constant relative tolerance gap, now with a common step size \(\widetilde\Theta(1/\Lambda)\). Our algorithm uses only forward unknown evolution, unlike the coherent algorithm of Gao et al., which requires controlled and inverse unknown evolution. With only single-qubit operations, \Cref{thm:cert-single} gives \(\widetilde O(n^{3/2}/\eps)\) total evolution time. By comparison, the autonomous protocol of Flammia et al.~\cite{flammia2026autonomous} uses \(O(n\Lambda/\eps^2)\) time at step size \(\Theta(1/\Lambda)\), without interleaved controls. Our single-qubit protocol improves the evolution-time bound when \(\eps\ll\Lambda/\sqrt n\), up to logarithms, and accepts a constant-fraction \(\eps\)-neighborhood of the reference at fixed \(k\), rather than a neighborhood of radius \(O(\eps/\sqrt n)\).

The method in this work to attain the Heisenberg limit differs significantly from previous works. We extract Hamiltonian information from the eigenphases of the actual controlled evolution, rather than first simulating a simpler Hamiltonian by reshaping or learned cancellation~\cite{MaEtAl2024,bakshi2024structure,shin2026heisenberg}. Our eigenphase-engineering subroutine combines coherent phase amplification with high-order extrapolation on a common grid, so increasing the estimation accuracy does not require polynomially shorter evolution steps (\Cref{thm:energy-gap-subroutine}). Its single-qubit implementation requires a new perturbative argument: locality restricts which states can contribute at each order, and suitable irrational rotation angles control the relevant small eigenphase gaps without requiring a system-size-independent global gap (\Cref{thm:energy-single-qubit}). This subroutine supplies linear energy-gap measurements for sparse recovery and discrepancy tests for certification, providing one framework for both tasks. On the lower-bound side, \Cref{thm:lower-bound} uses a many-hypothesis packing and a truncated-Dyson dimension bound, rather than reducing the problem to pairwise discrimination of a single coefficient, as done in \cite{MaEtAl2024,AbbasEtAl2025}.

\subsection{Overview of the techniques.}

Many existing Heisenberg-limited protocols use dynamical decoupling or Hamiltonian reshaping to approximate evolution under a simpler effective Hamiltonian~\cite{huang2023learning,MaEtAl2024}. Controlling the accumulated approximation error over increasingly long experiments then forces the pulse interval to shrink with the desired precision. Methods that do not suffer from this limitation adopt a recursive approach to gradually refine Hamiltonian estimates and partially cancel its effect, as is done in \cite{bakshi2024structure,dutkiewicz2024advantage}. 
Our methods take a different route. Rather than approximating a target Hamiltonian, we instead engineer isolated eigenphases of the actual controlled evolution and read the desired information from their perturbative response. We call this method \emph{eigenphase engineering}.

The central primitive is a subroutine for the energy difference \(\bra{x}H\ket{x}-\bra{y}H\ket{y}\) between two known orthogonal states. This subroutine is first explicitly constructed and then employed for local Hamiltonian learning and certification. 
In this subroutine, we construct a known unitary $V$ whose relevant eigenvalues are separated from the remainder of the spectrum by constant gaps (see \eqref{eq:V}), and use it together with the unknown Hamiltonian evolution to build unitaries \(F_+(t)\) and \(F_-(t)\) in \eqref{eq:Fpm}. The gaps result in analytically varying eigenphases for \(F_+(t)\) and \(F_-(t)\), whose first derivatives at \(t=0\) are precisely the two energy expectations.
Moreover, \Cref{lem:analytic} shows that  combining the two signs gives the even expansion \eqref{eq:def-delta-a}. Consequently, repeated applications of \(F_+(t)\) and \(F_-(t)\) coherently amplify the desired phase while the eigenvector approximation error remains bounded independently of the repetition number, as described in \Cref{lem:estimate-f}. Robust phase estimation \cite{KimmelLowYoder2015} then estimates the finite-\(t\) quantity \(f(t)\) defined in \eqref{eq:def-f} with Heisenberg-limited scaling. Finally, a logarithmic-order extrapolation from several commensurate values of \(t\) removes the perturbative bias. This last step is essential to ensuring that all evolution times lie on one grid with spacing \(\widetilde\Theta(1/\Lambda)\), while retaining total evolution time \(\widetilde{\mathcal O}(1/\eps)\). The resulting subroutine is stated in \Cref{thm:energy-gap-subroutine}.

To reconstruct an unknown sparse Hamiltonian, we employ this subroutine on random Pauli product states. The gap function \(G(\sigma,x,r)\) in \eqref{eq:gap-def} is a linear function of the unknown Pauli coefficients, and the normalized functions appearing in \eqref{eq:linear-measurement} form a bounded orthonormal system. Thus the quantum part of the algorithm supplies noisy random linear measurements, while the classical part recovers the Hamiltonian coefficients through compressed sensing. The restricted-isometry estimate in \Cref{lem:RIP} implies that \(\widetilde{\mathcal O}(s)\) such measurements suffice even though the support of the Hamiltonian is unknown. Combining this sample bound with the energy-gap subroutine gives the total evolution time \(\widetilde{\mathcal O}(s/\eps)\) in \Cref{thm:near-upper}. 

Restricting every known operation to single-qubit gates creates a qualitatively new obstacle. Unlike the multiqubit control in \eqref{eq:V}, a product control cannot isolate the probe eigenphases from the entire orthogonal complement by a system-size-independent spectral gap. To see the obstruction, write $e^{i\alpha_j}$ for the ratio of the two eigenvalues of the $j$-th single-qubit control. Eigenphase differences between product eigenstates are subset sums of these angles modulo $2\pi$. For $n\geq2$, consider only qubits $2,\ldots,n$ and the $n$ partial sums $0,\alpha_2,\alpha_2+\alpha_3,\ldots,\sum_{j=2}^n\alpha_j$ modulo $2\pi$. Two of these points have circular distance at most $2\pi/n$. Their difference is the sum over a nonempty subset of $\{2,\ldots,n\}$ whose phase is within $\mathcal O(1/n)$ of $2\pi\mathbb Z$. Flipping these qubits in $\ket{e_0}$ gives a state outside $\operatorname{span}\{\ket{e_0},\ket{e_1}\}$, with an eigenphase this close to that of $\ket{e_0}$. Thus, in this example, the eigenphase gap between the probe subspace \(\operatorname{span}\{|e_0\rangle,|e_1\rangle\}\) and its orthogonal complement is at most \(O(1/n)\).
As a result, ordinary perturbation theory cannot be applied using a global constant gap.

To address this obstacle, we choose irrational single-qubit rotation angles satisfying a Diophantine lower bound on how closely integer multiples of the rotation angle can approach an integer, which separates the relevant eigenphase from states in a Hamming ball around the probe state. 
This restricted separation is useful due to locality, because an order-\(r\) perturbative term of a \(k\)-local Hamiltonian can change at most \(kr\) bits, as formalized in \Cref{lem:single-taylor-locality}. For the analysis, we introduce an auxiliary unitary that agrees with the physical product control inside this ball but moves the exterior spectrum away. Perturbation theory for the auxiliary unitary produces an approximate eigenpair for the physical circuit, with the approximation error appearing only at the $(L+1)$-th order, as shown in \Cref{lem:single-physical-defect}. Choosing the order logarithmically in the target precision keeps this approximation error small even after the long repetitions required by phase estimation, as summarized in \Cref{lem:estimate-fL}. The same extrapolation argument as in the multi-qubit control setting then yields \Cref{thm:energy-single-qubit}, with only polylogarithmic losses in the step size and no polynomial dependence on the system size.

Information-theoretic ideas underlie the improved lower bound in \Cref{thm:lower-bound}. Using a standard Hamming-packing estimate from coding theory~\cite[Theorem~4.2.1 and Section~4.2.1]{guruswami2012essential} (adapted to our setting in \Cref{lem:greedy-Hamming-packing}) together with the matrix Khintchine inequality, we construct an exponentially large family of sparse local Hamiltonians whose coefficient vectors are well separated while their operator norms remain small, as shown in \eqref{eq:lower-bound-hard-family}. Unlike previous lower bounds, which ultimately reduce learning to pairwise discrimination in a single coefficient using Assouad's lemma or a single-coefficient reduction~\cite[Theorem~11]{MaEtAl2024}\cite[Theorem~4.5]{AbbasEtAl2025}, our proof analyzes the entire hard family simultaneously. To accommodate fully adaptive protocols, we use the tree representation developed in \cite{chen2022exponential}. Truncating the Dyson expansion of a linear map representing one purified experiment at degree \(R\) confines all possible output states to the common low-dimensional subspace in \eqref{eq:lower-bound-rank}, while \eqref{eq:lower-bound-Dyson} bounds the discarded tail solely in terms of the total evolution time. Comparing the dimension of this subspace with the size of the chosen family of Hamiltonians shows that a short protocol cannot identify the unknown Hamiltonian, giving us the total evolution time lower bound in \Cref{thm:lower-bound}.

The same gap-estimation primitive also gives two applications that do not require full reconstruction. For tolerant certification, set \(A=H-H_0\) and sample energy differences on random Pauli product states. Hypercontractive moment bounds imply that when \(\norm{A}_{\F}\geq\eps\), the empirical squared gaps are detectably large, whereas when \(\norm{A}_{\F}\leq\eps/12^k\), they remain below a separated threshold. The two cases are quantified in \Cref{lem:far-gaps,lem:close-gaps}; in particular, the global and single-qubit-control tests are governed by \eqref{eq:far-global} and \eqref{eq:far-single}, respectively. We estimate each unknown gap using the appropriate eigenphase engineering subroutine, subtract the exactly known contribution of \(H_0\), and compare the root-mean-square residual with a fixed threshold. This proves \Cref{thm:cert,thm:cert-single}. Unlike full learning, certification needs no sparsity assumption: it detects the aggregate Frobenius discrepancy without identifying the offending coefficients.

Finally, a single specified Pauli coefficient can be isolated without assuming that the Hamiltonian is local or sparse. On two copies of the Hilbert space, let \(\ket{\Omega}\) be maximally entangled and compare the symmetric and antisymmetric superpositions of \(\ket{\Omega}\) and \((P\otimes I)\ket{\Omega}\). Their energy difference under \(H\otimes I\) is exactly twice the desired coefficient \(2^{-n}\Tr(PH)\). Applying the general energy-gap subroutine therefore estimates this coefficient with total evolution time \(\widetilde{\mathcal O}(1/\eps)\) and step size \(\widetilde\Theta(1/\Lambda)\), proving \Cref{thm:single-coeff}. Applying the construction coefficient by coefficient also gives the agnostic local approximation result in \Cref{thm:agnostic-learning}.

\section{Preliminaries and notation}
\label{sec:preliminaries}

In this section, we introduce the notation and technical tools used throughout the paper. We first define the matrix and coefficient norms for Hamiltonians. We then recall a sparse-recovery result for bounded orthonormal systems, which will be used to recover sparse Pauli coefficients from random energy-gap measurements. Finally, we review the extrapolation and eigenvalue perturbation tools used in the construction of the energy-gap estimation subroutine in \Cref{sec:energy-gap}.

\subsection{Hamiltonian norms}\label{subsec:notation}

Let $d=2^n$ denote the Hilbert-space dimension of an $n$-qubit system.

\begin{definition}[Matrix norms]
For an operator $A$, we define the normalized Frobenius norm and the operator norm by
\begin{equation}
    \norm{A}_{\F}^2
    :=
    d^{-1}\Tr(A^\dagger A),
    \qquad
    \norm{A}_{\op}
    :=
    \sup_{\norm{v}=1}\norm{Av}.
\end{equation}
\end{definition}

\begin{definition}[Coefficient norms]
Let $A=\sum_P a_P P$ be the Pauli expansion of $A$. We define the coefficient $\ell_p$-norm by
\begin{equation}
    \norm{A}_{\ell_p}
    :=
    \norm{a}_p.
\end{equation}
\end{definition}

Since the Pauli operators are orthonormal with respect to the normalized Hilbert--Schmidt inner product, the normalized Frobenius norm coincides with the coefficient $\ell_2$-norm. We will repeatedly use
\begin{equation}
    \norm{A}_{\F}
    =
    \norm{A}_{\ell_2},
    \qquad
    \norm{A}_{\F}
    \leq
    \norm{A}_{\op},
    \qquad
    \norm{A}_{\op}
    \leq
    \norm{A}_{\ell_1}.
    \label{eq:norm-relations}
\end{equation}

\paragraph{Notation.}
Let $\mathcal P_n:=\{I,X,Y,Z\}^{\otimes n}$ denote the set of
$n$-qubit Pauli operators, and define
$\Pnk:=\{P\in\mathcal P_n:1\leq\wt(P)\leq k\}$.
For $P\in\Pnk$, we denote its support and weight by
$S_P:=\supp(P)$ and $w_P:=\wt(P)$, respectively. We use $e_j\in\{0,1\}^n$ for the string with a single $1$ at coordinate $j$, write $x\oplus y$ for bitwise addition modulo two, and define $x\cdot S:=\sum_{i\in S}x_i\pmod2$. The notation $\operatorname{Ham}(x,y)$ denotes Hamming distance. For $\sigma\in\{X,Y,Z\}^n$, we write $P\preceq\sigma$ if $P_i=\sigma_i$ for every $i\in S_P$. For $x\in\{0,1\}^n$, let $\ket{x_i}_{\sigma_i}$ be the eigenstate of $\sigma_i$ with eigenvalue $(-1)^{x_i}$, and define the product state
\begin{equation}
    \ket{x}_\sigma:=
    \bigotimes_{i=1}^n \ket{x_i}_{\sigma_i}.
\end{equation}

\subsection{Sparse recovery from random function values}

Our Hamiltonian-learning algorithms reduce coefficient recovery to a sparse-recovery problem. We therefore recall a standard result for bounded orthonormal systems.

Let $(\Omega,\mu)$ be a probability space, and let
$\{\phi_j\}_{j=1}^M\subset L_2(\Omega,\mu)$ satisfy
\begin{equation}
    \mathbb E_{\omega\sim\mu}
    \left[
        \overline{\phi_j(\omega)}\phi_\ell(\omega)
    \right]
    =
    \delta_{j,\ell},
    \qquad
    \norm{\phi_j}_{L_\infty(\mu)}
    \leq
    K.
    \label{eq:orthonormal}
\end{equation}
Draw $\omega_1,\ldots,\omega_m$ independently from $\mu$ and define the normalized sampling matrix $A\in\mathbb C^{m\times M}$ by
\begin{equation}
    A_{r,j}
    :=
    \frac{1}{\sqrt m}\phi_j(\omega_r).
    \label{eq:normal-matrix}
\end{equation}
Suppose the observed data satisfy
\begin{equation}
    y=Az+e,
    \qquad
    \norm{e}_2\leq\eta.
    \label{eq:error-bound}
\end{equation}
When $z$ is sparse, it can be recovered from a number of samples essentially linear in its sparsity. The sufficient sample bound below follows from \cite[Theorem~2.3]{BrugiapagliaDirksenJungRauhut2021}; see also \cite[Theorem~4]{RauhutWard2011}.
\begin{lemma}[Sparse recovery from bounded orthonormal systems]
\label{lem:RIP}
Assume that \eqref{eq:orthonormal}--\eqref{eq:error-bound} hold, that $1\leq s\leq M$, and that $z$ is $s$-sparse. For $0<\delta<1/3$, if
\begin{equation}\label{eq:ortho-sample}
    m
    \geq
    C K^2 s
    (1+\log(sK^2))^2
    \log\left(\frac{M}{\delta}\right)
\end{equation}
for a universal constant $C$, then every solution of
\begin{equation}
    \widehat z
    \in
    \operatorname*{arg\,min}_{v\in\mathbb C^M}
    \norm{v}_1
    \quad
    \text{subject to}
    \quad
    \norm{Av-y}_2\leq\eta
    \label{eq:BPDN}
\end{equation}
satisfies
\begin{equation}
    \norm{\widehat z-z}_2
    \leq
    C_{\rm cs}\eta
\end{equation}
with probability at least $1-\delta$, where $C_{\rm cs}$ is a universal constant.
\end{lemma}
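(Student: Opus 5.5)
The plan is to derive \Cref{lem:RIP} from the restricted isometry property (RIP) of the normalized sampling matrix $A$, followed by the standard robust null space argument for basis pursuit denoising. The proof has two steps: a probabilistic statement about $A$, and a deterministic stability guarantee for \eqref{eq:BPDN} given that statement. The target is the RIP of order $2s$ with constant $\delta_{2s}(A)<1/\sqrt2$, meaning
\begin{equation*}
(1-\delta_{2s})\norm{v}_2^2\le\norm{Av}_2^2\le(1+\delta_{2s})\norm{v}_2^2 \quad \text{for all $2s$-sparse } v .
\end{equation*}
Any fixed absolute constant below the known deterministic threshold suffices.

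\textbf{Probabilistic step.} Write $A^\dagger A-I=\frac1m\sum_{r=1}^m\bigl(X_r-\E X_r\bigr)$, where $X_r=\phi(\omega_r)\phi(\omega_r)^\dagger$ and $\phi(\omega)=(\phi_j(\omega))_j$. Orthonormality \eqref{eq:orthonormal} gives $\E X_r=I$. The RIP constant equals the supremum of $|\langle v,(A^\dagger A-I)v\rangle|$ over the set $D_{2s}$ of unit $2s$-sparse vectors. I would bound the expectation of this supremum by symmetrization and Rudelson's lemma, combined with Dudley's entropy integral. The covering numbers of $D_{2s}$ are controlled in the seminorm $\max_r|\langle\phi(\omega_r),v\rangle|$ by Maurey's empirical method, and the boundedness $\norm{\phi_j}_\infty\le K$ is what makes that covering estimate work. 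Tail concentration then comes from Talagrand's (Bousquet's) inequality for suprema of empirical processes.

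This chain yields $\delta_{2s}\le\delta_*$ with probability $1-\delta$ once $m$ satisfies \eqref{eq:ortho-sample}. The squared-log factor $(1+\log(sK^2))^2$ and the factor $\log(M/\delta)$ are exactly the refined bound of \cite[Theorem~2.3]{BrugiapagliaDirksenJungRauhut2021}. I would simply invoke that theorem rather than reprove the chaining.

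\textbf{Deterministic step.} On the RIP event, $A$ has the $\ell_2$-robust null space property of order $s$: for every index set $S$ with $|S|\le s$,
\begin{equation*}
\norm{v_S}_2\le \frac{\rho}{\sqrt s}\norm{v_{S^c}}_1+\beta\norm{Av}_2 ,
\end{equation*}
with absolute constants $\rho<1$ and $\beta>0$. Now set $v=\widehat z-z$. Feasibility of $\widehat z$ and the noise bound \eqref{eq:error-bound} give $\norm{Av}_2\le2\eta$. Minimality of $\norm{\widehat z}_1$ together with $s$-sparsity of $z$ gives the cone condition $\norm{v_{S^c}}_1\le\norm{v_S}_1$, where $S=\supp z$. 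Combining these in the standard way yields $\norm{v}_2\le C_{\rm cs}\eta$.

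\textbf{Main obstacle.} The only difficult part is the probabilistic RIP bound with the stated logarithmic dependence, specifically the covering-number estimate and the chaining. I would take this directly from the cited references, since the lemma is explicitly stated as following from them. The deterministic step is textbook.
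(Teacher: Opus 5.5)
Your proposal is correct and follows essentially the same route as the paper: both obtain the RIP on an event of probability at least $1-\delta$ from Theorem~2.3 of Brugiapaglia, Dirksen, Jung, and Rauhut, and then run the deterministic basis-pursuit-denoising stability argument, with $h:=\widehat z-z$ satisfying $\|Ah\|_2\le2\eta$ and the cone condition $\|h_{S^c}\|_1\le\|h_S\|_1$. The only difference is packaging: the paper takes RIP of order $3s$ with constant $1/4$ and carries out the block-partition argument directly, which gives the explicit constant $C_{\rm cs}=4\sqrt5$, whereas you pass through the $\ell_2$-robust null space property at order $2s$ and cite the RIP-to-null-space implication instead of proving it.
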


\begin{proof}
The proof is given in \Cref{app:RIP}.
\end{proof}

\subsection{Extrapolation techniques}
\label{subsec:extrapolation}

The energy-gap estimation developed in \Cref{sec:energy-gap} first estimates a quantity at several nonzero evolution times and then extrapolates these values to $t=0$. We explain the extrapolation estimate used later.

Let $f$ be an even analytic function with expansion
\begin{equation}
    f(x)
    =
    f(0)
    +
    \sum_{j=1}^{\infty}a_jx^{2j},
    \qquad
    |a_j|
    \leq
    C_a\rho^{-2j}.
    \label{eq:even-expansion}
\end{equation}
Here $\rho$ denotes the radius on which the expansion is controlled.

Fix an integer $p\geq1$ and a step size $\tau>0$. We use the values
$f(\tau),f(2\tau),\ldots,f(p\tau)$ to estimate $f(0)$.
Suppose that $\widehat f_q$ estimates $f(q\tau)$ with error
$|\widehat f_q-f(q\tau)|\leq\eta_q$.
Define
\begin{equation}
    \widehat{\mathcal E}_p[f]
    :=
    \sum_{q=1}^p c_q\widehat f_q,
    \qquad
    c_q
    :=
    (-1)^{q-1}
    \frac{2(p!)^2}{(p-q)!(p+q)!}.
    \label{eq:extrapolated-value}
\end{equation}
The coefficients satisfy, for a universal constant $C$,
\begin{equation}
    A_p
    :=
    \sum_{q=1}^p|c_q|
    \leq
    C\sqrt p,
    \qquad
    S_p
    :=
    \sum_{q=1}^p\sqrt{|c_q|}
    \leq
    C\sqrt p.
    \label{eq:weight-bounds}
\end{equation}
Indeed, $|c_q|=2\prod_{j=0}^{q-1}(p-j)/(p+j+1)\leq2e^{-q^2/(2p)}$, and summing this bound and its square root gives \eqref{eq:weight-bounds}. The following lemma separates the extrapolation error from the error in estimating the individual values $f(q\tau)$.

\begin{lemma}[Noisy extrapolation]
\label{lem:noisy-extrapolation}
Suppose
$r:=p^2\tau^2/\rho^2<1$.
Then, for some constant $C_f=CC_a$,
\begin{equation}
    \left|
        \widehat{\mathcal E}_p[f]-f(0)
    \right|
    \leq
    \frac{C_f\sqrt p}{1-r}r^p
    +
    \sum_{q=1}^p|c_q|\eta_q.
    \label{eq:noisy-extrapolation}
\end{equation}
\end{lemma}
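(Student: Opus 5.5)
The plan is to split the error into a deterministic extrapolation bias and a propagated noise term via the triangle inequality:
\begin{equation}
    \left|\widehat{\mathcal E}_p[f]-f(0)\right|
    \leq
    \left|\sum_{q=1}^p c_q f(q\tau)-f(0)\right|
    +
    \sum_{q=1}^p|c_q|\,|\widehat f_q-f(q\tau)|.
\end{equation}
The second term is at most $\sum_q|c_q|\eta_q$ directly, so all the work is in the first term. For it, I will use the moment identities of the weights $c_q$. These are exactly the Richardson-type weights that cancel the even monomials $x^{2},\dots,x^{2(p-1)}$ while reproducing the constant term. Concretely, I would establish
\begin{equation}
    \sum_{q=1}^p c_q=1,
    \qquad
    \sum_{q=1}^p c_q q^{2j}=0\quad(1\leq j\leq p-1).
\end{equation}

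To prove these identities, I would extend the weights symmetrically to $q\in\{-p,\dots,p\}$. Set $d_q:=(-1)^{q}\binom{2p}{p+q}$, and note that $c_q=-2d_q/\binom{2p}{p}$ for $q\geq1$. The identity $\sum_{q=-p}^{p}d_q\,g(q)=0$ then holds for every polynomial $g$ of degree less than $2p$, because it is a $2p$-th finite difference. Taking $g(q)=q^{2j}$ with $1\le j\le p-1$, the term at $q=0$ vanishes and symmetry gives $2\sum_{q\geq1}d_q q^{2j}=0$. Taking $g\equiv1$ gives $d_0+2\sum_{q\ge1}d_q=0$, which yields $\sum_{q\ge1}c_q=1$. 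This is the step where I would be careful with signs and normalization, since everything else rests on it.

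Next I substitute the expansion \eqref{eq:even-expansion} into $\sum_q c_q f(q\tau)$. Absolute convergence holds because $q\tau\leq p\tau<\rho$. The constant term and the coefficients with $j\le p-1$ cancel by the identities, which leaves
\begin{equation}
    \sum_q c_q f(q\tau)-f(0)=\sum_{j\geq p}a_j\tau^{2j}\sum_{q=1}^p c_q q^{2j}.
\end{equation}
Using $|a_j|\le C_a\rho^{-2j}$ and $q^{2j}\le p^{2j}$, each term is bounded by $C_a A_p\, r^{j}$, where $r=p^2\tau^2/\rho^2$. Summing the geometric tail gives $C_aA_p r^p/(1-r)$, and \eqref{eq:weight-bounds} gives $A_p\le C\sqrt p$. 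This produces the stated constant $C_f=CC_a$ and completes the bound.

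I expect the main obstacle to be verifying the moment identities with the exact normalization of $c_q$. The tail estimate is routine once those identities are in hand.
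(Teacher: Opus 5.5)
Your proof is correct and follows the paper's argument. You separate the noise term by the triangle inequality, cancel the low-order terms with the moment identities $\sum_q c_q=1$ and $\sum_q c_q q^{2j}=0$ for $1\le j\le p-1$, and bound the tail by $C_aA_p r^p/(1-r)$ with $A_p\le C\sqrt p$. The only difference is in how the moment identities are justified: you derive them from the vanishing $2p$-th finite difference with $d_q=(-1)^q\binom{2p}{p+q}$, whereas the paper identifies the $c_q$ as the Lagrange interpolation weights at zero for the nodes $1^2,\ldots,p^2$; both justifications are valid.
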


\begin{proof}
Let $\mathcal E_p[f]:=\sum_{q=1}^pc_qf(q\tau)$. The weights $c_q$
are the Lagrange interpolation weights at zero for the nodes
$1^2,\ldots,p^2$, so
\begin{equation}\label{eq:moment-cancellation}
    \sum_{q=1}^pc_q=1,\qquad
    \sum_{q=1}^pc_qq^{2j}=0\quad(1\leq j\leq p-1).
\end{equation}
Substituting \eqref{eq:even-expansion} cancels all nonconstant terms
of degree below $2p$. Using \eqref{eq:weight-bounds} and
$r=(p\tau/\rho)^2<1$, we obtain
\begin{equation}
    \begin{aligned}
    |\mathcal E_p[f]-f(0)|
    &\leq \sum_{j\geq p}|a_j|\sum_{q=1}^p|c_q|(q\tau)^{2j}\\
    &\leq C_aA_p\sum_{j\geq p}r^j
    \leq \frac{C_f\sqrt p}{1-r}r^p.
\end{aligned}
\end{equation}
Finally, $|\widehat{\mathcal E}_p[f]-\mathcal E_p[f]|
\leq\sum_{q=1}^p|c_q|\eta_q$, which proves the claim.
\end{proof}

\subsection{Eigenvalue perturbation theory}

The construction of the energy-gap estimation requires controlling how selected eigenvalues and eigenvectors of a known control unitary change under a small perturbation. We employ analytic eigenvalue perturbation theory. Standard results guarantee analytic continuations of the chosen eigenvalue and eigenvector, together with the corresponding first-order derivative formula; see \cite[Theorems~1--3]{GreenbaumLiOverton2020}. For completeness, we establish below the explicit norm bounds needed in our analysis.

\begin{lemma}[Analytic perturbation]
\label{lem:quantitative-analytic-perturbation}
Let $A(z)$ be a matrix-valued analytic function for $|z|<r$, and suppose that $A(0)$ is normal. Let $\nu$ be a simple eigenvalue of $A(0)$ with normalized eigenvector $\ket{u}$, and define its spectral separation by
\begin{equation}
    g
    :=
    \min_{\substack{
        \mu\in\operatorname{spec}(A(0))\\
        \mu\neq\nu
    }}
    |\nu-\mu|.
    \label{eq:abstract-spectral-gap}
\end{equation}
Assume that
\begin{equation}
    \norm{A(z)-A(0)}_{\op}
    \leq
    \frac{g}{12}
    \label{eq:abstract-perturbation-smallness}
\end{equation}
for every $|z|<r$.

Then there exist an analytic eigenvalue $\lambda(z)$ and
a corresponding analytic eigenvector $\ket{\phi(z)}$ satisfying
$\lambda(0)=\nu$, $\ket{\phi(0)}=\ket{u}$, and
$\braket{u}{\phi(z)}=1$. Moreover,
\begin{equation}
    \norm{\ket{\phi(z)}-\ket{u}}
    \leq
    \frac{C}{g}
    \norm{A(z)-A(0)}_{\op},
    \qquad
    |\lambda(z)-\nu|
    \leq
    C\norm{A(z)-A(0)}_{\op},
    \label{eq:abstract-eigenpair-bounds}
\end{equation}
for a universal constant $C$. Also, the first derivative of the eigenvalue is
\begin{equation}
    \lambda'(0)
    =
    \bra{u}A'(0)\ket{u}.
    \label{eq:abstract-eigenvalue-derivative}
\end{equation}
\end{lemma}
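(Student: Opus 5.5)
We prove the lemma with the Riesz spectral projection. Let $\Gamma$ be the circle $|\zeta-\nu|=g/2$ in the complex plane. Because $A(0)$ is normal, $\norm{(\zeta-A(0))^{-1}}_{\op}=1/\operatorname{dist}(\zeta,\operatorname{spec}A(0))=2/g$ for every $\zeta\in\Gamma$. Write $E(z):=A(z)-A(0)$, with $\norm{E(z)}_{\op}\le g/12$ by \eqref{eq:abstract-perturbation-smallness}. The Neumann series
\begin{equation*}
(\zeta-A(z))^{-1}=(\zeta-A(0))^{-1}\sum_{m\ge0}\bigl[E(z)(\zeta-A(0))^{-1}\bigr]^m
\end{equation*}
converges uniformly on $\Gamma$, since each factor $E(z)(\zeta-A(0))^{-1}$ has norm at most $1/6$. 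This bound also gives $\norm{(\zeta-A(z))^{-1}}_{\op}\le (2/g)\cdot\frac{6}{5}$ on $\Gamma$. Define
\begin{equation*}
\Pi(z):=\frac{1}{2\pi i}\oint_\Gamma(\zeta-A(z))^{-1}\,d\zeta .
\end{equation*}
This is a projection that depends analytically on $z$, since the integrand is analytic in $z$ and uniformly controlled on $\Gamma$. Subtracting the $m=0$ term and integrating over a contour of length $\pi g$ gives
\begin{equation*}
\norm{\Pi(z)-\Pi(0)}_{\op}\le \frac{1}{2\pi}\,\pi g\,\frac{2}{g}\cdot\frac{6}{5}\cdot\norm{E(z)}_{\op}\frac{2}{g}=O\!\left(\frac{\norm{E(z)}_{\op}}{g}\right)\le O(1)<1 .
\end{equation*}
Two projections at distance less than $1$ have equal rank, so $\Pi(z)$ has rank one, like $\Pi(0)=\ketbra{u}{u}$. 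In particular, exactly one eigenvalue $\lambda(z)$ of $A(z)$ lies inside $\Gamma$.

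Next we construct the eigenpair. Set $\ket{\phi(z)}:=\Pi(z)\ket{u}/\bra{u}\Pi(z)\ket{u}$. The denominator is analytic in $z$, and its modulus is at least $1-\norm{\Pi(z)-\Pi(0)}_{\op}$, which is bounded below by a positive constant. Hence $\ket{\phi(z)}$ is analytic, lies in the range of $\Pi(z)$, and satisfies $\braket{u}{\phi(z)}=1$ and $\ket{\phi(0)}=\ket{u}$. Since the range of $\Pi(z)$ is the one-dimensional eigenspace for $\lambda(z)$, we have $A(z)\ket{\phi(z)}=\lambda(z)\ket{\phi(z)}$.

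For the eigenvector bound, write $\ket{\phi(z)}-\ket{u}=\bigl[(\Pi(z)-\Pi(0))\ket{u}+(1-c)\ket{u}\bigr]/c$ with $c=\bra{u}\Pi(z)\ket{u}$. Both $|1-c|$ and $\norm{(\Pi(z)-\Pi(0))\ket{u}}$ are $O(\norm{E(z)}_{\op}/g)$, and $|c|$ is bounded below. This gives the first inequality in \eqref{eq:abstract-eigenpair-bounds}.

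For the eigenvalue, take the inner product of the eigen-equation with $\bra{u}$ and use the normalization $\braket{u}{\phi(z)}=1$:
\begin{equation*}
\lambda(z)=\bra{u}A(z)\ket{\phi(z)}=\nu+\bra{u}E(z)\ket{\phi(z)},
\end{equation*}
where we used $\bra{u}A(0)=\nu\bra{u}$, which holds because $A(0)$ is normal. Then $|\lambda(z)-\nu|\le\norm{E(z)}_{\op}\norm{\phi(z)}\le C\norm{E(z)}_{\op}$, since $\norm{\phi(z)}=O(1)$. Analyticity of $\lambda(z)$ follows from the same formula, because both $E(z)$ and $\ket{\phi(z)}$ are analytic.

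The derivative formula follows by differentiating $\lambda(z)=\nu+\bra{u}E(z)\ket{\phi(z)}$ at $z=0$. Since $E(0)=0$, only the term with $E'(0)=A'(0)$ survives, and $\ket{\phi(0)}=\ket{u}$ gives \eqref{eq:abstract-eigenvalue-derivative}.

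The main obstacle is the uniform resolvent bound on $\Gamma$ together with the rank-preservation step. This is where normality of $A(0)$ and the constant $12$ in the smallness hypothesis are used. Everything after that is bookkeeping of constants.
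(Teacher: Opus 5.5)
Your proof is correct and follows essentially the same route as the paper. Both use the Riesz projector on the circle $|\zeta-\nu|=g/2$, the normality-based resolvent bound combined with a Neumann series, the estimate $\norm{\Pi(z)-\Pi(0)}_{\op}\le\frac{12}{5g}\norm{E(z)}_{\op}\le\frac15$, and the normalized eigenvector $\Pi(z)\ket{u}/\bra{u}\Pi(z)\ket{u}$. The main difference is the eigenvalue step: your identity $\lambda(z)=\nu+\bra{u}E(z)\ket{\phi(z)}$ yields the bound (with a constant slightly above one), analyticity, and $\lambda'(0)$ at once, whereas the paper uses a Bauer--Fike-type localization to get $|\lambda(z)-\nu|\le\norm{E(z)}_{\op}$ with constant one and cites a reference for analyticity of $\lambda$.
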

\begin{proof}
Set $A_0:=A(0)$, $E(z):=A(z)-A_0$, and
$\Gamma:=\{\zeta\in\mathbb C:|\zeta-\nu|=g/2\}$, traversed counterclockwise.
Since $A_0$ is normal, \eqref{eq:abstract-spectral-gap} gives
$\norm{(\zeta I-A_0)^{-1}}_{\op}\leq2/g$ for $\zeta\in\Gamma$.
The factorization $\zeta I-A(z)=(\zeta I-A_0)[I-(\zeta I-A_0)^{-1}E(z)]$
and $\norm{(\zeta I-A_0)^{-1}E(z)}_{\op}\leq1/6$ imply
\begin{equation}
    \norm{(\zeta I-A(z))^{-1}}_{\op}
    \leq\frac{1}{1-1/6}\frac2g=\frac{12}{5g}.
\end{equation}
We may therefore define
\begin{equation}
    P(z):=\frac{1}{2\pi i}\oint_\Gamma(\zeta I-A(z))^{-1}\,d\zeta.
\end{equation}
This spectral projector is analytic throughout $|z|<r$ and has constant rank one, since its rank is continuous and $P(0)=\ketbra{u}{u}$. The enclosed eigenvalue $\lambda(z)$ is therefore simple and analytic; see \cite[Section~3.3]{GreenbaumLiOverton2020}.

Using $(\zeta I-A(z))^{-1}-(\zeta I-A_0)^{-1}=(\zeta I-A(z))^{-1}E(z)(\zeta I-A_0)^{-1}$, we obtain
\begin{equation}
\begin{aligned}
    \norm{P(z)-P(0)}_{\op}
    &\leq\frac1{2\pi}\oint_\Gamma
    \norm{(\zeta I-A(z))^{-1}}_{\op}\norm{E(z)}_{\op}
    \norm{(\zeta I-A_0)^{-1}}_{\op}|d\zeta|\\
    &\leq\frac{\pi g}{2\pi}\frac{12}{5g}\norm{E(z)}_{\op}\frac2g
    =\frac{12}{5g}\norm{E(z)}_{\op}\leq\frac15,
\end{aligned}
\end{equation}
where $\Gamma$ has length $\pi g$.
Since $P(0)=\ketbra{u}{u}$, we have $|\bra uP(z)\ket u|\geq4/5$.
Thus $\ket{\phi(z)}:=P(z)\ket u/\bra uP(z)\ket u$ is an analytic eigenvector with the required normalization, and
\begin{equation}
    \norm{\ket{\phi(z)}-\ket u}
    =\frac{\norm{(I-P(0))(P(z)-P(0))\ket u}}{|\bra uP(z)\ket u|}
    \leq\frac54\norm{P(z)-P(0)}_{\op}
    \leq\frac3g\norm{E(z)}_{\op}.
\end{equation}

It remains to bound $\lambda(z)$. If
$\operatorname{dist}(\xi,\operatorname{spec}(A_0))>\norm{E(z)}_{\op}$,
then $\norm{(\xi I-A_0)^{-1}E(z)}_{\op}<1$ by normality.
Hence $\xi I-A(z)=(\xi I-A_0)[I-(\xi I-A_0)^{-1}E(z)]$ is invertible.
Thus some $\mu\in\operatorname{spec}(A_0)$ satisfies
$|\lambda(z)-\mu|\leq\norm{E(z)}_{\op}$.
Since $|\lambda(z)-\nu|<g/2$, every $\mu\neq\nu$ instead satisfies $|\lambda(z)-\mu|>g/2>\norm{E(z)}_{\op}$.
Therefore $\mu=\nu$ and $|\lambda(z)-\nu|\leq\norm{E(z)}_{\op}$.
Finally, differentiating the eigenvalue equation at zero and using
$\bra uA_0=\nu\bra u$ gives \eqref{eq:abstract-eigenvalue-derivative}.
\end{proof}

The lemma says that an isolated eigenvalue and its eigenvector remain close to their unperturbed values under a sufficiently small perturbation. In particular, the first derivative of the eigenvalue is determined by the expectation value of the perturbation in the original eigenvector. This relation will allow us to encode the desired Hamiltonian energy gap into the phase of an eigenvalue of the unitary that we design in the next section.

\section{Eigenphase engineering for energy-gap estimation}\label{sec:energy-gap}
We estimate Hamiltonian energy gaps using a technique that we call \emph{eigenphase engineering}, which we introduce in this section. We first describe its setting and objective. We assume access to $e^{-iHt}$ for chosen positive times $t$, together with known gates $W,W^\dagger$ and standard computational-basis gates and measurements. Let $\ket{e_0}=\ket{0^n}$, $\ket{e_1}:=\ket{10^{n-1}}$ and set $\ket{x}=W\ket{e_0}, \ket{y}=W\ket{e_1}$ and $K:=W^\dagger H W$. The goal of eigenphase engineering is to estimate the energy gap of $H$ between $\ket{x}$ and $\ket{y}$,
\begin{equation}\label{eq:Delta}
\Delta_H(W)=\bra{x}H\ket{x}-\bra{y}H\ket{y}
=\bra{e_0}K\ket{e_0}-\bra{e_1}K\ket{e_1}.
\end{equation}

We construct a subroutine for estimating $\Delta_H(W)$ using the known gates $W,W^\dagger$ and forward evolution under $H$, while maintaining a near-optimal step size. In \Cref{subsec:energy-general}, we allow $W$ to be an arbitrary unitary. In \Cref{subsec:energy-single-qubit}, we show that when $H$ is local and $W$ is a product unitary, the same task can be performed using only single-qubit controls and measurements. Both constructions achieve Heisenberg-limited total evolution time and a near-optimal step size.

\subsection{Estimating the energy gap of a Hamiltonian}\label{subsec:energy-general}
In this section, we develop eigenphase engineering for an arbitrary unitary $W$. The subroutine estimates the energy gap using the circuit in \Cref{fig:energy-gap-subroutine}. We then prove the correctness of the construction and analyze its total evolution time and step size, as summarized in \Cref{thm:energy-gap-subroutine}.
\begin{theorem}[Estimating the energy gap]\label{thm:energy-gap-subroutine}
Suppose $\ket{x}=W\ket{e_0}$ and $\ket{y}=W\ket{e_1}$, and assume $\|H\|_{\op}\leq\Lambda$.  For every $\eps\in(0,\Lambda)$ and $\delta\in(0,1/3)$, one can estimate $\bra{x}H\ket{x}-\bra{y}H\ket{y}$ to additive error $\eps$ with probability at least $1-\delta$, using step size
\begin{equation}\label{eq:energy-gap-step-size}
\tau=\Theta\left(\frac{1}{\Lambda(1+\log(\Lambda/\eps))}\right)
=\widetilde\Theta(1/\Lambda)
\end{equation}
and total evolution time $T_{\rm tot}=\widetilde O(\frac{1}{\eps}\log\frac{1}{\delta})$.
\end{theorem}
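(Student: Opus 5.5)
Work throughout in the rotated frame: conjugating by $W$ turns $e^{-iH t}$ into $e^{-iKt}$ with $K=W^\dagger HW$, and the target becomes $\bra{e_0}K\ket{e_0}-\bra{e_1}K\ket{e_1}$.

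\textbf{Known control.} Fix a known unitary $V$, diagonal in the computational basis, with eigenvalues $e^{i\theta_0}$ on $\ket{e_0}$, $e^{i\theta_1}$ on $\ket{e_1}$, and a common phase $e^{i\theta_2}$ on the orthogonal complement. The three phases are chosen at mutual circular distance bounded below by a constant, say $2\pi/3$. Then define
\[
F_\pm(t)=V^{\pm1}e^{-iKt}.
\]
(The paper's actual construction may place $V$ differently; only this structure is needed.) Set $z=t$ and $A(z)=V^{\pm1}e^{-iKz}$. Then
\[
\norm{A(z)-A(0)}_{\op}\le \Lambda|z|,
\]
so \Cref{lem:quantitative-analytic-perturbation} applies for $|z|<r=c/\Lambda$ with gap $g=\Theta(1)$. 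It gives analytic eigenvalues $e^{\pm i\theta_0}e^{-i\phi^{\pm}_0(t)}$ with
\[
\frac{d}{dt}\phi_0^\pm(0)=\bra{e_0}K\ket{e_0},
\]
by \eqref{eq:abstract-eigenvalue-derivative}, and similarly $\phi_1^\pm$ for $\ket{e_1}$.

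\textbf{Even function.} Define $f(t)$ from the difference of the $\ket{e_0}$ and $\ket{e_1}$ eigenphases, divided by $t$. To make $f$ even, average the $+$ and $-$ constructions. The intended symmetry is $t\mapsto -t$ together with $V\mapsto V^{-1}$, via complex conjugation or inversion: $F_-(-t)$ is the inverse of $e^{iKt}V$, which is conjugate to $F_+(t)^{-1}$. This pairing should kill the odd orders of $f$, so that $f(0)=\Delta_H(W)$ and $f$ has the form \eqref{eq:even-expansion} with $\rho=\Theta(1/\Lambda)$. The coefficient bounds $|a_j|\le C_a\rho^{-2j}$ come from Cauchy estimates, using that the eigenphases are bounded on the disc of radius $r$.

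\textbf{Estimating $f(q\tau)$.} To estimate $f(q\tau)$, prepare $\ket{e_0}$ (or a superposition with an ancilla reference, or $\ket{e_0}$ and $\ket{e_1}$ in superposition) and apply $F_\pm(q\tau)^m$. Each $F_\pm(q\tau)$ uses $q$ steps of size $\tau$, so all evolution stays on the grid. The true eigenvector lies within $O(\Lambda q\tau)$ of $\ket{e_0}$, and this overlap error does not grow with $m$. Therefore the phase $m\,q\tau f(q\tau)$ can be read off by robust phase estimation \cite{KimmelLowYoder2015}. Reaching accuracy $\eta$ in $f(q\tau)$ costs $\widetilde O\big((q\tau)^{-1}\eta^{-1}\cdot q\tau\log(1/\delta)\big)=\widetilde O(\eta^{-1}\log(1/\delta))$ evolution time. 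This requires the constant eigenvector infidelity to stay below the threshold tolerated by robust phase estimation, which holds once $\Lambda p\tau$ is a small constant.

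\textbf{Extrapolation.} Apply \Cref{lem:noisy-extrapolation} with $p=\Theta(1+\log(\Lambda/\eps))$ and $\tau=c/(\Lambda p)$, so that $r=p^2\tau^2/\rho^2\le 1/4$. The bias is then $O(\sqrt p\,4^{-p})\le\eps/2$ in units of $\Lambda$. Choose $\eta_q\propto\eps/(\sqrt{|c_q|}S_p)$, which gives $\sum_q|c_q|\eta_q\le\eps/2$. The total time is
\[
\sum_q \widetilde O(\eta_q^{-1})=\widetilde O(S_p^2/\eps)=\widetilde O(p/\eps).
\]
Splitting $\delta$ over the $p$ estimates with a union bound gives the stated $T_{\rm tot}$ and step size \eqref{eq:energy-gap-step-size}.

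\textbf{Main obstacle.} I expect the main difficulty to be arranging the even symmetry of $f$ precisely using only forward evolution, and proving that the phase-estimation error from the imperfect eigenvector stays bounded uniformly in the repetition count. For the first, I would try defining $f$ as the average of the $F_+$ and $F_-$ phase differences and check directly that the odd Taylor coefficients cancel. If they do not cancel, I would fall back to an extrapolation over all powers, which costs only constant factors in $p$.
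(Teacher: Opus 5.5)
Your proposal is correct and follows the paper's proof essentially step for step. The shared ingredients are:
\begin{itemize}
\item the same $F_\pm(t)=V^{\pm1}e^{-itK}$ with a constant-gap diagonal $V$ (the paper takes eigenvalues $1,-i,-1$);
\item analytic perturbation plus Cauchy bounds on the eigenphases;
\item robust phase estimation on the $\ket{e_0},\ket{e_1}$ superposition, with an $m$-independent $\mathcal O(\Lambda t)$ eigenvector error (the paper runs $F_-(t)^mF_+(t)^m$, so a single phase $2mtf(t)$ is read out);
\item extrapolation with $p=\Theta(1+\log(\Lambda/\eps))$, $\tau=\Theta(1/(\Lambda p))$ and $\eta_q\propto\eps/(S_p\sqrt{|c_q|})$.
\end{itemize}

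The evenness you hedge on holds exactly. Since $F_-(z)=V^\dagger F_+(-z)^{-1}V$, one gets $\theta_{a,-}(t)=-\theta_{a,+}(-t)$; note that $F_-(-t)$ is the inverse of $e^{-iKt}V$, not $e^{iKt}V$. This matters because your fallback is not cheap: the all-powers weights on nodes $\tau,\dots,p\tau$ are $\pm\binom pq$, whose absolute values sum to $2^p-1=\mathrm{poly}(\Lambda/\eps)$. Likewise, only your $\ket{e_0},\ket{e_1}$-superposition option is admissible, since an ancilla reference would require controlled evolution.
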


To prove \Cref{thm:energy-gap-subroutine}, we first analyze the output of the circuit in \Cref{fig:energy-gap-subroutine}. The circuit is built from the known unitaries $W,W^\dagger$, the forward evolution $e^{-itH}$, and a control unitary $V$. We choose
\begin{equation}\label{eq:V}
    V
    :=
    \ketbra{e_0}{e_0}
    +
    e^{-i\gamma}\ketbra{e_1}{e_1}
    -
    \left(
        I-\ketbra{e_0}{e_0}-\ketbra{e_1}{e_1}
    \right),
    \qquad
    \gamma:=\pi/2.
\end{equation}
Thus, $V$ has eigenvalues $1$ and $-i$ on $\ket{e_0}$ and $\ket{e_1}$, respectively, and eigenvalue $-1$ on their orthogonal complement. In particular, each of the two distinguished eigenvalues is separated from the rest of the spectrum by a constant gap of $\sqrt2$, which allows us to apply eigenvalue perturbation theory for sufficiently small $t$. We then define
\begin{equation}\label{eq:Fpm}
    F_+(t)
    :=
    Ve^{-itK}
    =
    VW^\dagger e^{-itH}W,
    \qquad
    F_-(t)
    :=
    V^\dagger e^{-itK}
    =
    V^\dagger W^\dagger e^{-itH}W.
\end{equation}
The resulting circuit is shown in \Cref{fig:energy-gap-subroutine}.

\begin{figure}[ht]
\centering
\begin{quantikz}[row sep=0.45cm,column sep=0.45cm]
\lstick{$\ket{+}_1$} & \gate[wires=2]{F_+(t)^m} & \gate[wires=2]{F_-(t)^m} & \meter{} & \rstick{$X_1$ or $Y_1$}\\
\lstick{$\ket{0^{n-1}}$} & \ghost{F_+(t)^m} & \ghost{F_-(t)^m} & \qw & \qw
\end{quantikz}

\vspace{0.35cm}

\begin{quantikz}[row sep=0.45cm,column sep=0.4cm]
\lstick{} & \gate[wires=2]{W} & \gate[wires=2]{e^{-itH}} & \gate[wires=2]{W^\dagger} & \gate[wires=2]{V} & \rstick{$F_+(t)$}\\
\lstick{} & \ghost{W} & \ghost{e^{-itH}} & \ghost{W^\dagger} & \ghost{V} & \qw
\end{quantikz}
\qquad
\begin{quantikz}[row sep=0.45cm,column sep=0.4cm]
\lstick{} & \gate[wires=2]{W} & \gate[wires=2]{e^{-itH}} & \gate[wires=2]{W^\dagger} & \gate[wires=2]{V^\dagger} & \rstick{$F_-(t)$}\\
\lstick{} & \ghost{W} & \ghost{e^{-itH}} & \ghost{W^\dagger} & \ghost{V^\dagger} & \qw
\end{quantikz}
\caption{The quantum circuit of the energy gap estimation subroutine.}
\label{fig:energy-gap-subroutine}
\end{figure}

To analyze the eigenphases of $F_\pm(t)$ for small $t$, we extend these matrices to a complex parameter $z$ by defining $F_+(z):=Ve^{-izK}$ and $F_-(z):=V^\dagger e^{-izK}$. This allows us to apply the analytic perturbation result in \Cref{lem:quantitative-analytic-perturbation} to the isolated eigenvalues associated with $\ket{e_0}$ and $\ket{e_1}$. The following lemma characterizes the resulting eigenphases and shows that the linear term of their Taylor expansions is given by the corresponding energy expectations $\bra{x}H\ket{x}$ and $\bra{y}H\ket{y}$, respectively.

\begin{lemma}\label{lem:analytic}
The matrix $F_+(z)$ has an analytic eigenvector $\ket{\phi_{a,+}(z)}$ and corresponding analytic eigenvalue $\lambda_{a,+}(z)=e^{-i\theta_{a,+}(z)}$ for $a=0,1$. For $|z|<c/\Lambda$, the eigenvectors and eigenvalues satisfy
\begin{equation}\label{eq:eigen-perturbation}
    \norm{\ket{\phi_{a,+}(z)}-\ket{e_a}} \le C\Lambda |z|,\quad |\lambda_{a,+}(z)-\lambda_{a,+}(0)| \le C\Lambda |z|,
\end{equation}
where $c,C$ are universal constants. The same argument holds for $F_{-}(z)$, with eigenvector $\ket{\phi_{a,-}(z)}$ corresponding to eigenvalue $\lambda_{a,-}(z)=e^{-i\theta_{a,-}(z)}$.
Choose the phases analytically with $\theta_{a,\pm}(0)=\pm a\pi/2$.

With the value at $t=0$ defined by continuity, we have
\begin{equation}\label{eq:def-delta-a}
     \delta_a(t):=\frac{\theta_{a,+}(t)+\theta_{a,-}(t)}{2t} = \bra{e_a}K\ket{e_a}+\sum_{j=1}^\infty b_{a,j}t^{2j},\quad |b_{a,j}| \le C\Lambda(C\Lambda)^{2j}.
\end{equation}
\end{lemma}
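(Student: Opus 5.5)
We plan to apply \Cref{lem:quantitative-analytic-perturbation} with $A(z)=F_+(z)=Ve^{-izK}$, so that $A(0)=V$. This is a normal (diagonal, unitary) matrix whose eigenvalues $\lambda_{0,+}(0)=1$ and $\lambda_{1,+}(0)=-i$ are simple with spectral separation $g=\sqrt2$.

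\textbf{Step 1: the perturbation bounds \eqref{eq:eigen-perturbation}.} Since $\norm{K}_{\op}=\norm{H}_{\op}\le\Lambda$, we have $\norm{A(z)-A(0)}_{\op}=\norm{e^{-izK}-I}_{\op}\le e^{\Lambda|z|}-1\le 2\Lambda|z|$ for $\Lambda|z|\le 1$. Choosing the constant $c$ small enough gives $\norm{A(z)-A(0)}_{\op}\le\sqrt2/12$ on $|z|<c/\Lambda$. \Cref{lem:quantitative-analytic-perturbation} then yields the analytic eigenpairs and the bounds \eqref{eq:eigen-perturbation}. The same argument applies verbatim to $F_-(z)=V^\dagger e^{-izK}$.

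Shrinking $c$ further, $|\lambda_{a,\pm}(z)-\lambda_{a,\pm}(0)|\le C\Lambda|z|\le Cc<1/2$. Hence $\lambda_{a,\pm}(z)$ stays in a disk around a unimodular point that avoids $0$. A branch of the logarithm is therefore analytic there, and $\theta_{a,\pm}(z)=i\log\lambda_{a,\pm}(z)$ is analytic with the prescribed values $\theta_{a,\pm}(0)=\pm a\pi/2$. This branch also gives a uniform bound $|\theta_{a,\pm}(z)-\theta_{a,\pm}(0)|\le C'$ on the whole disk.

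\textbf{Step 2: the symmetry that produces evenness (main step).} We have
\[
F_+(-z)^{-1}=e^{-izK}V^\dagger=e^{-izK}\bigl(V^\dagger e^{-izK}\bigr)e^{izK}=e^{-izK}F_-(z)e^{izK}.
\]
Hence $F_-(z)$ is similar to $F_+(-z)^{-1}$, and its spectrum consists of the reciprocals of the eigenvalues of $F_+(-z)$. The function $1/\lambda_{a,+}(-z)$ is analytic and is an eigenvalue of $F_-(z)$. At $z=0$ it equals $\overline{\lambda_{a,+}(0)}=\lambda_{a,-}(0)$, and it stays inside the isolating contour of that simple eigenvalue. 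By uniqueness of the enclosed eigenvalue, $\lambda_{a,-}(z)=1/\lambda_{a,+}(-z)$. With the chosen branches this gives $\theta_{a,-}(z)=-\theta_{a,+}(-z)$, since both sides agree at $0$ and are continuous.

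Consequently $\theta_{a,+}(z)+\theta_{a,-}(z)=\theta_{a,+}(z)-\theta_{a,+}(-z)$ is an odd analytic function, and $\delta_a(z)$ is even and analytic after removing the singularity at $0$. I expect the care needed here to lie in matching branches and eigenvalue labels; once that is done the rest is routine.

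\textbf{Step 3: the constant term and the coefficient bounds.} For the constant term, \eqref{eq:abstract-eigenvalue-derivative} gives
\[
\lambda_{a,+}'(0)=\bra{e_a}V(-iK)\ket{e_a}=-i\lambda_{a,+}(0)\bra{e_a}K\ket{e_a}.
\]
Since $\theta'=i\lambda'/\lambda$, this yields $\theta_{a,+}'(0)=\bra{e_a}K\ket{e_a}$. The relation $\theta_{a,-}(z)=-\theta_{a,+}(-z)$ then gives $\theta_{a,-}'(0)=\theta_{a,+}'(0)$, so $\delta_a(0)=\bra{e_a}K\ket{e_a}$.

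For the coefficient bounds, the odd function $\theta_{a,+}(z)-\theta_{a,+}(-z)$ is bounded by $2C'$ on $|z|<R:=c/\Lambda$. Cauchy's estimates bound its $z^{2j+1}$ coefficient by $2C'R^{-(2j+1)}$. Dividing by $2$ gives
\[
|b_{a,j}|\le C'(\Lambda/c)^{2j+1},
\]
which is of the stated form $C\Lambda(C\Lambda)^{2j}$ after enlarging the universal constant.
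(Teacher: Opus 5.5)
Your proposal is correct and follows essentially the same route as the paper: apply \Cref{lem:quantitative-analytic-perturbation} with gap $\sqrt2$, define the phases through a logarithm branch, use the similarity between $F_-(z)$ and $F_+(-z)^{-1}$ to obtain $\theta_{a,-}(z)=-\theta_{a,+}(-z)$, read off the constant term from \eqref{eq:abstract-eigenvalue-derivative}, and bound the odd Taylor coefficients by Cauchy estimates. The differences are cosmetic: you conjugate by $e^{-izK}$ rather than by $V$, you use a uniform bound on the full disk rather than the linear bound on a half-radius circle, and you spell out the eigenvalue-label and branch matching via the isolating contour, which the paper leaves implicit.
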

\begin{proof}
    The proof is given in \Cref{app:analytic}.
\end{proof}

\Cref{lem:analytic} identifies the eigenphases whose first-order behavior contains the desired energy expectations. We next show that these eigenphases can be accessed directly by the circuit in \Cref{fig:energy-gap-subroutine}. Since the corresponding eigenvectors remain $\mathcal O(\Lambda t)$-close to $\ket{e_0}$ and $\ket{e_1}$, repeated applications of $F_+(t)$ and $F_-(t)$ approximately preserve these two basis states while accumulating the phases $\theta_{a,+}(t)$ and $\theta_{a,-}(t)$. The following lemma makes this phase accumulation explicit and bounds the resulting error.

\begin{lemma}\label{lem:estimate-f}
    For $0<t\le c/\Lambda$ and every integer $m\geq0$, we have
    \begin{equation}\label{eq:circuit-approx}
        \norm{F_{-}(t)^mF_{+}(t)^m\ket{+}\ket{0^{n-1}} - \ket{\psi}} \le \mathcal{O}(\Lambda t),
    \end{equation}
    where $\ket{\psi}$ is
    \begin{equation}
        \ket{\psi}=\frac{1}{\sqrt2}(e^{-im(\theta_{0,+}(t)+\theta_{0,-}(t))}\ket{e_0}+e^{-im(\theta_{1,+}(t)+\theta_{1,-}(t))}\ket{e_1}).
    \end{equation}
\end{lemma}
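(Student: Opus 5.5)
Note first that $\ket{+}\ket{0^{n-1}}=\frac{1}{\sqrt2}(\ket{e_0}+\ket{e_1})$, since $\ket{e_1}=\ket{10^{n-1}}$. The plan is to replace each basis vector by the corresponding normalized perturbed eigenvector from \Cref{lem:analytic}, apply the powers exactly on these eigenvectors, and then return to the basis vectors. The key point is that the powers act on exact eigenvectors as pure phases, so any error is paid only at the two swaps between basis vectors and eigenvectors, never once per repetition. This makes the bound uniform in $m$.

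\textbf{Step 1 (phases are real).} For real $t$, $F_\pm(t)$ is unitary, so the eigenvalues $\lambda_{a,\pm}(t)$ have modulus one. Hence the analytically chosen phases $\theta_{a,\pm}(t)$ are real for real $t$ in the disc of \Cref{lem:analytic}.

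\textbf{Step 2 (normalize the eigenvectors).} Let $\ket{\hat\phi_{a,\pm}}:=\ket{\phi_{a,\pm}(t)}/\norm{\ket{\phi_{a,\pm}(t)}}$. By \eqref{eq:eigen-perturbation}, $\norm{\ket{\phi_{a,\pm}(t)}-\ket{e_a}}\le C\Lambda t$. Normalizing changes the vector by at most a constant factor times this, so $\norm{\ket{\hat\phi_{a,\pm}}-\ket{e_a}}=\mathcal O(\Lambda t)$. Shrink $c$ if needed so that $C\Lambda t<1/2$.

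\textbf{Step 3 (apply $F_+(t)^m$).} Write
\[
F_+^m\ket{e_a}=F_+^m\ket{\hat\phi_{a,+}}+F_+^m\bigl(\ket{e_a}-\ket{\hat\phi_{a,+}}\bigr).
\]
The first term equals $e^{-im\theta_{a,+}}\ket{\hat\phi_{a,+}}$ because $\ket{\hat\phi_{a,+}}$ is an eigenvector with a unimodular eigenvalue. The second term has norm $\mathcal O(\Lambda t)$ because $F_+^m$ is unitary. Swapping $\ket{\hat\phi_{a,+}}$ back to $\ket{e_a}$ costs another $\mathcal O(\Lambda t)$. Therefore
\[
F_+^m\ket{e_a}=e^{-im\theta_{a,+}}\ket{e_a}+\mathcal O(\Lambda t),
\]
with the error independent of $m$.

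\textbf{Step 4 (apply $F_-(t)^m$).} Apply the same argument with $F_-^m$ to the vector $e^{-im\theta_{a,+}}\ket{e_a}$. Since $F_-^m$ is unitary, it does not amplify the previous error. This gives
\[
F_-^mF_+^m\ket{e_a}=e^{-im(\theta_{a,+}+\theta_{a,-})}\ket{e_a}+\mathcal O(\Lambda t).
\]

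\textbf{Step 5 (combine).} Take the superposition with weight $1/\sqrt2$ on each $a\in\{0,1\}$ and use the triangle inequality. This yields \eqref{eq:circuit-approx} with a constant of roughly $8C/\sqrt2$.

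The only point needing care is that the error must not grow with $m$. This works because we never approximate $F_\pm$ as an operator; we only use the exact eigenvector identity together with unitarity. No step is a real obstacle beyond checking that the phases are real and that normalization does not spoil the $\mathcal O(\Lambda t)$ bound.
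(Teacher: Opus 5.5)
Your proof is correct and is essentially the paper's argument: you replace $\ket{e_a}$ by the perturbed eigenvector, pick up the exact phase $e^{-im\theta_{a,\pm}(t)}$, swap back, and use unitarity of $F_\pm(t)^m$, so the $\mathcal O(\Lambda t)$ error is paid only at the swaps and never grows with $m$. Your explicit normalization of the eigenvectors and your check that the phases are real are harmless refinements that the paper leaves implicit; the paper works directly with the unnormalized $\ket{\phi_{a,\pm}(t)}$, which suffices since $|\lambda_{a,\pm}(t)|=1$ for real $t$.
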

\begin{proof}
     We proved $\norm{\ket{\phi_{a,+}(t)}-\ket{e_a}}\le C\Lambda t$ in \eqref{eq:eigen-perturbation}. Using that $\ket{\phi_{a,+}(t)}$ is an eigenvector of $F_+(t)$, we have
    \begin{align}
        F_-(t)^mF_+(t)^m\ket{e_a}&=F_-(t)^mF_+(t)^m\ket{\phi_{a,+}(t)}+\mathcal{O}( \Lambda t) \nonumber \\
        &= e^{-im\theta_{a,+}(t)}F_-(t)^m\ket{\phi_{a,+}(t)}+\mathcal{O}( \Lambda t) \nonumber \\
        &=e^{-im\theta_{a,+}(t)}F_-(t)^m\ket{e_a}+\mathcal{O}(\Lambda t).
    \end{align}
    Applying the same logic to $\ket{\phi_{a,-}(t)}$ and $F_-(t)$, we have
    \begin{align}
        F_-(t)^mF_+(t)^m\ket{e_a}&=e^{-im\theta_{a,+}(t)}F_-(t)^m\ket{\phi_{a,-}(t)}+\mathcal{O}(\Lambda t) \nonumber \\
        &= e^{-im(\theta_{a,+}(t)+\theta_{a,-}(t))}\ket{\phi_{a,-}(t)}+\mathcal{O}(\Lambda t) \nonumber \\
        &=e^{-im(\theta_{a,+}(t)+\theta_{a,-}(t))}\ket{e_a}+\mathcal{O}(\Lambda t).
    \end{align}
    The claim follows by applying these identities to $\ket{+}\ket{0^{n-1}}=(\ket{e_0}+\ket{e_1})/\sqrt2$.
\end{proof}

Hence, the circuit accumulates the desired phase difference with only $\mathcal O(\Lambda t)$ error. Since this phase difference is $2mt(\delta_0(t)-\delta_1(t))$, we can estimate $\delta_0(t)-\delta_1(t)$ by applying robust phase estimation to measurements of the first qubit.

\begin{corollary}\label{col:estimate-f}
    Define
    \begin{equation}\label{eq:def-f}
        f(t):=\delta_0(t)-\delta_1(t).
    \end{equation}
    For $\eta\in(0,\Lambda)$, $\delta\in(0,1/3)$, and $0 < t\leq c/\Lambda$, one can estimate $f(t)$ within
    $\eta$-additive error with probability at least $1-\delta$ using
    $\mathcal{O}(\eta^{-1}\log(1/\delta))$ total evolution time.
\end{corollary}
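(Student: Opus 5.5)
The plan is to feed the circuit of \Cref{fig:energy-gap-subroutine} into robust phase estimation \cite{KimmelLowYoder2015}. First note what \Cref{lem:estimate-f} gives. The ideal state $\ket{\psi}$ has relative phase
\[
\Phi_m:=m\bigl[(\theta_{1,+}+\theta_{1,-})-(\theta_{0,+}+\theta_{0,-})\bigr](t)=-2mt\,f(t)
\]
between $\ket{e_0}$ and $\ket{e_1}$. Measuring $X_1$ or $Y_1$ on $\ket{\psi}$ therefore yields outcome probabilities $\tfrac12(1\pm\cos\Phi_m)$ and $\tfrac12(1\pm\sin\Phi_m)$. These are exactly the signals robust phase estimation uses for the unknown angle $\phi:=-2tf(t)$ at integer multiplier $m$. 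The relevant eigenvectors $\ket{\phi_{a,\pm}(t)}$ have no support on qubits $2,\dots,n$ to zeroth order, so measuring only qubit $1$ suffices.

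The key step is bounding the deviation of the true probabilities from the ideal ones. By \eqref{eq:circuit-approx}, the actual state is within $\mathcal O(\Lambda t)$ of $\ket{\psi}$ in Euclidean norm, uniformly in $m$. Hence each outcome probability differs from its ideal value by at most $\mathcal O(\Lambda t)$. Robust phase estimation tolerates a per-experiment probability error up to a fixed constant, for instance $1/\sqrt8$ in \cite{KimmelLowYoder2015}. I will shrink the constant $c$ in $t\le c/\Lambda$ so that $C\Lambda t$ stays below this threshold. This uniformity in $m$ is exactly what \Cref{lem:estimate-f} provides, and it is the crucial point: coherent amplification does not amplify the eigenvector error.

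Robust phase estimation with multipliers $m=2^j$, $j=0,\dots,J$, returns $\widehat\phi$ with $|\widehat\phi-\phi|_{2\pi}\le\eta'$ and success probability at least $1-\delta$. It uses total multiplier $\sum_j m_j\cdot(\text{repetitions})=\mathcal O(\eta'^{-1}\log(1/\delta))$. Here I use the standard refinement of repetition counts, with more repetitions at small $j$, so that the $\log(1/\delta)$ factor does not pick up a $\log\log$ term. Each application of multiplier $m$ costs evolution time $2mt$. Choosing $\eta'=2t\eta$ gives total time $\mathcal O(2t\cdot(t\eta)^{-1}\log(1/\delta))=\mathcal O(\eta^{-1}\log(1/\delta))$, independent of $t$.

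The remaining issue, and the main obstacle, is ambiguity modulo $2\pi$. We need $\phi=-2tf(t)$ to lie in a known interval of length less than $2\pi$, so that $\widehat f=-\widehat\phi/(2t)$ is well defined. By \eqref{eq:def-delta-a}, $|f(t)|\le 2\|K\|_{\op}+\sum_j 2C\Lambda(C\Lambda t)^{2j}=\mathcal O(\Lambda)$ once $C\Lambda t\le1/2$. Therefore $|\phi|\le\mathcal O(\Lambda t)<\pi$ after shrinking $c$ further, and the principal branch recovers $\phi$ exactly. This yields $|\widehat f-f(t)|\le\eta$.

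The condition $\eta<\Lambda$ only guarantees $J\ge0$, i.e.\ that the target precision is nontrivial relative to the prior range $\mathcal O(\Lambda)$. The step size is unaffected: every query has duration $t$, so any $t$ on the grid is admissible.
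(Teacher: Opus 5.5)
Your proposal is correct and follows essentially the same route as the paper. Both arguments use robust phase estimation with multipliers $m_j=2^j$, shrink $c$ so that the $m$-uniform $\mathcal O(\Lambda t)$ state error from \Cref{lem:estimate-f} falls below a constant probability tolerance, use $|f(t)|=\mathcal O(\Lambda)$ to remove the $2\pi$ ambiguity, and allocate failure probabilities geometrically to reach $\mathcal O(\eta^{-1}\log(1/\delta))$. The paper differs only in doing the stagewise bookkeeping explicitly (thresholds $1/64$, $\pi/8$, $3\pi/8$) rather than citing the $1/\sqrt8$ tolerance of \cite{KimmelLowYoder2015}, and in fixing the branch at the first stage rather than unwrapping at the end. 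Your end-unwrapping needs $|\phi|$ to stay a margin $\eta'$ away from $\pi$, which holds because $|\phi|=\mathcal O(c)$ and $\eta'=2t\eta<2c$.
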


\begin{proof}
Using \eqref{eq:def-f} and \eqref{eq:def-delta-a}, up to a global
phase, the ideal state in \Cref{lem:estimate-f} is
\begin{equation}
    \frac{1}{\sqrt2}
    \left(
        \ket{0}
        +
        e^{2imtf(t)}\ket{1}
    \right)
    \ket{0^{n-1}}.
\end{equation}
By \Cref{lem:analytic}, there is a universal constant $C_f$ such that $|f(t)|\leq C_f\Lambda$ for $t\leq c/\Lambda$, and by
\Cref{lem:estimate-f}, the actual output state differs from the ideal
state by at most $C_s\Lambda t$ for some universal constant $C_s$. Choose
\begin{equation}
    c\leq
    \min\left\{
        \frac{\pi}{8C_f},
        \frac{1}{64C_s}
    \right\}.
\end{equation}
Then $|2tf(t)|\leq\pi/4$, so the initial phase is determined without
a $2\pi$ ambiguity, while the circuit error changes each $X$- or
$Y$-measurement probability by at most $1/64$.

We now apply robust phase estimation~\cite{KimmelLowYoder2015} to the first qubit using repetition numbers $m_j=2^j$, $j=0,\ldots,J$.
At each stage, estimate both measurement probabilities to error $1/64$. Including the circuit error, the estimated sine and cosine have Euclidean error at most $\sqrt2/16$, giving phase error below $\pi/8$ modulo $2\pi$. The initial bound $|2tf(t)|\leq\pi/4$ fixes the first phase. At later stages, choose the $2\pi$-modulo nearest twice the preceding estimate; the correct modulo is within $3\pi/8<\pi$ of that value.
Hence, after the final stage, the estimate of $2tf(t)$ has error $\mathcal O(2^{-J})$.
Taking $J:=\lceil\log_2(1/(t\eta))\rceil$ gives error at most $\pi/(16t2^J)\leq\eta$ in $f(t)$.

We now calculate the required total evolution time. Assign failure probability
$\delta_j:=\delta\,2^{j-J-1}$ to stage $j$. Then
$\sum_{j=0}^J\delta_j<\delta$, and stage $j$ requires $\mathcal O(J-j+\log(1/\delta))$ repetitions. Therefore, the total
evolution time is
\begin{equation}
    T_{\rm tot}=
    \mathcal O\left(
        t\sum_{j=0}^J
        2^j\left(J-j+\log\frac1\delta\right)
    \right) =
    \mathcal O\left(
        t2^J\log\frac1\delta
    \right) =
    \mathcal O\left(
        \frac1\eta\log\frac1\delta
    \right),
\end{equation}
which concludes the proof.
\end{proof}

With \Cref{col:estimate-f}, we can efficiently estimate $f(t)$ at any sufficiently small nonzero time $t$. On the other hand, \Cref{lem:analytic} shows that $f(t)$ is an even analytic function whose value at $t=0$ is exactly the desired energy gap. Therefore, it remains to recover $f(0)$ from estimates of $f(t)$ at nonzero times. We accomplish this using the extrapolation procedure in \Cref{lem:noisy-extrapolation}, choosing all sampling times on a common time grid with near-optimal step size.

Now we prove \Cref{thm:energy-gap-subroutine}. Check that $f(0)$ is the desired energy gap $\bra{e_0}K\ket{e_0}-\bra{e_1}K\ket{e_1}$. By \eqref{eq:def-delta-a} in \Cref{lem:analytic} and \eqref{eq:def-f}, we have
\begin{equation}
    f(t)=
    f(0)+\sum_{j=1}^{\infty}a_jt^{2j},
    \qquad
    |a_j|
    \leq
    C_a\rho^{-2j},
    \label{eq:energy-gap-expansion}
\end{equation}
where $C_a\leq C\Lambda$. Note that $f$ is even and analytic for $|t|<\rho=c/\Lambda$. Hence, we estimate $f(q\tau)$ for various $q\in\mathbb{N}$ and employ the extrapolation in \Cref{lem:noisy-extrapolation} to estimate $f(0)$. 

We extrapolate with the nodes
\begin{equation}
    \tau:=\frac{\rho}{2p},
    \qquad
    t_q:=q\tau,
    \qquad
    1\leq q\leq p.
    \label{eq:energy-subroutine-nodes}
\end{equation}
Every $t_q$ lies in $(0,\rho)$ and $r:=\frac{p^2\tau^2}{\rho^2}=\frac14$. Hence, it satisfies the assumption of \Cref{lem:noisy-extrapolation} and therefore we can employ extrapolation with $f(t_q)$. Let $c_1,\ldots,c_p$ be the extrapolation weights in
\eqref{eq:extrapolated-value}, and recall from \eqref{eq:weight-bounds} that
\begin{equation}
    S_p
    :=
    \sum_{q=1}^p\sqrt{|c_q|}
    \leq
    C\sqrt p.
    \label{eq:energy-subroutine-Sp}
\end{equation}
For every $q$, choose the desired accuracy
\begin{equation}\label{eq:energy-subroutine-node-accuracy}
    \eta_q:=
    \min\left\{
        \frac{\eps}{4S_p\sqrt{|c_q|}},
        \frac{\Lambda}{2}
    \right\}.
\end{equation}
By \Cref{col:estimate-f}, estimating $f(t_q)$ to this accuracy with a failure probability of $\delta/p$ requires evolution time $\mathcal O\left(\frac1{\eta_q}\log\frac p\delta\right)$. Let $\widehat f_q$ be these estimates. Their extrapolated value and statistical error satisfy
\begin{align}
    \widehat{\mathcal E}_p[f]&:=\sum_{q=1}^pc_q\widehat f_q,
        \label{eq:energy-subroutine-estimator}\\
    \sum_{q=1}^p|c_q|\,|\widehat f_q-f(t_q)|
    &\leq\sum_{q=1}^p|c_q|\eta_q
    \leq\frac{\eps}{4S_p}\sum_{q=1}^p\sqrt{|c_q|}
    =\frac\eps4.
        \label{eq:energy-subroutine-stat-error}
\end{align}
The above succeeds with probability at least $1-\delta$ by a union bound. Consequently,
\Cref{lem:noisy-extrapolation} gives
\begin{equation}\label{eq:energy-subroutine-total-error}
    |\widehat{\mathcal E}_p[f]-f(0)|
    \leq C\Lambda\sqrt p\,4^{-p}+\frac\eps4.
\end{equation}
Choose 
\begin{equation}\label{eq:chosen-p}
    p:=\left\lceil C_p\left(1+\log\frac\Lambda\eps\right)\right\rceil
\end{equation} 
with $C_p$ sufficiently large that
$C\Lambda\sqrt p\,4^{-p}\leq\eps/2$. This proves the required error bound. For the required total evolution time, \eqref{eq:energy-subroutine-node-accuracy}, $S_p^2\leq Cp$, and $\eps<\Lambda$ imply
\begin{equation}\label{eq:energy-subroutine-summed-cost}
    \sum_{q=1}^p\frac1{\eta_q}
    \leq\frac{2p}{\Lambda}+\frac{4S_p^2}{\eps}
    \leq\frac{(4C+2)p}{\eps}.
\end{equation}
Therefore the total evolution time is $ T_{\rm tot}=\mathcal O\left(\frac p\eps\log\frac p\delta\right)=\widetilde{\mathcal O}\left(\frac1\eps\log\frac1\delta\right)$. Finally, every evolution time is an integer multiple of $\tau=\rho/(2p)=\Theta(1/[\Lambda(1+\log(\Lambda/\eps))]) =\widetilde\Theta(1/\Lambda)$, which proves the step size.

\subsection{Estimating the energy gap of a Hamiltonian with single-qubit gates}\label{subsec:energy-single-qubit}

We now construct the energy-gap subroutine using only single-qubit controls and measurements, assuming that $H$ is $k$-local and $W$ is a product of single-qubit unitaries. We therefore replace $V$ in \eqref{eq:Fpm} by a product of single-qubit gates.

Define the product control
\begin{equation}\label{eq:single-product-control}
    V:=\exp\left[-i\gamma\left(N_1+2\sum_{\ell=2}^nN_\ell\right)\right]
    =e^{-i\gamma N_1}\bigotimes_{\ell=2}^ne^{-i2\gamma N_\ell},
\end{equation}
where $ N_\ell:=\ketbra{1}{1}_\ell$ and $\gamma:=2\pi(\sqrt2-1)$. Both $V$ and $V^\dagger$ are implemented by depth-one single-qubit circuits. We use this control in the circuit of \Cref{fig:energy-gap-subroutine}. The following theorem establishes its accuracy and required total evolution time and step size.

\begin{theorem}[Estimating the energy gap with single-qubit gates]\label{thm:energy-single-qubit}
Suppose $W$ is a product of single-qubit unitaries and $\ket{x}=W\ket{e_0}$, $\ket{y}=W\ket{e_1}$. Assume that $H$ is $k$-local and $\|H\|_{\op}\leq\Lambda$.  For every $\eps\in(0,\Lambda)$ and $\delta\in(0,1/3)$, one can estimate $\bra{x}H\ket{x}-\bra{y}H\ket{y}$ to additive error $\eps$ with probability at least $1-\delta$, using step size
\begin{equation}
\tau=\Theta\left(\frac{1}{k\Lambda(1+\log(k\Lambda/\eps))^2}\right),
\end{equation}
total evolution time $T_{\rm tot}=\widetilde O(\eps^{-1}\log(1/\delta))$, and only single-qubit gates and measurements. For fixed $k$, the step size is $\widetilde\Theta(1/\Lambda)$.
\end{theorem}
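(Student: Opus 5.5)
The plan is to rerun the argument of \Cref{thm:energy-gap-subroutine} with the product control \eqref{eq:single-product-control} in place of \eqref{eq:V}. The global spectral gap is replaced by a gap that holds only inside a Hamming ball, and locality is used to show that the part of the spectrum outside the ball is invisible up to a high perturbative order.

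\emph{Setup and the Diophantine gap.} Since $W$ is a product of single-qubit unitaries, $K=W^\dagger HW$ is again $k$-local with $\norm{K}_{\op}\le\Lambda$. The control $V$ is diagonal in the computational basis, and $\ket{z}$ has eigenphase $\gamma(z_1+2\sum_{\ell\ge2}z_\ell)$. The probe phase $0$ is attained only by $e_0$, and $\gamma$ only by $e_1$. For any other $z$ with $\operatorname{Ham}(z,e_a)\le R$, the phase difference from the probe is $\gamma m$ with $0<|m|\le 2R+1$. Because $\gamma/2\pi=\sqrt2-1$ is a quadratic irrational, it is badly approximable, so $\operatorname{dist}(m(\sqrt2-1),\mathbb Z)\ge c_0/|m|$. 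Hence the probe eigenvalues are separated by $g\gtrsim 1/R$ from all eigenvalues of basis states in the ball $B_R:=B_R(e_0)\cup B_R(e_1)$. The same holds for $V^\dagger$.

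\emph{Auxiliary unitary.} Define a diagonal unitary $\widetilde V$ that agrees with $V$ on $B_R$ and assigns to every basis state outside $B_R$ a phase at distance $\Omega(1/R)$ from both probe phases, for instance a fixed phase midway in a gap. Then $\widetilde V$ has a genuine gap $g\gtrsim 1/R$. Applying \Cref{lem:quantitative-analytic-perturbation} to $\widetilde V e^{-izK}$ and $\widetilde V^\dagger e^{-izK}$ on $|z|<\rho:=c/(R\Lambda)$ gives analytic eigenpairs $(\widetilde\lambda_{a,\pm},\ket{\widetilde\phi_{a,\pm}})$ with bounds $\mathcal O(R\Lambda|z|)$. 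Repeating the proof of \Cref{lem:analytic} verbatim yields an even expansion
\[
\widetilde\delta_a(t)=\bra{e_a}K\ket{e_a}+\sum_j b_{a,j}t^{2j},\qquad |b_{a,j}|\le C\Lambda\rho^{-2j},
\]
where the first-order term is correct because $\bra{e_a}$ sees only $\widetilde V=V$ near $e_a$.

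\emph{Physical defect.} This is the main obstacle. One must show that $\ket{\widetilde\phi_{a,\pm}(t)}$ is an approximate eigenvector of the physical $F_\pm(t)$:
\[
\norm{F_\pm(t)\ket{\widetilde\phi}-\widetilde\lambda\ket{\widetilde\phi}}=\norm{(V-\widetilde V)e^{-itK}\ket{\widetilde\phi}}\le (C R\Lambda t)^{L+1}\quad\text{whenever } kL<R.
\]
The key step, in the spirit of \Cref{lem:single-taylor-locality}, is to prove by induction on the order, using the Taylor coefficients of the eigenvalue equation or the resolvent expansion, that the order-$r$ Taylor coefficients of both $e^{-itK}\ket{\widetilde\phi(t)}$ and $\ket{\widetilde\phi(t)}$ are supported on strings within Hamming distance $kr$ of $e_a$. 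The projected Taylor coefficients $(I-\widetilde V)$-type corrections act diagonally and do not enlarge the support. Consequently $V-\widetilde V$, which is supported outside $B_R$, annihilates all terms up to order $L$, and Cauchy estimates on the disk of radius $\rho$ bound the tail.

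Then, as in \Cref{lem:estimate-f}, the output after $F_-^mF_+^m$ differs from the ideal state by $\mathcal O(R\Lambda t)+2m(CR\Lambda t)^{L+1}$. In the robust phase estimation of \Cref{col:estimate-f}, $m\le 2^J\sim1/(t\eta)$. Taking $t\le\rho/2$ and $L=\Theta(\log(\Lambda/\eps))$ makes the defect term at most $1/64$. This forces $R=\Theta(kL)$ and hence $\rho=\Theta(1/(k\Lambda\log(\Lambda/\eps)))$.

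\emph{Extrapolation and cost.} Finally, apply \Cref{lem:noisy-extrapolation} to $f=\widetilde\delta_0-\widetilde\delta_1$ with nodes $\tau=\rho/(2p)$ and $p=\Theta(\log(\Lambda/\eps))$, exactly as in the proof of \Cref{thm:energy-gap-subroutine}. This gives total time $\widetilde{\mathcal O}(\eps^{-1}\log(1/\delta))$ and step size
\[
\tau=\Theta\bigl(1/(k\Lambda(1+\log(k\Lambda/\eps))^2)\bigr),
\]
with one logarithm coming from $R$ and one from $p$. All operations are $W,W^\dagger$, the depth-one gates $V,V^\dagger$, and single-qubit $X/Y$ measurements, so the protocol uses only single-qubit operations.
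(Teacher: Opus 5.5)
Your proposal is correct and follows the paper's proof essentially step by step: the Diophantine local gap, an auxiliary diagonal control that agrees with $V$ on a Hamming ball, Taylor-locality of the auxiliary eigenvector coefficients, the order-$(L+1)$ physical defect bound via Cauchy estimates, robust phase estimation, and extrapolation. Using a single auxiliary on $B_R(e_0)\cup B_R(e_1)$ rather than the paper's per-probe $\widehat V_{a,\pm}$ is immaterial; the only slip is that $L$ should be $\Theta(1+\log(k\Lambda/\eps))$, chosen for the smallest node accuracy $\min_q\eta_q=\Omega(\eps/\sqrt p)$, which matches the step size you state.
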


To prove \Cref{thm:energy-single-qubit}, we analyze the spectrum of the product control $V$ and identify the eigenvalues associated with the two states $\ket{e_0}$ and $\ket{e_1}$. Since $V$ is diagonal in the computational basis, it is convenient to label its eigenvectors by bit strings $u\in\{0,1\}^n$. For $u\in\{0,1\}^n$, write
\begin{equation}\label{eq:def-u}
    \ket{u}:=\ket{u_1\cdots u_n},
    \qquad
    E(u):=u_1+2\sum_{\ell=2}^nu_\ell.
\end{equation}
Then, $V\ket{u}=\nu_u\ket{u}, \nu_u:=e^{-i\gamma E(u)}$, so the eigenvector $\ket{u}$ has the corresponding eigenvalue $\nu_u$. 

Unlike the control used in \Cref{subsec:energy-general}, these eigenvalues are not separated by a constant gap. In particular, the eigenvalues corresponding to $\ket{e_0}$ and $\ket{e_1}$ can be arbitrarily close to other eigenvalues of $V$. To apply \Cref{lem:quantitative-analytic-perturbation}, we therefore introduce an auxiliary unitary used only in the proof. For $a\in\{0,1\}$ and an integer $L\geq1$ to be specified later, define
\begin{equation}\label{eq:single-auxiliary-V}
    \widehat V_{a,+}\ket{u}
    :=
    \begin{cases}
        V\ket{u},
        &
        \operatorname{Ham}(u,e_a)\leq kL,
        \\[1mm]
        -\nu_{e_a}\ket{u},
        &
        \operatorname{Ham}(u,e_a)>kL,
    \end{cases}
\end{equation}
and set $\widehat V_{a,-}:=\widehat V_{a,+}^\dagger$. The auxiliary unitary agrees with $V$ on the computational-basis states within Hamming distance $kL$ of $e_a$, while separating the selected eigenvalue from the rest of the spectrum. Then, the eigenvalue $\nu_{e_a}$ is separated from other eigenvalues of $\widehat V_{a,+}$ as shown in the following lemma.

\begin{lemma}[Local spectral gap]
\label{lem:single-local-gap}
For each \(a\in\{0,1\}\), the eigenvalue \(\nu_{e_a}\) of
\(\widehat V_{a,+}\) is nondegenerate and satisfies
\begin{equation}
    \min_{\substack{
        \nu\in\operatorname{spec}(\widehat V_{a,+})\\
        \nu\neq\nu_{e_a}
    }}
    |\nu-\nu_{e_a}|
    \geq
    \frac{1}{2kL}.
    \label{eq:single-local-gap}
\end{equation}
Similarly, the eigenvalue \(\nu_{e_a}^{-1}\) of
\(\widehat V_{a,-}\) is nondegenerate and satisfies
\begin{equation}
    \min_{\substack{
        \nu\in\operatorname{spec}(\widehat V_{a,-})\\
        \nu\neq\nu_{e_a}^{-1}
    }}
    |\nu-\nu_{e_a}^{-1}|
    \geq
    \frac{1}{2kL}.
\end{equation}
\end{lemma}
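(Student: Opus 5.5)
The spectrum of \(\widehat V_{a,+}\) consists of the values \(\nu_u=e^{-i\gamma E(u)}\) for \(\operatorname{Ham}(u,e_a)\leq kL\), together with the single value \(-\nu_{e_a}\) on the exterior. The exterior eigenvalue lies at distance \(|{-\nu_{e_a}}-\nu_{e_a}|=2\geq 1/(2kL)\), so only the interior eigenvalues need attention. For an interior \(u\neq e_a\), we have
\begin{equation}
|\nu_u-\nu_{e_a}| = |1-e^{-i\gamma(E(u)-E(e_a))}| = 2|\sin(\gamma m/2)|, \qquad m:=E(u)-E(e_a)\in\mathbb Z .
\end{equation}
The plan is to first show \(m\neq 0\) (nondegeneracy), and then bound \(|m|\), so that a Diophantine estimate on \(\gamma/(2\pi)=\sqrt2-1\) gives the lower bound.

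\emph{Bounding \(m\).} Flipping bit \(1\) changes \(E\) by \(\pm1\), and flipping any other bit changes it by \(\pm2\). If \(u\) differs from \(e_a\) in a set \(D\) with \(|D|=h\leq kL\), then \(m=\sum_{\ell\in D}\pm w_\ell\) with \(w_1=1\) and \(w_\ell=2\) for \(\ell\ge2\). Hence \(|m|\leq 2h\leq 2kL\).

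The case \(m=0\) needs care, since two flips on qubits \(\geq2\) with opposite signs could cancel. To handle it, I would compare the endpoints directly. For \(a=0\), \(E(e_0)=0\) and \(E(u)\geq1\) whenever \(u\neq e_0\), so \(m\geq1\). For \(a=1\), \(E(e_1)=1\), and \(E(u)=1\) forces \(u_1=1\) and \(u_\ell=0\) for all \(\ell\ge2\), i.e.\ \(u=e_1\). So \(m\neq0\) in both cases, which gives nondegeneracy once we know \(\sin(\gamma m/2)\neq0\). That follows from the irrationality of \(\sqrt2\): \(\gamma m/(2\pi)=m(\sqrt2-1)\notin\mathbb Z\) for \(m\neq0\).

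\emph{Quantitative step (main obstacle).} We have
\begin{equation}
2|\sin(\gamma m/2)| = 2|\sin(\pi m(\sqrt2-1))| \geq 4\,\|m\sqrt2\|,
\end{equation}
where \(\|\cdot\|\) is the distance to the nearest integer and we use \(|\sin\pi x|\geq 2\|x\|\). The Diophantine estimate I would use, for an integer \(q\) with \(1\le |m|\) and \(q\) the nearest integer to \(|m|\sqrt2\), is
\begin{equation}
|2m^2-q^2| \geq 1 \;\Longrightarrow\; \bigl||m|\sqrt2-q\bigr| \geq \frac{1}{|m|\sqrt2+q} \geq \frac{1}{(2\sqrt2+1)|m|}.
\end{equation}
Here \(2m^2-q^2\) is a nonzero integer by irrationality, and the last inequality uses \(q\le |m|\sqrt2+1/2\). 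Combining with \(|m|\leq 2kL\), the gap is at least
\begin{equation}
\frac{4}{(2\sqrt2+1)\cdot 2kL} > \frac{1}{2kL},
\end{equation}
since \(4/(2(2\sqrt2+1)) \approx 0.52 > 1/2\). This is where the constants are tight and need checking: \(|m|\le 2kL\) is used, and \(q\le\sqrt2|m|+1/2\le(\sqrt2+1/2)|m|\), so \(|m|\sqrt2+q\le(2\sqrt2+1/2)|m|\), which gives a slightly better constant.

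\emph{The case \(\widehat V_{a,-}\).} Since \(\widehat V_{a,-}=\widehat V_{a,+}^\dagger\) is diagonal with conjugated eigenvalues, and \(|\bar\nu-\bar\nu'|=|\nu-\nu'|\), the statement for \(\widehat V_{a,-}\) follows immediately.
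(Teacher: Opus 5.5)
Your proof is correct and follows the paper's argument essentially step for step. The steps are: the exterior eigenvalue at distance $2$, the endpoint comparison ($E(u)=0$ or $1$ forces $u=e_0$ or $e_1$) to rule out $m=0$, the bound $|m|\le 2kL$, and the Pell-type estimate $|2m^2-q^2|\ge 1$ combined with $|\sin \pi x|\ge 2\operatorname{dist}(x,\mathbb Z)$. The only difference is cosmetic: you bound the denominator by $|m|\sqrt2+q\le(2\sqrt2+\tfrac12)|m|$ where the paper uses the cruder $<4|m|$, so your gap constant is slightly better than the required $1/(2kL)$.
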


\begin{proof}
Set $\alpha:=\sqrt2-1$, so that $\gamma=2\pi\alpha$.
Every eigenvalue of $\widehat V_{a,+}$ other than $\nu_{e_a}$
is either $-\nu_{e_a}$ or $\nu_u$ for some $u\neq e_a$ with
$\operatorname{Ham}(u,e_a)\leq kL$. The first case satisfies
\begin{equation}\label{eq:auxiliary-outside-gap}
    |-\nu_{e_a}-\nu_{e_a}|=2\geq\frac1{2kL}.
\end{equation}
In the second case, define
\begin{equation}\label{eq:single-integer-r}
    r:=E(u)-E(e_a).
\end{equation}
The equality $E(u)=0$ holds only for $u=e_0$. Likewise, $E(u)=1$
requires $u_1=1$ and all other bits to be zero, so it holds only
for $u=e_1$. Thus $r\neq0$. Each changed bit alters $E$ by at
most two, giving
\begin{equation}\label{eq:r-upper-bound}
    0<|r|\leq2\operatorname{Ham}(u,e_a)\leq2kL.
\end{equation}
We use the Diophantine bound
\begin{equation}\label{eq:single-diophantine-bound}
    \operatorname{dist}(r\alpha,\mathbb Z)\geq\frac1{4|r|}.
\end{equation}
Indeed, if $m$ is the nearest integer to $\sqrt2|r|$, then
$|2r^2-m^2|\geq1$ and $m+\sqrt2|r|<4|r|$.
Dividing these inequalities gives
$|\sqrt2|r|-m|\geq1/(4|r|)$, as claimed.
For $y:=\operatorname{dist}(r\alpha,\mathbb Z)\in[0,1/2]$,
$\sin(\pi y)\geq2y$. Hence
\begin{equation}\label{eq:inside-ball-spectral-gap}
    |\nu_u-\nu_{e_a}|=|1-e^{-i2\pi\alpha r}|
    =2\sin(\pi y)\geq4y\geq\frac1{|r|}\geq\frac1{2kL}.
\end{equation}
Together with \eqref{eq:auxiliary-outside-gap}, this proves
\eqref{eq:single-local-gap} and nondegeneracy.
Finally, $\widehat V_{a,-}=\widehat V_{a,+}^\dagger$, and
$|\nu^{-1}-\nu_{e_a}^{-1}|=|\nu-\nu_{e_a}|$ for unit-modulus
eigenvalues. The same conclusion therefore holds for
$\widehat V_{a,-}$.
\end{proof}

The spectral separation in \Cref{lem:single-local-gap} allows us to apply \Cref{lem:quantitative-analytic-perturbation} to the analytic matrices
\begin{equation}
    \widehat{\mathcal F}_{a,\pm}(z)
    :=
    \widehat V_{a,\pm}e^{-izK},
    \qquad
    z\in\mathbb C.
\end{equation}
The proof follows the argument of \Cref{lem:analytic}, with spectral separation of order $1/(kL)$.

\begin{lemma}\label{lem:analytic-single}
Let $\Delta_L=kL\Lambda$. For $a\in\{0,1\}$, the matrix $\widehat{\mathcal F}_{a,+}(z)$ has an analytic eigenvector $\ket{\widehat\phi_{a,+}(z)}$ and corresponding analytic eigenvalue $\widehat\lambda_{a,+}(z)=e^{-i\widehat\theta_{a,+}(z)}$. For $|z|<c/\Delta_L$, the eigenvectors and eigenvalues satisfy
\begin{equation}\label{eq:perturbation}
    \norm{\ket{\widehat\phi_{a,+}(z)}-\ket{e_a}} \le C\Delta_L |z|,\quad |\widehat\lambda_{a,+}(z)-\widehat\lambda_{a,+}(0)| \le C\Lambda |z|,
\end{equation}
where $c,C$ are universal constants. The same statement holds for $\widehat{\mathcal F}_{a,-}(z)$, with eigenvector $\ket{\widehat\phi_{a,-}(z)}$ corresponding to eigenvalue $\widehat\lambda_{a,-}(z)=e^{-i\widehat\theta_{a,-}(z)}$. Choose the phases analytically with $\widehat\theta_{a,\pm}(0)=\pm\gamma E(e_a)$.

With the value at $t=0$ defined by continuity, we have
\begin{equation}\label{eq:delta-a}
     \widehat\delta_a(t):=\frac{\widehat\theta_{a,+}(t)+\widehat\theta_{a,-}(t)}{2t} = \bra{e_a}K\ket{e_a}+\sum_{j=1}^\infty b_{a,j}t^{2j},\quad |b_{a,j}| \le C\Lambda(C\Delta_L)^{2j}.
\end{equation}
\end{lemma}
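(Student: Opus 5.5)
We follow the proof of \Cref{lem:analytic}, now with the gap of \Cref{lem:single-local-gap}. We apply \Cref{lem:quantitative-analytic-perturbation} with $A(z)=\widehat{\mathcal F}_{a,\pm}(z)$, which is entire in $z$. The unperturbed matrix $A(0)=\widehat V_{a,\pm}$ is unitary, hence normal, and has the simple eigenvalue $\nu_{e_a}^{\pm1}$ with eigenvector $\ket{e_a}$. Its gap satisfies $g\ge 1/(2kL)$. Since
\begin{equation}
\norm{A(z)-A(0)}_{\op}=\norm{e^{-izK}-I}_{\op}\le e^{|z|\Lambda}-1\le 2\Lambda|z|\quad\text{for }|z|\le1/\Lambda,
\end{equation}
the smallness condition $\norm{A(z)-A(0)}_{\op}\le g/12$ holds once $|z|<c/\Delta_L$ with $c$ small. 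Here we used $\norm{K}_{\op}=\norm{H}_{\op}\le\Lambda$.

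The lemma then gives analytic eigenpairs. It also gives the bounds $\norm{\ket{\widehat\phi}-\ket{e_a}}\le (C/g)\,2\Lambda|z|\le C'\Delta_L|z|$ and $|\widehat\lambda(z)-\widehat\lambda(0)|\le C\Lambda|z|$, which is \eqref{eq:perturbation}. The eigenvalue stays within $g/2$ of the unit-modulus value $\nu_{e_a}^{\pm1}$, so it is bounded away from $0$. Hence $\widehat\theta_{a,\pm}(z):=i\log\widehat\lambda_{a,\pm}(z)$ is analytic on the disk, with the branch fixed by $\widehat\theta_{a,\pm}(0)=\pm\gamma E(e_a)$. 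It also satisfies $|\widehat\theta_{a,\pm}(z)-\widehat\theta_{a,\pm}(0)|\le C\Lambda|z|$.

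\textbf{Evenness.} The main structural step is to show that $\widehat\theta_{a,+}(z)+\widehat\theta_{a,-}(z)$ is odd in $z$, so that dividing by $2t$ gives an even function. Take the adjoint of $\widehat V_{a,+}e^{-i\bar zK}$. This gives $e^{izK}\widehat V_{a,-}$, which is similar to $\widehat V_{a,-}e^{izK}=\widehat{\mathcal F}_{a,-}(-z)$. The map $z\mapsto\overline{\widehat\lambda_{a,+}(\bar z)}$ is analytic and is an eigenvalue branch of $\widehat{\mathcal F}_{a,-}(-z)$ passing through $\nu_{e_a}^{-1}$ at $z=0$. Uniqueness of the isolated eigenvalue inside the contour then gives
\begin{equation}
\overline{\widehat\lambda_{a,+}(\bar z)}=\widehat\lambda_{a,-}(-z).
\end{equation}
For real $t$, unitarity gives $|\widehat\lambda_{a,\pm}(t)|=1$, so $\widehat\lambda_{a,-}(-t)=\widehat\lambda_{a,+}(t)^{-1}$. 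Taking logarithms with the chosen branches yields $\widehat\theta_{a,-}(-t)=-\widehat\theta_{a,+}(t)$ on real $t$, and hence on the whole disk by analytic continuation. Therefore $g_a(z):=\widehat\theta_{a,+}(z)+\widehat\theta_{a,-}(z)$ is odd with $g_a(0)=0$, and $\widehat\delta_a(z)=g_a(z)/(2z)$ is even and analytic on $|z|<c/\Delta_L$.

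For the constant term, \eqref{eq:abstract-eigenvalue-derivative} gives $\widehat\lambda_{a,\pm}'(0)=\bra{e_a}\widehat V_{a,\pm}(-iK)\ket{e_a}=-i\nu_{e_a}^{\pm1}\bra{e_a}K\ket{e_a}$. Hence $\widehat\theta_{a,\pm}'(0)=\bra{e_a}K\ket{e_a}$, and $\widehat\delta_a(0)=\bra{e_a}K\ket{e_a}$.

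\textbf{Coefficient bounds.} Apply Cauchy's estimate to $g_a$ on the circle $|z|=R:=c/(2\Delta_L)$. There $|g_a|\le 2C\Lambda R$, so the Taylor coefficient of $z^{2j+1}$ is at most $2C\Lambda R\cdot R^{-(2j+1)}$. Dividing by $2$ gives
\begin{equation}
|b_{a,j}|\le C\Lambda R^{-2j}=C\Lambda(2\Delta_L/c)^{2j},
\end{equation}
which is the claimed form after renaming constants.

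\textbf{Main obstacle.} The delicate point is the branch and uniqueness bookkeeping in the evenness step. One must check three things. First, the conjugated branch lies inside the same contour $|\zeta-\nu_{e_a}^{-1}|=g/2$, which holds by the $C\Lambda|z|\le g/12$ closeness. Second, the auxiliary definitions respect $\widehat V_{a,-}=\widehat V_{a,+}^\dagger$, which holds by definition. Third, the logarithm branches match at $z=0$.
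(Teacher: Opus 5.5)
Your proposal is correct and follows the paper's proof. Both apply \Cref{lem:quantitative-analytic-perturbation} with gap at least $1/(2kL)$ and perturbation at most $2\Lambda|z|$, take analytic logarithms, and use the symmetry $\widehat\theta_{a,-}(z)=-\widehat\theta_{a,+}(-z)$ together with the derivative formula and a Cauchy estimate on a circle of radius $\Theta(1/\Delta_L)$. The only difference is how you obtain the symmetry: you go through the adjoint, the reflection $z\mapsto\bar z$, unitarity on the real axis, and analytic continuation. The paper, as in \Cref{lem:analytic}, reads it off directly from the similarity $\widehat{\mathcal F}_{a,-}(z)=\widehat V_{a,+}^{\dagger}\,\widehat{\mathcal F}_{a,+}(-z)^{-1}\,\widehat V_{a,+}$, which gives $\widehat\lambda_{a,-}(z)=\widehat\lambda_{a,+}(-z)^{-1}$ for all complex $z$ in one step.
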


\begin{proof}
The perturbation $\widehat{\mathcal F}_{a,\pm}(z)-\widehat V_{a,\pm}$ has norm at most $2\Lambda|z|$, while \Cref{lem:single-local-gap} gives a separation of at least $1/(2kL)$. Applying \Cref{lem:quantitative-analytic-perturbation} by setting $A(z)=\widehat{\mathcal F}_{a,+}(z), A(0)=\widehat V_{a,+}$ proves \eqref{eq:perturbation}. The eigenvalue bound gives $|\widehat\theta_{a,+}(z)-\widehat\theta_{a,+}(0)|\leq C\Lambda|z|$. Using $\widehat\theta_{a,-}(z)=-\widehat\theta_{a,+}(-z)$ with the derivative formula and the Cauchy estimate on a circle of radius $\Theta(1/\Delta_L)$ then gives the even expansion and its coefficient bound, as in the proof of \Cref{lem:analytic}.
\end{proof}

\Cref{lem:analytic-single} provides the analogue of \Cref{lem:analytic} for the auxiliary matrices $\widehat{\mathcal F}_{a,\pm}(z)$, with the smaller range $|z|<c/\Delta_L$. These auxiliary matrices are used only in the analysis; the actual circuit still uses the operators $F_\pm(t)$. The key point is that the auxiliary and physical controls act identically on all computational-basis states that appear in the first $L$ orders of the expansion, because of the definition \eqref{eq:single-auxiliary-V}. As a result, repeated applications of the physical circuit accumulate the auxiliary eigenphases, with the error quantified below. The following lemma makes this phase accumulation and the resulting error precise.

\begin{lemma}\label{lem:estimate-fL}
    Let $\Delta_L=kL\Lambda$. There exist universal constants $c,C$ such that, for $0<t\le c/\Delta_L$ and every integer $m\geq0$, we have
    \begin{equation}\label{eq:circuit-approx-L}
        \norm{F_{-}(t)^mF_{+}(t)^m\ket{+}\ket{0^{n-1}} - \ket{\psi_L}} \le Ct\Delta_L + Cm(Ct\Delta_L)^{L+1},
    \end{equation}
    where $\ket{\psi_L}$ is
    \begin{equation}
        \ket{\psi_L}=\frac{1}{\sqrt2}(e^{-im(\widehat\theta_{0,+}(t)+\widehat\theta_{0,-}(t))}\ket{e_0}+e^{-im(\widehat\theta_{1,+}(t)+\widehat\theta_{1,-}(t))}\ket{e_1}).
    \end{equation}
\end{lemma}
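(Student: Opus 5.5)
The approach follows \Cref{lem:estimate-f}, with one new ingredient. The auxiliary eigenvector $\ket{\widehat\phi_{a,+}(t)}$ of $\widehat{\mathcal F}_{a,+}(t)$ is not an exact eigenvector of the physical $F_+(t)=Ve^{-itK}$. We therefore first bound the \emph{defect}
\begin{equation*}
    \norm{F_{\pm}(t)\ket{\widehat\phi_{a,\pm}(t)}-e^{-i\widehat\theta_{a,\pm}(t)}\ket{\widehat\phi_{a,\pm}(t)}}
    \le (Ct\Delta_L)^{L+1}.
\end{equation*}
Given this defect bound, a telescoping argument shows that each application of $F_\pm(t)$ adds error at most the defect. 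Hence
\begin{equation*}
    F_\pm(t)^m\ket{\widehat\phi_{a,\pm}}=e^{-im\widehat\theta_{a,\pm}}\ket{\widehat\phi_{a,\pm}}+O\bigl(m(Ct\Delta_L)^{L+1}\bigr),
\end{equation*}
using unitarity of $F_\pm$. The remaining steps copy \Cref{lem:estimate-f}. Replace $\ket{e_a}$ by $\ket{\widehat\phi_{a,+}}$, which costs $C\Delta_L t$ by \eqref{eq:perturbation}, and apply $F_+^m$. Swap to $\ket{\widehat\phi_{a,-}}$, again at cost $O(\Delta_L t)$, and apply $F_-^m$. Finally swap back to $\ket{e_a}$. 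Summing over $a$ with the $1/\sqrt2$ weights gives \eqref{eq:circuit-approx-L}.

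For the defect, note that $F_+-\widehat{\mathcal F}_{a,+}=(V-\widehat V_{a,+})e^{-itK}$. The operator $V-\widehat V_{a,+}$ is diagonal and supported on basis states $\ket{u}$ with $\operatorname{Ham}(u,e_a)>kL$, and has norm at most $2$. It therefore suffices to show that $e^{-itK}\ket{\widehat\phi_{a,+}(t)}$ has amplitude at most $(Ct\Delta_L)^{L+1}$ outside the Hamming ball $B:=\{u:\operatorname{Ham}(u,e_a)\le kL\}$. The key structural fact is that $K=W^\dagger HW$ is still $k$-local, because $W$ is a product unitary. A degree-$r$ monomial in $K$ thus moves a basis state by Hamming distance at most $kr$; this is the paper's \Cref{lem:single-taylor-locality}.

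To use locality, expand everything in powers of $z$. The analytic eigenvector $\ket{\widehat\phi_{a,+}(z)}=\sum_r z^r\ket{\phi^{(r)}}$ can be written through the Rayleigh--Schr\"odinger recursion, or equivalently the Riesz projector $P(z)\ket{e_a}$, expanded as a Neumann series in $E(z)=\widehat V_{a,+}(e^{-izK}-I)$. At order $z^r$, every term is a product of resolvents of $\widehat V_{a,+}$, which are diagonal in the computational basis, with at most $r$ factors of $K$ applied to $\ket{e_a}$. Consequently $\ket{\phi^{(r)}}$ is supported within Hamming distance $kr$ of $e_a$. The same holds for $e^{-izK}\ket{\widehat\phi_{a,+}(z)}$. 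The outside-$B$ component is therefore an analytic function of $z$ that vanishes to order $L+1$ at $z=0$. By \eqref{eq:perturbation} it is bounded by $O(1)$ on the disk $|z|<c/\Delta_L$. A Schwarz-lemma or Cauchy estimate then yields the bound $(Ct\Delta_L)^{L+1}$ for $t\le c'/\Delta_L$. The $F_-$ case is identical with $\widehat V_{a,-}$ in place of $\widehat V_{a,+}$.

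The main obstacle is the vanishing-order claim. It is the one step not inherited from \Cref{lem:estimate-f}, and its constants must not depend on $n$. The support property of each Taylor coefficient holds because the resolvent is diagonal. The uniform $O(1)$ bound needed for the Cauchy estimate comes from \Cref{lem:quantitative-analytic-perturbation}: the gap $1/(2kL)$ against a perturbation of norm $O(\Lambda|z|)$ is exactly why the radius is $c/\Delta_L$. The absorption of the factor $\norm{V-\widehat V}\le2$ and of the eigenvalue normalization into $C$ is routine.
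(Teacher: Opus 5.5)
Your proposal is correct and matches the paper's proof essentially step for step. The paper likewise proves Taylor locality of the auxiliary eigenvector (\Cref{lem:single-taylor-locality}, via the Rayleigh--Schr\"odinger recursion rather than your equivalent Riesz-projector expansion), writes the defect as $(V_s-\widehat V_{a,s})e^{-izK}\ket{\widehat\phi_{a,s}(z)}$ with vanishing Taylor coefficients through order $L$, and bounds it by a Cauchy estimate on a circle of radius $\Theta(1/\Delta_L)$. It then telescopes and swaps between $\ket{e_a}$ and $\ket{\widehat\phi_{a,\pm}(t)}$ exactly as you describe.
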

\begin{proof}
    The proof is given in \Cref{app:estimate-fL}.
\end{proof}

\Cref{lem:estimate-fL} shows that the physical circuit accumulates the phase difference $2mtf_L(t)$, while its output differs from the corresponding ideal phase-evolution state $\ket{\psi_L}$ by at most $Ct\Delta_L+Cm(Ct\Delta_L)^{L+1}$.
We can therefore apply the same robust phase-estimation procedure as in \Cref{col:estimate-f}. By choosing $L$ logarithmically large in $k\Lambda/\eta$, this error remains a small constant even for the largest value of $m$ used in phase estimation. This gives the following corollary.

\begin{corollary}\label{col:estimate-fL}
    Let $\eta\in(0,\Lambda)$ and $\delta\in(0,1/3)$, and choose an integer
    $L\geq\lceil C_L(1+\log(k\Lambda/\eta))\rceil$ for a sufficiently large
    universal constant $C_L$. Define
    \begin{equation}\label{eq:def-fL}
        f_L(t):=\widehat\delta_0(t)-\widehat\delta_1(t).
    \end{equation}
    For $0<t\leq c/(kL\Lambda)$, one can estimate $f_L(t)$ to additive
    error $\eta$ with probability at least $1-\delta$, using
    $\mathcal O(\eta^{-1}\log(1/\delta))$ total evolution time.
\end{corollary}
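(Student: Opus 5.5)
We follow the proof of \Cref{col:estimate-f} almost verbatim, replacing \Cref{lem:estimate-f} by \Cref{lem:estimate-fL} and \Cref{lem:analytic} by \Cref{lem:analytic-single}. First, by \Cref{lem:analytic-single}, the ideal state $\ket{\psi_L}$ equals, up to a global phase, $\frac{1}{\sqrt2}(\ket{0}+e^{2imtf_L(t)}\ket{1})\ket{0^{n-1}}$. The eigenvalue bound $|\widehat\theta_{a,\pm}(t)-\widehat\theta_{a,\pm}(0)|\leq C\Lambda t$ controls this phase. Indeed, $\widehat\theta_{a,+}(0)+\widehat\theta_{a,-}(0)=0$, so $|f_L(t)|\leq C_f\Lambda$ for a universal $C_f$. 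Notably, $C_f$ depends on $\Lambda$ rather than on $\Delta_L$, because the eigenvalue bound in \eqref{eq:perturbation} is in terms of $\Lambda|z|$. Shrinking $c$ so that $t\leq c/(kL\Lambda)\leq c/\Lambda$ then gives $|2tf_L(t)|\leq\pi/4$, which removes the initial $2\pi$ ambiguity.

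Next we run robust phase estimation with $m_j=2^j$ for $j=0,\dots,J$, where $J=\lceil\log_2(1/(t\eta))\rceil$. As in \Cref{col:estimate-f}, it suffices that every circuit's output is within a small constant, say $1/64$ in measurement probability, of the ideal state. By \Cref{lem:estimate-fL}, the deviation is at most $Ct\Delta_L+Cm_J(Ct\Delta_L)^{L+1}$. We make the first term at most $1/128$ by choosing $c$ small, since $t\Delta_L\leq c$. For the second term, note that $m_J\leq 2/(t\eta)$. Then, using $t\Delta_L\leq c$ and writing $Cc=:\kappa\leq 1/2$,
\begin{equation}
Cm_J(Ct\Delta_L)^{L+1}\leq \frac{2C}{t\eta}\,\kappa^{L}\,Ct\Delta_L=\frac{2C^2kL\Lambda}{\eta}\,\kappa^{L}.
\end{equation}
The factor $t$ cancels. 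We therefore need $kL\Lambda\,2^{-L}/\eta$ to be below a small constant. This holds once $L\geq C_L(1+\log(k\Lambda/\eta))$ with $C_L$ large, since then $2^{-L/2}\leq(\eta/(k\Lambda))^{C_L/2}$ and $L2^{-L/2}$ is bounded. This is the step where the hypothesis on $L$ enters, and it is the only genuinely new point. The key is to use the extra factor of $t$ in $(Ct\Delta_L)^{L+1}$ to absorb the $1/t$ in $m_J$, so that no condition on $t$ beyond $t\leq c/\Delta_L$ is needed. The second term also grows with $m$, but it is bounded uniformly by its value at $m_J$.

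With the per-stage deviation bounded by a constant, the stage-by-stage argument of \Cref{col:estimate-f} applies unchanged. Each stage estimates the sine and cosine to Euclidean error $\sqrt2/16$, giving phase error below $\pi/8$, and the branch is chosen nearest to twice the previous estimate. The final error in $f_L(t)$ is then at most $\eta$. The failure budget $\delta_j=\delta2^{j-J-1}$ gives total evolution time $\mathcal O(t2^J\log(1/\delta))=\mathcal O(\eta^{-1}\log(1/\delta))$. The evolution time at stage $j$ is $2m_jt$ per shot, so the circuit depth enters the cost only through $t2^J$, exactly as before.
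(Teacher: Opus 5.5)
Your proposal is correct and follows the paper's proof essentially step for step. It uses the same bound $|f_L(t)|\leq C_f\Lambda$ (depending on $\Lambda$, not $\Delta_L$) to fix the initial branch and the same robust phase-estimation schedule. It also uses the same key cancellation: the $1/t$ in $m_J\leq 2/(t\eta)$ is absorbed by one factor of $t\Delta_L$ from $(Ct\Delta_L)^{L+1}$, leaving a term of order $(kL\Lambda/\eta)\kappa^L$ that the choice of $L$ makes small. The paper's bookkeeping differs only slightly, using $C_st\Delta_L\leq 1/4$ and $L4^{-L}\leq 2^{-L}$ in place of your $\kappa\leq 1/2$ and $L2^{-L/2}$.

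One point should be stated more carefully. The smallness comes from $L\geq C_L$, which makes $L2^{-L/2}\leq C_L2^{-C_L/2}$ small (equivalently, from the factor $2^{-C_L/2}$ produced by the ``$1+$'' in the hypothesis on $L$). The two facts as you wrote them, $2^{-L/2}\leq(\eta/(k\Lambda))^{C_L/2}$ together with mere boundedness of $L2^{-L/2}$, only give an $O(1)$ bound when $\eta$ is close to $k\Lambda$.
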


\begin{proof}
Using \eqref{eq:def-fL}, up to a global phase, the ideal state in
\Cref{lem:estimate-fL} is
\begin{equation}
    \frac{1}{\sqrt2}
    \left(\ket{0}+e^{2imtf_L(t)}\ket{1}\right)
    \ket{0^{n-1}}.
\end{equation}
By \Cref{lem:analytic-single}, there is a universal constant $C_f$
such that $|f_L(t)|\leq C_f\Lambda$. Also, by \Cref{lem:estimate-fL}, the difference between the actual and ideal states is bounded by $C_st\Delta_L+C_sm(C_st\Delta_L)^{L+1}$ for a universal constant $C_s\geq1$, where $\Delta_L=kL\Lambda$. Choose
\begin{equation}
    c\leq
    \min\left\{
        \frac{\pi}{8C_f},
        \frac{1}{128C_s}
    \right\}.
\end{equation}
Then $|2tf_L(t)|\leq\pi/4$, so the initial phase has no $2\pi$ ambiguity, and the first error term is at most $1/128$.

We apply the same robust phase-estimation procedure as in
\Cref{col:estimate-f}, using $m_j=2^j$, $j=0,\ldots,J$, with
$J$ chosen as in \Cref{col:estimate-f}, so $2^J\leq2/(t\eta)$. Since $C_st\Delta_L\leq1/4$, for every $m_j\leq2^J$ the second error term satisfies
\begin{equation}
    C_sm_j(C_st\Delta_L)^{L+1}
    \leq2C_s^2\frac{kL\Lambda}{\eta}4^{-L}
    \leq2C_s^2\frac{k\Lambda}{\eta}2^{-L}.
\end{equation}
Taking $C_L\geq\log(256C_s^2)/\log2$ makes this at most $1/128$ for every allowed $L$. Thus the circuit error is uniformly small at
every stage, and the robust phase-estimation argument of
\Cref{col:estimate-f} gives an $\eta$-additive estimate of $f_L(t)$.

Using the same failure-probability allocation
$\delta_j=\delta\,2^{j-J-1}$ as before, the total evolution time is
\begin{equation}
    T_{\rm tot}
    =
    \mathcal O\left(
        t\sum_{j=0}^J
        2^j\left(J-j+\log\frac1\delta\right)
    \right)
    =
    \mathcal O\left(
        \frac1\eta\log\frac1\delta
    \right).
\end{equation}
This proves the claim.
\end{proof}

With \Cref{col:estimate-fL}, we can estimate $f_L(t)$ efficiently at sufficiently small nonzero times. On the other hand, \Cref{lem:analytic-single} shows that $f_L(t)$ is even and analytic,
and that its value at $t=0$ is exactly the desired energy gap. Therefore, as in \Cref{thm:energy-gap-subroutine}, we recover $f_L(0)$ by extrapolating estimates of $f_L(t)$ obtained at several nonzero
times. The only additional requirement is to choose a common $L$ that is large enough for all of these estimates.

We now prove \Cref{thm:energy-single-qubit}. By
\Cref{lem:analytic-single} and \eqref{eq:def-fL},
\begin{equation}\label{eq:energy-gap-L-expansion}
    f_L(t)
    =
    f_L(0)+\sum_{j=1}^{\infty}a_jt^{2j},
    \qquad
    |a_j|\leq C_a\rho^{-2j},
\end{equation}
where $f_L(0)=\bra{e_0}K\ket{e_0}-\bra{e_1}K\ket{e_1}$,
$C_a\leq C\Lambda$, and $\rho:=\frac{c}{kL\Lambda}$.

Choose $p$ and the node accuracies $\eta_q$ exactly as in \eqref{eq:chosen-p} and \eqref{eq:energy-subroutine-node-accuracy}, respectively, as in the proof of \Cref{thm:energy-gap-subroutine}. Let
$\eta_*:=\min_{1\leq q\leq p}\eta_q$ and choose
\begin{equation}\label{eq:L-choice}
    L:=\left\lceil
        C_L\left(1+\log\frac{k\Lambda}{\eta_*}\right)
    \right\rceil.
\end{equation}
Since $S_p\leq C\sqrt p$ and $|c_q|\leq2$, substituting into \eqref{eq:energy-subroutine-node-accuracy} gives $\eta_*=\Omega(\eps/\sqrt p)$. Hence, with $p=\Theta(1+\log(\Lambda/\eps))$ by \eqref{eq:chosen-p}, we have $L=\mathcal O\left(1+\log\frac{k\Lambda}{\eps}\right)$. Choose the common step size as
\begin{equation}
    \tau
    :=
    \frac{c_\tau}{
        k\Lambda
        \left(1+\log(k\Lambda/\eps)\right)^2
    },
\end{equation}
where $c_\tau>0$ is a sufficiently small universal constant. Since both $p$ and $L$ are $\mathcal O(1+\log(k\Lambda/\eps))$, this choice ensures $p\tau\leq\rho/2$. Therefore, all nodes $t_q:=q\tau$ satisfy $t_q\leq\rho/2<c/(kL\Lambda)$, so \Cref{col:estimate-fL} applies. Moreover, $\left(\frac{p\tau}{\rho}\right)^2\leq\frac14$, and hence \Cref{lem:noisy-extrapolation} also applies.

We estimate each $f_L(t_q)$ to accuracy $\eta_q$ with failure
probability $\delta/p$. Since the extrapolation weights, node
accuracies, and their error bounds are the same as in the proof of
\Cref{thm:energy-gap-subroutine}, the same calculation gives an
$\eps$-additive estimate of $f_L(0)$ with probability at least
$1-\delta$. Likewise, the total evolution time is
\begin{equation}
    T_{\rm tot}
    =
    \mathcal O\left(
        \frac{p}{\eps}\log\frac p\delta
    \right)
    =
    \widetilde{\mathcal O}\left(
        \frac1\eps\log\frac1\delta
    \right).
\end{equation}
Finally, the step size is
\begin{equation}
    \tau
    =
    \Theta\left(
        \frac{1}{
            k\Lambda(1+\log(k\Lambda/\eps))^2
        }
    \right)
    =
    \widetilde\Theta(1/\Lambda).
\end{equation}
All known operations $W,W^\dagger,V,V^\dagger$ are products of single-qubit gates, the measurements are single-qubit measurements, and the unknown Hamiltonian is used only through forward evolution.
This proves \Cref{thm:energy-single-qubit}.

We now apply the energy-gap estimation procedures in
\Cref{thm:energy-gap-subroutine,thm:energy-single-qubit} to several Hamiltonian learning and certification tasks. These applications retain the near-optimal step size while achieving state-of-the-art total evolution time.

\section{Near-optimal local Hamiltonian learning}
We apply the energy-gap estimation procedure in \Cref{thm:energy-gap-subroutine} to learn sparse local Hamiltonians. For $k=\mathcal O(1)$, \Cref{thm:near-upper} gives an upper bound of $\widetilde{\mathcal O}(s/\eps)$ on the total evolution time, while \Cref{thm:lower-bound} proves a lower bound of $\Omega_k(s^{1-1/(2k)}/\eps)$ under its stated conditions on $n,s,\eps$. In this regime, the two bounds differ by a factor of $\widetilde{\mathcal O}(s^{1/(2k)})$.

\subsection{Upper bound}\label{subsec:near-upper}
\begin{theorem}[Sparse local Hamiltonian learning]\label{thm:near-upper}
Let $H=\sum_{P\in\Pnk}h_PP$ be traceless, $s$-sparse, $k$-local, and satisfy $\norm{H}_{\op}\leq\Lambda$, $k=\mathcal{O}(1)$. For every $\eps>0$, one can output $\widehat h$ such that
\begin{equation}
\norm{\widehat h-h}_2\leq \eps
\end{equation}
with high probability. The total evolution time is $T_{\rm tot}=\widetilde{\mathcal{O}}(s/\eps)$, and the step size is $\tau=\widetilde{\Theta}(1/\Lambda)$.
\end{theorem}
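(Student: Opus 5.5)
The plan is to reduce \Cref{thm:near-upper} to the sparse-recovery guarantee of \Cref{lem:RIP}, with measurements supplied by the energy-gap subroutine of \Cref{thm:energy-gap-subroutine}. Each random measurement is specified by three choices, all drawn independently and uniformly:
\begin{itemize}[nosep]
\item a basis string $\sigma\in\{X,Y,Z\}^n$;
\item a bit string $x\in\{0,1\}^n$;
\item a flip pattern $z\in\{0,1\}^n$.
\end{itemize}
The probe states are $\ket{x}_\sigma$ and $\ket{x\oplus z}_\sigma$. Since arbitrary known $W$ is allowed in \Cref{thm:energy-gap-subroutine}, we take $W$ to be a known Clifford circuit. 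It applies $X$-gates to prepare $x$, then a CNOT network mapping $e_1\mapsto z$ while fixing $e_0$, then single-qubit basis changes to $\sigma$. Thus $W\ket{e_0}=\ket{x}_\sigma$ and $W\ket{e_1}=\ket{x\oplus z}_\sigma$. For $z=0$ the gap is identically zero and known, so no quantum resource is spent.

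\emph{Linear measurement functions.} Only $P\preceq\sigma$ has nonzero expectation in $\ket{x}_\sigma$, and there it equals $(-1)^{x\cdot S_P}$. Hence the gap is
\begin{equation}
G(\sigma,x,z)=\sum_{P\in\Pnk}2h_P\,\mathds 1[P\preceq\sigma]\,\mathds 1[z\cdot S_P=1]\,(-1)^{x\cdot S_P}.
\end{equation}
Define $\phi_P:=\sqrt{2\cdot3^{w_P}}\,\mathds 1[P\preceq\sigma]\,\mathds 1[z\cdot S_P=1]\,(-1)^{x\cdot S_P}$.

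I would check orthonormality as follows. For $P\ne Q$ with $S_P\ne S_Q$, averaging over $x$ gives $\E_x(-1)^{x\cdot(S_P\triangle S_Q)}=0$. For $P\ne Q$ with $S_P=S_Q$, the events $P\preceq\sigma$ and $Q\preceq\sigma$ are incompatible. For the diagonal term, $\Pr[P\preceq\sigma]=3^{-w_P}$ and $\Pr[z\cdot S_P=1]=1/2$, independently, so $\E|\phi_P|^2=1$.

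Thus \eqref{eq:orthonormal} holds with $K=\sqrt{2\cdot 3^k}=O(1)$ and $M=|\Pnk|\le(3n)^k$. Writing $\zeta_P:=\sqrt2\,3^{-w_P/2}h_P$, which is $s$-sparse, we get $G=\sum_P\zeta_P\phi_P$. Moreover $\norm{\widehat h-h}_2\le 3^{k/2}\norm{\widehat\zeta-\zeta}_2/\sqrt2$.

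\emph{Algorithm and error budget.} Take
\begin{equation}
m=\lceil CK^2s(1+\log(sK^2))^2\log(M/\delta)\rceil=\widetilde O(s)
\end{equation}
samples $\omega_r=(\sigma_r,x_r,z_r)$. For each sample with $z_r\ne0$, run \Cref{thm:energy-gap-subroutine} to additive error $\eps_0:=\eps\sqrt2/(3^{k/2}C_{\rm cs})$ with failure probability $\delta/m$, obtaining $\widehat G_r$. Set $y_r:=\widehat G_r/\sqrt m$. The error vector then has $\norm{e}_2\le\eps_0=:\eta$. By \Cref{lem:RIP}, solving \eqref{eq:BPDN} gives $\norm{\widehat\zeta-\zeta}_2\le C_{\rm cs}\eta$, hence $\norm{\widehat h-h}_2\le\eps$. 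A union bound gives overall success probability $1-2\delta$.

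\emph{Cost.} Each call costs $\widetilde O(\eps_0^{-1}\log(m/\delta))$, so the total evolution time is $m\cdot\widetilde O(3^{k/2}/\eps)=\widetilde O(s/\eps)$ for $k=O(1)$. For the step size, all calls use the same target accuracy $\eps_0$ (WLOG $\eps_0<\Lambda$). They therefore share the common grid $\tau=\Theta(1/[\Lambda(1+\log(\Lambda/\eps_0))])=\widetilde\Theta(1/\Lambda)$ from \eqref{eq:energy-gap-step-size}.

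The main point to get right is the design of the measurement ensemble, so that the bounded-orthonormal-system constant $K$ is independent of $n$. Flipping a single random qubit would give $K\sim\sqrt{n3^k}$ and an extra factor of $n$. The random flip pattern $z$, realized by the entangling $W$, is what avoids this; it is also exactly the step that fails with product controls, consistent with the $\sqrt n$ overhead in \Cref{thm:intro-single-qubit}.
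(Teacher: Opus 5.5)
Your proposal is correct and essentially reproduces the paper's proof: the same random $(\sigma,x,z)$ ensemble with a Clifford $W$, the same bounded orthonormal system with $K=\sqrt2\,3^{k/2}$ (you estimate the full gap rather than half of it, which only rescales $\zeta_P$), BPDN via \Cref{lem:RIP}, and a single gap accuracy $\Theta(3^{-k/2}\eps)$ so that all calls share one grid. One small fix is needed in your construction of $W$: the CNOT network must act before the $X^{x}$ gates, as in the paper's $W=B_{\sigma,x}C_r$. In your order, $W\ket{e_0}=\ket{A_zx}_\sigma$, where $A_z$ is the invertible $\mathbb F_2$-linear map realized by the CNOT network. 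This is harmless, since $A_zx$ is still uniform and the two probe states still differ by $z$, but your stated identity $W\ket{e_0}=\ket{x}_\sigma$ requires the reordering.
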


The formal statement and proof are given in \Cref{app:near-upper}. We first describe the main idea. Choose independently and uniformly $\sigma\in\{X,Y,Z\}^n$ and $x,r\in\{0,1\}^n$. Define the gap function as 
\begin{equation}\label{eq:gap-def}
G(\sigma,x,r)
:=\frac12\left(
\bra{x}_{\sigma}H\ket{x}_{\sigma}-\bra{x\oplus r}_{\sigma}H\ket{x\oplus r}_{\sigma}
\right),
\end{equation}
where the product state $\ket{x}_\sigma$ has eigenvalue $(-1)^{x_i}$ for $\sigma_i$ on qubit $i$. We derive that, for every $\sigma,x,r$,
\begin{equation}\label{eq:linear-measurement}
G(\sigma,x,r)=\sum_{P\in\Pnk}z_P\phi_P(\sigma,x,r),\qquad
z_P:=\frac{h_P}{\sqrt{2}\,3^{w_P/2}},
\end{equation}
where
\begin{equation}
    \phi_P(\sigma,x,r)
:=
\sqrt{2}\,3^{w_P/2}
\mathbf 1\{P\preceq\sigma\}
(-1)^{x\cdot S_P}
\mathbf 1\{r\cdot S_P=1\}.
\end{equation}
The functions $\{\phi_P:P\in\Pnk\}$ satisfy 
\begin{equation}
    \E[\phi_P\phi_Q]=\delta_{P,Q},\quad \norm{\phi_P}_{\infty}\leq\sqrt{2}\,3^{k/2}.
\end{equation}
Recall the definitions of $P\preceq\sigma$, $S_P$ in \Cref{subsec:notation}.

Therefore, we can apply \Cref{lem:RIP} and retrieve the sparse coefficients $\{z_P\}$ within $\eps/(\sqrt2 3^{k/2})$-additive error in the $\ell_2$-norm, hence recovering the Hamiltonian coefficients $\{h_P\}$ within $\eps$-additive error in the $\ell_2$-norm. Since $\{z_P\}$ are $s$-sparse, the sufficient number of samples is $\mathcal{\widetilde{O}}(s)$. If $\eps\geq\Lambda$, the zero estimate already has $\ell_2$ error at most $\eps$; thus it suffices to consider $0<\eps<\Lambda$. Therefore, we can apply \Cref{thm:energy-gap-subroutine}. Using evolution time $T_{\rm tot}=\mathcal{\widetilde{O}}(1/\eps)$ and step size $\widetilde{\Theta}(1/\Lambda)$, we can estimate one sample with desired accuracy. Hence, the total evolution time of the algorithm becomes $\mathcal{\widetilde{O}}(s/\eps)$.

The number of queries $\mathcal N$ is bounded by $\mathcal N\leq T_{\rm tot}/\tau=\widetilde{\mathcal O}\left(s\Lambda/\eps\right)$, since every nonzero evolution time is at least the step size $\tau$. For the classical postprocessing, the dominant cost comes from the
sparse-recovery step, which is performed by solving the convex
optimization problem in \eqref{eq:BPDN}. This optimization can be
solved in polynomial time in its input size. The corresponding
sampling matrix has dimension $m\times M$, where
$m=\widetilde{\mathcal O}(s)$ and
$M=|\Pnk|=\mathcal O(n^k)$ for constant $k$. Therefore, the input
size of the sparse-recovery problem is polynomial in $n$ and $s$,
and the total classical postprocessing time is also polynomial in
$n$ and $s$ for fixed $k$.

\subsection{Lower bound}\label{subsec:lower-bound}

We will use the following lemma, which follows from Gilbert's greedy construction of a binary code whose distinct codewords are separated by Hamming distance at least \(d\) \cite[Theorem~4.2.1 and Section~4.2.1]{guruswami2012essential}.
\begin{lemma}[Greedy Hamming packing]
\label{lem:greedy-Hamming-packing}
Let $\mathcal G\subseteq\{-1,1\}^s$ and let $1\leq d\leq s$.  There is
a subset $\mathcal C\subseteq\mathcal G$ such that
\begin{equation}
    d_{\mathrm H}(z,z')\geq d
    \quad\text{for all distinct }z,z'\in\mathcal C,
    \qquad
    \abs{\mathcal C}
    \geq
    \frac{\abs{\mathcal G}}
    {\displaystyle\sum_{j=0}^{d-1}\binom{s}{j}}.
    \label{eq:greedy-Hamming-packing}
\end{equation}
In particular, if $\abs{\mathcal G}\geq2^{s-1}$, then for every fixed
$0<\delta<1/2$ there is, for all sufficiently large $s$, a set
$\mathcal C\subseteq\mathcal G$ and a constant $c_\delta>0$ such that
\begin{equation}
    d_{\mathrm H}(z,z')\geq\delta s
    \quad\text{for all distinct }z,z'\in\mathcal C,
    \qquad
    \abs{\mathcal C}\geq\exp(c_\delta s).
    \label{eq:asymptotic-Hamming-packing}
\end{equation}
\end{lemma}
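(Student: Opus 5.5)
The plan is Gilbert's greedy argument. First I construct $\mathcal C$ iteratively. Initialize $\mathcal R:=\mathcal G$ and $\mathcal C:=\emptyset$. While $\mathcal R\neq\emptyset$, pick any $z\in\mathcal R$ and add it to $\mathcal C$. Then delete from $\mathcal R$ every $z'$ with $d_{\mathrm H}(z,z')\leq d-1$.

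Two facts follow. Every point that stays in $\mathcal R$ is at distance at least $d$ from all chosen codewords, so distinct elements of $\mathcal C$ are separated by at least $d$. Each step removes at most $|B(z,d-1)|=\sum_{j=0}^{d-1}\binom{s}{j}$ points, since the Hamming ball in $\{-1,1\}^s$ has this size independent of the center. The procedure only stops once all of $\mathcal G$ has been removed. Hence $|\mathcal C|\cdot\sum_{j<d}\binom sj\geq|\mathcal G|$, which is \eqref{eq:greedy-Hamming-packing}.

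For the asymptotic statement, take $d:=\lceil\delta s\rceil$. This satisfies $1\leq d\leq s$ once $s\geq1/\delta$. Since $\delta<1/2$, we have $d-1\leq\delta s<s/2$ for large $s$, so the standard entropy bound on binomial tails applies: $\sum_{j\leq\delta s}\binom sj\leq 2^{h(\delta)s}$, where $h$ is the binary entropy. It can be checked directly. With $\delta\leq 1/2$,
\[
1\geq\sum_{j\leq\delta s}\binom sj\delta^j(1-\delta)^{s-j}\geq\sum_{j\leq\delta s}\binom sj\delta^{\delta s}(1-\delta)^{(1-\delta)s},
\]
and this rearranges to $\sum_{j\leq\delta s}\binom sj\leq 2^{h(\delta)s}$.

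Combining this with $|\mathcal G|\geq2^{s-1}$ gives
\[
|\mathcal C|\geq 2^{(1-h(\delta))s-1}.
\]
Because $\delta<1/2$, we have $h(\delta)<1$. Set $c_\delta:=\tfrac12(1-h(\delta))\ln2>0$. Then for all sufficiently large $s$ (absorbing the factor $1/2$), $|\mathcal C|\geq\exp(c_\delta s)$, and the distance bound $d\geq\delta s$ holds by construction.

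The only delicate step is the binomial tail estimate. It needs $\delta\leq1/2$ so that $\delta^j(1-\delta)^{s-j}$ is decreasing in $j$. The rest is routine.
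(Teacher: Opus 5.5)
Your proof is correct and follows the paper's argument: the same Gilbert greedy construction giving $|\mathcal G|\le|\mathcal C|\sum_{j<d}\binom{s}{j}$, then $d=\lceil\delta s\rceil$ combined with the binary-entropy bound on the Hamming-ball volume. The only differences are cosmetic. You derive the tail bound $\sum_{j\le\delta s}\binom{s}{j}\le 2^{h(\delta)s}$ directly, whereas the paper cites it. You also fix the concrete constant $c_\delta=\tfrac12(1-h(\delta))\ln 2$, whereas the paper allows any $c_\delta<(1-H_2(\delta))\log 2$.
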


\begin{proof}
Initialize $\mathcal C=\varnothing$ and $\mathcal R=\mathcal G$.  While
$\mathcal R$ is nonempty, choose any $z\in\mathcal R$, add it to
$\mathcal C$, and remove from $\mathcal R$ every point at Hamming
distance at most $d-1$ from $z$.
Distinct selected points have distance at least $d$.  Each iteration
removes at most
\begin{equation}
    V_s(d-1):=\sum_{j=0}^{d-1}\binom{s}{j}
\end{equation}
points, the volume of a radius-$(d-1)$ ball in $\{-1,1\}^s$.  Since the
iterations remove every point of the original set $\mathcal G$, we have
$\abs{\mathcal G}\leq\abs{\mathcal C}V_s(d-1)$, proving
\eqref{eq:greedy-Hamming-packing}.

For the final claim, take $d=\lceil\delta s\rceil$ and use the 
entropy bound on binomial coefficients \cite[Proposition 3.3.3]{guruswami2012essential}
\begin{equation}
    \sum_{j=0}^{d-1}\binom{s}{j}
    \leq 2^{sH_2(\delta)},
    \qquad
    H_2(\delta):=-\delta\log_2\delta-(1-\delta)\log_2(1-\delta)<1.
\end{equation}
Then
\begin{equation}
    \abs{\mathcal C}
    \geq 2^{(1-H_2(\delta))s-1},
\end{equation}
which implies \eqref{eq:asymptotic-Hamming-packing} with any
$c_\delta<(1-H_2(\delta))\log 2$.
\end{proof}

We now state and prove the lower bound.
\begin{theorem}
\label{thm:lower-bound}
Let $k=\mathcal O(1)$, and let $q$ be the smallest integer such that
$3^k\binom{q}{k}\geq s$.  Suppose $n\geq q$.  There is a constant
$c_k>0$ such that, whenever
\begin{equation}
\label{eq:upper-bound-eps}
    0<\eps\leq c_k\Lambda s^{-1/(2k)},
\end{equation}
any algorithm which almost surely uses finitely many experiments, and, for every traceless, $s$-sparse, $k$-local
Hamiltonian $H=\sum_{P\in\Pnk}h_PP$ with $\norm{H}_{\op}\leq\Lambda$,
outputs $\widehat h$ satisfying
\begin{equation}
    \Prb\left[\norm{h-\widehat h}_2\leq\eps\right]\geq\frac23
\end{equation}
requires worst-case expected total evolution time \(T_{\rm tot}=\Omega_k(s^{1-1/(2k)}/\varepsilon)\).
\end{theorem}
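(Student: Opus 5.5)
The argument follows the outline in the introduction: build a large packing of hard Hamiltonians, then show that any protocol with small total evolution time confines its outputs to a subspace too small to separate them.

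\textbf{Hard family.} Restrict to the first $q$ qubits. Fix a set $\mathcal S\subseteq\mathcal P_{q,k}$ of $s$ Pauli strings of weight exactly $k$; this is possible since $3^k\binom{q}{k}\geq s$. For a sign vector $z\in\{-1,1\}^s$, set
\[
H_z:=\beta\sum_{P\in\mathcal S}z_PP .
\]
By the matrix Khintchine inequality, $\E_z\norm{\sum_P z_PP}_{\op}\lesssim\sqrt{s\log 2^q}=\sqrt{sq}$. Since $q=\Theta_k(s^{1/k})$, at least half of all sign vectors satisfy $\norm{\sum_Pz_PP}_{\op}\leq C\sqrt{s\,q}$. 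Call this set $\mathcal G$, so $|\mathcal G|\geq2^{s-1}$. Choosing $\beta=c\Lambda/\sqrt{sq}=\Theta_k(\Lambda s^{-1/2-1/(2k)})$ then guarantees $\norm{H_z}_{\op}\leq\Lambda$.

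Apply \Cref{lem:greedy-Hamming-packing} with $\delta=1/4$ to obtain $\mathcal C\subseteq\mathcal G$ with $|\mathcal C|\geq e^{cs}$ and pairwise coefficient distance $\norm{h_z-h_{z'}}_2\geq 2\beta\sqrt{s/4}=\beta\sqrt s$. To make this distance exceed $2\eps$, rescale the family: take $H_z=\beta'\sum_P z_P P$ with $\beta'=3\eps/\sqrt s$. The operator-norm constraint $\beta'\leq\beta$ is then exactly the hypothesis \eqref{eq:upper-bound-eps}, namely $\eps\leq c_k\Lambda s^{-1/(2k)}$. A successful learner therefore identifies $z\in\mathcal C$ with probability at least $2/3$, and Fano's inequality shows it must extract $\Omega(s)$ bits.

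\textbf{Dimension bound.} Use the tree representation of \cite{chen2022exponential}, and purify each experiment so that its final (unnormalized) state is a multilinear function of the queried unitaries $e^{-iH_zt_i}$. Expand each query in a Dyson/Taylor series in $\beta'$. Every order-$r$ term in the combined experiment is a fixed vector-valued linear map applied to a degree-$r$ monomial in the coefficients $(\beta' z_P)_P$. Because $z_P^2=1$, the monomials collapse to multilinear ones, so the truncation at total degree $R$ lies in a span of dimension at most $\sum_{j\leq R}\binom{s}{j}$, independent of $z$.

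The discarded tail is bounded by $\sum_{r>R}(\beta' s\,T)^r/r!$, where $\beta's$ bounds $\norm{H_z}_{\ell_1}$ and $T$ is the total evolution time. To avoid losing the factor $s$ here, the better choice is to expand around $H=0$ in the variable $\beta'\norm{\cdot}$ using the operator-norm bound $\Lambda'=\beta' C\sqrt{sq}$. This gives a tail of at most $(e\Lambda' T/R)^R$, which is small once $R\geq 2e\Lambda' T$.

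Once the tail is negligible, the output distributions are determined, up to small total-variation error, by a point in a space of dimension $D=\binom{s}{\leq R}$. The count of distinguishable hypotheses is then bounded by a covering-number argument: on the order of $e^{O(D\log(\cdot))}$, or more precisely the information extracted is at most $O(R\log s)$ bits. Requiring this to be $\Omega(s)$ forces $R\gtrsim s/\log s$, and hence
\[
T\gtrsim \frac{R}{\Lambda'}\gtrsim \frac{s}{\beta'\sqrt{sq}}=\frac{s}{\eps\,\sqrt q}=\Omega_k\!\left(\frac{s^{1-1/(2k)}}{\eps}\right).
\]
This needs care to avoid the log loss. I expect to use the operator-norm form of the Dyson tail, so that the evolution-time threshold is set by $\Lambda'\propto\eps\sqrt q$ rather than by $\ell_1$, and to convert "outputs lie in a common subspace of dimension $D$" into a bound on the number of $2/3$-distinguishable hypotheses of order $\exp(O(R\log(s/R)))$. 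Adaptivity and the expected (rather than worst-case) time are handled by Markov truncation of the random tree depth.

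\textbf{Main obstacle.} The hard step is the last one: turning the dimension bound into an identification bound with the correct scaling for fully adaptive, randomized protocols. This requires controlling the Dyson tail through the operator norm $\Lambda'=\Theta(\eps\sqrt q)$ rather than through $\ell_1$, and showing that vectors confined to a $D$-dimensional subspace can distinguish only $e^{\widetilde O(R)}$ hypotheses with $2/3$ success.
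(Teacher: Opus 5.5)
\textbf{Verdict: genuine gap.} The framework is right, but the step you flag as the main obstacle is left open, and the tools you propose for it do not close it.

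\textbf{What matches the paper.} Your hard family matches the paper's: Khintchine plus Markov giving $\abs{\mathcal G}\geq 2^{s-1}$, then Gilbert packing, with coefficients of size $\Theta(\eps/\sqrt s)$ so that $\norm{H_z}_{\op}=O(\eps\sqrt q)\leq\Lambda$. So does the Dyson tail controlled by the operator norm. So does the observation that the degree-$R$ truncation is multilinear in $z$, hence lies in a common subspace of dimension $D=\sum_{j\leq R}\binom{s}{j}$.

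\textbf{Why your proposed tools fail.} A covering or packing argument on the output states is the wrong tool. A $D$-dimensional space contains $e^{\Theta(D)}$ pure states with pairwise constant trace distance. Since $D\geq s+1$ already for $R=1$, the requirement $e^{O(D)}\geq\abs{\mathcal C}=e^{\Theta(s)}$ is met with $R=O(1)$, so it yields no nontrivial bound. What actually limits you is that a single measurement cannot decode many more than $D$ hypotheses from pure states confined to a $D$-dimensional space. Your ``$O(R\log s)$ bits'' (a Holevo-type $\log D$ estimate) points the right way. As you note, though, it gives only $R\gtrsim s/\log s$, while your final display $T\gtrsim R/\Lambda'\gtrsim s/(\beta'\sqrt{sq})$ silently uses $R\gtrsim s$.

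\textbf{The missing ingredient.} It is a short trace argument. Let $\Pi$ project onto the common subspace, so that $\norm{(I-\Pi)\ket{\psi_z}}\leq\eta_R$ by the tail bound. Since $\norm{\ket{\psi_z}}\leq1$, we have $\Pi\ketbra{\psi_z}{\psi_z}\Pi\preceq\Pi$. Hence for any decoding POVM $\{M_z\}$,
\[
\frac{1}{\abs{\mathcal C}}\sum_{z\in\mathcal C}\bra{\psi_z}\Pi M_z\Pi\ket{\psi_z}\leq\frac{1}{\abs{\mathcal C}}\sum_{z\in\mathcal C}\Tr(M_z\Pi)\leq\frac{D}{\abs{\mathcal C}},
\]
and the components outside $\Pi$ add at most $2\eta_R+\eta_R^2$.

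Now take $R=\lfloor\alpha s\rfloor$ with $\alpha$ a small constant, so that $D\leq 2^{sH_2(\alpha)}\leq\abs{\mathcal C}^{1/2}$. This is where the logarithm disappears: the relevant estimate is $\log D\leq sH_2(R/s)$, not $R\log s$. If $\Lambda'T<(R+1)/(4e)$, the average decoding success is $o(1)$, contradicting the $2/3$ guarantee. Therefore $\Lambda'T=\Omega(s)$, which gives the claimed $T=\Omega_k(s^{1-1/(2k)}/\eps)$.

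A Holevo--Fano version also works, provided you apply it to the projected, renormalized states. A naive Fannes-type continuity bound on the full purified space would not work, since that dimension is uncontrolled.

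\textbf{Smaller omissions.}
\begin{enumerate}
\item For an adaptive tree, the bound $\norm{\ket{\psi_z^{(r)}}}\leq(\Lambda'T)^r/r!$, with $T$ the maximum branch time, needs an induction over subtrees with $T_v=t_v+\max_c T_c$. The step uses that a purified measurement followed by a direct sum of child maps has norm at most the largest child norm; summing over branches would overcount.
\item The expected-time extension applies Markov to the accumulated evolution time: halt once it would exceed a constant multiple of the expected-time bound. The number of experiments is truncated using almost-sure finiteness and finiteness of the hard family, not Markov, since no expected-depth bound is given.
\item Bounded $s$, where the packing lemma does not apply, needs a separate two-point $\Omega(1/\eps)$ argument.
\end{enumerate}
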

\begin{proof}
We first construct a packing of Hamiltonians.  By the choice of $q$, we
may fix distinct weight-$k$ Paulis $P_1,\ldots,P_s$ on the first $q$
qubits.  For $z\in\{-1,1\}^s$, let
\begin{equation}
    A_z:=\sum_{j=1}^s z_jP_j.
\end{equation}
Since \(P_j^2=I\), the matrix Khintchine inequality~\cite[Theorem~4.1.1]{Tropp2015} gives
\begin{equation}
    \E_z\norm{A_z}_{\op}
    \leq C\sqrt{\log(2^{q+1})}
       \norm{\sum_{j=1}^sP_j^2}_{\op}^{1/2}
    \leq C_0\sqrt{sq}.
\end{equation}
Markov's inequality therefore gives a set
\(\mathcal G\subseteq\{-1,1\}^s\) of cardinality at least \(2^{s-1}\) on
which \(\norm{A_z}_{\op}\leq 2C_0\sqrt{sq}\).  Fix any constant
$0<c_0<1/2$.  Applying \Cref{lem:greedy-Hamming-packing} with
$d=\lceil c_0s\rceil$ gives a constant $c_1>0$ and a set
$\mathcal C\subseteq\mathcal G$ such that, for all sufficiently large
$s$,
\begin{equation}
    \abs{\mathcal C}\geq \exp(c_1s),
    \qquad
    d_{\mathrm H}(z,z')\geq c_0s
    \quad\text{for all distinct }z,z'\in\mathcal C.
    \label{eq:lower-bound-code}
\end{equation}

Define
\begin{equation}
    K_z:=\frac{A_z}{2C_0\sqrt{sq}},
    \qquad z\in\mathcal C.
\end{equation}
Then $\norm{K_z}_{\op}\leq1$, and, writing $\kappa_z$ for its Pauli
coefficient vector, 
\begin{equation}
    \norm{\kappa_z-\kappa_{z'}}_2
    =\frac{\sqrt{d_{\mathrm H}(z,z')}}{C_0\sqrt{sq}}
    \geq\frac{c_2}{\sqrt q}.
    \label{eq:lower-bound-separation-K}
\end{equation}
Set
\begin{equation}
    H_z:=\lambda K_z,
    \qquad
    \lambda:=\frac{4\eps\sqrt q}{c_2}.
    \label{eq:lower-bound-hard-family}
\end{equation}
Because $q=\Theta_k(s^{1/k})$, the assumed upper bound \eqref{eq:upper-bound-eps} on $\eps$,
with $c_k$ sufficiently small, ensures $\lambda\leq\Lambda$.
Consequently every $H_z$ belongs to the promised class, while
\eqref{eq:lower-bound-separation-K} gives
\begin{equation}
    \norm{h_z-h_{z'}}_2\geq4\eps
    \quad(z\neq z').
    \label{eq:lower-bound-separation-H}
\end{equation}

We next bound how many labels a protocol of short total evolution time
can distinguish.  We use the rooted-tree representation of
Huang et al.~\cite[Appendix~G.3.3]{huang2023learning}:
each node \(v\) records the past outcomes and specifies the next
experiment.  Write \(t_v\) for its evolution time and
\(T_v=t_v+\max_c T_c\) for the maximum remaining path cost, where
\(c\) ranges over its children and leaves have cost zero.

We first consider a protocol that uses at most \(D\) experiments, for some finite \(D\), and total evolution time at most \(T\) on every branch. 
Its rooted-tree representation therefore has depth at most \(D\). We will later weaken these two assumptions to using finitely many experiments almost surely and having a bound on expected total evolution time.

To simplify the description of adaptive experiments,
we purify the initial state and represent each known quantum operation by an isometry on a larger system, retaining any ancillary systems that would otherwise be discarded. In particular, implement each measurement coherently by storing its outcome in mutually orthogonal states of a record register, and condition subsequent operations on these records while preserving the protocol’s quantum memory. Reading the outcome registers and ignoring the additional systems recovers the original protocol, so retaining these systems can only strengthen the learner. This gives a linear description of the protocol that preserves the unnormalized amplitudes of all branches.

We define the \emph{coherent subtree map} of a node $v$ to be the linear map taking the quantum state entering node \(v\) to the direct sum of the unnormalized states at its descendant leaves, with distinct branches labeled by orthogonal measurement records.
Let \(\mathsf A_{v,z}(w)\) denote the coherent subtree map obtained by replacing every unknown evolution \(e^{-itH_z}\) in the subtree rooted at \(v\) with \(e^{-iwtH_z}\), while keeping all known operations unchanged. 
We define its Dyson coefficients through the expansion
\begin{equation}
    \mathsf A_{v,z}(w)=\sum_{r=0}^{\infty}w^r\mathsf A_{v,z}^{(r)}.
\end{equation}
Thus, \(\mathsf A_{v,z}^{(r)}\) collects all terms with a total of \(r\) Hamiltonian insertions across the evolution intervals in the subtree.
Similarly, expanding the coherent linear map for the local experiment at node \(v\) in powers of \(w\) gives an order-\(r\) coefficient with operator norm at most \((\lambda t_v)^r/r!\). This follows by expanding the local evolution segments, whose durations sum to $t_v$, using $\|H_z\|_{\op}\leq\lambda$ and the fact that the known controls and purified measurement are isometries.
Since the measurement is purified to be a \(z\)-independent isometry and the norm of a direct sum of child maps is their maximum norm, induction from the leaves gives
\begin{equation}
    \begin{aligned}
    \norm{\mathsf A_{v,z}^{(r)}}_{\op}
    &\leq\sum_{j=0}^r
      \max_c\norm{\mathsf A_{c,z}^{(j)}}_{\op}
      \frac{(\lambda t_v)^{r-j}}{(r-j)!}\\
    &\leq\sum_{j=0}^r
      \frac{(\lambda\max_c T_c)^j}{j!}
      \frac{(\lambda t_v)^{r-j}}{(r-j)!}
    =\frac{(\lambda T_v)^r}{r!}.
    \end{aligned}
\end{equation}
At the root, \(T_v\leq T\), so the final pure state, including the
leaf record, has the expansion
\begin{equation}
    \ket{\psi_z}=\sum_{r=0}^{\infty}\ket{\psi_z^{(r)}},
    \quad
    \norm{\ket{\psi_z^{(r)}}}\leq\frac{(\lambda T)^r}{r!}.
    \label{eq:lower-bound-Dyson}
\end{equation}

Since $K_z$ depends linearly on $z$, the truncation through order $R$ is
a vector-valued multilinear polynomial of degree at most $R$, using $z_j^2=1$:
\begin{equation}
    \sum_{r=0}^R\ket{\psi_z^{(r)}}
    =\sum_{\substack{S\subseteq[s]\\ |S|\leq R}}
      \left(\prod_{j\in S}z_j\right)\ket{v_S},
    \label{eq:lower-bound-multilinear}
\end{equation}
where the vectors $\ket{v_S}$ do not depend on $z$.  Thus all truncated
states lie in a common subspace of dimension at most
\begin{equation}
    L_R:=\sum_{j=0}^R\binom{s}{j}.
    \label{eq:lower-bound-rank}
\end{equation}
Choose a sufficiently small constant $\alpha>0$ and set
\(R=\lfloor\alpha s\rfloor\).  The standard Hamming-ball bound~\cite[Proposition~3.3.3]{guruswami2012essential} permits
us to choose $\alpha$ so that $L_R\leq\exp(c_1s/2)$.  

If we assume
$\lambda T<(R+1)/(4e)$, then \eqref{eq:lower-bound-Dyson} and the
factorial bound give, uniformly in $z$,
\begin{equation}
    \norm{\sum_{r>R}\ket{\psi_z^{(r)}}}
    \leq
    2\left(\frac{e\lambda T}{R+1}\right)^{R+1}
    \leq2\cdot4^{-(R+1)}
    =:\eta_R.
    \label{eq:lower-bound-tail}
\end{equation}
Let $\Pi_R$ project onto the subspace in
\eqref{eq:lower-bound-multilinear}, and let $\{M_z\}_{z\in\mathcal C}$
be any measurement intended to decode a uniform label $z$.  Since
$\norm{(I-\Pi_R)\ket{\psi_z}}\leq\eta_R$ and
$\Pi_R\ketbra{\psi_z}{\psi_z}\Pi_R\preceq\Pi_R$, its average success
probability is at most
\begin{equation}
    \frac{1}{\abs{\mathcal C}}
    \sum_{z\in\mathcal C}\bra{\psi_z}M_z\ket{\psi_z}
    \leq
    \frac{L_R}{\abs{\mathcal C}}+3\eta_R,
    \label{eq:lower-bound-decoding}
\end{equation}
The omitted components contribute at most $2\eta_R+\eta_R^2\leq3\eta_R$. For the remaining contribution, we use

\begin{equation}
\begin{aligned}
\frac{1}{\abs{\mathcal C}}
    \sum_{z\in\mathcal C}\bra{\psi_z}\Pi_R M_z\Pi_R\ket{\psi_z}
&=\frac{1}{\abs{\mathcal C}}
    \sum_{z\in\mathcal C}\Tr[M_z\Pi_R\ketbra{\psi_z}{\psi_z}\Pi_R]\\
&\leq\frac{1}{\abs{\mathcal C}}
    \sum_{z\in\mathcal C}\Tr[M_z\Pi_R]
=\frac{\Tr\Pi_R}{\abs{\mathcal C}}
\leq\frac{L_R}{\abs{\mathcal C}}.
\end{aligned}
\end{equation}

For all sufficiently large $s$, the right-hand side of \eqref{eq:lower-bound-decoding} is smaller than
$2/3$.  On the other hand, by \eqref{eq:lower-bound-separation-H}, an
$\eps$-accurate coefficient estimate identifies $z$ by nearest-neighbor
decoding and hence has success probability at least $2/3$.  
This contradiction means our previous hypothesis of \(\lambda T<(R+1)/(4e)\) must be false, which gives us
\begin{equation}
    \lambda T\geq \frac{R+1}{4e}\geq \frac{\alpha s}{4e}=\Omega(s).
\end{equation}
Using \eqref{eq:lower-bound-hard-family} and
$q=\Theta_k(s^{1/k})$ now gives
\begin{equation}
    T=\Omega\left(\frac{s}{\lambda}\right)
    =\Omega_k\left(\frac{s^{1-1/(2k)}}{\eps}\right).
\end{equation}
The argument also applies to any fixed positive decoding success
probability, since the right-hand side of
\eqref{eq:lower-bound-decoding} tends to zero.  

The preceding proof assumes sufficiently large $s$ beyond a constant threshold.
For $s$ below the threshold, take $c_k\leq1/2$ and distinguish $H_\pm=\pm2\eps P$ for 
a fixed nonidentity Pauli \(P\) of weight at most \(k\)  gives \(T=\Omega(1/\eps)\) by the usual
hybrid bound, which also follows by the same subtree induction.

If a protocol uses finitely many experiments almost surely but has no fixed bound on their number, truncate it after $D$ experiments. Since the hard family is finite, taking $D$ sufficiently large makes the loss of success probability arbitrarily small for every Hamiltonian in the family. The bounded-depth argument then applies.

The expected-time extension follows from the truncation argument of
Brahmachari et al.~\cite[Lemma~S9]{brahmachari2026static}.
If the expected evolution time is at most \(B\) for every input,
stop the protocol before its accumulated evolution time exceeds
\(12B\).  Markov's inequality bounds the additional failure probability
by \(1/12\), so the truncated protocol still succeeds with probability
at least \(7/12\).  Applying the preceding bound yields
\(B=\Omega_k(s^{1-1/(2k)}/\eps)\).

\end{proof}

\section{Learning local Hamiltonians with single-qubit gates}\label{sec:single-qubit-learning}
Single-qubit gates and measurements are among the most experimentally accessible quantum operations, which motivates studying Hamiltonian learning under this restricted control model. Using \Cref{thm:energy-single-qubit}, we show that the coefficient vector can be learned in the $\ell_\infty$-norm with state-of-the-art total evolution time $\widetilde{\mathcal O}(s/\eps)$. For the coefficient $\ell_2$-norm, restricting to single-qubit operations incurs only an additional factor of $\sqrt n$, giving total evolution time $\widetilde{\mathcal O}(s\sqrt n/\eps)$. In both cases, the step size remains near optimal.

\begin{theorem}\label{thm:single-qubit-learning}
Let $H=\sum_{P\in\Pnk}h_PP$ be traceless, $s$-sparse, $k$-local, and satisfy $\norm{H}_{\op}\leq\Lambda$, $k=\mathcal{O}(1)$. For every $\eps>0$, one can output $\widehat h$ such that
\begin{equation}
\norm{\widehat h-h}_\infty\leq \eps
\end{equation}
using $\widetilde{\mathcal{O}}(s/\eps)$ total evolution time with high probability. The algorithm can also output 
$\widehat h$ such that
\begin{equation}
\norm{\widehat h-h}_2\leq \eps
\end{equation}
using $\widetilde{\mathcal{O}}(\sqrt{n}s/\eps)$ total evolution time with high probability. The step size is $\widetilde{\Theta}(1/\Lambda)$, and the algorithm only uses single-qubit gates and measurements with forward evolution $e^{-iH\tau}$.
\end{theorem}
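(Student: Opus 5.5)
The plan is to reduce both guarantees to compressed sensing on \emph{per-qubit blocks}. With single-qubit controls, \Cref{thm:energy-single-qubit} only gives access to energy differences between product states that differ on one qubit. This is because $\ket{e_0}$ and $\ket{e_1}$ differ only in the first bit and $W$ must be a product unitary. By symmetry we can relabel which qubit plays the role of qubit $1$ in the control \eqref{eq:single-product-control}, so for every $j\in[n]$, $\sigma\in\{X,Y,Z\}^n$ and $x\in\{0,1\}^n$ we can estimate
\begin{equation}
G_j(\sigma,x):=\tfrac12\left(\bra{x}_\sigma H\ket{x}_\sigma-\bra{x\oplus e_j}_\sigma H\ket{x\oplus e_j}_\sigma\right)
=\sum_{P\in\Pnk:\, j\in S_P}h_P\mathbf 1\{P\preceq\sigma\}(-1)^{x\cdot S_P}.
\end{equation}
Each such estimate has additive error $\eta$, using $\widetilde{\mathcal O}(1/\eta)$ evolution time and step size $\widetilde\Theta(1/\Lambda)$.

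For fixed $j$, the functions $\phi^{(j)}_P(\sigma,x):=3^{w_P/2}\mathbf 1\{P\preceq\sigma\}(-1)^{x\cdot S_P}$, indexed by $P\ni j$, form a bounded orthonormal system over uniform $(\sigma,x)$, with $K=3^{k/2}$. This is checked exactly as for \eqref{eq:linear-measurement}: distinct $P,Q$ either disagree on a common site, which kills the $\mathbf 1\{\cdot\}$ product, or differ in support, which kills the $x$-average. The block $h^{(j)}:=(h_P)_{P\ni j}$ is $s_j$-sparse, and $\sum_j s_j\leq ks$ because each $P$ lies in at most $k$ blocks. By \Cref{lem:RIP}, $m_j=\widetilde{\mathcal O}(\max(s_j,1))$ random samples, each estimated to accuracy $\eta_j$, recover $h^{(j)}$ to $\ell_2$ error $C_{\rm cs}3^{k/2}\eta_j$. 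This is because the noise vector of the normalized system has norm at most $\eta_j$.

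For $\ell_\infty$ we take $\eta_j=\Theta(\eps)$ and read $\widehat h_P$ from any block containing $P$. The cost is $\sum_j\widetilde{\mathcal O}(m_j/\eps)$. For $\ell_2$ we take $\eta_j=\Theta(\eps/\sqrt n)$, assign each $P$ to the block of its smallest support index, and sum the squared block errors over at most $n$ blocks. This gives $\widetilde{\mathcal O}(\sqrt n\sum_j m_j/\eps)$.

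The main obstacle is that the sparsities $s_j$ are unknown. In particular, naively paying $\widetilde{\mathcal O}(1)$ samples for each of the $n$ blocks adds an $n/\eps$ term, which is harmless only when $n\lesssim s$. The first thing I would try is a doubling strategy inside each block: run recovery with sparsity guess $2^i$, validate it on fresh samples by checking the empirical residual against the predicted noise level, and stop at the first success. This makes the block cost proportional to $\max(s_j,1)$ up to logarithms. To remove the $+n$, I would first find the active qubits. One route is to test random groups of qubits at once: the single-qubit version of the \Cref{thm:cert-single} gap test uses flips $x\oplus r$ restricted to a group, and a group test can be simulated by summing single-flip gaps along a path of single-bit flips. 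Another route is to argue that inactive qubits can be certified at accuracy $\eps$ collectively rather than individually. If neither works, I would restrict the statement to the regime where the $\widetilde{\mathcal O}(n/\eps)$ overhead is dominated.

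The remaining steps are routine. These are union bounds over all blocks and samples to get high probability, and the observation that all evolution times stay on the common grid of \Cref{thm:energy-single-qubit}. That grid is fixed by the smallest target accuracy, and since that accuracy is polynomial in $\eps/(n\Lambda)$, the step size remains $\widetilde\Theta(1/\Lambda)$ with only logarithmic dependence on $n$.
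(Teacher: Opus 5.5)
The core of your reduction matches the paper's: per-qubit gaps $G_j$, the bounded orthonormal system with $K=3^{k/2}$ on Paulis containing $j$, $\sum_j d_j\le ks$, doubling over the unknown local sparsity with verification on fresh data, and reading each coefficient from one block. This gives $\widetilde{\mathcal O}((n+ks)/\eps)$, or $\widetilde{\mathcal O}((n+ks)\sqrt n/\eps)$ for $\ell_2$.

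The genuine gap is removing the additive $n$ (resp.\ $n^{3/2}$) term. The theorem needs this, because $\widetilde{\mathcal O}$ hides only logarithms in $n$ and no relation between $n$ and $s$ is assumed. For example, $s=O(1)$ terms on a huge register must still cost $\widetilde{\mathcal O}(1/\eps)$, so falling back to $n\lesssim s$ proves a weaker statement. Your first route saves nothing. A product $W$ only yields probe pairs differing on one qubit, so you simulate a multi-bit flip gap by summing single-flip gaps along a path. That requires $|S|$ separate single-flip estimates, each costing $\widetilde{\mathcal O}(1/\eta)$ evolution time whether or not the flipped qubit is active. A group test on $S$ is therefore no cheaper than testing its qubits one at a time, and any scheme starting from a group of size $\Theta(n)$ already pays $\Omega(n/\eta)$. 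Your second route is stated without a mechanism, and the test of \Cref{thm:cert-single} itself costs $\widetilde{\mathcal O}(n^{3/2}/\eps)$.

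The missing idea is to run the single-qubit verifier coherently in parallel on a random \emph{sparse} subset of qubits. Include each qubit in $S$ independently with probability $1/(2ks)$. Prepare $\ket{+}$ on every qubit of $S$ and $\ket{0}$ elsewhere, and use the product control $V_S=\exp[-i\gamma(\sum_{j\in S}N_j+2\sum_{j\notin S}N_j)]$. After one common sequence of evolutions, measure all qubits of $S$ and apply the decision rule of \Cref{col:cert} (with $H_0=0$ and threshold $r$) to each qubit's marginal. Let $\mathcal U$ be the set of qubits on which $H$ acts. Three cases arise:
\begin{itemize}
\item For $j\notin\mathcal U$, $F_{\pm,S}(t)$ factorizes across $j$ and the phases of $V_S$ and $V_S^\dagger$ cancel, so qubit $j$ reads exactly as under the zero Hamiltonian and is declared \emph{Close}.
\item When $S\cap\mathcal U=\{j\}$, the experiment restricted to $\mathcal U$ is exactly the single-coordinate verifier with probe $j$, so $\norm{h^{(j)}}_2>r$ is detected.
\item If several active qubits fall in $S$, their outcomes are irrelevant, since they lie in $\mathcal U$ anyway.
\end{itemize}
Since $|\mathcal U|\le ks$, each active $j$ is isolated with probability at least $1/(4ks)$. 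Hence $O(ks\log(ks/\delta))$ subsets produce $J$ with $\{j:\norm{h^{(j)}}_2>r\}\subseteq J\subseteq\mathcal U$ at cost $\widetilde{\mathcal O}(ks/r)$, with no factor of $|S|$ or $n$. Recovery is then run only for $j\in J$, costing $\widetilde{\mathcal O}(\sum_{j\in J}d_j/r)=\widetilde{\mathcal O}(s/r)$. Every other coefficient is set to zero, which is safe because a Pauli $P$ with $\supp(P)\cap J=\varnothing$ has $|h_P|\le\norm{h^{(j)}}_2\le r$. Taking $r=\eps$ or $r=\eps/\sqrt n$ gives the two bounds.

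Separately, your residual check must be tolerant: recovery has to target an error a factor $c_1^k$ below the acceptance threshold. Its reliability with $O(c^k)$ fresh samples also rests on the hypercontractive fourth-moment bound of \Cref{lem:far-gaps}. The paper packages exactly this as \Cref{col:cert}.
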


We state the formal version of the theorem in \Cref{app:single-qubit-learning} with a complete proof. Here, we explain the high-level idea. The main difference from \Cref{thm:near-upper} is that we replace the
multi-qubit gap in \eqref{eq:gap-def} by one-qubit gaps. Choose independently and uniformly
$\sigma\in\{X,Y,Z\}^n$ and $x\in\{0,1\}^n$.
For every $j\in[n]$, define
\begin{equation}\label{eq:single-qubit-gap}
G_j(\sigma,x)
:=
\frac12\left(
\bra{x}_{\sigma}H\ket{x}_{\sigma}
-
\bra{x\oplus e_j}_{\sigma}
H
\ket{x\oplus e_j}_{\sigma}
\right).
\end{equation}
The two states in \eqref{eq:single-qubit-gap} differ only on qubit $j$.
Hence each gap is compatible with the single-qubit energy-difference
procedure, and \Cref{thm:energy-single-qubit} allows us to estimate the values $G_j(\sigma,x)$ using only single-qubit gates and measurements, with forward evolution.

A direct calculation gives
\begin{equation}\label{eq:single-qubit-linear-measurement}
G_j(\sigma,x)
=
\sum_{\substack{P\in\Pnk\\j\in\operatorname{supp}(P)}}
z_P\phi_{j,P}(\sigma,x),
\qquad
z_P:=3^{-w_P/2}h_P,
\end{equation}
where
\begin{equation}\label{eq:single-qubit-functions}
\phi_{j,P}(\sigma,x)
:=
3^{w_P/2}
\mathbf 1\{P\preceq\sigma\}
(-1)^{\sum_{i\in\operatorname{supp}(P)}x_i}.
\end{equation}
For every fixed $j$, the functions $\left\{\phi_{j,P}:P\in\Pnk,\ j\in\operatorname{supp}(P)\right\}$ satisfy
\begin{equation}\label{eq:single-qubit-orthonormality}
\E_{\sigma,x}
\left[
\phi_{j,P}(\sigma,x)
\phi_{j,Q}(\sigma,x)
\right]
=
\delta_{P,Q},
\qquad
\norm{\phi_{j,P}}_\infty
\leq
3^{k/2}.
\end{equation}
Define
\begin{equation}
    h_P^{(j)}
    :=
    \begin{cases}
        h_P,
        & j\in\operatorname{supp}(P),\\
        0,
        & j\notin\operatorname{supp}(P),
    \end{cases}
    \qquad
    d_j:=\norm{h^{(j)}}_0,
\end{equation}
The rescaled coefficient vector in \eqref{eq:single-qubit-linear-measurement} has the same support as $h^{(j)}$ and is therefore $d_j$-sparse. By \Cref{lem:RIP}, we can recover the coefficients $h^{(j)}$ within $\eps$-additive error in the $\ell_2$-norm using $\mathcal{\widetilde{O}}(\max\{1,d_j\})$ samples. 

If $\eps\geq\Lambda$, the zero estimate already has $\ell_2$ and $\ell_\infty$ error at most $\eps$; thus it suffices to consider $0<\eps<\Lambda$. Therefore, we apply \Cref{thm:energy-single-qubit}. Using evolution time $\mathcal{\widetilde{O}}(1/\eps)$ and step size $\widetilde{\Theta}(1/\Lambda)$, we can estimate one sample with desired accuracy. Hence, the total evolution time of recovering $h^{(j)}$ becomes $\mathcal{\widetilde{O}}(\max\{1,d_j\}/\eps)$. Therefore, repeating the procedure for every $j$ recovers the entire coefficients $h$ within $\eps$-additive error in the $\ell_\infty$-norm using
\begin{equation}\label{eq:sum-dj}
\sum_{j=1}^n\mathcal{\widetilde{O}}(\max\{1,d_j\}/\eps)=\mathcal{\widetilde{O}}((n+ks)/\eps)     
\end{equation}
total evolution time. We can recover $h$ within $\eps$-additive error in the $\ell_2$-norm using $\mathcal{\widetilde{O}}((n+ks)\sqrt{n}/\eps)$ total evolution time as explained in \Cref{app:single-qubit-learning}.

Since $d_j$ is unknown, we double its estimate as $1,2,4,\ldots$ and verify each candidate using measurement data independent of the recovery data. Recovery is performed at the smaller error required for guaranteed acceptance by the verifier, while accepted candidates have local $\ell_2$ error below $\eps$. For fixed $k$, this preserves \eqref{eq:sum-dj}. Parallel coordinate discovery removes its additive $n$ term by identifying which coordinates require recovery before applying the preceding procedure. The details are given in \Cref{app:single-qubit-learning}.

The number of queries $\mathcal N$ is bounded by $\mathcal N\leq T_{\rm tot}/\tau$, since every nonzero evolution time is at least the step size $\tau$. Hence, $\mathcal N=\widetilde{\mathcal O}(s\Lambda/\eps)$ and $\mathcal N=\widetilde{\mathcal O}(s\Lambda\sqrt n/\eps)$ for the $\ell_\infty$ and $\ell_2$ guarantees, respectively. For the classical postprocessing, the dominant cost comes from the
sparse-recovery step, which is performed by solving the convex
optimization problem in \eqref{eq:BPDN}. This optimization can be
solved in polynomial time in its input size. The corresponding
sampling matrix has dimension $m\times M$, where
$m=\widetilde{\mathcal O}(s)$ and
$M=|\Pnk|=\mathcal O(n^k)$ for constant $k$. Therefore, the input
size of the sparse-recovery problem is polynomial in $n$ and $s$,
and the total classical postprocessing time is also polynomial in
$n$ and $s$ for fixed $k$.

\section{More applications}\label{sec:applications}
Beyond full Hamiltonian learning, the energy-gap estimation procedures also lead to efficient algorithms for several related tasks. In this section, we show applications to learning a single Hamiltonian coefficient, agnostic local Hamiltonian learning, and tolerant Hamiltonian certification, while retaining the near-optimal step size.

\subsection{Learning a single coefficient of the Hamiltonian}
We consider the single-coefficient learning question raised in \cite[Section~1.4]{bakshi2024structure}. Under the operator-norm bound used here, a specified Pauli coefficient can be estimated with Heisenberg-limited total evolution time and near-optimal step size, up to logarithmic factors. For constant $\Lambda$, the evolution-time bound is independent of $n$. This addresses the system-size dependence in that question, but does not remove the precision-dependent logarithms or establish a bound in terms of the local interaction norm used in that work.

\begin{theorem}\label{thm:single-coeff}
Let $H=\sum_P h_P P$ be a traceless Hamiltonian satisfying $\|H\|_{\op}\leq\Lambda$, and let $P$ be a specified nonidentity Pauli
operator. For every $\eps>0$, we can estimate $h_P$ to additive error $\eps$ with high probability using total evolution time $\mathcal{\widetilde{O}}(1/\eps)$ and step size $\widetilde\Theta(1/\Lambda)$.
\end{theorem}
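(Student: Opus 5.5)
The plan is to reduce the statement to \Cref{thm:energy-gap-subroutine} on a doubled system, following the sketch in the overview. We work on $2n$ qubits, system register $A$ and ancilla register $B$, and evolve under $H\otimes I$. This is implemented by applying $e^{-itH}$ to $A$ and doing nothing on $B$. Hence it is forward evolution with the same step-size grid and the same total evolution time, and $\norm{H\otimes I}_{\op}\leq\Lambda$. Let $\ket{\Omega}=2^{-n/2}\sum_i\ket{i}\ket{i}$ be the maximally entangled state, and define
\[
\ket{x}=\tfrac{1}{\sqrt2}\bigl(\ket{\Omega}+(P\otimes I)\ket{\Omega}\bigr),\qquad
\ket{y}=\tfrac{1}{\sqrt2}\bigl(\ket{\Omega}-(P\otimes I)\ket{\Omega}\bigr).
\]
These states are normalized and orthogonal, because $\bra{\Omega}(P\otimes I)\ket{\Omega}=2^{-n}\Tr P=0$ for nonidentity $P$.

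The first step is to exhibit a known unitary $W$ on $2n$ qubits with $W\ket{0^{2n}}=\ket{x}$ and $W\ket{10^{2n-1}}=\ket{y}$. Start from an ancilla-like first qubit. Apply a Hadamard to it, so that $\ket{0}\mapsto\ket{+}$ and $\ket{1}\mapsto\ket{-}$. Prepare $\ket{\Omega}$ on the remaining registers using Hadamards and CNOTs. Then apply controlled-$P$ from that qubit onto $A$, followed by a Hadamard back. This yields $\tfrac12[(1\pm1)\ket{0}\ket{\Omega}+\dots]$.

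A cleaner route is to rebase on $2n+1$ qubits with qubit $1$ as a flag. Either compute the flag out, or use the purified form $\tfrac{1}{\sqrt2}(\ket{0}\ket{\Omega}\pm\ket{1}(P\otimes I)\ket{\Omega})$. For the latter, the Hamiltonian $I\otimes H\otimes I$ has energies $\bra{\Omega}H\otimes I\ket{\Omega}$ on both branches, so the cross term vanishes and the gap is lost. The flag must therefore be uncomputed.

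I will uncompute it as follows. Applying $H_{\rm ad}$ on the flag gives $\tfrac12\ket{0}(I+P)\ket{\Omega}+\tfrac12\ket{1}(I-P)\ket{\Omega}$. This is not a fixed product, so I would instead directly define $W$ as any unitary extending the isometry $\ket{e_0}\mapsto\ket{x}$, $\ket{e_1}\mapsto\ket{y}$, for instance a Householder-type pair of reflections. \Cref{thm:energy-gap-subroutine} only requires $W$ and $W^\dagger$ to be known gates, with no efficiency requirement, so this suffices.

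The second step is the computation of the gap:
\[
\bra{x}(H\otimes I)\ket{x}-\bra{y}(H\otimes I)\ket{y}
=2\,\mathrm{Re}\,\bra{\Omega}(HP\otimes I)\ket{\Omega}
=2\cdot2^{-n}\mathrm{Re}\Tr(HP)=2h_P.
\]
Here $\bra{\Omega}(A\otimes I)\ket{\Omega}=2^{-n}\Tr A$ is used, and $\Tr(HP)$ is real since $H$ and $P$ are Hermitian. The diagonal terms $\bra{\Omega}H\otimes I\ket{\Omega}$ and $\bra{\Omega}PHP\otimes I\ket{\Omega}$ cancel between $x$ and $y$.

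The third step applies \Cref{thm:energy-gap-subroutine} with accuracy $2\eps$ and returns half the estimate. If $\eps\geq\Lambda$, the estimate $0$ suffices, since $|h_P|\leq\norm{H}_{\op}$. This gives total evolution time $\widetilde O(\eps^{-1}\log(1/\delta))$ and step size $\Theta(1/(\Lambda(1+\log(\Lambda/\eps))))$.

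The only real subtlety is checking that the subroutine's analysis is dimension-independent and uses only $\norm{H\otimes I}_{\op}\leq\Lambda$, including the constant gap of $V$ on $2n$ qubits. Both hold by inspection of \Cref{lem:analytic}.
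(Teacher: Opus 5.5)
Your proposal is correct and follows the paper's proof in all essentials. Both use the doubled system evolved under $H\otimes I$, the symmetric and antisymmetric superpositions of $\ket{\Omega}$ and $(P\otimes I)\ket{\Omega}$, the identity that their energy difference is $2h_P$, and the appeal to \Cref{thm:energy-gap-subroutine}.

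The only divergence is the choice of $W$. You take an arbitrary unitary extension of $\ket{e_0}\mapsto\ket{x}$, $\ket{e_1}\mapsto\ket{y}$. This suffices, because the theorem only claims evolution-time and step-size bounds and the subroutine accepts any known $W$. The paper instead gives an explicit $O(n)$-gate Clifford circuit, and that construction is what your flag-based attempt was missing. Let $B$ be the Hadamard--CNOT circuit preparing $\ket{\Omega}$. Then $B^\dagger(P\otimes I)B$ is a Pauli operator, so $B^\dagger(P\otimes I)\ket{\Omega}=\omega\ket{b}$ for a known phase $\omega$ and a nonzero bit string $b$. A phase gate plus a CNOT fan-out $C_P$ gives $C_P\ket{0^{2n}}=\ket{0^{2n}}$ and $C_P\ket{10^{2n-1}}=\omega\ket{b}$. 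Setting $W=BC_PH_1$ completes the construction and keeps the known controls efficient.

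One small technicality: running the subroutine at accuracy $2\eps$ requires $2\eps<\Lambda$. You should either treat $\Lambda/2\leq\eps<\Lambda$ separately or, as the paper does, run it at accuracy $\eps$ and halve the output.
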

\begin{proof}
If $\eps\geq\Lambda$, we output zero, so assume $0<\eps<\Lambda$.
We use $n$ system qubits and $n$ ancillary qubits. Evolution under
$H$ on the system is equivalent to evolution under $H\otimes I$ on
the $2n$ qubits, and $\norm{H\otimes I}_{\op}=\norm{H}_{\op}\leq\Lambda$.
Hence, we may apply \Cref{thm:energy-gap-subroutine} to $H\otimes I$.

Let $\ket{x}:=\ket{\Omega}$ and
$\ket{y}:=(P\otimes I)\ket{\Omega}$, where
$\ket{\Omega}:=2^{-n/2}\sum_{v\in\{0,1\}^n}\ket{v}\ket{v}$.
Since $P\neq I$, we have
$\braket{x}{y}=2^{-n}\Tr(P)=0$. We now construct the unitary $W$ required by
\Cref{thm:energy-gap-subroutine}. Let
$B:=(\prod_{j=1}^n\mathrm{CNOT}_{j,n+j})(\prod_{j=1}^n H_j)$,
so that $B\ket{0^{2n}}=\ket{\Omega}$. Since $B$ is a Clifford
circuit, $B^\dagger(P\otimes I)B$ is a Pauli operator. Therefore,
$B^\dagger(P\otimes I)\ket{\Omega}$ is a computational-basis state
up to a known phase. Since $P\neq I$, this state is not
$\ket{0^{2n}}$. Write $B^\dagger(P\otimes I)\ket{\Omega}=\omega\ket{b}$, where $b\neq0^{2n}$ and $\omega\in\{1,-1,i,-i\}$ are known. Construct $C_P$ by first applying $\operatorname{diag}(1,\omega)$ to qubit $1$, moving its bit to any position where $b$ equals one using a SWAP if necessary, and using that position as the control of CNOT gates to the other positions where $b$ equals one. Then $C_P\ket{0^{2n}}=\ket{0^{2n}}$ and $C_P\ket{10^{2n-1}}=\omega\ket b$. Thus
\begin{equation}
    W:=BC_PH_1
\end{equation}
satisfies $W\ket{0^{2n}}=(\ket{x}+\ket{y})/\sqrt2$ and $W\ket{10^{2n-1}}=(\ket{x}-\ket{y})/\sqrt2$. Here $H_j$ denotes a Hadamard gate, and both $B$ and $C_P$ use $\mathcal O(n)$ known one- and two-qubit gates.

Applying \Cref{thm:energy-gap-subroutine} to these two states estimates
\begin{equation}
\begin{aligned}
    &\left(\frac{\bra{x}+\bra{y}}{\sqrt2}\right)
    (H\otimes I)
    \left(\frac{\ket{x}+\ket{y}}{\sqrt2}\right)
    -
    \left(\frac{\bra{x}-\bra{y}}{\sqrt2}\right)
    (H\otimes I)
    \left(\frac{\ket{x}-\ket{y}}{\sqrt2}\right) \\
    &\qquad=
    \bra{x}(H\otimes I)\ket{y}
    +
    \bra{y}(H\otimes I)\ket{x}
    =
    2h_P.
\end{aligned}
\end{equation}
Estimate this gap to additive error $\eps$ with failure probability at most $\delta$. By the proof of \Cref{thm:energy-gap-subroutine}, its cost is $\mathcal O((p/\eps)\log(p/\delta))$, with $p=\Theta(1+\log(\Lambda/\eps))$, which gives the total evolution time $\mathcal{\widetilde{O}}(1/\eps)$. Dividing by two produces an estimate of $h_P$ with error at most $\eps/2\leq\eps$, at the stated step size.
\end{proof}

We can further use \Cref{thm:single-coeff} to learn the best $k$-local approximation of a general Hamiltonian in the Frobenius norm. Since the Pauli operators form an orthonormal basis under the normalized Frobenius inner product, this approximation is obtained by retaining exactly the Pauli coefficients of weight at most $k$. Thus, applying \Cref{thm:single-coeff} to all such Pauli operators gives the following result.

\begin{theorem}[Agnostic local Hamiltonian learning]\label{thm:agnostic-learning}
Assume that the traceless Hamiltonian $H=\sum_P h_P P$ satisfies $\|H\|_\op\le \Lambda$. Let the Frobenius-nearest $k$-local Hamiltonian be represented as $H_k=\sum_{P\in\Pnk}h_P^{(k)} P$, with coefficient vector $h^{(k)}$. For $k=\mathcal O(1)$, we can obtain $\hat{h}^{(k)}$ with high probability such that
\begin{equation}
    \norm{h^{(k)}-\hat{h}^{(k)}}_{\infty} \le \eps,
\end{equation}
using total evolution time $\mathcal{\widetilde O}(n^k/\eps)$ and step size $\widetilde\Theta(1/\Lambda)$.
\end{theorem}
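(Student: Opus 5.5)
The plan is to reduce \Cref{thm:agnostic-learning} to \Cref{thm:single-coeff}, applied once for each Pauli operator $P\in\Pnk$, and then combine the failure probabilities with a union bound. First we identify the target. Pauli operators are orthonormal under the normalized Hilbert--Schmidt inner product. Hence the normalized Frobenius distance $\norm{H-\sum_{P\in\Pnk}g_PP}_{\F}^2$ splits as $\sum_{P\in\Pnk}|h_P-g_P|^2+\sum_{P\notin\Pnk,P\neq I}|h_P|^2$. The identity coefficient vanishes because $H$ is traceless, and the candidate $k$-local Hamiltonians are the traceless ones spanned by $\Pnk$. This split is minimized uniquely at $g_P=h_P$, so $h^{(k)}_P=h_P=2^{-n}\Tr(PH)$ for every $P\in\Pnk$. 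Learning $h^{(k)}$ in $\ell_\infty$ therefore amounts to estimating each of these coefficients to additive error $\eps$.

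Next we apply \Cref{thm:single-coeff} to each $P\in\Pnk$. That theorem needs only tracelessness and $\norm{H}_{\op}\leq\Lambda$, with no locality or sparsity assumption, so it applies to the general $H$ here. If $\eps\geq\Lambda$, we output the zero vector. This is valid because $|h_P|\leq\norm{H}_{\F}\leq\norm{H}_{\op}\leq\Lambda$ by \eqref{eq:norm-relations}. Otherwise, we run each estimate with failure probability $\delta/M$, where $M:=|\Pnk|=\sum_{w=1}^k3^w\binom nw=\mathcal O(3^kn^k)$. The proof of \Cref{thm:single-coeff} shows that each run costs $\mathcal O((p/\eps)\log(pM/\delta))$ with $p=\Theta(1+\log(\Lambda/\eps))$. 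The failure probability enters only logarithmically, so a union bound over all $M$ coefficients gives overall success probability at least $1-\delta$. Summing over $P$ gives total evolution time $\mathcal O(M(p/\eps)\log(pM/\delta))=\widetilde{\mathcal O}(n^k/\eps)$ for $k=\mathcal O(1)$. The extra factor $\log M=\mathcal O(k\log n)$ is absorbed into the tilde.

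Finally we check the step size. In every run, the step size from \Cref{thm:energy-gap-subroutine} is $\tau=\Theta(1/[\Lambda(1+\log(\Lambda/\eps))])$. This depends only on $\Lambda$ and $\eps$, not on $P$ or on the failure probability. All queries therefore lie on one common grid of spacing $\widetilde\Theta(1/\Lambda)$. I expect no real obstacle in this argument. The only points needing care are checking that the Frobenius projection onto $k$-local operators really is coordinate truncation (which holds because $H$ is traceless), and checking that the union bound affects only logarithmic factors and not the common step size.
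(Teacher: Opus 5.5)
Your proposal is correct and matches the paper's proof essentially step for step. Like the paper, you apply \Cref{thm:single-coeff} to each $P\in\Pnk$ with failure probability $\delta/M$, take a union bound, and note that all runs share the same accuracy $\eps$ and hence the same step size; your extra checks (Frobenius projection equals coefficient truncation for traceless $H$, and the trivial case $\eps\geq\Lambda$) spell out points the paper states briefly or leaves implicit.
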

\begin{proof}
Let $M:=|\Pnk|=\mathcal O(n^k)$ and fix $\delta\in(0,1/3)$.
Apply \Cref{thm:single-coeff} to each $P\in\Pnk$ with accuracy
$\eps$ and failure probability $\delta/M$. Since
$h_P^{(k)}=h_P$ for $P\in\Pnk$, a union bound gives
$\norm{\widehat h^{(k)}-h^{(k)}}_\infty\leq\eps$
with probability at least $1-\delta$.
The total evolution time is
$\widetilde{\mathcal O}(M/\eps)
=\widetilde{\mathcal O}(n^k/\eps)$.
All estimates use the same accuracy and step size.
\end{proof}

\subsection{Certifying Hamiltonians with single-qubit gates and measurements}
We next apply the energy-gap estimation procedure to Hamiltonian certification. With general control, we achieve near-optimal scaling in both total evolution time and step size, using $\widetilde{\Theta}(1/\eps)$ total evolution time and step size $\widetilde{\Theta}(1/\Lambda)$. We then extend the certification procedure to the single-qubit control setting. This restriction preserves the near-optimal step size and incurs only an additional factor of $n^{3/2}$ in the total evolution time.

\begin{theorem}\label{thm:cert}
Let $H=\sum_{P\in\Pnk}h_PP$ be traceless, $k$-local, and satisfy $\norm{H}_{\op}\leq\Lambda$, $k=\mathcal{O}(1)$. Given a known traceless $k$-local Hamiltonian $H_0$ and $\eps>0$, promised that one of the following conditions holds, one determines whether
\begin{equation}
    \|H-H_0\|_F \ge \eps \quad \text{or} \quad \|H-H_0\|_F \le \eps/12^k
\end{equation}
with high probability. The total evolution time is $\widetilde{\mathcal O}(1/\eps)$, and the step size is $\widetilde{\Theta}(1/\Lambda)$.
\end{theorem}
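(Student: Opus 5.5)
Set $A:=H-H_0=\sum_{P\in\Pnk}a_PP$, so $\norm{A}_{\F}^2=\sum_Pa_P^2$ and $A$ is $k$-local. If $\eps\geq 2\Lambda+\norm{H_0}_{\op}$ the far case is impossible and we accept; otherwise we may assume $\eps=\mathcal O(\Lambda')$ with $\Lambda'$ bounding both operator norms. The test statistic is the random gap from \eqref{eq:gap-def} applied to $A$: draw $\sigma\in\{X,Y,Z\}^n$ and $x,r\in\{0,1\}^n$ uniformly and set $G_A(\sigma,x,r)=\frac12(\bra{x}_\sigma A\ket{x}_\sigma-\bra{x\oplus r}_\sigma A\ket{x\oplus r}_\sigma)$. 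By \eqref{eq:linear-measurement}, $G_A=\sum_Pz_P\phi_P$ with orthonormal $\phi_P$, so
\[
\E[G_A^2]=\sum_P z_P^2=\sum_P\frac{a_P^2}{2\cdot3^{w_P}}\in\left[\frac{\norm{A}_{\F}^2}{2\cdot 3^k},\frac{\norm{A}_{\F}^2}{6}\right].
\]
Thus the second moment is in the far case at least $\eps^2/(2\cdot3^k)$ and in the close case at most $\eps^2/(6\cdot144^k)$, a multiplicative gap of order $48^k$.

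\textbf{Concentration.} The issue is estimating $\E[G_A^2]$ from few samples without assuming sparsity. $G_A$ is a degree-$\le k$ polynomial in the product random variables $(\sigma,x,r)$; more concretely, each $\phi_P$ is a product over the at most $k$ qubits in $S_P$ of single-site mean-zero-type factors, so I would invoke hypercontractivity (Bonami) on the product space with $\mathcal O(1)$-size alphabets per qubit to get $\E[G_A^4]\le C^k(\E[G_A^2])^2$. By Paley--Zygmund, in the far case $\Pr[G_A^2\ge \E G_A^2/2]\ge c^{k}$. In the close case Markov bounds the upper tail. Then, with $m=\mathcal O_k(\log(1/\delta))$ samples, the empirical mean of $G_A^2$, or more robustly a median-of-means estimate (the fourth-moment bound gives variance control), is within a constant factor of $\E[G_A^2]$ with probability $1-\delta$. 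I expect this step to be the main obstacle, specifically checking that the bounded-ratio hypercontractive constant is only $C^k$ and fits inside the $12^k$ tolerance. The $12^k$ slack suggests the authors used the bound $\E G^4\le 9^k(\E G^2)^2$ or similar, and I would match constants accordingly, choosing the threshold $\tau_{\rm th}^2=\eps^2/(c\,3^k)$ geometrically between the two cases.

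\textbf{Estimating each sample.} For each sampled $(\sigma,x,r)$, apply \Cref{thm:energy-gap-subroutine} to $H$ with $W$ a product of a single-qubit basis change for $\sigma$ and $X$ gates, followed by a Clifford map sending $\ket{e_0},\ket{e_1}$ to $\ket{x}_\sigma,\ket{x\oplus r}_\sigma$. This estimates $\bra{x}_\sigma H\ket{x}_\sigma-\bra{x\oplus r}_\sigma H\ket{x\oplus r}_\sigma$ to additive error $\eta=c'\eps/3^{k/2}$. Subtract the exactly computable $H_0$ term and halve to get $\widehat G$ with $|\widehat G-G_A|\le\eta/2$. Since the errors are small relative to $\sqrt{\E G_A^2}$ in the far case, the RMS of the estimates is perturbed by at most $\eta$ by Minkowski's inequality, preserving the separation. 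Accept iff the empirical RMS is below the threshold.

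\textbf{Cost.} There are $m=\mathcal O_k(\log(1/\delta))$ calls, each costing $\widetilde{\mathcal O}(1/\eps\cdot\log(m/\delta))$. This gives a total evolution time of $\widetilde{\mathcal O}(1/\eps)$ for fixed $k$ and step size $\widetilde\Theta(1/\Lambda)$, directly from \Cref{thm:energy-gap-subroutine}, since the subroutine only needs $\norm{H}_{\op}\le\Lambda$ while $H_0$ is handled classically.
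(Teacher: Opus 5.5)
Your proposal is correct and follows essentially the paper's route. Both estimate random product-state energy gaps of $A=H-H_0$ with the eigenphase-engineering subroutine at accuracy $\Theta(\eps/3^{k/2})$, subtract the $H_0$ contribution classically, and use a second moment lying between order $3^{-k}\norm{A}_{\F}^2$ and $\norm{A}_{\F}^2$, a hypercontractive fourth-moment bound for concentration, and a threshold on the empirical RMS. The differences are minor:
\begin{itemize}
\item Your same-basis pairs $\ket{x}_\sigma,\ket{x\oplus r}_\sigma$ avoid the paper's ancilla trick for nonorthogonal independent pairs, but you must record the gap as exactly zero when $r=0$.
\item You cite general Bonami-type hypercontractivity, where the paper proves $\E_x\bra{x}B\ket{x}^4\leq\norm{B}_{\F}^4$ by a short induction.
\item You boost the success probability by median-of-means, where the paper uses Chebyshev and Markov followed by a majority vote.
\end{itemize}

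One small repair is needed. Your large-$\eps$ reduction does not guarantee that the required gap accuracy is below $\Lambda$ when $\norm{H_0}_{\op}$ is large, so the subroutine's hypothesis can fail. Settle $\eps\geq3\Lambda$ classically by comparing $\norm{H_0}_{\F}$ with $\eps/2$, as the paper does.
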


\begin{theorem}\label{thm:cert-single}
Let $H=\sum_{P\in\Pnk}h_PP$ be traceless, $k$-local, and satisfy $\norm{H}_{\op}\leq\Lambda$, $k=\mathcal{O}(1)$. Given a known traceless $k$-local Hamiltonian $H_0$ and $\eps>0$, promised that one of the following conditions holds, one determines whether
\begin{equation}
    \|H-H_0\|_F \ge \eps \quad \text{or} \quad \|H-H_0\|_F \le \eps/12^k
\end{equation}
with high probability. The total evolution time is $\widetilde{\mathcal{O}}(n^{3/2}/\eps)$, and the step size is $\widetilde{\Theta}(1/\Lambda)$. The algorithm uses only single-qubit gates and measurements with forward evolution $e^{-iH\tau}$. 
\end{theorem}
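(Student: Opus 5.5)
We reduce certification to a statistical test on single-qubit energy gaps of the discrepancy $A:=H-H_0=\sum_{P\in\Pnk}a_PP$, which is traceless and $k$-local. Draw $\sigma\in\{X,Y,Z\}^n$ and $x\in\{0,1\}^n$ uniformly, and for each $j\in[n]$ define the one-qubit gaps $G_j^A(\sigma,x)$ as in \eqref{eq:single-qubit-gap} with $H$ replaced by $A$. By linearity, $G_j^A=G_j^H-G_j^{H_0}$, and $G_j^{H_0}$ can be computed exactly from the classical description of $H_0$.

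\textbf{Step 1 (mean of the statistic).} By \eqref{eq:single-qubit-linear-measurement}--\eqref{eq:single-qubit-orthonormality},
\[
G_j^A=\sum_{P\ni j}a_P\mathbf 1\{P\preceq\sigma\}(-1)^{x\cdot S_P}.
\]
Summing over $j$ gives the statistic
\[
Q(\sigma,x):=\sum_{j=1}^n G_j^A(\sigma,x)^2,
\qquad
\E Q=\sum_{P}w_P3^{-w_P}a_P^2 .
\]
Since $3^{-k}\le w3^{-w}\le 1/3$ for $1\le w\le k$, we get
\[
\E Q\ge 3^{-k}\eps^2 \ \text{ in the far case},
\qquad
\E Q\le \frac{\eps^2}{3\cdot 144^k}\ \text{ in the close case}.
\]

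\textbf{Step 2 (anticoncentration and concentration).}
\begin{itemize}[leftmargin=*]
\item \emph{Far case.} Each $G_j^A$ is a polynomial of degree at most $k$ in independent variables on a product space whose atoms have probability at least $1/6$. Hypercontractivity (Bonami) then gives $\E Q^2\le C^k(\E Q)^2$. Paley--Zygmund yields $\Prb[Q\ge \E Q/2]\ge 4^{-1}C^{-k}$.
\item \emph{Close case.} Markov's inequality gives $\Prb[Q\ge \eps^2/(2\cdot 3^k)]\le \tfrac23 48^{-k}$.
\end{itemize}
With the threshold $\theta:=\eps^2/(4\cdot 3^k)$ (half of $\E Q/2$, leaving room for estimation error), the two cases should give exceedance probabilities separated by a $k$-dependent constant. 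The test counts the fraction of $N=O_k(\log(1/\delta))$ samples with $\widehat Q\ge\theta$, and a Chernoff bound decides correctly. Alternatively, one compares a median-of-means RMS of $\widehat Q$ with a threshold. I expect the main obstacle to be checking that the hypercontractive constant $C^k$ is dominated by the $12^k$ tolerance factor, i.e.\ that $4^{-1}C^{-k}$ really exceeds the Markov bound (after halving the thresholds). If it does not, I would use the RMS/mean test instead. In that version the close case is controlled in expectation and the far case by a variance bound $\operatorname{Var}Q\le C^k(\E Q)^2$, averaging over $O_k(1)$ samples.

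\textbf{Step 3 (estimating the gaps with single-qubit operations).} Assume $\eps<\Lambda$ up to constants. Otherwise a coarse estimate suffices, since $\|A\|_\F$ is at most $2\Lambda$. For a fixed sample and coordinate $j$, the states $\ket{x}_\sigma$ and $\ket{x\oplus e_j}_\sigma$ differ only on qubit $j$. Relabel qubit $j$ as the distinguished first qubit of \eqref{eq:single-product-control}; this only permutes which single-qubit angles are applied. The state map is a product $W=\bigotimes_i W_i$, with $W_i$ rotating the $\sigma_i$ eigenbasis and applying $X^{x_i}$. Then \Cref{thm:energy-single-qubit} estimates $2G_j^H$ to accuracy $\eta$ with failure probability $\delta/(nN)$, using time $\widetilde O(\eta^{-1})$, step size $\widetilde\Theta(1/\Lambda)$, and only single-qubit operations. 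Subtracting the exact $G_j^{H_0}$ gives $\widehat G_j$.

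By the triangle inequality in $\ell_2$ over $j$, $|\widehat Q^{1/2}-Q^{1/2}|\le\sqrt n\,\eta$. Choosing $\eta=c_k\eps/\sqrt n$ therefore shifts $\sqrt Q$ by at most a small fraction of $\sqrt\theta$, so the exceedance events of Step~2 persist with thresholds $\theta/2$ and $2\theta$. The cost per sample is $n\cdot\widetilde O(\sqrt n/\eps)$. With $N=O_k(\log(1/\delta))$ samples, the total evolution time is $\widetilde O(n^{3/2}/\eps)$, and a union bound over the $nN$ gap estimates completes the argument.
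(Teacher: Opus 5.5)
Your proposal is correct if you commit to the averaged (mean/RMS) test. It organizes the statistics differently from the paper. The paper draws $N=\lceil 64kn9^k\rceil$ pairs $(x_\ell,j_\ell)$ with a \emph{uniformly random} probe qubit $j_\ell$, estimates one single-qubit gap per pair to accuracy $\eps/(8\sqrt n\,3^{k/2})$, and thresholds the empirical RMS. Its far case is Chebyshev, using the explicit product inequality $\E_x\bra{x}B\ket{x}^4\le\norm{B}_{\F}^4$ together with $\sum_j\norm{A^{(j)}}_{\F}^4\le k\norm{A}_{\F}^4$. Because $j$ is sampled, the variance-to-squared-mean ratio there is $O(kn9^k)$, so $N$ must grow linearly in $n$.

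Your $Q=\sum_jG_j^2$ is exactly $n/4$ times the conditional mean over $j$ of the paper's squared gap. Summing over every probe qubit for each sampled product state removes the variance coming from the choice of $j$. A per-coordinate fourth-moment bound plus Minkowski then gives $\operatorname{Var}Q\le C^k(\E Q)^2$ with no $n$, so $O_k(1)$ samples suffice. Each sample costs $n$ gap estimates, so both routes use $\Theta_k(n)$ estimates at accuracy $\Theta_k(\eps/\sqrt n)$ and reach the same $\widetilde O(n^{3/2}/\eps)$. Your version avoids the cross-coordinate bound and tolerates a generic hypercontractivity constant. The paper's version keeps every experiment a single gap, which is the form it reuses in the fixed-coordinate verifier of \Cref{col:cert}.

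Two caveats. First, drop the threshold-counting variant. It pits a Paley--Zygmund lower bound $\tfrac14C^{-k}$ against a Markov upper bound $\tfrac83\,48^{-k}$ (with your thresholds $\theta/2$ and $2\theta$). The general-product-space Bonami bound $\E f^4\le(9/\lambda)^k(\E f^2)^2$ gives $C=54$ for atoms of probability $\lambda=1/6$, so the comparison fails for every $k$. The paper's sharper inequality gives $C=9$, which still fails at $k=1$. In the averaged test, $C^k$ enters only the sample count, so it works for all $k$.

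Second, $\norm{A}_{\F}\le2\Lambda$ does not follow from the hypotheses, since nothing bounds $H_0$. Handle $\eps\ge3\Lambda$ classically by comparing $\norm{H_0}_{\F}$ with $\eps/2$, as the paper does. This also keeps the requested subroutine accuracy below $\Lambda$, and avoids an evolution-time blow-up when $\eps\gg\sqrt n\,\Lambda$.
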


We prove both theorems by employing two lemmas. We utilize the method used in \cite{BluhmEtAl2026}, which employs hypercontractivity of $k$-local Hamiltonians to obtain optimal certification for Hamiltonians of constant locality.

\begin{lemma}[Far case]
\label{lem:far-gaps}
Let $A$ be a traceless $k$-local Hamiltonian with
$\|A\|_F\geq\eps>0$, where $k\geq1$ and
$\|A\|_F^2=2^{-n}\operatorname{Tr}(A^2)$.
Independently sample product states $x_\ell,y_\ell$, each
single-qubit factor being a uniformly random Pauli eigenstate,
and uniform indices $j_\ell\in[n]$.
Then
\begin{align}
    \Pr\left[
        \frac1N\sum_{\ell=1}^N
        \left|\bra{y_\ell}A\ket{y_\ell}
              -\bra{x_\ell}A\ket{x_\ell}\right|^2
        \geq\frac{\eps^2}{3^k}
    \right]
    &\geq\frac78,
    \label{eq:far-global}\\
    \Pr\left[
        \frac1N\sum_{\ell=1}^N
        \left|\bra{x_\ell\oplus e_{j_\ell}}A
                    \ket{x_\ell\oplus e_{j_\ell}}
              -\bra{x_\ell}A\ket{x_\ell}\right|^2
        \geq\frac{\eps^2}{n3^k}
    \right]
    &\geq\frac78,
    \label{eq:far-single}
\end{align}
provided $N\geq64k9^k$ and $N\geq64kn9^k$, respectively. Here $x_\ell\oplus e_{j_\ell}$ replaces the $j_\ell$-th factor by its orthogonal state.
\end{lemma}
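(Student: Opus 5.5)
The proof reduces both statements to a second-moment computation for a single random gap, followed by a concentration argument for the empirical average.

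\emph{Global test \eqref{eq:far-global}.} Set $D:=\bra{y}A\ket{y}-\bra{x}A\ket{x}$ with $x,y$ independent random Pauli-eigenstate products. Write $A=\sum_{P\in\Pnk}a_PP$. A random single-qubit Pauli eigenstate $\rho$ satisfies $\E[\Tr(\rho\sigma)\Tr(\rho\sigma')]=\delta_{\sigma\sigma'}/3$. Hence $\E\bra{x}P\ket{x}\bra{x}Q\ket{x}=\delta_{PQ}3^{-w_P}$ and $\E\bra{x}P\ket{x}=0$ for $P\neq I$. Since $x,y$ are independent and mean zero,
\begin{equation}
\E D^2=2\sum_Pa_P^2 3^{-w_P}\geq 2\cdot3^{-k}\norm{A}_{\F}^2\geq 2\eps^2/3^k .
\end{equation}
The empirical mean $\frac1N\sum_\ell D_\ell^2$ therefore has expectation at least twice the threshold $\eps^2/3^k$. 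It remains to show it is not much smaller, which follows from Chebyshev if $\mathrm{Var}(D^2)=\E D^4-(\E D^2)^2$ is controlled relative to $(\E D^2)^2$.

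\emph{Fourth moment.} Bounding $\E D^4\leq C_k(\E D^2)^2$ is the main obstacle. I would use hypercontractivity. The map $(x,y)\mapsto D$ is a degree-$\le k$ polynomial in the independent random variables encoding each qubit's Pauli eigenstate. Each qubit contributes a uniform choice among six states, and the relevant functions $\Tr(\rho\sigma)$ form an orthogonal system with mean zero. The Bonami-type hypercontractive inequality for product spaces then gives $\E D^4\leq 9^{k}(\E D^2)^2$ up to constants, as in \cite{BluhmEtAl2026}. With this bound, Chebyshev yields
\begin{equation}
\Pr\Big[\tfrac1N\sum_\ell D_\ell^2<\tfrac12\E D^2\Big]\leq\frac{4\,\mathrm{Var}(D^2)}{N(\E D^2)^2}\leq\frac{4\cdot9^k}{N},
\end{equation}
which is at most $1/8$ once $N\geq 64k9^k$. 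The factor $k$ is slack that covers the hypercontractivity constant. If the constant is worse than $9^k$, I would instead use the $(2,4)$ bound with $\sqrt3$ per degree and absorb the result.

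\emph{Single-qubit test \eqref{eq:far-single}.} The argument is the same with $D_j=\bra{x\oplus e_j}A\ket{x\oplus e_j}-\bra{x}A\ket{x}$. Flipping the $j$-th factor negates $\Tr(\rho_j\sigma_j)$. Hence $D_j=-2\sum_{P\ni j}a_P\bra{x}P\ket{x}$, and
\begin{equation}
\E D_j^2=4\sum_{P\ni j}a_P^2 3^{-w_P}.
\end{equation}
Averaging over a uniform $j$ gives
\begin{equation}
\E D^2=\frac4n\sum_Pw_Pa_P^23^{-w_P}\geq\frac{4\eps^2}{n3^k}.
\end{equation}
For the fourth moment, hypercontractivity applied to each fixed $j$ gives $\E D_j^4\leq9^k(\E D_j^2)^2$. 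Averaging over $j$,
\begin{equation}
\E D^4\leq 9^k\frac1n\sum_j(\E D_j^2)^2\leq 9^k\, n\,(\E D^2)^2,
\end{equation}
using $\frac1n\sum_jb_j^2\leq n\,(\frac1n\sum_jb_j)^2$ for $b_j\geq0$. This extra factor of $n$ is exactly why $N\geq64kn9^k$ is required. Chebyshev then concludes as before.
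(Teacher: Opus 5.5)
You follow the paper's route: exact second moments from $\E_x\bra{x}P\ket{x}\bra{x}Q\ket{x}=\delta_{PQ}3^{-w_P}$, a fourth-moment bound, and Chebyshev for the empirical mean of the squared gaps. Your second-moment identities, the reduction $D_j=-2\bra{x}A^{(j)}\ket{x}$, and the Chebyshev arithmetic are correct. Your aggregation $\E D^4\le 9^k\frac1n\sum_j(\E D_j^2)^2\le 9^k n(\E D^2)^2$ is a valid self-normalized alternative to the paper's estimate $\frac{16}{n}\sum_j\norm{A^{(j)}}_{\F}^4\le 16k\eps^4/n$, and it even saves the factor $k$.

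The gap is the step you yourself call the main obstacle. Suppose $\E X^4\le C\,9^k(\E X^2)^2$ for $X=\bra{x}B\ket{x}$. Then your Chebyshev bound is $4C\,9^k/N$ (resp.\ $4C\,9^kn/N$). With $N\ge64k9^k$ (resp.\ $64kn9^k$) this equals $C/(16k)$, which is at most $1/8$ only when $C\le 2k$. So ``up to constants'' is not affordable, especially at $k=1$.

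The off-the-shelf Bonami lemma for general finite product spaces does not give $C=1$. Its constant worsens with the minimal atom probability $\lambda$, roughly $(9/\lambda)^k$, i.e.\ $54^k$ for six equiprobable eigenstates. Your fallback of ``$\sqrt3$ per degree'' is the same $9^k$ bound restated, not an alternative.

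The sharp constant holds only because of the special structure of the Pauli-eigenstate ensemble, and the paper proves it directly:
\begin{itemize}
\item For one qubit with $B=aI+b\cdot\sigma$, odd moments of the Bloch vector vanish and $\E_r(a+b\cdot r)^4=a^4+2a^2\norm{b}_2^2+\frac13\sum_\nu b_\nu^4\le\norm{B}_{\F}^4$.
\item Conditioning on the remaining qubits and applying Minkowski's inequality to $\sum_\nu b_\nu(x')^2$ tensorizes this to $\E_x\bra{x}B\ket{x}^4\le\norm{B}_{\F}^4$.
\item For $k$-local $B$ one also has $\E_x\bra{x}B\ket{x}^2\ge3^{-k}\norm{B}_{\F}^2$, so together these give $\E X^4\le9^k(\E X^2)^2$.
\item For the global gap, $\E D^4=2\E X^4+6(\E X^2)^2\le9^k(\E D^2)^2$ then follows.
\end{itemize}
Once you supply this lemma, your argument goes through as written.
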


\begin{proof}
It suffices to prove the claim for $\|A\|_F=\eps$, since all the energy gaps scale linearly with $A$.

We first establish a fourth-moment bound that will be used below. For any Hermitian operator $B$ and a uniformly random product Pauli
eigenstate $\ket{x}$,
\begin{equation}\label{eq:product-fourth-moment}
    \mathbb E_x\!\left[\bra{x}B\ket{x}^4\right]
    \leq
    \|B\|_F^4.
\end{equation}
For one qubit, write $B=aI+b\cdot\sigma$. A uniformly random Pauli
eigenstate has Bloch vector $r$ uniformly distributed over
$\{\pm e_1,\pm e_2,\pm e_3\}$, and therefore $\mathbb E_r[(a+b\cdot r)^4]=a^4+2a^2\norm b_2^2+\frac13\sum_{\nu=1}^3b_\nu^4\leq(a^2+\norm b_2^2)^2=\norm B_{\F}^4$. For the induction step, write $B=\sum_{\nu=0}^3\sigma_\nu\otimes B_\nu$, where $(\sigma_0,\sigma_1,\sigma_2,\sigma_3)=(I,X,Y,Z)$, and condition on the product state $\ket{x'}$ of the remaining qubits. Setting $b_\nu(x'):=\bra{x'}B_\nu\ket{x'}$, the one-qubit bound gives
\begin{equation}
    \mathbb E_{x_1}\!\left[
        \bra{x_1,x'}B\ket{x_1,x'}^4
        \,\middle|\,x'
    \right]
    \leq
    \left(\sum_{\nu=0}^3 b_\nu(x')^2\right)^2.
\end{equation}
After averaging over $x'$, apply the triangle inequality to the functions $b_\nu(x')^2$: $(\mathbb E_{x'}[(\sum_\nu b_\nu^2)^2])^{1/2}\leq\sum_\nu(\mathbb E_{x'}b_\nu^4)^{1/2}$. This gives
\begin{equation}
\begin{aligned}
    \left(
        \mathbb E_x[\bra{x}B\ket{x}^4]
    \right)^{1/2}
    &\leq
    \sum_{\nu=0}^3
    \left(
        \mathbb E_{x'}[b_\nu(x')^4]
    \right)^{1/2} \\
    &\leq
    \sum_{\nu=0}^3\|B_\nu\|_F^2
    =
    \|B\|_F^2,
\end{aligned}
\end{equation}
where the second inequality follows by induction. This proves
\eqref{eq:product-fourth-moment}.

Now write $A=\sum_{P\neq I}a_PP$. For a random product Pauli
eigenstate $\ket{x}$, different Pauli strings are orthogonal under
the averaging over the local basis and eigenvalue signs. More precisely, $\mathbb E_x[\bra{x}P\ket{x}\bra{x}Q\ket{x}]=\delta_{P,Q}3^{-\wt(P)}$. Since $A$ is traceless, $\mathbb E_x\bra{x}A\ket{x}=0$, and hence
\begin{equation}\label{eq:far-second-moments}
\begin{aligned}
    \mathbb E_{x,y}
    \left|
        \bra{y}A\ket{y}-\bra{x}A\ket{x}
    \right|^2
    &=
    2\sum_P3^{-\wt(P)}a_P^2
    \geq
    \frac{2\eps^2}{3^k},\\
    \mathbb E_{x,j}
    \left|
        \bra{x\oplus e_j}A\ket{x\oplus e_j}
        -\bra{x}A\ket{x}
    \right|^2
    &=
    \frac4n\sum_P
        \wt(P)3^{-\wt(P)}a_P^2
    \geq
    \frac{4\eps^2}{n3^k}.
\end{aligned}
\end{equation}
For the second identity, a Pauli term contributes only when
$j\in\supp(P)$; flipping the $j$-th eigenstate then changes its expectation value by a factor of two.

We next bound the corresponding fourth moments. Let
$X:=\bra{x}A\ket{x}$ and $Y:=\bra{y}A\ket{y}$. Since $X,Y$ are independent, identically distributed, and have mean zero, $\mathbb E[(Y-X)^4]=2\mathbb E[X^4]+6(\mathbb E[X^2])^2\leq8\eps^4$, where we used \eqref{eq:product-fourth-moment} and
$\mathbb E[X^2]\leq\|A\|_F^2=\eps^2$.

For the single-qubit gap, define $A^{(j)}:=\sum_{P:j\in\supp(P)}a_PP$. Terms not containing $j$ cancel, while each remaining expectation changes sign. Thus $\bra{x\oplus e_j}A\ket{x\oplus e_j}-\bra{x}A\ket{x}=-2\bra{x}A^{(j)}\ket{x}$. Applying \eqref{eq:product-fourth-moment} gives
\begin{equation}
    \begin{aligned}
    \mathbb E_x
    \left|
        \bra{x\oplus e_j}A\ket{x\oplus e_j}
        -\bra{x}A\ket{x}
    \right|^4
    &\leq
    16\|A^{(j)}\|_F^4.
\end{aligned}
\end{equation}
Averaging over $j$ therefore gives
\begin{equation}\label{eq:far-fourth-moments}
\begin{aligned}
    \mathbb E_{x,j}
    \left|
        \bra{x\oplus e_j}A\ket{x\oplus e_j}
        -\bra{x}A\ket{x}
    \right|^4
    &\leq
    \frac{16}{n}
    \sum_j
    \left(
        \sum_{P:\,j\in\supp(P)}a_P^2
    \right)^2 \leq
    \frac{16k\eps^4}{n}.
\end{aligned}
\end{equation}
Indeed,
$\sum_j\sum_{P:\,j\in\supp(P)}a_P^2
=\sum_P\wt(P)a_P^2\leq k\eps^2$,
while each inner sum is at most $\eps^2$.

Finally, apply Chebyshev's inequality to the squared gaps. For the
first experiment, let
$Z_\ell:=|\bra{y_\ell}A\ket{y_\ell}
-\bra{x_\ell}A\ket{x_\ell}|^2$.
By \eqref{eq:far-second-moments},
$\mathbb E Z_\ell\geq2\eps^2/3^k$, while the fourth-moment bound gives
$\operatorname{Var}(Z_\ell)\leq\mathbb E Z_\ell^2\leq8\eps^4$.
Independence gives $\operatorname{Var}(N^{-1}\sum_\ell Z_\ell)=\operatorname{Var}(Z_1)/N$. Chebyshev's inequality then gives
\begin{equation}
    \Pr\left[
        \frac1N\sum_{\ell=1}^NZ_\ell
        <
        \frac{\eps^2}{3^k}
    \right]
    \leq
    \frac{8\eps^4/N}{(\eps^2/3^k)^2}
    =
    \frac{8\cdot9^k}{N}.
\end{equation}
Similarly, for the single-qubit gaps, \eqref{eq:far-second-moments} and \eqref{eq:far-fourth-moments} give
\begin{equation}
    \Pr\left[
        \frac1N\sum_{\ell=1}^N Z_\ell
        <
        \frac{\eps^2}{n3^k}
    \right]
    \leq\frac{16k\eps^4/(nN)}{(3\eps^2/(n3^k))^2}
    \leq\frac{4kn9^k}{N},
\end{equation}
where $Z_\ell$ now denotes the squared single-qubit gap. The assumptions $N\geq64k9^k$ and $N\geq64kn9^k$ make the two failure probabilities at most $1/8$, respectively, proving the lemma.
\end{proof}

\begin{lemma}[Close case]
\label{lem:close-gaps}
Let $A$ be traceless and $k$-local, with
$\|A\|_F\leq\eps/12^k$ and $k\geq1$.
With the sampling of \Cref{lem:far-gaps}, for every $N\geq1$,
\begin{align}
    \Pr\left[
        \frac1N\sum_{\ell=1}^N
        \left|\bra{y_\ell}A\ket{y_\ell}
              -\bra{x_\ell}A\ket{x_\ell}\right|^2
        \leq\frac{\eps^2}{4\cdot3^k}
    \right]
    &\geq\frac78,\\
    \Pr\left[
        \frac1N\sum_{\ell=1}^N
        \left|\bra{x_\ell\oplus e_{j_\ell}}A
                    \ket{x_\ell\oplus e_{j_\ell}}
              -\bra{x_\ell}A\ket{x_\ell}\right|^2
        \leq\frac{\eps^2}{4n3^k}
    \right]
    &\geq\frac78.
\end{align}
\end{lemma}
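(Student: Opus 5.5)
Since every quantity in \Cref{lem:close-gaps} is nonnegative, the plan is to bound the expectation of each empirical average and apply Markov's inequality. Neither the sample size nor a variance bound is needed, which is why the statement holds for every \(N\geq1\). By homogeneity of the gaps in \(A\), we may take \(\|A\|_F=\eps/12^k\) as the worst case; indeed, only the upper bound \(\|A\|_F^2\leq\eps^2/144^k\) is used.

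\textbf{Global gaps.} We reuse the second-moment identity derived in the proof of \Cref{lem:far-gaps}, namely \(\mathbb E_x[\bra{x}P\ket{x}\bra{x}Q\ket{x}]=\delta_{P,Q}3^{-\wt(P)}\). Together with tracelessness (so \(\mathbb E_x\bra{x}A\ket{x}=0\)) and independence of \(x,y\), this gives
\begin{equation}
\mathbb E\,Z_\ell=2\sum_P3^{-\wt(P)}a_P^2\leq 2\|A\|_F^2\leq\frac{2\eps^2}{144^k}.
\end{equation}
By linearity the same bound holds for the mean of the empirical average. Markov's inequality then bounds the probability that the average exceeds \(\eps^2/(4\cdot3^k)\) by
\begin{equation}
\frac{2\eps^2/144^k}{\eps^2/(4\cdot 3^k)}=\frac{8\cdot3^k}{144^k}=8\cdot 48^{-k}\leq\frac{1}{6}
\end{equation}
for \(k\geq1\). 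This is below \(1/8\) precisely because \(8\cdot 48^{-k}\leq 1/6<1/8\) fails only if the constant is miscounted, so the arithmetic needs checking; with \(k\geq1\) one gets \(8/48=1/6\), which is \emph{not} at most \(1/8\). To fix this, I would use the sharper identity \(\mathbb E Z_\ell=2\sum_P 3^{-\wt(P)}a_P^2\leq \tfrac{2}{3}\|A\|_F^2\), valid since every \(P\) has weight at least \(1\). This yields the ratio \(\tfrac{8}{3}\cdot 3^k/144^k\leq 8/144=1/18\leq1/8\).

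\textbf{Single-qubit gaps.} The second identity in \eqref{eq:far-second-moments} gives
\begin{equation}
\mathbb E_{x,j}Z_\ell=\frac4n\sum_P\wt(P)3^{-\wt(P)}a_P^2 .
\end{equation}
Using \(w3^{-w}\leq 1/3\) for \(w\geq1\), this is at most \(\tfrac{4}{3n}\|A\|_F^2\leq\tfrac{4\eps^2}{3n\,144^k}\). Markov's inequality then bounds the failure probability by
\begin{equation}
\frac{4/(3\cdot144^k)}{1/(4\cdot3^k)}=\frac{16}{3}\,48^{-k}\leq\frac{1}{9}<\frac18 .
\end{equation}

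\textbf{Main obstacle.} The argument is straightforward. The only delicate point is the constant bookkeeping, specifically extracting the factor \(3^{-\wt(P)}\leq1/3\) (or \(w3^{-w}\leq1/3\)) so that Markov's inequality delivers \(7/8\) rather than a weaker constant. The \(12^k\) tolerance provides ample slack for this once the weight-\(\geq1\) factor is used.
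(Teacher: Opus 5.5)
Your argument is correct and matches the paper's proof: bound the expected squared gaps by $\tfrac{2}{3}\|A\|_F^2$ and $\tfrac{4}{3n}\|A\|_F^2$ using $3^{-w}\leq 1/3$ and $w3^{-w}\leq 1/3$ for $w\geq1$, then apply Markov's inequality to get failure probabilities at most $1/18$ and $1/9$. In the write-up, delete the abandoned first attempt with the bound $2\|A\|_F^2$, which only gives $1/6$; as written, that paragraph reads as self-contradictory before you correct it.
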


\begin{proof}
By the second-moment identities in \eqref{eq:far-second-moments},
\begin{equation}
    \mathbb E_{x,y}
    \left|\bra{y}A\ket{y}-\bra{x}A\ket{x}\right|^2
    \leq \frac23\|A\|_F^2,
    \qquad
    \mathbb E_{x,j}
    \left|\bra{x\oplus e_j}A\ket{x\oplus e_j}
          -\bra{x}A\ket{x}\right|^2
    \leq \frac{4}{3n}\|A\|_F^2,
\end{equation}
where we used $3^{-w}\leq1/3$ and $w3^{-w}\leq1/3$ for $w\geq1$.

The sample averages have the same expectations as the individual squared gaps. Markov's inequality therefore gives
\begin{equation}
    \Pr\left[
        \frac1N\sum_{\ell=1}^N
        \left|\bra{y_\ell}A\ket{y_\ell}
              -\bra{x_\ell}A\ket{x_\ell}\right|^2
        >
        \frac{\eps^2}{4\cdot3^k}
    \right]
    \leq\frac{(2/3)\norm A_{\F}^2}{\eps^2/(4\cdot3^k)}
    \leq\frac83\left(\frac3{144}\right)^k
    \leq\frac18,
\end{equation}
where we used $\|A\|_F\leq\eps/12^k$. Similarly,
\begin{equation}
    \Pr\left[
        \frac1N\sum_{\ell=1}^N
        \left|\bra{x_\ell\oplus e_{j_\ell}}A
                    \ket{x_\ell\oplus e_{j_\ell}}
              -\bra{x_\ell}A\ket{x_\ell}\right|^2
        >
        \frac{\eps^2}{4n3^k}
    \right]
    \leq\frac{(4/(3n))\norm A_{\F}^2}{\eps^2/(4n3^k)}
    \leq\frac{16}{3}\left(\frac3{144}\right)^k
    \leq\frac18.
\end{equation}
Taking complements proves the claim.
\end{proof}

\begin{proof}[Proof of \Cref{thm:cert}]
Fix $\delta\in(0,1/3)$ and let $A:=H-H_0$. If $\eps\geq3\Lambda$, comparing $\norm{H_0}_{\F}$ with $\eps/2$ decides the promise without queries; hence assume $\eps<3\Lambda$.
Take $N:=\lceil64k9^k\rceil$ independent pairs $(x_\ell,y_\ell)$. Estimate their gaps under $H$ and subtract the known $H_0$ gaps, using accuracy $\eta:=\eps/(8\cdot3^{k/2})$ and failure probability $1/(8N)$ for each estimate. If the states are not orthogonal, apply the energy gap estimation subroutine to $\ket0\ket{x_\ell}$ and $\ket1\ket{y_\ell}$ under $I\otimes H$.

For the resulting estimates $\widehat\Delta_\ell$, set $\widehat R:=(N^{-1}\sum_{\ell=1}^N\widehat\Delta_\ell^2)^{1/2}$. When all estimates are accurate, the reverse triangle inequality bounds the change in this quantity by $\eta$. Thus \Cref{lem:far-gaps,lem:close-gaps} and a union bound imply,
with probability at least $3/4$,
\begin{equation}
    \widehat R\geq\frac{7\eps}{8\cdot3^{k/2}}
    \quad\text{in the far case},\qquad
    \widehat R\leq\frac{5\eps}{8\cdot3^{k/2}}
    \quad\text{in the close case}.
\end{equation}
Output \textit{Far} if $\widehat R\geq3\eps/(4\cdot3^{k/2})$ and \textit{Close} otherwise. Independent repetition $\mathcal O(\log(1/\delta))$ times and
majority vote give success probability at least $1-\delta$. Since $k=O(1)$, the total evolution time is
$\widetilde O(N/\eta)=\widetilde O(1/\eps)$.
All gap estimates use the same accuracy and share the step $\tau=\Theta(1/[\Lambda(1+\log(\Lambda/\eta))])
=\widetilde\Theta(1/\Lambda)$.

For $0<\eps\leq\Lambda$, distinguishing $H=0$ from $H=\eps Z_1$ with $H_0=0$ requires $\Omega(1/\eps)$ total evolution time, giving the matching worst-case lower bound.
\end{proof}

\begin{proof}[Proof of \Cref{thm:cert-single}]
Fix $\delta\in(0,1/3)$ and let $A:=H-H_0$.
The case $\eps\geq3\Lambda$ is decided classically as in
\Cref{thm:cert}. Otherwise, take $N:=\lceil64kn9^k\rceil$
independent samples $(x_\ell,j_\ell)$. Use the single-qubit
energy-gap subroutine in \Cref{thm:energy-single-qubit} and subtract the known $H_0$ gaps to estimate
$\bra{x_\ell\oplus e_{j_\ell}}A\ket{x_\ell\oplus e_{j_\ell}}
-\bra{x_\ell}A\ket{x_\ell}$ to accuracy
$\eta:=\eps/(8\sqrt n\,3^{k/2})$, with failure probability $1/(8N)$ for each estimate.

Let $\widehat\Delta_\ell$ be these estimates and set
$\widehat R:=(N^{-1}\sum_{\ell=1}^N\widehat\Delta_\ell^2)^{1/2}$.
By \Cref{lem:far-gaps,lem:close-gaps} and the reverse triangle
inequality, with probability at least $3/4$,
\begin{equation}
    \widehat R\geq\frac{7\eps}{8\sqrt n\,3^{k/2}}
    \quad\text{in the far case},\qquad
    \widehat R\leq\frac{5\eps}{8\sqrt n\,3^{k/2}}
    \quad\text{in the close case}.
\end{equation}
Output \textit{Far} if $\widehat R\geq3\eps/(4\sqrt n\,3^{k/2})$ and \textit{Close} otherwise. Independent repetition $\mathcal O(\log(1/\delta))$ times and majority vote give success probability at least $1-\delta$. The total evolution time is
$\widetilde O(N/\eta)=\widetilde O(n^{3/2}/\eps)$, and all gap estimates share the step
$\tau=\Theta(1/[k\Lambda(1+\log(k\Lambda/\eta))^2])
=\widetilde\Theta(1/\Lambda)$. Each $j_\ell$ is designated as the probe qubit, so all preparations,
controls, and measurements are single-qubit operations and all unknown evolutions are forward.
\end{proof}

For fixed $k$, since every nonzero evolution time is at least the step size $\tau=\widetilde{\Theta}(1/\Lambda)$, the number of queries satisfies $\mathcal N\leq T_{\rm tot}/\tau$. Therefore, \Cref{thm:cert,thm:cert-single} use at most $\widetilde{\mathcal O}(\Lambda/\eps)$ and $\widetilde{\mathcal O}(n^{3/2}\Lambda/\eps)$ queries, respectively. For the classical postprocessing, the dominant cost comes from evaluating the known $H_0$ contribution to each sampled energy gap. Since a $k$-local Hamiltonian has $\mathcal O(n^k)$ Pauli coefficients, each such evaluation takes $\mathcal O(n^k)$ arithmetic operations. The general-control and single-qubit tests use $\widetilde{\mathcal O}(1)$ and $\widetilde{\mathcal O}(n)$ gap samples, respectively. Hence, their total classical postprocessing times are $\widetilde{\mathcal O}(n^k)$ and $\widetilde{\mathcal O}(n^{k+1})$, respectively.

\section{Discussion and future work}
We have established learning and certification algorithms that use only forward Hamiltonian evolution on a common step size. The energy-gap subroutine combines discrete controls with eigenvalue perturbation theory and extrapolation, allowing Heisenberg-limited precision dependence while keeping the step size near the optimal limit. For sparse local Hamiltonians, it leads to coefficient $\ell_2$ recovery in time $\widetilde{\mathcal O}(s/\eps)$. Product controls also suffice for $\ell_\infty$ recovery at this evolution-time scaling, although our $\ell_2$ learning and certification bounds then contain additional factors of $\sqrt n$ and $n^{3/2}$, respectively.

The general energy-gap subroutine and the single-coefficient algorithm do not require locality. The sparse Hamiltonian learning algorithm, however, uses locality to control the random product-state measurements, and the version restricted to single-qubit control also uses locality in its perturbation analysis. Extending these learning results to Hamiltonians without a locality assumption, while preserving the step size and evolution-time guarantees, is therefore a natural direction. Another question is whether the additional system-size factors in the single-qubit algorithms can be reduced. Our learning lower bound leaves a factor $\widetilde{\mathcal O}(s^{1/(2k)})$ in sparsity, so the optimal dependence on $s$ also remains to be determined.

The single-coefficient result gives a direct agnostic algorithm for estimating the local part of an arbitrary Hamiltonian. It would be useful to obtain stronger guarantees when this local approximation is itself sparse, and to determine how geometric locality can improve the resource bounds.

\section*{Statement on the Use of Artificial Intelligence}
The Hamiltonian learning and certification protocol, the idea of applying eigenvalue perturbation analysis to the protocol, and the overall proof framework, including the identification and formulation of the principal theorems and lemmas, were developed entirely by the authors. Generative artificial intelligence tools (ChatGPT~5 and 6) provided the idea of using the Diophantine bound for the single-qubit protocol, and assisted the authors in deriving and refining some of the proofs and in drafting and revising portions of the manuscript. The authors reviewed and verified all AI-assisted arguments and text and take full responsibility for the content of this work. 

\section*{Acknowledgments}
We thank John Preskill and Senrui Chen for helpful discussions. M.S. acknowledges funding from Caltech Summer Undergraduate Research Fellowship, KAIST Presidential Fellowship. Y.T. acknowledges support from the U.S. Department of Energy, Office of Science, Accelerated Research in Quantum Computing Centers, Quantum Utility through Advanced Computational Quantum Algorithms, grant no. DE-SC0025572.

\bibliographystyle{unsrt}
\bibliography{ref}

\newpage

\appendix

\section{Proofs}
In this section, $\widetilde{\mathcal O}$ suppresses polynomial factors in $k$ and logarithmic factors in $n$, $s$, $\Lambda/\eps$, and the inverse failure probability $1/\delta$. Exponential factors in $k$ are displayed explicitly.

\subsection{Proof of \texorpdfstring{\Cref{lem:RIP}}{Lemma \getrefnumber{lem:RIP}}}\label{app:RIP}
\begin{proof}
By \cite[Theorem~2.3]{BrugiapagliaDirksenJungRauhut2021}, the assumed sample bound in \eqref{eq:ortho-sample} ensures, with probability at least
$1-\delta$, that $A$ satisfies
\begin{equation}\label{eq:RIP-used}
    (1-\vartheta)\norm v_2^2\leq\norm{Av}_2^2
    \leq(1+\vartheta)\norm v_2^2,\qquad \vartheta=\frac14,
\end{equation}
for every $3s$-sparse vector $v\in\mathbb C^M$.
Fix such a matrix and write $h:=\widehat z-z$ and
$S:=\operatorname{supp}(z)$. Since $z$ and $\widehat z$ are feasible for \eqref{eq:BPDN}, we have
\begin{equation}\label{eq:measurement-error}
    \norm{Ah}_2\leq\norm{A\widehat z-y}_2+\norm{Az-y}_2\leq2\eta.
\end{equation}
Since $\widehat z$ minimizes the $\ell_1$-norm and $z_{S^c}=0$,
$\norm z_1\geq\norm{z+h}_1\geq
\norm{z_S}_1-\norm{h_S}_1+\norm{h_{S^c}}_1$. Therefore,
\begin{equation}\label{eq:l1-cone}
    \norm{h_{S^c}}_1\leq\norm{h_S}_1.
\end{equation}

Partition $S^c$ into successive groups $S_1,S_2,\ldots$ of $s$
coordinates, except possibly the last, in decreasing order of
$|h_a|$, and set $T:=S\cup S_1$.
Then $|T|\leq2s$ and
$\norm{h_{S_j}}_2\leq\norm{h_{S_{j-1}}}_1/\sqrt s$ for $j\geq2$.
Consequently,
\begin{equation}\label{eq:tail-bound}
    \sum_{j\geq2}\norm{h_{S_j}}_2
    \leq\frac{\norm{h_{S^c}}_1}{\sqrt s}
    \leq\frac{\norm{h_S}_1}{\sqrt s}
    \leq\norm{h_T}_2.
\end{equation}

For vectors $u,v$ with disjoint supports whose union
$R:=\operatorname{supp}(u)\cup\operatorname{supp}(v)$ has size at
most $3s$, \eqref{eq:RIP-used} gives
$\norm{A_R^\dagger A_R-I}_{\op}\leq\vartheta$. Since
$\langle u,v\rangle=0$, this implies
\begin{equation}\label{eq:restricted-inner-product}
    |\langle Au,Av\rangle|
    =|\langle u_R,(A_R^\dagger A_R-I)v_R\rangle|
    \leq\vartheta\norm u_2\norm v_2.
\end{equation}
Apply this to $h_T$ and each $h_{S_j}$, using
$h=h_T+\sum_{j\geq2}h_{S_j}$, to obtain
\begin{equation}\label{eq:head-estimate}
    \norm{Ah_T}_2^2
    \leq\norm{Ah_T}_2\norm{Ah}_2
    +\vartheta\norm{h_T}_2\sum_{j\geq2}\norm{h_{S_j}}_2.
\end{equation}
Combining \eqref{eq:RIP-used}, \eqref{eq:measurement-error}, and
\eqref{eq:tail-bound} gives
$(1-2\vartheta)\norm{h_T}_2^2
\leq2\sqrt{1+\vartheta}\eta\norm{h_T}_2$.
If $h_T=0$, then \eqref{eq:tail-bound} gives $h=0$.
Otherwise, divide by $\norm{h_T}_2$ and substitute
$\vartheta=1/4$ to obtain $\norm{h_T}_2\leq2\sqrt5\eta$.
Finally, using the triangle inequality and \eqref{eq:tail-bound} gives
\begin{equation}
    \norm{\widehat z-z}_2 = \norm{h}_2\leq\norm{h_T}_2
    +\sum_{j\geq2}\norm{h_{S_j}}_2
    \leq2\norm{h_T}_2\leq4\sqrt5\eta.
\end{equation}
Thus setting $C_{\rm cs}=4\sqrt5$ concludes the proof.
\end{proof}

\subsection{Proof of \texorpdfstring{\Cref{lem:analytic}}{Lemma \getrefnumber{lem:analytic}}}\label{app:analytic}
Recall the definition of $V$ in \eqref{eq:V}, $F_+,F_-$ in \eqref{eq:Fpm} and $\ket{e_0}=\ket{0^n},\ket{e_1}=\ket{10^{n-1}}$. This section proves the following lemma (equivalent to \Cref{lem:analytic}).
\begin{lemma}
The matrix $F_+(z)$ has an analytic eigenvector $\ket{\phi_{a,+}(z)}$ and corresponding analytic eigenvalue $\lambda_{a,+}(z)=e^{-i\theta_{a,+}(z)}$ for $a=0,1$. For $|z|<c/\Lambda$, the eigenvectors and eigenvalues satisfy
\begin{equation}
    \norm{\ket{\phi_{a,+}(z)}-\ket{e_a}} \le C\Lambda |z|,\quad |\lambda_{a,+}(z)-\lambda_{a,+}(0)| \le C\Lambda |z|,
\end{equation}
where $c,C$ are universal constants. The same argument holds for $F_{-}(z)$. 
Choose the phases analytically with $\theta_{a,\pm}(0)=\pm a\pi/2$.

With the value at $t=0$ defined by continuity, we have
\begin{equation}\label{eq:app-delta-a}
     \delta_a(t):=\frac{\theta_{a,+}(t)+\theta_{a,-}(t)}{2t} = \bra{e_a}K\ket{e_a}+\sum_{j=1}^\infty b_{a,j}t^{2j},\quad |b_{a,j}| \le C\Lambda(C\Lambda)^{2j}.
\end{equation}
\end{lemma}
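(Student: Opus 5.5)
The plan is to apply \Cref{lem:quantitative-analytic-perturbation} with $A(z)=F_+(z)=Ve^{-izK}$ and $A(0)=V$, which is unitary and hence normal. The eigenvalues of $V$ at $\ket{e_0},\ket{e_1}$ are $1$ and $-i$. Each is simple, and the spectral separation is $g=\sqrt2$: we have $|1-(-i)|=|{-i}-(-1)|=\sqrt2$ and $|1-(-1)|=2$. Since $\norm{K}_{\op}=\norm{H}_{\op}\leq\Lambda$, for complex $z$ we get
\begin{equation}
\norm{Ve^{-izK}-V}_{\op}\leq e^{|z|\Lambda}-1\leq 2\Lambda|z|
\end{equation}
whenever $|z|\Lambda\leq1$. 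Choosing $c$ so that $2c\leq\sqrt2/12$ verifies \eqref{eq:abstract-perturbation-smallness} on $|z|<c/\Lambda$. The lemma then yields analytic eigenpairs with the bounds $C\Lambda|z|$ of \eqref{eq:eigen-perturbation}. The identical argument applies to $F_-(z)=V^\dagger e^{-izK}$, whose distinguished eigenvalues are $1$ and $i$ with the same gap. The eigenvalues stay within $C\Lambda|z|<1/2$ of a unit-modulus point, so a branch of the logarithm gives analytic $\theta_{a,\pm}(z)=i\log\lambda_{a,\pm}(z)$ with $\theta_{a,\pm}(0)=\pm a\pi/2$ and $|\theta_{a,\pm}(z)-\theta_{a,\pm}(0)|\leq C'\Lambda|z|$.

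Next I would obtain the even expansion from a symmetry relating $F_-$ to $F_+$. Since $V^\dagger e^{-izK}$ is conjugate to $e^{-izK}V^\dagger=(Ve^{i\bar zK})^\dagger$, the eigenvalues of $F_-(z)$ are the complex conjugates of those of $F_+(-\bar z)$. Taking complex conjugates, $\overline{\lambda_{a,+}(-\bar z)}=\lambda_{a,-}(z)$. Matching branches at $z=0$, where both sides equal $e^{\mp ia\pi/2}$ correctly, and using analyticity and uniqueness of the simple eigenvalue inside the contour, this becomes $\theta_{a,-}(z)=-\overline{\theta_{a,+}(-\bar z)}$. For real $t$, $\theta_{a,+}(t)$ is real because $F_+(t)$ is unitary and the eigenvalue is unique near a unit-circle point. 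Hence $\theta_{a,+}$ has real Taylor coefficients, and the relation simplifies to $\theta_{a,-}(z)=-\theta_{a,+}(-z)$. Therefore $\theta_{a,+}(z)+\theta_{a,-}(z)=\theta_{a,+}(z)-\theta_{a,+}(-z)$ is odd, which makes $\delta_a$ even and analytic on $|z|<c/\Lambda$.

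The constant term is $\theta'_{a,+}(0)$. From \eqref{eq:abstract-eigenvalue-derivative},
\begin{equation}
\lambda'_{a,+}(0)=\bra{e_a}V(-iK)\ket{e_a}=-i\lambda_{a,+}(0)\bra{e_a}K\ket{e_a},
\end{equation}
and differentiating $\lambda=e^{-i\theta}$ gives $\theta'_{a,+}(0)=\bra{e_a}K\ket{e_a}$. For the coefficient bound I would apply Cauchy's estimate to $g(z):=\theta_{a,+}(z)-\theta_{a,+}(0)$, which is bounded by $C'\Lambda|z|\leq C'c$ on the circle $|z|=c/\Lambda$. This gives Taylor coefficients $|g_m|\leq C'c(\Lambda/c)^m$. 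Since $\delta_a(t)=\sum_{j\geq0}g_{2j+1}t^{2j}$, we get $|b_{a,j}|\leq C'c(\Lambda/c)^{2j+1}=C\Lambda(C\Lambda)^{2j}$ after renaming constants.

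The main obstacle is making the symmetry argument rigorous, specifically identifying the correct eigenvalue branch under conjugation and confirming that $\theta_{a,+}$ is real on the real axis. I would handle this by using the uniqueness of the eigenvalue enclosed by the contour $\Gamma$ in \Cref{lem:quantitative-analytic-perturbation}. Its radius $g/2$ isolates exactly one eigenvalue for every $|z|<c/\Lambda$, so any analytic eigenvalue branch starting at the same point must coincide with it.
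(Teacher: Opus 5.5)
Your proposal is correct and follows the paper's proof step for step: the perturbation lemma with gap $\sqrt2$, a logarithmic branch for the phase, the derivative formula for the linear term, a symmetry making $\delta_a$ even, and a Cauchy estimate. The only difference is in the symmetry step. The paper uses the holomorphic identity $F_-(z)=V^\dagger F_+(-z)^{-1}V$, i.e.\ $\lambda_{a,-}(z)=\lambda_{a,+}(-z)^{-1}$, which gives $\theta_{a,-}(z)=-\theta_{a,+}(-z)$ directly, without your detour through complex conjugation and reality of the Taylor coefficients. Separately, the paper takes the Cauchy circle at radius $c/(2\Lambda)$, strictly inside the disk; you should do the same, or pass to the limit, since your circle $|z|=c/\Lambda$ lies on the boundary of the open domain.
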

\begin{proof}
Set $\nu_0=1$, $\nu_1=-i$, $\theta_0^{(0)}=0$, and
$\theta_1^{(0)}=\pi/2$. Then $V\ket{e_a}=\nu_a\ket{e_a}$ and
$\nu_a=e^{-i\theta_a^{(0)}}$. Each $\nu_a$ is simple and separated from the other eigenvalues of $V$ by at least $\sqrt2$.
Since $V$ is unitary and $\norm K_{\op}\leq\Lambda$,
\begin{equation}\label{eq:analytic-perturbation}
    \norm{F_+(z)-V}_{\op}=\norm{e^{-izK}-I}_{\op}
    \leq e^{\Lambda|z|}-1\leq2\Lambda|z|
\end{equation}
for $\Lambda|z|\leq1$. Choose $c\leq1/24$ sufficiently small.
Then \Cref{lem:quantitative-analytic-perturbation}, applied with
$A(z)=F_+(z)$ and initial eigenpair $(\nu_a,\ket{e_a})$, gives
analytic eigenpairs on $|z|<c/\Lambda$ satisfying
\begin{equation}\label{eq:Fplus-eigenpair-bound}
    \norm{\ket{\phi_{a,+}(z)}-\ket{e_a}}\leq C\Lambda|z|,
    \qquad |\lambda_{a,+}(z)-\nu_a|\leq C\Lambda|z|.
\end{equation}

Since $|\lambda_{a,+}(z)/\nu_a-1|\leq1/2$, we may define
\begin{equation}\label{eq:analytic-phase-definition}
    \theta_{a,+}(z):=\theta_a^{(0)}
    +i\operatorname{Log}\!\left(\frac{\lambda_{a,+}(z)}{\nu_a}\right),
\end{equation}
where $\operatorname{Log}(1)=0$. This gives
\begin{align}
    \lambda_{a,+}(z)&=e^{-i\theta_{a,+}(z)},
        \label{eq:lambda-theta}\\
    |\theta_{a,+}(z)-\theta_a^{(0)}|&\leq C\Lambda|z|.
        \label{eq:analytic-phase-bound}
\end{align}
By \eqref{eq:abstract-eigenvalue-derivative},
$\lambda_{a,+}'(0)=\bra{e_a}F_+'(0)\ket{e_a}
=-i\nu_a\bra{e_a}K\ket{e_a}$.
Differentiating \eqref{eq:lambda-theta} at zero thus yields
\begin{equation}\label{eq:phase-derivative}
    \theta_{a,+}'(0)=\bra{e_a}K\ket{e_a}.
\end{equation}

The same argument applies to $F_-(z)=V^\dagger e^{-izK}$. Since $F_-(z)=V^\dagger F_+(-z)^{-1}V$, the analytic eigenvalues satisfy $\lambda_{a,-}(z)=\lambda_{a,+}(-z)^{-1}$ and
$\theta_{a,-}(z)=-\theta_{a,+}(-z)$. Consequently,
\begin{equation}\label{eq:delta-even}
    \delta_a(t)=\frac{\theta_{a,+}(t)+\theta_{a,-}(t)}{2t}
    =\frac{\theta_{a,+}(t)-\theta_{a,+}(-t)}{2t}.
\end{equation}
The numerator vanishes at zero, so $\delta_a$ extends analytically
to $t=0$ and is even. For real $t$, unitarity of $F_+(t)$ also
makes $\theta_{a,+}(t)$ real. Writing
$\theta_{a,+}(z)=\theta_a^{(0)}+\sum_{\ell\geq1}c_{a,\ell}z^\ell$
in \eqref{eq:delta-even} and applying \eqref{eq:phase-derivative} gives $b_{a,0}=\bra{e_a}K\ket{e_a}$ and $b_{a,j}=c_{a,2j+1}$, proving the expansion in \eqref{eq:app-delta-a}.

It remains to bound $|b_{a,j}|$. Choose $R:=c/(2\Lambda)$. By \eqref{eq:analytic-phase-bound}, we have $ \sup_{|z|=R} |\theta_{a,+}(z)-\theta_a^{(0)}| \leq C\Lambda R$. The Cauchy integral formula gives
\begin{equation}
    c_{a,\ell}=
    \frac{1}{2\pi i}
    \oint_{|z|=R}
    \frac{
        \theta_{a,+}(z)-\theta_a^{(0)}
    }{
        z^{\ell+1}
    }
    \,dz.
\end{equation}
Therefore, $|c_{a,\ell}|\leq C\Lambda R^{1-\ell}\leq C\Lambda(C\Lambda)^{\ell-1}$. Finally, $|b_{a,j}|=|c_{a,2j+1}|\leq C\Lambda(C\Lambda)^{2j}$ as required.
\end{proof}

\subsection{Proof of \texorpdfstring{\Cref{lem:estimate-fL}}{Lemma \getrefnumber{lem:estimate-fL}}}\label{app:estimate-fL}
Throughout this subsection, let
\begin{equation}
    d:=\frac{1}{2kL},
    \qquad
    \Delta:=kL\Lambda.
    \label{eq:estimate-fL-scales}
\end{equation}
For \(a\in\{0,1\}\) and \(R\geq0\), define the Hamming-ball
subspace
\begin{equation}
    \mathcal B_{a,R}
    :=
    \operatorname{span}
    \left\{
        \ket{u}:
        \operatorname{Ham}(u,e_a)\leq R
    \right\}.
    \label{eq:Hamming-ball-subspace}
\end{equation}

Since $W$ is a product of single-qubit unitaries and $H$ is $k$-local, the conjugated Hamiltonian $K=W^\dagger HW$ is also $k$-local.  Consequently,
\begin{equation}
    K^p\mathcal B_{a,R}
    \subseteq
    \mathcal B_{a,R+kp}
    \label{eq:K-Hamming-propagation}
\end{equation}
for every integer \(p\geq0\). Indeed, each local term in \(K\) acts nontrivially on at most \(k\) qubits, so a product of \(p\) such terms can change at most \(kp\) additional computational-basis bits. We first establish the locality of the Taylor coefficients of the auxiliary eigenvectors.

\begin{lemma}[Taylor locality of the auxiliary branches]
\label{lem:single-taylor-locality}
Write the expansion of $\ket{\widehat\phi_{a,s}(z)}$, $s\in\{+,-\}$, in \Cref{lem:analytic-single} as
\begin{equation}
    \ket{\widehat\phi_{a,s}(z)}
    =
    \sum_{r=0}^{\infty}
    z^r\ket{\phi_{a,r,s}},\quad e^{-izK}\ket{\widehat\phi_{a,s}(z)}=\sum_{r=0}^{\infty}
    z^r\ket{\chi_{a,r,s}}.
    \label{eq:auxiliary-eigenvector-expansion}
\end{equation}
For \(a\in\{0,1\}\), \(s\in\{+,-\}\) and every \(r\geq0\), 
\begin{equation}
    \ket{\phi_{a,r,s}}\in\mathcal B_{a,kr},\quad \ket{\chi_{a,r,s}}\in\mathcal B_{a,kr}.
    \label{eq:eigenvector-Taylor-locality}
\end{equation}
\end{lemma}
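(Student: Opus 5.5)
Fix $a$ and $s=+$; the case $s=-$ is identical, since $\widehat V_{a,-}=\widehat V_{a,+}^\dagger$ is diagonal in the computational basis and $\nu_{e_a}^{-1}$ is simple. The plan is to derive a recursion for the Taylor coefficients from the eigenvalue equation and then run a strong induction on $r$.

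\textbf{Step 1 (normalization and expansions).} Take the normalization $\braket{e_a}{\widehat\phi_{a,+}(z)}=1$ given by \Cref{lem:quantitative-analytic-perturbation}. Then $\ket{\phi_{a,0,+}}=\ket{e_a}$, and for $r\geq1$ we have $\braket{e_a}{\phi_{a,r,+}}=0$. Write $\widehat\lambda_{a,+}(z)=\sum_r\lambda_rz^r$ with $\lambda_0=\nu_{e_a}$. Since $e^{-izK}=\sum_p(-iK)^pz^p/p!$, the coefficients satisfy
\begin{equation*}
\ket{\chi_{a,r,+}}=\sum_{p+q=r}\frac{(-iK)^p}{p!}\ket{\phi_{a,q,+}}.
\end{equation*}
By \eqref{eq:K-Hamming-propagation}, $\ket{\phi_{a,q,+}}\in\mathcal B_{a,kq}$ for all $q\leq r$ implies $\ket{\chi_{a,r,+}}\in\mathcal B_{a,kr}$. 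So the second inclusion follows from the first, and it suffices to prove the first by induction. The base case $r=0$ is immediate.

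\textbf{Step 2 (recursion).} Expand the eigenvalue equation $\widehat V_{a,+}e^{-izK}\ket{\widehat\phi}=\widehat\lambda\ket{\widehat\phi}$ at order $z^r$:
\begin{equation*}
\widehat V_{a,+}\ket{\chi_{a,r,+}}=\sum_{j=0}^r\lambda_j\ket{\phi_{a,r-j,+}}.
\end{equation*}
Isolate the unknown $\ket{\phi_{a,r,+}}$. The left side contains $\widehat V_{a,+}\ket{\phi_{a,r,+}}$ plus $\widehat V_{a,+}\ket{\rho_r}$, where
\begin{equation*}
\ket{\rho_r}:=\sum_{p\geq1,\,p+q=r}\frac{(-iK)^p}{p!}\ket{\phi_{a,q,+}}.
\end{equation*}
This gives
\begin{equation*}
(\widehat V_{a,+}-\nu_{e_a})\ket{\phi_{a,r,+}}=-\widehat V_{a,+}\ket{\rho_r}+\sum_{j=1}^r\lambda_j\ket{\phi_{a,r-j,+}}.
\end{equation*}
By the induction hypothesis and \eqref{eq:K-Hamming-propagation}, $\ket{\rho_r}\in\mathcal B_{a,kr}$. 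Each $\ket{\phi_{a,r-j,+}}$ with $j\geq1$ lies in $\mathcal B_{a,k(r-1)}\subseteq\mathcal B_{a,kr}$. Because $\widehat V_{a,+}$ is diagonal in the computational basis, it preserves every $\mathcal B_{a,R}$. Hence the right-hand side lies in $\mathcal B_{a,kr}$.

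\textbf{Step 3 (inversion; the main step).} Take the component along $\ket{e_a}$ of the recursion. Since $\braket{e_a}{\phi_{a,r,+}}=0$, this component determines $\lambda_r$ consistently with the rest of the argument. On $\ket{e_a}^\perp$, the diagonal operator $\widehat V_{a,+}-\nu_{e_a}$ is invertible, because $\nu_{e_a}$ is simple by \Cref{lem:single-local-gap}. Its inverse is also diagonal in the computational basis, so it preserves $\mathcal B_{a,kr}\cap\ket{e_a}^\perp$. Therefore
\begin{equation*}
\ket{\phi_{a,r,+}}=(\widehat V_{a,+}-\nu_{e_a})^{-1}\bigl(\text{right-hand side}\bigr)\in\mathcal B_{a,kr},
\end{equation*}
which closes the induction.

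The point to handle carefully is that the Taylor coefficients defined via the analytic branch from \Cref{lem:analytic-single} coincide with this recursively determined solution. This holds because the order-$r$ equation together with the normalization has a unique solution; the reason is that $\ker(\widehat V_{a,+}-\nu_{e_a})=\operatorname{span}\{\ket{e_a}\}$. Notably, only the diagonal structure of $\widehat V_{a,+}$ is used, not whether $kr\leq kL$, so the argument holds for all $r$.
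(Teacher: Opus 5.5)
Your proof is correct and follows essentially the same route as the paper. You expand the auxiliary eigenvalue equation at order $z^r$, show by strong induction (using $K^p\mathcal B_{a,R}\subseteq\mathcal B_{a,R+kp}$ and the diagonality of $\widehat V_{a,+}$) that the right-hand side lies in $\mathcal B_{a,kr}$, invert $\widehat V_{a,+}-\nu_{e_a}I$ off $\ket{e_a}$ via \Cref{lem:single-local-gap}, and obtain the $\ket{\chi_{a,r,s}}$ inclusion from the Cauchy-product formula. The only difference is cosmetic: the paper reads off the coordinates $\langle u|\phi_{a,r,+}\rangle$ with $\operatorname{Ham}(u,e_a)>kr$ directly, so it needs neither the normalization $\langle e_a|\phi_{a,r,+}\rangle=0$ nor your closing uniqueness remark; that remark is harmless but unnecessary, since you are already working with the actual Taylor coefficients.
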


\begin{proof}
We prove the claim for $s=+$; the other sign is identical.
Write $\widehat\lambda_{a,+}(z)=\sum_{r\geq0}\lambda_{a,r,+}z^r$,
where $\lambda_{a,0,+}=\nu_{e_a}$ and
$\ket{\phi_{a,0,+}}=\ket{e_a}$. The eigenvalue equation is
\begin{equation}\label{eq:auxiliary-eigenvalue-equation}
    \widehat V_{a,+}e^{-izK}\ket{\widehat\phi_{a,+}(z)}
    =\widehat\lambda_{a,+}(z)\ket{\widehat\phi_{a,+}(z)}.
\end{equation}
Comparing the coefficient of \(z^r\) in
\eqref{eq:auxiliary-eigenvalue-equation} gives
\begin{equation}
    \left(\widehat V_{a,+}-\nu_{e_a}I\right)
    \ket{\phi_{a,r,+}}=
    \sum_{q=1}^{r}
    \lambda_{a,q,+}
    \ket{\phi_{a,r-q,+}}-\widehat V_{a,+}\sum_{p=1}^{r}\frac{(-iK)^p}{p!}\ket{\phi_{a,r-p,+}}.
    \label{eq:auxiliary-Taylor-recursion}
\end{equation}

We prove $\ket{\phi_{a,r,+}}\in\mathcal B_{a,kr}$ by induction on \(r\).  It holds for \(r=0\).  Suppose that it holds for all orders
smaller than \(r\).  Then, for every \(q\geq1\), we have $\ket{\phi_{a,r-q,+}}\in\mathcal B_{a,k(r-q)}\subseteq\mathcal B_{a,kr}$. Also, by \eqref{eq:K-Hamming-propagation}, $K^p\ket{\phi_{a,r-p,+}}\in\mathcal B_{a,k(r-p)+kp}=\mathcal B_{a,kr}$. Since \(\widehat V_{a,+}\) is diagonal in the computational basis,
it preserves each Hamming-ball subspace.  Hence the entire right-hand side of \eqref{eq:auxiliary-Taylor-recursion} belongs to
\(\mathcal B_{a,kr}\). By \Cref{lem:single-local-gap}, \(\nu_{e_a}\) is a nondegenerate
eigenvalue of \(\widehat V_{a,+}\).  Thus
\(\widehat V_{a,+}-\nu_{e_a}I\) has a nonzero diagonal entry on every computational-basis vector other than \(\ket{e_a}\). Taking the \(\ket{u}\)-coordinate of \eqref{eq:auxiliary-Taylor-recursion} for \(\operatorname{Ham}(u,e_a)>kr\) therefore gives $\braket{u}{\phi_{a,r,+}}=0$. This proves $\ket{\phi_{a,r,+}}\in\mathcal B_{a,kr}$.

Finally, expanding the left-hand side of
\begin{equation}
    e^{-izK}\ket{\widehat\phi_{a,s}(z)}=\sum_{r=0}^{\infty}
    z^r\ket{\chi_{a,r,s}}
\end{equation}
and comparing the degree-$r$ coefficient gives
\begin{equation}
    \ket{\chi_{a,r,+}}
    =
    \sum_{p=0}^{r}
    \frac{(-iK)^p}{p!}
    \ket{\phi_{a,r-p,+}}.
    \label{eq:chi-coefficient}
\end{equation}
Every term in \eqref{eq:chi-coefficient} belongs to
\(\mathcal B_{a,kr}\) by \eqref{eq:K-Hamming-propagation} and \eqref{eq:eigenvector-Taylor-locality}. This proves
$\ket{\chi_{a,r,s}}\in\mathcal B_{a,kr}$.
\end{proof}

We next compare the physical and auxiliary evolutions.  For
notational convenience, set
\begin{equation}
    V_+:=V,
    \qquad
    V_-:=V^\dagger,
    \qquad
    F_s(z):=V_se^{-izK},
    \quad
    s\in\{+,-\}.
\end{equation}

\begin{lemma}
\label{lem:single-physical-defect}
There exist universal constants \(c,C>0\) such that, for every
\(a\in\{0,1\}\), \(s\in\{+,-\}\), and real \(t\) satisfying $|t|\leq c/\Delta$, we have
\begin{align}
    &\norm{
        F_s(t)\ket{\widehat\phi_{a,s}(t)}
        -
        \widehat\lambda_{a,s}(t)
        \ket{\widehat\phi_{a,s}(t)}
    }\leq
    C(C\Delta|t|)^{L+1}.
    \label{eq:physical-quasimode-defect}
\end{align}
Here \(F_s(t)\) is the physical controlled evolution defined in \eqref{eq:Fpm}, with \(V\) given by \eqref{eq:single-product-control}, while \(\ket{\widehat\phi_{a,s}(t)}\) and \(\widehat\lambda_{a,s}(t)\) form the auxiliary eigenpair introduced in \Cref{lem:analytic-single}.
\end{lemma}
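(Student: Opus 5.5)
The plan is to write the defect as the mismatch between the physical and auxiliary controls acting on the same vector. Since $\ket{\widehat\phi_{a,s}(t)}$ is an exact eigenvector of $\widehat{\mathcal F}_{a,s}(t)=\widehat V_{a,s}e^{-itK}$ with eigenvalue $\widehat\lambda_{a,s}(t)$, we have
\begin{equation*}
    F_s(t)\ket{\widehat\phi_{a,s}(t)}-\widehat\lambda_{a,s}(t)\ket{\widehat\phi_{a,s}(t)}
    =\left(V_s-\widehat V_{a,s}\right)e^{-itK}\ket{\widehat\phi_{a,s}(t)}
    =\left(V_s-\widehat V_{a,s}\right)\sum_{r\geq0}t^r\ket{\chi_{a,r,s}}.
\end{equation*}
By construction \eqref{eq:single-auxiliary-V}, $V_s$ and $\widehat V_{a,s}$ are both diagonal in the computational basis and agree on $\mathcal B_{a,kL}$. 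Lemma~\ref{lem:single-taylor-locality} gives $\ket{\chi_{a,r,s}}\in\mathcal B_{a,kr}\subseteq\mathcal B_{a,kL}$ for $r\leq L$. Hence all terms with $r\leq L$ are annihilated, and only the tail $r\geq L+1$ survives. Because $\norm{V_s-\widehat V_{a,s}}_{\op}\leq2$, the defect is bounded by $2\sum_{r\geq L+1}|t|^r\norm{\ket{\chi_{a,r,s}}}$.

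The remaining step is to bound the Taylor coefficients $\ket{\chi_{a,r,s}}$ by Cauchy estimates. The vector-valued function $z\mapsto e^{-izK}\ket{\widehat\phi_{a,s}(z)}$ is analytic on $|z|<c_0/\Delta$ by Lemma~\ref{lem:analytic-single}, where $c_0$ is the constant there. Take the radius $R=c_0/(2\Delta)$. On $|z|=R$, Lemma~\ref{lem:analytic-single} gives $\norm{\ket{\widehat\phi_{a,s}(z)}}\leq1+C\Delta R=O(1)$. Also $\norm{e^{-izK}}_{\op}\leq e^{\Lambda R}\leq e^{c_0/(2kL)}=O(1)$, since $\Delta=kL\Lambda\geq\Lambda$. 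The Cauchy inequality then gives $\norm{\ket{\chi_{a,r,s}}}\leq C_1R^{-r}=C_1(2\Delta/c_0)^r$.

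Now choose $c\leq c_0/4$. For $|t|\leq c/\Delta$ the ratio is $q:=2\Delta|t|/c_0\leq1/2$, so the geometric tail is
\begin{equation*}
    2C_1\sum_{r\geq L+1}q^r\leq4C_1q^{L+1}.
\end{equation*}
This has the claimed form $C(C\Delta|t|)^{L+1}$ with $C=\max\{4C_1,2/c_0\}$.

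I expect the main obstacle to be this norm bound on the analytic continuation. Lemma~\ref{lem:analytic-single} states the eigenvector bound for complex $|z|<c/\Delta_L$, but the normalization $\braket{e_a}{\widehat\phi}=1$ makes the vector non-unit. So one must check that the bound $C\Delta|z|$ from Lemma~\ref{lem:quantitative-analytic-perturbation} genuinely holds off the real axis. It does: that lemma is stated for complex $z$, using $\norm{e^{-izK}-I}_{\op}\leq2\Lambda|z|$ against a gap of $1/(2kL)$. With that in hand, the rest is routine.
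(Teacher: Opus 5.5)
Your proof is correct and follows the paper's argument: you rewrite the defect as $(V_s-\widehat V_{a,s})e^{-izK}\ket{\widehat\phi_{a,s}(z)}$, use \Cref{lem:single-taylor-locality} to annihilate orders $0,\dots,L$, and apply a Cauchy estimate on the circle $|z|=c_0/(2\Delta)$. The only cosmetic difference is that you bound the coefficients $\ket{\chi_{a,r,s}}$ and then multiply by $\norm{V_s-\widehat V_{a,s}}_{\op}\leq 2$, whereas the paper applies the Cauchy bound directly to the defect function.
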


\begin{proof}
Define the vector-valued analytic function
\begin{align}
    \ket{\operatorname{err}_{a,s}(z)}
    &:=
    F_s(z)\ket{\widehat\phi_{a,s}(z)}
    -
    \widehat\lambda_{a,s}(z)
    \ket{\widehat\phi_{a,s}(z)}.
    \label{eq:def-physical-error} \\
    &=\left(V_s-\widehat V_{a,s}\right)
    e^{-izK}\ket{\widehat\phi_{a,s}(z)}.
    \label{eq:physical-error-difference}
\end{align}
The second equation \eqref{eq:physical-error-difference} holds because \(\ket{\widehat\phi_{a,s}(z)}\) is an exact eigenvector of the auxiliary matrix \(\widehat{\mathcal F}_{a,s}(z)
=\widehat V_{a,s}e^{-izK}\) for $s\in\{+,-\}$. By \Cref{lem:single-taylor-locality}, the coefficient of \(z^r\) in $e^{-izK}\ket{\widehat\phi_{a,s}(z)}$ belongs to \(\mathcal B_{a,kr}\).  For every \(0\leq r\leq L\), $\mathcal B_{a,kr}\subseteq\mathcal B_{a,kL}$. By the definition of $\widehat V_{a,s}$ in \eqref{eq:single-auxiliary-V},
\begin{equation}
    V_s\ket{u}
    =
    \widehat V_{a,s}\ket{u}
    \qquad
    \text{whenever }
    \operatorname{Ham}(u,e_a)\leq kL.
    \label{eq:controls-agree-on-ball}
\end{equation}
It follows from
\eqref{eq:physical-error-difference} that the Taylor coefficients of
\(\ket{\operatorname{err}_{a,s}(z)}\) of orders
\(0,1,\ldots,L\) all vanish.  Thus
\begin{equation}
    \ket{\operatorname{err}_{a,s}(z)}
    =
    \sum_{r=L+1}^{\infty}
    z^r\ket{\operatorname{err}_{a,r,s}}.
    \label{eq:error-Taylor-start}
\end{equation}

By \Cref{lem:analytic-single}, $\ket{\widehat\phi_{a,s}(z)}$ is analytic on a disk of radius \(c_0/\Delta\), for sufficiently small $c_0$.  Set $R:=c_0/(2\Delta)$. On the circle \(|z|=R\), the eigenvector perturbation bound in \eqref{eq:perturbation} gives
\begin{equation}\label{eq:phi-bound}
    \norm{\ket{\widehat\phi_{a,s}(z)}} \le 1+\norm{\ket{\widehat\phi_{a,s}(z)}-\ket{e_a}}
    \leq
    1+C\Delta |z| \le 1+C\Delta R =1+\frac{c_0C}{2}.
\end{equation}
Moreover,
\begin{equation}\label{eq:K-V-bound}
    \norm{e^{-izK}}_{\op}
    \leq
    e^{\Lambda|z|}
    \leq
    e^{c_0/(2kL)}, \quad
    \norm{
        V_s-\widehat V_{a,s}
    }_{\op}
    \leq2
\end{equation}
Hence, \eqref{eq:phi-bound} and \eqref{eq:K-V-bound} are bounded by a constant. Applying the bounds to \eqref{eq:physical-error-difference} gives 
\begin{equation}
    \sup_{|z|=R}
    \norm{
        \ket{\operatorname{err}_{a,s}(z)}
    }
    \leq C_1,
    \label{eq:error-bound-on-circle}
\end{equation}
for some universal constant $C_1$. The Cauchy estimate for the Taylor coefficients of the vector-valued analytic function in
\eqref{eq:error-Taylor-start} gives $\norm{\ket{\operatorname{err}_{a,r,s}}}\leq C_1R^{-r}$. Therefore, whenever \(|t|\leq R/2\),
\begin{equation}
    \norm{\ket{\operatorname{err}_{a,s}(t)}}\leq
    C_1\sum_{r=L+1}^{\infty} \left(\frac{|t|}{R}\right)^r \leq 2C_1\left(\frac{|t|}{R}\right)^{L+1}\leq
    2C_1(\frac{2}{c_0}\Delta|t|)^{L+1}.
\end{equation}
Writing the universal constant \(C = \max\{2C_1,2/c_0\}\) proves \eqref{eq:physical-quasimode-defect}.
\end{proof}

We now analyze repeated applications of $F_s(t)$ to $\ket{e_a}$.

\begin{lemma}
\label{lem:single-repeated-branch}
Under the assumptions of
\Cref{lem:single-physical-defect}, for every integer \(m\geq0\),
\begin{align}
    \norm{
        F_s(t)^m\ket{e_a}
        -
        e^{-im\widehat\theta_{a,s}(t)}
        \ket{e_a}
    }\leq
    C\Delta|t|
    +
    Cm(C\Delta|t|)^{L+1}.
    \label{eq:single-repeated-branch-bound}
\end{align}
Here \(F_s(t)\) is the physical controlled evolution defined in \eqref{eq:Fpm}, with \(V\) given by \eqref{eq:single-product-control}, while \(\ket{\widehat\phi_{a,s}(t)}\) and \(\widehat\lambda_{a,s}(t)=e^{-i\widehat\theta_{a,s}(t)}\) form the auxiliary eigenpair introduced in \Cref{lem:analytic-single}.
\end{lemma}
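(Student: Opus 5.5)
We prove \eqref{eq:single-repeated-branch-bound} by a telescoping argument in which the auxiliary eigenvector $\ket{\widehat\phi_{a,s}(t)}$ replaces the ideal eigenvector used in \Cref{lem:estimate-f}. Write $\ket{\phi}:=\ket{\widehat\phi_{a,s}(t)}$ and $\lambda:=\widehat\lambda_{a,s}(t)=e^{-i\widehat\theta_{a,s}(t)}$. For real $t$ these are not assumed to be a unit vector and a unit-modulus number, so we first record their sizes.

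\textbf{Step 1: size of the auxiliary eigenpair.} Since $\widehat V_{a,s}e^{-itK}$ is unitary for real $t$, its eigenvalue $\lambda$ has modulus one. By \eqref{eq:perturbation}, $\norm{\ket{\phi}-\ket{e_a}}\leq C\Delta|t|$, and hence $\norm{\ket\phi}\leq 1+C\Delta|t|\leq2$ once $c$ is small. We also need $\widehat\theta_{a,s}(t)$ to be real, so that $|\lambda^m|=1$. This holds because the analytic phase branch is real on the real axis, exactly as argued in the proof of \Cref{lem:analytic}.

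\textbf{Step 2: replace $\ket{e_a}$ by the quasimode.} Using unitarity of $F_s(t)^m$ and $|\lambda|^m=1$,
\[
\norm{F_s(t)^m\ket{e_a}-\lambda^m\ket{e_a}}\leq 2\norm{\ket{e_a}-\ket\phi}+\norm{F_s(t)^m\ket\phi-\lambda^m\ket\phi}.
\]
The first term is at most $2C\Delta|t|$. This is the $C\Delta|t|$ contribution, and it is independent of $m$.

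\textbf{Step 3: telescope the quasimode defect.} Write
\[
F_s(t)^m\ket\phi-\lambda^m\ket\phi=\sum_{j=0}^{m-1}\lambda^{m-1-j}F_s(t)^{j}\bigl(F_s(t)-\lambda\bigr)\ket\phi .
\]
Every prefactor $\lambda^{m-1-j}F_s(t)^j$ is an isometry. Each summand therefore has norm equal to the defect $\norm{(F_s(t)-\lambda)\ket\phi}$, which \Cref{lem:single-physical-defect} bounds by $C(C\Delta|t|)^{L+1}$. Summing over $j$ gives $mC(C\Delta|t|)^{L+1}$. Combining with Step 2 and enlarging $C$ yields \eqref{eq:single-repeated-branch-bound}.

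The main point of care is Step 3, which is also the step that determines the $L$-dependence of the bound. We deliberately do not pass through the eigenvector at every step, since doing so would cost $O(\Delta|t|)$ per repetition. Instead, only the exponentially small defect from \Cref{lem:single-physical-defect} accumulates linearly in $m$. Everything else follows from unitarity and the reality of the phase.
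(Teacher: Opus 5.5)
Your proposal is correct and follows the paper's proof essentially step for step. The paper likewise telescopes $F_s(t)^m\ket{\widehat\phi}-\widehat\lambda^m\ket{\widehat\phi}$ into $m$ unitarily propagated copies of the defect bounded in \Cref{lem:single-physical-defect}, uses $|\widehat\lambda|=1$ from unitarity of $\widehat{\mathcal F}_{a,s}(t)$, and then swaps $\ket{e_a}$ for $\ket{\widehat\phi}$ at cost $2\norm{\ket{\widehat\phi}-\ket{e_a}}$. Your separate appeal to the reality of $\widehat\theta_{a,s}(t)$ is redundant, since $e^{-im\widehat\theta_{a,s}(t)}=\widehat\lambda^m$ identically and $|\widehat\lambda|=1$ already gives $|\widehat\lambda^m|=1$, but it does no harm.
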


\begin{proof}
Write $\ket{\widehat\phi}:=\ket{\widehat\phi_{a,s}(t)}$,
$\widehat\lambda:=\widehat\lambda_{a,s}(t)$, and
$\ket\varepsilon:=F_s(t)\ket{\widehat\phi}
-\widehat\lambda\ket{\widehat\phi}$. Telescoping gives
\begin{equation}\label{eq:branch-telescoping}
    F_s(t)^m\ket{\widehat\phi}-\widehat\lambda^m\ket{\widehat\phi}
    =\sum_{q=0}^{m-1}\widehat\lambda^{m-1-q}F_s(t)^q\ket\varepsilon.
\end{equation}
For real \(t\), both \(F_s(t)\) and
\(\widehat{\mathcal F}_{a,s}(t)\) are unitary.  Hence $|\widehat\lambda|=1$ and
\begin{equation}
    \norm{
        F_s(t)^m\ket{\widehat\phi}
        -
        \widehat\lambda^m\ket{\widehat\phi}
    }
    \leq
    m\norm{\ket{\varepsilon}}.
    \label{eq:telescoping-norm-bound}
\end{equation}
Using
\(\widehat\lambda=e^{-i\widehat\theta_{a,s}(t)}\), we obtain
\begin{align}
    &\norm{
        F_s(t)^m\ket{e_a}
        -
        \widehat\lambda^m\ket{e_a}
    }
    \nonumber\\
    &\quad\leq
    \norm{
        F_s(t)^m
        \left(
            \ket{e_a}-\ket{\widehat\phi}
        \right)
    }+
    \norm{
        F_s(t)^m\ket{\widehat\phi}
        -
        \widehat\lambda^m\ket{\widehat\phi}
    }
    +
    \norm{
        \widehat\lambda^m
        \left(
            \ket{\widehat\phi}-\ket{e_a}
        \right)
    }
    \nonumber\\
    &\quad\leq
    2\norm{
        \ket{\widehat\phi}-\ket{e_a}
    }
    +
    m\norm{\ket{\varepsilon}}.
    \label{eq:eigenvector-to-basis-vector}
\end{align}
Substituting $\norm{\ket{\widehat\phi}-\ket{e_a}}\leq C\Delta|t|$ by \Cref{lem:analytic-single} and $\norm{\ket{\varepsilon}}\leq C(C\Delta|t|)^{L+1}$ by \Cref{lem:single-physical-defect} into \eqref{eq:eigenvector-to-basis-vector} proves \eqref{eq:single-repeated-branch-bound}.
\end{proof}

Now we prove \Cref{lem:estimate-fL}, restated below.
\begin{lemma}\label{lem:app-estimate-fL}
    Let $\Delta_L=kL\Lambda$. There exist universal constants $c,C$ such that, for $0<t\le c/\Delta_L$ and every integer $m\geq0$, we have
    \begin{equation}\label{eq:app-circuit-approx-L}
        \norm{F_{-}(t)^mF_{+}(t)^m\ket{+}\ket{0^{n-1}} - \ket{\psi_L}} \le Ct\Delta_L + Cm(Ct\Delta_L)^{L+1},
    \end{equation}
    where $\ket{\psi_L}$ is
    \begin{equation}
        \ket{\psi_L}=\frac{1}{\sqrt2}(e^{-im(\widehat\theta_{0,+}(t)+\widehat\theta_{0,-}(t))}\ket{e_0}+e^{-im(\widehat\theta_{1,+}(t)+\widehat\theta_{1,-}(t))}\ket{e_1}).
    \end{equation}
    Here \(F_s(t)\) is the physical controlled evolution defined in \eqref{eq:Fpm}, with \(V\) given by \eqref{eq:single-product-control}, while \(\ket{\widehat\phi_{a,s}(t)}\) and \(\widehat\lambda_{a,s}(t)=e^{-i\widehat\theta_{a,s}(t)}\) form the auxiliary eigenpair introduced in \Cref{lem:analytic-single}.
\end{lemma}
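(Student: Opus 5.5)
The plan is to mirror the proof of \Cref{lem:estimate-f}, replacing the exact eigenvector argument there by the quasimode bound of \Cref{lem:single-repeated-branch}. The constants $c$ and $C$ are chosen as the minimum and maximum, respectively, of those in \Cref{lem:analytic-single,lem:single-physical-defect,lem:single-repeated-branch}. With this choice, all these lemmas apply for $0<t\le c/\Delta_L$; here $\Delta=\Delta_L=kL\Lambda$.

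\emph{Step 1 (apply $F_+(t)^m$).} Fix $a\in\{0,1\}$. Applying \Cref{lem:single-repeated-branch} with $s=+$ gives
\begin{equation}
F_+(t)^m\ket{e_a}=e^{-im\widehat\theta_{a,+}(t)}\ket{e_a}+\ket{r_{a,+}},
\end{equation}
with
\begin{equation}
\norm{\ket{r_{a,+}}}\le C\Delta_L t+Cm(C\Delta_L t)^{L+1}.
\end{equation}

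\emph{Step 2 (apply $F_-(t)^m$).} Apply $F_-(t)^m$ to the expression from Step 1. Since $F_-(t)$ is unitary for real $t$, the remainder $F_-(t)^m\ket{r_{a,+}}$ keeps the same norm. The main term $e^{-im\widehat\theta_{a,+}(t)}F_-(t)^m\ket{e_a}$ is handled by \Cref{lem:single-repeated-branch} again, now with $s=-$. This yields
\begin{equation}
F_-(t)^mF_+(t)^m\ket{e_a}=e^{-im(\widehat\theta_{a,+}(t)+\widehat\theta_{a,-}(t))}\ket{e_a}+\ket{r_a},
\end{equation}
where $\norm{\ket{r_a}}$ is at most twice the bound from Step 1. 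The unit-modulus phase factor does not affect norms. One point to check is that $\widehat\theta_{a,\pm}(t)$ is real for real $t$, so that $|e^{-im\widehat\theta}|=1$. This holds because $\widehat{\mathcal F}_{a,\pm}(t)$ is unitary for real $t$, so its eigenvalue $\widehat\lambda_{a,\pm}(t)$ has modulus one.

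\emph{Step 3 (superpose).} Use $\ket{+}\ket{0^{n-1}}=(\ket{e_0}+\ket{e_1})/\sqrt2$ and linearity. The total error is at most
\begin{equation}
\frac{\norm{\ket{r_0}}+\norm{\ket{r_1}}}{\sqrt2}\le\sqrt2\cdot 2\bigl(C\Delta_L t+Cm(C\Delta_L t)^{L+1}\bigr).
\end{equation}
The factor $2\sqrt2$ is absorbed by enlarging $C$: multiply the prefactor $C$ by $2\sqrt2$ while keeping the inner $C$, and rename the result as a single universal constant. Note that enlarging the inner constant only weakens the bound, so this is legitimate.

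\emph{Main obstacle.} The substantive content lies in the quasimode defect bound of \Cref{lem:single-physical-defect}. That bound relies on the Taylor locality of \Cref{lem:single-taylor-locality} and on the fact that $V$ agrees with $\widehat V_{a,s}$ on the Hamming ball around $e_a$. Taking these results as given, the only remaining care is bookkeeping. First, the hypothesis $|t|\le c/\Delta$ of the earlier lemmas must hold for both signs $s=\pm$ and both $a\in\{0,1\}$, which fixes $c$ as the minimum over these cases. Second, the linear-in-$m$ growth from telescoping in \Cref{lem:single-repeated-branch} must not be doubled in a way that breaks the stated form; a doubling is harmless, since it is absorbed into $C$.
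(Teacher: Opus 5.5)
Your proposal is correct and follows essentially the same route as the paper: apply \Cref{lem:single-repeated-branch} for $s=+$ and then $s=-$, use unitarity of $F_-(t)^m$ to get a bound of twice $C\Delta_L t+Cm(C\Delta_L t)^{L+1}$ for each branch $a$, then superpose the $a=0,1$ branches and absorb the factor $2\sqrt2$ into $C$. Your remarks on the reality of $\widehat\theta_{a,\pm}(t)$ and on enlarging the inner constant are the right bookkeeping checks.
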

\begin{proof}
Define
\begin{equation}
    q_L(t,m)
    :=
    C\Delta t
    +
    Cm(C\Delta t)^{L+1}.
    \label{eq:def-qL}
\end{equation}
By \Cref{lem:single-repeated-branch}, for every
\(a\in\{0,1\}\),
\begin{align}
    \norm{
        F_+(t)^m\ket{e_a}
        -
        e^{-im\widehat\theta_{a,+}(t)}
        \ket{e_a}
    }
    &\leq
    q_L(t,m),
    \label{eq:plus-repeated-bound}\\
    \norm{
        F_-(t)^m\ket{e_a}
        -
        e^{-im\widehat\theta_{a,-}(t)}
        \ket{e_a}
    }
    &\leq
    q_L(t,m).
    \label{eq:minus-repeated-bound}
\end{align}
Using the unitarity of \(F_-(t)^m\), we have
\begin{align}
    &\norm{
        F_-(t)^mF_+(t)^m\ket{e_a}
        -
        e^{-im(
            \widehat\theta_{a,+}(t)
            +
            \widehat\theta_{a,-}(t)
        )}
        \ket{e_a}
    }
    \nonumber\\
    &\quad\leq
    \norm{
        F_-(t)^m
        \left[
            F_+(t)^m\ket{e_a}
            -
            e^{-im\widehat\theta_{a,+}(t)}
            \ket{e_a}
        \right]
    }
    \nonumber\\
    &\qquad+
    \norm{
        e^{-im\widehat\theta_{a,+}(t)}
        \left[
            F_-(t)^m\ket{e_a}
            -
            e^{-im\widehat\theta_{a,-}(t)}
            \ket{e_a}
        \right]
    }
    \nonumber\\
    &\quad\leq
    2q_L(t,m).
    \label{eq:composite-branch-bound}
\end{align}
Finally, using $\ket{+}\ket{0^{n-1}}=\frac{\ket{e_0}+\ket{e_1}}{\sqrt2}$ and applying \eqref{eq:composite-branch-bound} to \(a=0,1\) with the triangle inequality yields
\begin{align}
    &\norm{
        F_-(t)^mF_+(t)^m
        \ket{+}\ket{0^{n-1}}
        -
        \ket{\psi_L}
    }\leq
    \frac{1}{\sqrt2}
    \sum_{a=0}^{1}
    2q_L(t,m)\leq
    C\Delta t
    +
    Cm(C\Delta t)^{L+1}.
\end{align}
This proves the lemma.
\end{proof}

\subsection{Proof of \texorpdfstring{\Cref{thm:near-upper}}{Theorem \getrefnumber{thm:near-upper}}}\label{app:near-upper}
\begin{theorem}\label{app-thm:near-upper}
Let $H=\sum_{P\in\Pnk}h_PP$ be traceless, $s$-sparse, $k$-local, and satisfy $\norm{H}_{\op}\leq\Lambda$. For every $\eps\in(0,\Lambda)$ and $\delta\in(0,1/3)$, one can output $\widehat h$ such that
\begin{equation}
\norm{\widehat h-h}_2\leq \eps
\end{equation}
with probability at least $1-\delta$. The total evolution time is 
\begin{equation}
    \widetilde{\mathcal{O}}(\frac{3^{3k/2}s}{\eps}\log\frac{n}{\delta})    
\end{equation}
and the step size is 
\begin{equation}
    \Theta(\frac{1}{\Lambda
    (k+\log(\Lambda/\eps))}).
\end{equation}
\end{theorem}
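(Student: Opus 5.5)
The proof combines three ingredients from earlier in the paper: the linear-measurement identity \eqref{eq:linear-measurement}, the sparse-recovery guarantee of \Cref{lem:RIP}, and the energy-gap subroutine of \Cref{thm:energy-gap-subroutine}. I would carry them out in the order below.

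\textbf{Step 1: the gap is a linear function of the coefficients.} I would first verify \eqref{eq:linear-measurement}. For a Pauli $P$ and the product state $\ket{x}_\sigma$, the expectation $\bra{x}_\sigma P\ket{x}_\sigma$ equals $\mathbf 1\{P\preceq\sigma\}(-1)^{x\cdot S_P}$. Replacing $x$ by $x\oplus r$ multiplies the sign by $(-1)^{r\cdot S_P}$. The half-difference of the two expectations is therefore
\begin{equation}
\mathbf 1\{P\preceq\sigma\}(-1)^{x\cdot S_P}\,\mathbf 1\{r\cdot S_P=1\}.
\end{equation}
Summing against $h_P$ and rescaling gives \eqref{eq:linear-measurement}.

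\textbf{Step 2: bounded orthonormality.} Next I would check that the $\phi_P$ form a bounded orthonormal system. For $P\neq Q$ there are two cases.
\begin{itemize}
\item If $S_P\neq S_Q$, the uniform average over $x$ of $(-1)^{x\cdot(S_P\triangle S_Q)}$ vanishes.
\item If $S_P=S_Q$ but $P\neq Q$, the indicators $\mathbf 1\{P\preceq\sigma\}$ and $\mathbf 1\{Q\preceq\sigma\}$ cannot both be $1$.
\end{itemize}
For $P=Q$, the three factors average to $3^{-w_P}$, $1$, and $1/2$ respectively, which the normalization $2\cdot 3^{w_P}$ cancels. The uniform bound is $K=\sqrt2\,3^{k/2}$, so $K^2=2\cdot 3^k$.

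\textbf{Step 3: sparse recovery.} Apply \Cref{lem:RIP} with $M=|\Pnk|\leq(3n)^k$ and failure probability $\delta/2$. This requires
\begin{equation}
m=\mathcal O\!\left(3^k s\,(k+\log s)^2\,(k\log n+\log(1/\delta))\right)
\end{equation}
samples $(\sigma_\ell,x_\ell,r_\ell)$.

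If $r_\ell=0$, the measurement is identically zero and needs no quantum experiment. Otherwise $\ket{x}_\sigma$ and $\ket{x\oplus r}_\sigma$ are orthogonal product states. A Clifford-plus-single-qubit unitary $W$ maps $\ket{e_0}\mapsto\ket{x}_\sigma$ and $\ket{e_1}\mapsto\ket{x\oplus r}_\sigma$: first apply the CNOT fan-out of $r$'s bit pattern from the first flipped qubit, then a permutation, then the local basis rotations. Hence \Cref{thm:energy-gap-subroutine} applies directly.

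Set $y_\ell=\widehat G_\ell/\sqrt m$. If each gap is estimated to accuracy $\eta_0$, then $\norm{e}_2\leq\eta_0$. Choosing
\begin{equation}
\eta_0=\frac{\eps}{\sqrt2\,3^{k/2}\,C_{\rm cs}}
\end{equation}
makes $\norm{\widehat z-z}_2\leq\eps/(\sqrt2\,3^{k/2})$. Since $h_P=\sqrt2\,3^{w_P/2}z_P$ with $w_P\leq k$, this gives $\norm{\widehat h-h}_2\leq\eps$. Note that each gap estimate is $2G$, so halving it is harmless.

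\textbf{Step 4: cost and step size.} Run each gap estimation with failure probability $\delta/(2m)$ and union bound over all estimates. The total evolution time is
\begin{equation}
m\cdot\widetilde{\mathcal O}\!\left(\frac{3^{k/2}}{\eps}\log\frac{m}{\delta}\right)
=\widetilde{\mathcal O}\!\left(\frac{3^{3k/2}s}{\eps}\log\frac n\delta\right),
\end{equation}
where the factor $k\log n$ is absorbed in the $\widetilde{\mathcal O}$ convention of the appendix.

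For the step size, the subroutine target accuracy is $\eta_0=\Theta(\eps/3^{k/2})$. Substituting into \eqref{eq:energy-gap-step-size} gives
\begin{equation}
\tau=\Theta\!\left(\frac{1}{\Lambda\,(1+\log(\Lambda 3^{k/2}/\eps))}\right)=\Theta\!\left(\frac{1}{\Lambda\,(k+\log(\Lambda/\eps))}\right).
\end{equation}
The requirement $\eta_0<\Lambda$ holds because $\eps<\Lambda$.

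\textbf{Main obstacle.} The delicate point is the orthonormality computation together with exact bookkeeping of the $3^k$ factors, which must combine as $K^2\cdot 3^{k/2}$ to give $3^{3k/2}$. The implementation of $W$ and the union bound are routine.
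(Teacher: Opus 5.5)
Your proposal is correct and follows the paper's proof essentially step for step. It uses the same linear-measurement identity and bounded-orthonormal-system case analysis, \Cref{lem:RIP} with $K=\sqrt2\,3^{k/2}$ and failure probability $\delta/2$, gap estimates at accuracy $\Theta(\eps/3^{k/2})$ with failure probability $\delta/(2m)$ each, zero rows for $r=0$, and a CNOT/SWAP-plus-product-rotation unitary $W$. Your description of the fan-out/permutation order in $W$ is a bit loose, but that is harmless: any invertible binary map $A_r$ with $A_re_1=r$, implemented by CNOT/SWAP gates and followed by the local basis change, gives the required states.
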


Choose independently and uniformly $\sigma\in\{X,Y,Z\}^n$ and $x,r\in\{0,1\}^n$. Define the gap function as
\begin{equation}\label{eq:app-gap-def}
G(\sigma,x,r)
:=\frac12\left(
\bra{x}_{\sigma}H\ket{x}_{\sigma}-\bra{x\oplus r}_{\sigma}H\ket{x\oplus r}_{\sigma}
\right).
\end{equation}
We can derive that, for every $\sigma,x,r$,
\begin{equation}\label{eq:app-linear-measurement}
G(\sigma,x,r)=\sum_{P\in\Pnk}z_P\phi_P(\sigma,x,r),\qquad
z_P:=\frac{h_P}{\sqrt{2}\,3^{w_P/2}},
\end{equation}
where
\begin{equation}
    \phi_P(\sigma,x,r)
:=
\sqrt{2}\,3^{w_P/2}
\mathbf 1\{P\preceq\sigma\}
(-1)^{x\cdot S_P}
\mathbf 1\{r\cdot S_P=1\}.
\end{equation}
Recall the definitions of $P\preceq\sigma$, $S_P$ in \Cref{subsec:notation}.

\begin{lemma}[Bounded orthonormal system]\label{lem:bos}
The functions $\{\phi_P:P\in\Pnk\}$ satisfy
\begin{equation}\label{eq:bos}
\E[\phi_P\phi_Q]=\delta_{P,Q},
\qquad
\norm{\phi_P}_{\infty}\leq \sqrt{2}\,3^{k/2}.
\end{equation}
\end{lemma}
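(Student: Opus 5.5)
The plan is a direct computation, using independence of the uniform variables $\sigma\in\{X,Y,Z\}^n$ and $x,r\in\{0,1\}^n$ and factoring each expectation over coordinates. The sup-norm bound is immediate: every indicator and sign has modulus at most one, so $|\phi_P|\leq\sqrt2\,3^{w_P/2}\leq\sqrt2\,3^{k/2}$, since $w_P\leq k$ for $P\in\Pnk$.

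For orthonormality, we write
\begin{equation}
\E[\phi_P\phi_Q]=2\cdot3^{(w_P+w_Q)/2}\,\E_\sigma\!\left[\mathbf 1\{P\preceq\sigma\}\mathbf 1\{Q\preceq\sigma\}\right]\E_x\!\left[(-1)^{x\cdot(S_P\triangle S_Q)}\right]\E_r\!\left[\mathbf 1\{r\cdot S_P=1\}\mathbf 1\{r\cdot S_Q=1\}\right].
\end{equation}
This factorization uses $(-1)^{x\cdot S_P}(-1)^{x\cdot S_Q}=(-1)^{x\cdot(S_P\triangle S_Q)}$. The $x$-average equals $\mathbf 1\{S_P=S_Q\}$, because a nonempty symmetric difference contains a coordinate whose uniform bit averages the sign to zero. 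We may therefore assume $S_P=S_Q=:S$, with $w_P=w_Q=w$.

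With equal supports, $\mathbf 1\{P\preceq\sigma\}\mathbf 1\{Q\preceq\sigma\}$ vanishes unless $P_i=Q_i$ for all $i\in S$, that is, unless $P=Q$. If $P=Q$, the $\sigma$-average is $3^{-w}$, since each of the $w$ constrained coordinates independently matches with probability $1/3$. For the $r$-factor, when $S$ is nonempty the parity $r\cdot S$ is a uniform bit, so $\Pr[r\cdot S=1]=1/2$. Multiplying gives $2\cdot3^{w}\cdot3^{-w}\cdot1\cdot\tfrac12=1$ when $P=Q$, and $0$ otherwise.

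The only step needing care is the ordering of the case split. We first use the $x$-average to force equal supports, and only then use the $\sigma$-indicator to force equal Pauli labels. The case of distinct supports needs no further analysis, because the $x$-factor already kills it, even though the $r$-indicators there would not be independent. The normalization $\sqrt2\,3^{w_P/2}$ is chosen exactly to cancel the factors $1/2$ and $3^{-w}$.
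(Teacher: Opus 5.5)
Your proof is correct and follows essentially the same route as the paper's: the $x$-average kills distinct supports, the $\sigma$-indicators kill distinct Paulis with equal support, the diagonal case gives $2\cdot3^{w_P}\cdot3^{-w_P}\cdot\tfrac12=1$, and the sup-norm bound follows from the indicators. Writing the expectation as an explicit product over the independent variables $\sigma,x,r$ is a slightly cleaner packaging of the same case split.
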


\begin{proof}
We first prove orthonormality.  Suppose $P\neq Q$. If $S_P\neq S_Q$, then
\begin{equation}
    \E_x
    \left[
        (-1)^{x\cdot S_P}
        (-1)^{x\cdot S_Q}
    \right]
    =
    \E_x
    \left[
        (-1)^{x\cdot(S_P\triangle S_Q)}
    \right]
    =
    0,
\end{equation}
because $S_P\triangle S_Q$ is nonempty.  Since $x$ is independent of $\sigma$ and $r$, it follows that $\E[\phi_P\phi_Q]=0$.  If $S_P=S_Q$ but $P\neq Q$, then there exists $i\in S_P$ such that $P_i\neq Q_i$. Hence no Pauli basis $\sigma$ can satisfy both $P\preceq\sigma$ and $Q\preceq\sigma$. Therefore, $\mathbf 1\{P\preceq\sigma\}\mathbf 1\{Q\preceq\sigma\}=0$ for every $\sigma$, and again $\E[\phi_P\phi_Q]=0$. It remains to compute $\E[\phi_P^2]$.  Since every $P\in\Pnk$ is nonidentity, $S_P$ is nonempty. Thus $\Pr_{\sigma}[P\preceq\sigma]=3^{-w_P}$ and $\Pr_r[r\cdot S_P=1]=\frac12$. Consequently,
\begin{align}
    \E[\phi_P^2]
    &=
    2\,3^{w_P}
    \Pr_{\sigma}[P\preceq\sigma]
    \Pr_r[r\cdot S_P=1] =
    2\,3^{w_P}\cdot 3^{-w_P}\cdot\frac12 = 1.
\end{align}
Combining the three cases gives $\E[\phi_P\phi_Q]=\delta_{P,Q}$.

Finally, each indicator has absolute value at most one, and the sign $(-1)^{x\cdot S_P}$ has absolute value one.  Hence
\begin{equation}
    |\phi_P(\sigma,x,r)|
    \leq
    \sqrt{2}\,3^{w_P/2}
    \leq
    \sqrt{2}\,3^{k/2},
\end{equation}
because $w_P\leq k$.  This proves the claim.
\end{proof}

\begin{proof}[Proof of Theorem~\ref{app-thm:near-upper}]
Set
\begin{equation}\label{eq:M-K}
M:=|\Pnk|=\sum_{j=1}^{k}3^j\binom nj,
\qquad
K:=\sqrt{2}\,3^{k/2}.
\end{equation}
To apply \Cref{lem:RIP}, draw
\begin{equation}\label{eq:app-sample}
    m:=
    \left\lceil
        C K^2s(1+\log(sK^2))^2
        \log\left(\frac{M}{\delta}\right)
    \right\rceil
\end{equation}
independent triples $\omega_\ell=(\sigma_\ell,x_\ell,r_\ell)$, with the universal constant $C$ large enough to apply \Cref{lem:RIP} with failure probability $\delta/2$. For each sample, estimate $g_\ell=G(\sigma_\ell,x_\ell,r_\ell)$ using \Cref{thm:energy-gap-subroutine}, dividing the energy-gap estimate by two, obtaining $\widehat g_\ell$ with additive error at most $\eta$ and failure probability at most $\delta/(2m)$. Set $A_{\ell,P}=m^{-1/2}\phi_P(\omega_\ell)$ and $\widehat y_\ell=m^{-1/2}\widehat g_\ell$. A union bound makes all gap estimates accurate with probability at least $1-\delta/2$. Hence, with this probability, we have
\begin{equation}\label{eq:noise}
\widehat y=Az+e,
\qquad
\norm{e}_2
\leq
\left(\frac1m\sum_{\ell=1}^{m}\eta^2\right)^{1/2}
=\eta.
\end{equation}

By \Cref{lem:bos}, the functions $\phi_P$ satisfy the assumptions of \Cref{lem:RIP}. Hence, the solution of
\begin{equation}\label{eq:bpdn}
\widehat z\in
\operatorname*{arg\,min}_{v\in\mathbb R^M}\norm{v}_1
\quad\text{subject to}\quad
\norm{Av-\widehat y}_2\leq\eta.
\end{equation}
gives $\norm{\widehat z-z}_2\leq C_{\rm cs}\eta$ by \Cref{lem:RIP}. Defining $\widehat h_P:=\sqrt2\,3^{w_P/2}\widehat z_P$, we obtain
\begin{equation}\label{eq:h-error}
\norm{\widehat h-h}_2
\leq
\sqrt{2}\,3^{k/2}\norm{\widehat z-z}_2
\leq
\sqrt{2}\,3^{k/2}C_{\rm cs}\eta.
\end{equation}
Taking $\eta:=\eps/(\sqrt{2}C_{\rm cs}3^{k/2})$ proves the error guarantee, and the union bound gives success probability at least $1-\delta$.

Estimating each gap $g_{\ell}$ with \Cref{thm:energy-gap-subroutine} costs $\widetilde O(1/\eta)=\widetilde O(3^{k/2}/\eps)$ evolution time. There are $m$ samples, hence
\begin{equation}\label{eq:app-total-time}
T_{\rm tot}
=
\mathcal{\widetilde{O}}\left(\frac{m3^{k/2}}{\eps}\right)=\mathcal{\widetilde{O}}\left(\frac{3^{3k/2}s}{\eps}\log\frac{n}{\delta}\right).
\end{equation}
All gap estimates use the same accuracy $\eta$, so the step size is
\begin{equation}\label{eq:step-size}
\tau=
\Theta\left(
\frac{1}{\Lambda(1+\log(\Lambda/\eta))}
\right)
=
\Theta\left(
\frac{1}{\Lambda(1+\log(3^{k/2}\Lambda/\eps))}
\right)
=
\Theta\left(
\frac{1}{\Lambda(k+\log(\Lambda/\eps))}
\right),
\end{equation}
which is deduced from \eqref{eq:energy-gap-step-size}.
\end{proof}

Finally, we explain how to construct the unitary $W$ employed in the energy gap subroutine. If $r=0$, then $G(\sigma,x,r)=0$ exactly. We record this value without employing the subroutine and retain the corresponding row in the sampling matrix. For $r\neq0$, choose an invertible binary matrix $A_r$ with $A_re_1=r$, and let $C_r$ be the corresponding CNOT/SWAP circuit. Let $B_{\sigma,x}=\bigotimes_i B_i$ satisfy $B_{\sigma,x}\ket u=\ket{x\oplus u}_\sigma$. Then
\begin{equation}\label{eq:W}
W_{\sigma,x,r}:=B_{\sigma,x}C_r,
\qquad
W_{\sigma,x,r}\ket{0^n}=\ket{x}_\sigma,
\qquad
W_{\sigma,x,r}\ket{e_1}=\ket{x\oplus r}_\sigma.
\end{equation}
Thus every random gap has the exact form required by the energy gap estimation subroutine.

\subsection{Proof of \texorpdfstring{\Cref{thm:single-qubit-learning}}{Theorem \getrefnumber{thm:single-qubit-learning}}}\label{app:single-qubit-learning}
\begin{theorem}\label{app-thm:single-qubit-learning}
Let $H=\sum_{P\in\Pnk}h_PP$ be traceless, $s$-sparse, $k$-local, and satisfy $\norm{H}_{\op}\leq\Lambda$. For every $\eps\in(0,\Lambda)$ and $\delta\in(0,1/3)$, for a universal constant $c>1$, one can output $\widehat h$ such that $\norm{\widehat h-h}_\infty\leq\eps$, using
\begin{equation}\label{eq:tot-single-infty}
    \widetilde{\mathcal{O}}(\frac{c^{k}s}{\eps}\log\frac{n}{\delta})  
\end{equation}
total evolution time with probability at least $1-\delta$. The algorithm can also output 
$\widehat h$ such that $\norm{\widehat h-h}_2\leq \eps$, using 
\begin{equation}\label{eq:tot-single-2}
    \widetilde{\mathcal{O}}(\frac{c^{k}s\sqrt{n}}{\eps}\log\frac{n}{\delta})  
\end{equation}
total evolution time with probability at least $1-\delta$. The step sizes are 
\begin{equation}
    \tau_\infty=\Theta\left(\frac{1}{k\Lambda(k+\log(\Lambda/\eps))^2}\right),
    \qquad
    \tau_2=\Theta\left(\frac{1}{k\Lambda(k+\log(\Lambda\sqrt n/\eps))^2}\right),
\end{equation}
respectively. The algorithm only uses single-qubit gates and measurements with forward evolution $e^{-iH\tau}$.
\end{theorem}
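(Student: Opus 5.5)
We follow the per-coordinate sparse-recovery scheme sketched in \Cref{sec:single-qubit-learning}. We first verify the linear representation \eqref{eq:single-qubit-linear-measurement}. For $P\in\Pnk$ and a product Pauli eigenstate $\ket{x}_\sigma$, we have $\bra{x}_\sigma P\ket{x}_\sigma=\mathbf 1\{P\preceq\sigma\}(-1)^{x\cdot S_P}$. Flipping bit $j$ negates exactly the terms with $j\in S_P$, so $G_j=\sum_{P\ni j}h_P\mathbf 1\{P\preceq\sigma\}(-1)^{x\cdot S_P}$. The orthonormality \eqref{eq:single-qubit-orthonormality} then follows exactly as in \Cref{lem:bos}: different supports are separated by the $x$-average, equal supports with different Paulis by $\sigma$, and $\Pr[P\preceq\sigma]=3^{-w_P}$.

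For each $j$ with a known sparsity bound $d$, we draw $m_j=\widetilde O(3^k d\log(M/\delta'))$ samples. Each $G_j(\sigma,x)$ is estimated with \Cref{thm:energy-single-qubit}, taking $W=B_{\sigma,x}$ followed by a SWAP relabelling so that qubit $j$ plays the role of qubit $1$. The SWAP relabelling is used only in the analysis; physically, we place the angle $\gamma$ on qubit $j$ and $2\gamma$ on the others. All operations are then single-qubit. We then apply \Cref{lem:RIP} with $K=3^{k/2}$ and accuracy $\eta=\eps/(C_{\rm cs}3^{k/2})$, which recovers $h^{(j)}$ to $\ell_2$ error $\eps$ and hence to $\ell_\infty$ error $\eps$. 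Every coefficient $h_P$ is contained in $h^{(j)}$ for any $j\in S_P$. We therefore output $\widehat h_P$ from the smallest such $j$. The error bound is then immediate.

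\textbf{$\ell_2$ guarantee.} Assembling the outputs coordinatewise, the error vector is the disjoint union over $j$ of the errors on the sets $\{P:\min S_P=j\}$. Requiring each block to have error $\eps/\sqrt n$ gives total $\ell_2$ error $\eps$. This costs a factor $\sqrt n$ in $\eta$, and the $\log$ in the step size changes to $\log(\Lambda\sqrt n/\eps)$, matching $\tau_2$. A more careful weighting would require knowing the $d_j$, so we use the uniform split.

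\textbf{Handling unknown $d_j$.} This is the main obstacle. The naive cost $\sum_j\widetilde O(\max\{1,d_j\}/\eps)$ has the additive $n$ term, and $d_j$ is unknown. We handle both issues as follows.
\begin{enumerate}[label=(\roman*)]
\item \emph{Coordinate discovery.} We first identify the set $J=\{j:d_j>0\}$, with $|J|\le ks$. We estimate $\E_{\sigma,x}[G_j^2]=\sum_{P\ni j}3^{-w_P}h_P^2$ for many $j$ in parallel. Since one sample can test a random group of coordinates $r$ (using a multi-bit gap), the plan is a group-testing scheme over random subsets whose cost is $\widetilde O(s/\eps)$ rather than $n/\eps$. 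The threshold $\eps$ suffices, because coordinates carrying only coefficients below $\eps$ can be zeroed. I expect this step to need the most care. If the group test is not realizable with single-qubit probes, I would fall back to testing each $j$ with a constant number of $\widetilde O(1/\eps)$-cost probes, accepting the $n/\eps$ additive term.
\item \emph{Doubling search.} For $j\in J$, we guess $d=1,2,4,\dots$. For each guess we recover at accuracy $\eta/2$, and we then verify the candidate $\widehat h^{(j)}$ on fresh samples. The verifier estimates the empirical mean-square residual $G_j-\sum\widehat z_P\phi_{j,P}$. By orthonormality, this residual has expectation $\|z^{(j)}-\widehat z\|_2^2$. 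Its concentration is controlled by the hypercontractive fourth-moment bound from \Cref{lem:far-gaps}, since the residual is a degree-$k$ polynomial. We accept when the residual is below threshold. Once $d\ge d_j$, acceptance is guaranteed with high probability, so the cost telescopes to $\widetilde O(d_j/\eps)$.
\end{enumerate}
Summing over $j$ with $\sum_j d_j\le ks$, and applying a union bound with $\delta'=\delta/\mathrm{poly}(n)$, gives \eqref{eq:tot-single-infty} and \eqref{eq:tot-single-2}. The constant $c^k$ absorbs the factors $3^k$ and $9^k$.
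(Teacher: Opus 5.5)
Your per-coordinate recovery, orthonormality check and doubling search are sound. Your mean-square-residual verifier is in substance the paper's \Cref{col:cert}: the residual $G_j-\sum_P\widehat z_P\phi_{j,P}$ equals $\bra{x}_\sigma A_j\ket{x}_\sigma$ with $A_j=\sum_{P\ni j}(h_P-\widehat h_P)P$, and its second moment is exactly $\norm{z^{(j)}-\widehat z}_2^2$.

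The genuine gap is step (i), which is precisely what removes the additive $n$ in $\sum_j\max\{1,d_j\}/\eps$. Nothing forces $n=O(ks)$, since the $s$ terms may touch only a few of the $n$ qubits. Your fallback therefore gives $\widetilde{\mathcal O}((n+ks)/\eps)$ and $\widetilde{\mathcal O}((n+ks)\sqrt n/\eps)$, which do not prove the stated bounds. The group test you sketch is also not available as described. \Cref{thm:energy-single-qubit} only accesses gaps between $W\ket{e_0}$ and $W\ket{e_1}$ with $W$ a product, i.e.\ states differing on one qubit, so a multi-bit gap needs the entangling CNOT circuit of the general-control learner. Even granting it, you do not say how to decode individual coordinates from estimates of $\sum_{P:\,|R\cap S_P|\text{ odd}}3^{-w_P}h_P^2$. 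Nor do you say how to guarantee that no qubit outside $\mathcal U:=\bigcup_{h_P\neq0}\supp(P)$ is flagged, which you need for $\sum_{j\in J}\max\{1,d_j\}\le ks$. Also, $\{j:d_j>0\}$ cannot be found exactly at finite precision. What is achievable is $\{j:\norm{h^{(j)}}_2>r\}\subseteq J\subseteq\mathcal U$.

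The missing idea is to run many single-qubit probes in parallel rather than one multi-bit probe. Put each qubit in $S$ independently with probability $1/(2ks)$. Prepare $\ket{+}$ on $S$ and $\ket{0}$ elsewhere, use the product control with angle $\gamma$ on $S$ and $2\gamma$ off $S$, and measure every qubit of $S$ after the same evolutions.
\begin{itemize}
\item If $S\cap\mathcal U=\{j\}$, the experiment restricted to $\mathcal U$ is literally the single-coordinate verifier with probe $j$.
\item If $j\notin\mathcal U$, then $H$ acts trivially on $j$. The factors of $V_S$ and $V_S^\dagger$ cancel there, so $j$ is declared \emph{Close} with the verifier's success probability.
\end{itemize}
Since $\Pr[S\cap\mathcal U=\{j\}]\ge1/(4ks)$, $O(ks\log(ks/\delta))$ such experiments produce $J$. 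Each costs $\widetilde{\mathcal O}(c^k/r)$, with no factor $|S|$.

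You then recover only on $J$ and zero the rest. This gives $\ell_\infty$ error $r$ and $\ell_2$ error at most $\sqrt n\,r$. So for the $\ell_2$ claim the discovery threshold must be $r=\eps/\sqrt n$, not $\eps$, and each $\widehat h_P$ must be read from some $j\in S_P\cap J$ rather than from $\min S_P$. Finally, recovery, verification and discovery run at different accuracies. To keep all of them on one grid $\tau$, fix the extrapolation order $p$ and the truncation order $L$ for the finest node accuracy.
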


Choose independently and uniformly
$\sigma\in\{X,Y,Z\}^n$ and $x\in\{0,1\}^n$.
For every $j\in[n]$, define
\begin{equation}\label{eq:app-single-qubit-gap}
G_j(\sigma,x)
:=
\frac12\left(
\bra{x}_{\sigma}H\ket{x}_{\sigma}
-
\bra{x\oplus e_j}_{\sigma}
H
\ket{x\oplus e_j}_{\sigma}
\right).
\end{equation}
The two states $\ket{x}_\sigma, \ket{x\oplus e_j}_\sigma$ in \eqref{eq:app-single-qubit-gap} differ only on qubit $j$.
Hence, there is a product unitary $W$ satisfying $W\ket{v}=\ket{x\oplus v}_\sigma$. We apply \Cref{thm:energy-single-qubit} with qubit $j$ as the
probe, replacing the control by
$V_j=\exp[-i\gamma(N_j+2\sum_{\ell\neq j}N_\ell)]$
and measuring qubit $j$. This only relabels the qubits in the proof and requires no physical SWAP gate. Therefore, we only use single-qubit control.

Define 
\begin{equation}
    h_P^{(j)}
    :=
    \begin{cases}
        h_P,
        & j\in\operatorname{supp}(P),\\
        0,
        & j\notin\operatorname{supp}(P),
    \end{cases}
    \qquad
    d_j:=\norm{h^{(j)}}_0,
\end{equation}
where $d_j$ is the number of nonzero entries of $h^{(j)}$. A direct calculation gives
\begin{equation}
    G_j(\sigma,x)
    =
    \sum_{P\in\Pnk}
    z_P^{(j)}\phi_{j,P}(\sigma,x),
    \qquad
    z_P^{(j)}:=3^{-w_P/2}h_P^{(j)},
\end{equation}
where
\begin{equation}\label{eq:app-single-qubit-functions}
    \phi_{j,P}(\sigma,x)
    :=
    3^{w_P/2}
    \mathbf 1\{P\preceq\sigma\}
    (-1)^{\sum_{i\in\supp(P)}x_i}.
\end{equation}
Following the proof of \Cref{lem:bos}, for every fixed $j$,
the functions
$\{\phi_{j,P}:P\in\Pnk,\ j\in\supp(P)\}$ satisfy
\begin{equation}\label{eq:app-single-qubit-orthonormality}
    \E_{\sigma,x}
    [\phi_{j,P}(\sigma,x)\phi_{j,Q}(\sigma,x)]
    =
    \delta_{P,Q},
    \qquad
    \norm{\phi_{j,P}}_\infty
    \leq
    3^{k/2}.
\end{equation}

We now apply the same sparse-recovery procedure as in
\Cref{app:near-upper}, restricted to Pauli strings containing $j$.
Compared with \eqref{eq:M-K}, take $K=3^{k/2}$ and replace $s$ by $\max\{1,d_j\}$. Thus, the analogue of \eqref{eq:app-sample}
requires
\begin{equation}
    m_j
    =
    \widetilde{\mathcal O}\left(
        3^k\max\{1,d_j\}\log\frac{n}{\delta}
    \right)
\end{equation}
samples.

For these samples, we form the normalized observations exactly as
in \eqref{eq:noise} and solve the same optimization problem as in
\eqref{eq:bpdn}, using only the columns with $j\in\supp(P)$ and setting all other entries to zero. By \Cref{lem:RIP}, the recovered vector satisfies
$\norm{\widehat z^{(j)}-z^{(j)}}_2\leq C_{\rm cs}\eta$.
Since $h_P^{(j)}=3^{w_P/2}z_P^{(j)}$, the analogue of
\eqref{eq:h-error} is
\begin{equation}
    \norm{\widehat h^{(j)}-h^{(j)}}_2
    \leq
    3^{k/2}C_{\rm cs}\eta.
\end{equation}
Therefore, choosing
$\eta:=\eps/(C_{\rm cs}3^{k/2})$ gives
\begin{equation}\label{eq:h-estimate-l2}
    \norm{h^{(j)}-\widehat h^{(j)}}_\infty
    \leq
    \norm{h^{(j)}-\widehat h^{(j)}}_2
    \leq
    \eps.
\end{equation}

The remaining difference from \Cref{app:near-upper} is that each
gap is estimated using \Cref{thm:energy-single-qubit} instead of \Cref{thm:energy-gap-subroutine}. Since the required gap accuracy is
$\eta=\Theta(\eps/3^{k/2})$, one gap estimate costs
$\widetilde{\mathcal O}(3^{k/2}/\eps)$ evolution time. Multiplying by the number of samples above gives
\begin{equation}
    T_{\rm tot}
    =
    \widetilde{\mathcal O}\left(
        \frac{3^{3k/2}\max\{1,d_j\}}{\eps}
        \log\frac{n}{\delta}
    \right).
\end{equation}
Finally, substituting the same accuracy $\eta$ into the step-size guarantee of \Cref{thm:energy-single-qubit}, in place of the general-control step size used in \eqref{eq:step-size}, gives
\begin{equation}
    \tau
    =
    \Theta\left(
        \frac{1}{
            k\Lambda
            \left(
                1+\log(k3^{k/2}\Lambda/\eps)
            \right)^2
        }
    \right).
\end{equation}

However, the local sparsity $d_j$ is unknown, so we cannot determine in advance how many samples are needed to guarantee \eqref{eq:h-estimate-l2}. Consequently, for a given sparsity estimate, we do not know whether the recovered vector $\widehat h^{(j)}$ actually satisfies \eqref{eq:h-estimate-l2}. We therefore introduce a verification procedure that checks the accuracy of each candidate before accepting it. As shown in \eqref{eq:app-verification-tot}, this verification step increases the total evolution time by only an additional factor of $c^k$ for a universal constant $c$. The following fixed-coordinate version of \Cref{thm:cert-single} provides the required verification.

\begin{corollary}\label{col:cert}
    Let $H=\sum_{P\in\Pnk}h_PP$ be traceless, $k$-local, and satisfy $\norm{H}_{\op}\leq\Lambda$. Given $k$-local Hamiltonian $H_0=\sum_{P\in\Pnk}\widehat{h}_PP$ and $j\in[n]$, for every $\eps\in(0,\Lambda)$ and $\delta\in(0,1/3)$, promised that one of the following conditions holds, one determines whether
    \begin{equation}\label{eq:col-cert-cond}
        \textit{Far}: \|h^{(j)}-\hat{h}^{(j)}\|_2 \ge \eps \quad \text{or} \quad \textit{Close}: \|h^{(j)}-\hat{h}^{(j)}\|_2 \le \eps/c_1^k
    \end{equation}
    with probability at least $1-\delta$, for universal constants $c_1,c_2>1$. The total evolution time is $\widetilde{\mathcal O}(c_2^k\eps^{-1}\log(1/\delta))$, and the step size is 
    \begin{equation}
        \Theta\left(\frac{1}{k\Lambda(k+\log(\Lambda/\eps))^2}\right).
    \end{equation}
    The algorithm only uses single-qubit gates and measurements with forward evolution $e^{-iH\tau}$. No guarantee is imposed on the output when neither promise holds.
\end{corollary}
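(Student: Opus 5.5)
We prove \Cref{col:cert} by running the single-qubit certification test of \Cref{thm:cert-single} with the probe qubit fixed to $j$ rather than sampled uniformly. Set $A:=H-H_0$. Its local part is $A^{(j)}:=\sum_{P:\,j\in\supp(P)}(h_P-\widehat h_P)P$, so $\norm{A^{(j)}}_{\F}=\|h^{(j)}-\widehat h^{(j)}\|_2$. For a random product Pauli eigenstate $\ket{x}_\sigma$, the computation in the proof of \Cref{lem:far-gaps} gives
\begin{equation}
\bra{x\oplus e_j}_\sigma A\ket{x\oplus e_j}_\sigma-\bra{x}_\sigma A\ket{x}_\sigma=-2\bra{x}_\sigma A^{(j)}\ket{x}_\sigma .
\end{equation}
The fixed-coordinate gap therefore depends only on $A^{(j)}$. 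Its second moment is $4\sum_{P\ni j}3^{-w_P}(h_P-\widehat h_P)^2$, which lies between $4\cdot3^{-k}\norm{A^{(j)}}_{\F}^2$ and $\tfrac43\norm{A^{(j)}}_{\F}^2$. The fourth-moment bound \eqref{eq:product-fourth-moment} gives a fourth moment of at most $16\norm{A^{(j)}}_{\F}^4$.

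With these moments, the far-case argument (Chebyshev) and the close-case argument (Markov) of \Cref{lem:far-gaps,lem:close-gaps} carry over with no factors of $n$. Take $N=\lceil C9^k\rceil$ samples $(\sigma_\ell,x_\ell)$. In the far case, the empirical mean of squared gaps is at least $2\eps^2/3^k$ with probability $7/8$. In the close case, the threshold $\eps/c_1^k$ with $c_1=12$ makes the empirical mean at most $\eps^2/(2\cdot3^k)$ with probability $7/8$. Constants are adjusted as in \Cref{lem:close-gaps}.

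Each gap under $H$ is estimated with \Cref{thm:energy-single-qubit}, using qubit $j$ as the probe and the control $V_j$. The accuracy is $\eta=\Theta(\eps/3^{k/2})$ and the failure probability is $1/(8N)$. The exactly computable $H_0$ contribution is then subtracted. The test statistic $\widehat R$ is the root mean square of the residuals. The reverse triangle inequality moves $\widehat R$ by at most $\eta$, which preserves a constant-factor gap between the far and close thresholds. We output \textit{Far} if $\widehat R$ exceeds the midpoint and \textit{Close} otherwise.

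Taking a majority vote over $O(\log(1/\delta))$ independent repetitions boosts the success probability to $1-\delta$. The total evolution time is $\widetilde O(N/\eta\cdot\log(1/\delta))=\widetilde O(c_2^k\eps^{-1}\log(1/\delta))$, for example with $c_2=3^{5/2}$. The step size follows from \Cref{thm:energy-single-qubit} at accuracy $\eta$. Since $\log(k\Lambda 3^{k/2}/\eps)=O(k+\log(\Lambda/\eps))$, it is $\Theta(1/(k\Lambda(k+\log(\Lambda/\eps))^2))$.

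The main point to check is that fixing $j$ really removes the $n$-dependence. Two facts carry this. First, the fourth-moment bound for the fixed-$j$ gap is controlled by $\norm{A^{(j)}}_{\F}^4$, with no averaging over $j$. Second, the lower bound on the mean uses only $3^{-w_P}\ge3^{-k}$ on Paulis containing $j$. One must also confirm that the known $H_0$ gap is computed exactly, so that it introduces no extra error. Both facts are straightforward specializations of computations already in the paper. The remaining work is bookkeeping of the $c^k$ constants.
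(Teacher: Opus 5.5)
Your proposal is correct and follows the paper's proof essentially step for step. You fix the probe qubit $j$, reduce the gap to $-2\bra{x}A^{(j)}\ket{x}$, and reuse the second- and fourth-moment bounds of \Cref{lem:far-gaps} with $c_1=12$; you then run the root-mean-square test of \Cref{thm:cert-single} at gap accuracy $\Theta(\eps/3^{k/2})$, obtaining $\widetilde{\mathcal O}(3^{5k/2}\eps^{-1}\log(1/\delta))$ total time. Your sample count $N=\lceil C9^k\rceil$ is smaller than the paper's $\lceil 64k9^k\rceil$, and this is legitimate because the fixed-$j$ fourth-moment bound $16\norm{A^{(j)}}_{\F}^4$ carries no factor of $k$.
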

\begin{proof}
Fix $j$ and set $A_j:=\sum_{P:j\in\supp(P)}(h_P-\widehat h_P)P$. The gap under $H-H_0$ when qubit $j$ is flipped equals $-2\bra{x}A_j\ket{x}$. Its second moment lies between $4\cdot3^{-k}\norm{A_j}_{\F}^2$ and $4\norm{A_j}_{\F}^2/3$, and its fourth moment is at most $16\norm{A_j}_{\F}^4$, by the calculations in \Cref{lem:far-gaps}. Thus the proof of \Cref{thm:cert-single} applies with $j$ fixed, $N=\lceil64k9^k\rceil$, and gap accuracy $\eps/(8\cdot3^{k/2})$. It gives $c_1=12$ and total evolution time $\widetilde{\mathcal O}(3^{5k/2}/\eps)$; a sufficiently large universal $c_2$ suffices. Independent repetition and majority vote give failure probability $\delta$ with an additional factor $\mathcal O(\log(1/\delta))$. Only gaps of $H$ are estimated, and the known $H_0$ gaps are subtracted classically.
\end{proof}

We use \Cref{col:cert} to determine when the sparsity guess is large enough. Starting from $\widehat d_j=1$, we consider the guesses $\widehat d_j=1,2,4,\ldots$. For each guess, we run sparse recovery with target error $\eps/c_1^k$ and verify the resulting estimate $\widehat h^{(j)}$ with far threshold $\eps$, using measurement data independent of those used for recovery. If the verifier outputs \textit{Close}, we accept the estimate; otherwise, we double $\widehat d_j$ and continue. If recovery is infeasible, we also proceed to the next guess. The final guess is capped at the number of candidate Pauli strings containing $j$; if no candidate is accepted at this guess, we report failure.

The recovery accuracy $\eps/c_1^k$ is chosen to match the \textit{Close} condition in \Cref{col:cert}. Provided the recovery and verification guarantees hold, once $\widehat d_j\geq d_j$, sparse recovery gives $\norm{h^{(j)}-\widehat h^{(j)}}_2\leq\eps/c_1^k$, so the verifier accepts. Hence the final guess is at most $2\max\{1,d_j\}$. Conversely, every accepted estimate has error below $\eps$, even if the procedure stops before the guess reaches $d_j$.

At round $\ell\geq0$ and coordinate $j$, assign failure probability at most $\delta/[16n(\ell+1)^2]$ to each recovery and verification procedure. Use independent data in each round. Conditional on earlier outcomes, the recovery guarantee applies when $\widehat d_j\geq d_j$, and the verification guarantee applies whenever either promise holds. Since $\sum_{\ell\geq0}(\ell+1)^{-2}<2$, a union bound shows that all these guarantees hold with probability at least $1-\delta/2$. The following cost bounds hold whenever they do.

Because the sparsity guesses grow geometrically, the total recovery
cost is dominated by the final guess. The verification procedure is
used only $\mathcal O(\log(2+d_j))$ times, each with total evolution
time $\widetilde{\mathcal O}(c_2^k/\eps)$. Combining the recovery and
verification costs gives
\begin{equation}\label{eq:app-verification-tot}
\begin{aligned}
    T_{\rm tot}^{(j)}
    &=
    \widetilde{\mathcal O}\left(
        \frac{
            3^{3k/2}c_1^k\max\{1,d_j\}
            +c_2^k\log(2+d_j)
        }{\eps}
        \log\frac n\delta
    \right) \\
    &=
    \widetilde{\mathcal O}\left(
        \frac{c^k\max\{1,d_j\}}{\eps}
        \log\frac n\delta
    \right)
\end{aligned}
\end{equation}
for a universal constant $c>1$.

Applying this procedure to every $j\in[n]$ and selecting one local
estimate for each Pauli coefficient gives
$\norm{h-\widehat h}_\infty<\eps$ with total evolution time
\begin{equation}
    T_{\rm tot}
    =
    \sum_{j=1}^n T_{\rm tot}^{(j)}
    =
    \widetilde{\mathcal O}\left(
        \frac{c^k(n+ks)}{\eps}
        \log\frac n\delta
    \right),
\end{equation}
where we used $\sum_jd_j\leq ks$. Replacing the local target error
by $\eps/\sqrt n$ similarly gives
$\norm{h-\widehat h}_2\leq\eps$ with total evolution time
$\widetilde{\mathcal O}
(c^k(n+ks)\sqrt n\,\eps^{-1}\log(n/\delta))$.

However, applying the preceding recovery procedure independently to all $n$ coordinates introduces the additive $n$ term in the total evolution time. To remove this dependence and match the total evolution times in \eqref{eq:tot-single-infty} and \eqref{eq:tot-single-2}, we first identify only the
coordinates that are relevant for recovering the Hamiltonian.

Let $\mathcal U:=\bigcup_{P:h_P\neq0}\supp(P)$ be the set of qubits on which at least one nonzero Hamiltonian term acts. Since $H$ is $s$-sparse and $k$-local, $|\mathcal U|\leq ks$. For a target local accuracy $r\in(0,\Lambda)$, we construct a set $J$ satisfying
\begin{equation}\label{eq:discovered-coordinates}
    \{j:\norm{h^{(j)}}_2>r\}\subseteq J\subseteq\mathcal U.
\end{equation}
Thus, $J$ contains every coordinate whose associated coefficient vector has norm larger than $r$, while containing no coordinate on
which the Hamiltonian acts trivially. Once such a set $J$ is found, we run the sparse-recovery procedure only for $j\in J$. This is sufficient for both error guarantees. Indeed, if a Pauli coefficient $h_P$ is not recovered from any coordinate in $J$, then every $j\in\supp(P)$ satisfies $\norm{h^{(j)}}_2\leq r$, and hence $|h_P|\leq r$. Therefore, the remaining coefficients can be set to zero without increasing the $\ell_\infty$ error beyond $r$.

It remains to identify $J$ without testing all $n$ coordinates separately. We do this by randomly selecting subsets $S\subseteq[n]$ and testing all coordinates in $S$ using the same Hamiltonian evolutions. When a relevant coordinate $j$ is selected without any other coordinate in $\mathcal U$, the experiment reduces to the single-coordinate verification procedure of \Cref{col:cert}. Repeating this experiment sufficiently many times allows us to identify every coordinate with $\norm{h^{(j)}}_2>r$ while keeping $J\subseteq\mathcal U$. We describe this procedure next.

\paragraph{Parallel coordinate discovery.}
We use the verification procedure in \Cref{col:cert} with $H_0=0$ and
far threshold $r$. Instead of testing the coordinates one at a time,
we randomly select several coordinates and test them using the same
Hamiltonian evolutions. Start with $J=\varnothing$ and perform
\begin{equation}\label{eq:discovery-repetition-count}
    N:=\left\lceil12ks\log\frac{ks}{\delta}\right\rceil
\end{equation}
independent experiments. In each experiment, form a random subset $S\subseteq[n]$ by including each coordinate independently with probability $1/(2ks)$. Keep $S$ fixed throughout that verification experiment.

For a fixed random subset $S$, our goal is to test all coordinates $j\in S$ using the same Hamiltonian evolution, rather than running the verification procedure separately for each $j$. Recall that the single-coordinate verifier for $j$ uses random $\sigma,x$ and compares the two product states $\ket{x}_\sigma$ and $\ket{x\oplus e_j}_\sigma$. To perform these tests in parallel, use the same random pairs $(\sigma,x)$ for all selected coordinates. For each pair, let $W$ be the product unitary satisfying $W\ket{v}=\ket{x\oplus v}_\sigma$, prepare $\ket{\psi_S}:=\bigotimes_{j\in S}\ket{+}_j\bigotimes_{j\notin S}\ket{0}_j$, and use the product control
\begin{equation}\label{eq:discovery-product-control}
    V_S:=
    \exp\left[-i\gamma\left(
        \sum_{j\in S}N_j+2\sum_{j\notin S}N_j
    \right)\right],
    \qquad
    F_{\pm,S}(t):=V_S^{\pm1}W^\dagger e^{-iHt}W,
\end{equation}
where $N_j=\ketbra{1}{1}_j$ and
$\gamma=2\pi(\sqrt2-1)$.

For each required $t,m$, apply $F_{-,S}(t)^mF_{+,S}(t)^m$ to $\ket{\psi_S}$, with all evolution times and measurement counts fixed in advance. After each circuit, all qubits in $S$ are measured simultaneously in the $X$ basis, with independent preparations for the $Y$-basis measurements. Thus a single sequence of Hamiltonian evolutions produces the
measurement results for every $j\in S$. We apply the decision rule
of \Cref{col:cert} separately to each selected qubit and add $j$ to
$J$ whenever the result is \textit{Far}.

We now explain why the parallel experiment correctly tests a coordinate $j$ when $S\cap\mathcal U=\{j\}$. Recall that $\mathcal U=\bigcup_{P:h_P\neq0}\supp(P)$ contains all qubits on which $H$ acts nontrivially. Therefore, qubits outside $\mathcal U$ do not affect the Hamiltonian evolution. If $S\cap\mathcal U=\{j\}$, then among the qubits on which $H$ acts, only qubit $j$ is prepared in $\ket{+}$; every other qubit in
$\mathcal U$ is prepared in $\ket{0}$. Likewise, the control $V_S$ applies the phase coefficient $1$ to qubit $j$ and coefficient $2$ to every other qubit in $\mathcal U$. Thus, restricted to the qubits in $\mathcal U$, the preparation, control, and measurement are exactly those of the single-qubit verification procedure with probe qubit
$j$. Hence, \Cref{col:cert} applies to this measurement data, and when $\norm{h^{(j)}}_2>r$, the verifier outputs \textit{Far} with its stated success probability.

On the other hand, if $j\notin\mathcal U$, the Hamiltonian acts
trivially on qubit $j$. The factors of $V_S$ and $V_S^\dagger$ on
that qubit therefore cancel, so its measurement statistics are the
same as for the zero Hamiltonian. Consequently, each selected coordinate outside $\mathcal U$ is classified as \textit{Close} with the assigned success probability. If several coordinates in $\mathcal U$ are selected at the same time, we do not require any guarantee for their
individual outcomes; any such coordinate still belongs to
$\mathcal U$, so adding it to $J$ does not violate
$J\subseteq\mathcal U$.

\paragraph{Success probability and total evolution time.}
The parallel experiment shows that, whenever
$S\cap\mathcal U=\{j\}$, the experiment correctly applies the single-coordinate verification procedure to $j$. We now show that every coordinate with $\norm{h^{(j)}}_2>r$ satisfies this condition
in at least one of the $N$ experiments with high probability.

For a fixed $j\in\mathcal U$,
\begin{equation}
    \Pr[S\cap\mathcal U=\{j\}]
    =
    \frac{1}{2ks}
    \left(1-\frac{1}{2ks}\right)^{|\mathcal U|-1}
    \geq
    \frac{1}{4ks},
\end{equation}
where we used $|\mathcal U|\leq ks$. Hence, the probability that
$S\cap\mathcal U\neq\{j\}$ in all $N$ experiments is at most
\begin{equation}
    \left(1-\frac1{4ks}\right)^N
    \leq
    e^{-N/(4ks)}
    \leq
    \frac{\delta}{4ks},
\end{equation}
by the choice of $N$ in \eqref{eq:discovery-repetition-count}.
A union bound over at most $ks$ coordinates with
$\norm{h^{(j)}}_2>r$ shows that all such coordinates are tested
under the condition $S\cap\mathcal U=\{j\}$ at least once with
probability at least $1-\delta/4$.

Assign failure probability $\delta/(4nN)$ to each selected coordinate's test. For fixed sampled subsets, apply the two preceding guarantees only when $j\notin\mathcal U$, or when $S\cap\mathcal U=\{j\}$ and $\norm{h^{(j)}}_2>r$. There are at most $nN$ such tests, so a union bound limits their total failure probability to $\delta/4$. Combining this with the subset-selection bound proves
\eqref{eq:discovered-coordinates} with probability at least
$1-\delta/2$.

By \Cref{col:cert}, each experiment requires $\widetilde{\mathcal O}(c_2^k r^{-1}\log(nN/\delta))$ evolution time. All qubits in $S$ are measured after the same evolutions, so there is no additional factor of $|S|$. Multiplying by $N=\mathcal O(ks\log(ks/\delta))$ gives
\begin{equation}\label{eq:parallel-discovery-time}
    T_{\rm disc}
    =
    \widetilde{\mathcal O}\left(
        \frac{c_2^kks}{r}
        \log\frac{n}{\delta}
    \right).
\end{equation}

\paragraph{Recovery on the discovered coordinates.}
The preceding procedure constructs a set $J$ satisfying \eqref{eq:discovered-coordinates} with probability at least $1-\delta/2$. We now recover $h^{(j)}$ only for $j\in J$, using the
single-coordinate recovery procedure above with local error at most
$r$. We allocate the remaining failure probability $\delta/2$ to
these recovery and verification procedures. When \eqref{eq:discovered-coordinates} holds, every $j\in J$ has $d_j\geq1$, and $\sum_{j\in J}d_j\leq ks$. Combining the recovery cost in
\eqref{eq:app-verification-tot} with the discovery cost in
\eqref{eq:parallel-discovery-time} gives
\begin{equation}\label{eq:single-learning-after-discovery}
    T_{\rm tot}
    =
    \widetilde{\mathcal O}\left(
        \frac{c^ks}{r}\log\frac{n}{\delta}
    \right)
\end{equation}
for a universal constant $c>1$, where polynomial factors in $k$
are suppressed.

We next combine the local estimates into a single estimate of the
Hamiltonian coefficients. For every Pauli $P$ with
$J\cap\supp(P)\neq\varnothing$, choose one
$j(P)\in J\cap\supp(P)$ and set
$\widehat h_P:=\widehat h_P^{(j(P))}$.
If $J\cap\supp(P)=\varnothing$, set $\widehat h_P:=0$.
For the first case, the local recovery guarantee gives
$|\widehat h_P-h_P|\leq r$. For the second case,
\eqref{eq:discovered-coordinates} implies
$\norm{h^{(j)}}_2\leq r$ for every $j\in\supp(P)$, and hence
$|h_P|\leq r$. Therefore,
$\norm{\widehat h-h}_\infty\leq r$.

For the $\ell_2$ error, assign each recovered coefficient to the
coordinate $j(P)$ from which it is estimated, and assign each omitted
nonzero coefficient to any coordinate in its support. This gives
\begin{equation}\label{eq:single-learning-assembled-error}
    \norm{\widehat h-h}_2^2
    \leq
    \sum_{j\in J}
        \norm{\widehat h^{(j)}-h^{(j)}}_2^2
    +
    \sum_{j\notin J}
        \norm{h^{(j)}}_2^2
    \leq
    nr^2.
\end{equation}
Thus, taking $r=\eps$ proves \eqref{eq:tot-single-infty}, while
taking $r=\eps/\sqrt n$ proves \eqref{eq:tot-single-2}. Together with
the probability guarantee for constructing $J$, the overall success probability is at least $1-\delta$. We stop and report failure before exceeding the budget in \eqref{eq:single-learning-after-discovery}, with a sufficiently large fixed constant. This does not occur when the preceding guarantees hold.

Finally, all parts of the algorithm can use a common step size. Choose the extrapolation order $p$ for gap accuracy $C^{-k}r$, with $C>1$ large enough for discovery, verification, and recovery to error $r/c_1^k$. Choose $L$ by \eqref{eq:L-choice} for the smallest resulting node accuracy, not just for the gap accuracy. Both orders are $\mathcal O(k+\log(\Lambda/r))$. Keep the weights and time nodes fixed; for coarser estimates, increase the node tolerances and adjust the phase-estimation repetitions. The same bias bound and the same $L$ remain valid. Thus
\begin{equation}
    \tau=\Theta\left(\frac{1}{k\Lambda(k+\log(\Lambda/r))^2}\right)
    =\widetilde\Theta(1/\Lambda)
\end{equation}
for fixed $k$. All state preparations, controls, and measurements are single-qubit operations, and every unknown evolution time is a positive integer multiple of $\tau$.

\end{document}